\documentclass{article}

\usepackage{graphicx}

\usepackage[hmargin=1in,vmargin=1in]{geometry}
\usepackage{hyperref}
\usepackage{hchang}

\usepackage{tcolorbox}
\usepackage{url}
\usepackage{graphicx}
\usepackage{caption} 
\usepackage{bbm}
\usepackage{float}
\usepackage{framed}
\usepackage{enumerate}
\usepackage{color}
\usepackage{comment}
\usepackage{thmtools}
\usepackage{thm-restate}
\usepackage{graphicx}
\usepackage{subcaption}

\usepackage{cleveref}
\usepackage{subfiles}

\usepackage{algorithm}
\usepackage[noend]{algpseudocode}
\usepackage{pdfpages}
\usepackage{subcaption}
\usepackage{marvosym}
\usepackage{xspace}
\usepackage[framemethod=TikZ]{mdframed}

\newtheorem{theorem}{Theorem}
\newtheorem{problem}{Problem}

\newtheorem{lemma}{Lemma}
\newtheorem{claim}{Claim}
\newtheorem{definition}{Definition}

\newtheorem{remark}{Remark}
\newtheorem{observation}[theorem]{\textbf{Observation}}
\newcommand{\bdry}{\partial}
\newcommand{\subseq}{\sqsubseteq}
\newcommand{\congest}{CONGEST\xspace}

\newcommand{\cutgraph}{\textrm{\CutLeft}}
\newcommand{\cut}{\cutgraph}

\mdfdefinestyle{lpbox}{
    linecolor=black,
    linewidth=1pt,
    roundcorner=5pt,
    innertopmargin=10pt,
    innerbottommargin=10pt,
    innerleftmargin=10pt,
    innerrightmargin=10pt,
    backgroundcolor=gray!5 
}

\newcommand{\eps}{\epsilon}
\newcommand{\OO}{\Tilde{O}}

\newif\ifshowcomments
\showcommentstrue 

\title{Planar Embedding of Okamura-Seymour Quasimetrics in Polynomial Time with an Application to Distributed SSSP}
\author{
  Hung Le%
  \thanks{University of Massachusetts, Amherst, USA. \texttt{hungle@cs.umass.edu}}
  \and%
  Hector Tierno%
  \thanks{University of Massachusetts, Amherst, USA. \texttt{htierno@umass.edu}}
  \and%
  Shuang Yang 
  \thanks{University of Massachusetts, Amherst, USA. \texttt{shuangyang@umass.edu}}
}

\date{}

\begin{document}

\maketitle

\begin{abstract}
    A quasi-metric $(T,\delta_T)$ is an \EMPH{Okamura-Seymour quasimetric} if there exists an edge-weighted planar embedded directed graph $G = (V,E,w)$ such that $T$ is  a set of terminals on the outerface of $G$ and $\delta_G(t,t') = \delta_T(t,t')$ for every pair $(t,t')\in T\times T$. If   $(T,\delta_T)$ is an Okamura-Seymour quasimetric, then $G$ is a \EMPH{planar embedding} of $(T,\delta_T)$.

    In a recent pioneering work, Chen and Tan~\cite{ChenTan2025} gave a polynomial-time algorithm to \EMPH{test} if a given quasi-metric $(T,\delta_T)$ is an Okamura-Seymour quasimetric. A key step in their proof is existential, which suffices for an efficient testing algorithm but does not imply an efficient embedding algorithm. Our paper closes this gap by giving an algorithmic implementation of their existential step via linear programming. As a result, we obtain the first polynomial-time algorithm for finding a planar embedding of any given Okamura-Seymour quasimetric $(T,\delta_T)$. 

    As an application, we show how to use our planar embedding of Okamura-Seymour quasimetrics to compute a $(1+\eps)$-approximate single-source shortest path (SSSP) in planar directed graphs in the distributed \congest model in $\Tilde{O}(D)$ rounds for any fixed $\eps\in (0,1)$, nearly matching a simple lower bound of $\Omega(D)$ and resolving a fundamental problem in this area. The best-known algorithm for this problem has round complexity $\Tilde{O}(D^2)$~\cite{li2019planar}. 
\end{abstract}

\section{Introduction}

\paragraph{Planar and minor-free metrics} are the shortest path metrics of planar and minor-free graphs. Research on the geometry of these families of metrics over the last few decades has produced deep results with a wide range of applications, including the KPR theorem~\cite{KPR93}, the $\ell_1$ embedding conjecture~\cite{gnrs99}, embedding into metrics of small treewidth~\cite{FKS19,FL22,CLPP23,CCCLPP25}, and the tree cover theorem~\cite{CCLMST23a,CCLMST24}, to name a few.  These results assume that the input metric is given indirectly, via the graph whose shortest path metric represents the input metric. In many practical scenarios, the input metric is given explicitly, by a metric of pairwise distances (for example, computed from features of objects as in machine learning applications), and we are interested in whether the metric comes from a special family so that we can exploit its structures for downstream applications. Such a \EMPH{metric recognition}, notably recognizing tree metrics~\cite{Zaretskii1965,buneman1971,Gordon1987,Bandelt1990,Batagelj1990}, has been studied since 70s. A basic open problem asked by  Linial and Rabinovic~\cite{matouvsek2002open} and mentioned by others~\cite{CC08,KR08}  is \EMPH{planar testing of metric spaces}: ``Is NP-complete or polynomial to decide if a finite metric space is planar?'' In applications, one is more interested in constructing a planar embedding of a metric space, but testing planarity is often viewed as the first step toward that goal. In sharp contrast, for graphs, Hopcroft and Tarjan~\cite{HT74} gave the first linear-time algorithm to construct a planar embedding; their algorithm opened the door to an entire area of algorithm design for planar graphs.  

There has been progress towards recognizing and embedding subclasses of planar metrics, including tree metrics~\cite{Zaretskii1965,buneman1971,Bandelt1990,Batagelj1990}, ultrametrics~\cite{Gordon1987}, cactus metrics~\cite{HHMM20}, and  Okamura-Seymour metrics~\cite{hurkens1988tidy,CO20}. Perhaps the most significant result is the recognition of Okamura-Seymour metrics, whose points are vertices on the outerface of an edge-weighted planar embedded graph $G$, and their distances are the graph distances in $G$.    Okamura-Seymour metrics have a flow-cut gap of exactly 1~\cite{OS81} and play a central role in other problems such as metric compression~\cite{krauthgamer2013mimicking,abboud2018near,li2019planar}, face cover and $\ell_1$-embedding~\cite{krauthgamer2019flow,filtser2020face,kumar2025approximate}. 

\paragraph{Planar and minor-free quasimetrics} are directed analogues of planar and minor-free metrics. We say that $(X,\delta_X)$ is a \EMPH{quasimetric} if $X$ is a set of points, and the distance function $\delta_X: X\times X \rightarrow \real_{\geq 0}\cup \{+\infty\}$ satisfies (i) $\delta_X(x,x) = 0$ for every $x\in X$ and (ii) $\delta_X(x,z) \leq \delta_X(x,y) +\delta_X(y,z)$. Note that the difference between a metric and a quasi-metric is the symmetry: for a quasi-metric $\delta_X$, it is possible that $\delta_X(x,y) \not= \delta_X(y,x)$. Naturally, quasi-metrics are exactly the shortest path metrics of  edge-weighted directed graphs, a.k.a, \EMPH{digraphs}. Digraphs are central in graph algorithms, as they can model one-way relationships between objects that an (undirected) graph cannot, such as one-way streets in road networks. However,  their geometry, namely quasimetrics, is extremely difficult to understand. For example, the directed analogue of the KPR theorem~\cite{KPR93}, called \EMPH{quasi-partitions}~\cite{KS21}, remains wide open: the best Lipschitz constant\footnote{We refer readers to \cite{KS21} for precise definitions of quasipartitions and Lipschitz constant of quasimetrics.} for planar quasimetrics is $O(\log ^2n )$ shown by Kawarabayashi and Anastasios Sidiropoulos~\cite{KS21}. It remains an outstanding open problem if $O(1)$ is achievable, which, if true, would match the undirected counterpart.  Despite only a few known results on the geometry of planar/minor-free quasimetrics~\cite{memoli2018quasimetric,KS21,le2023vc,karczmarz2025}, their algorithmic applications are already substantial, including directed cut problems~\cite{KS21}, computing diameter~\cite{le2023vc}, and exact distance oracles~\cite{karczmarz2025}. 

Unsurprisingly, constructing a planar embedding of quasimetrics is much harder than constructing a planar embedding of metrics, given the difficulty of understanding quasimetrics. Even the very fundamental case of constructing a planar embedding of Okamura-Seymour quasimetrics is not fully understood. We say that a quasi-metric $(T,\delta_T)$ is an \EMPH{Okamura-Seymour quasimetric} if there exists an edge-weighted planar embedded digraph $G = (V,E,w)$ such that (i) $T\subseteq V$ and is  a set of terminals on the outerface of $G$ and (ii) $\delta_G(t,t') = \delta_T(t,t')$ for every pair $(t,t')\in T\times T$. We call $G$ a \EMPH{planar embedding} of $(T,\delta_T)$. In this work, we study the problem of constructing a planar embedding of Okamura-Seymour quasimetrics.

\begin{problem}[Okamura-Seymour Quasimetric Planar Embedding Problem]\label{problem:main}Given  Okamura-Seymour quasimetric $(T,\delta_T)$, construct a planar embedding of $(T,\delta_T)$ in polynomial time. 
\end{problem}
That is, construct an edge-weighted, planar embedded digraph $G$ and the placement of $T$ on the outerface of the embedding of $G$ such that $(G,T)$ realizes $(T,\delta_T)$ in polynomial time. 

In a recent pioneering work, Chen and Tan~\cite{ChenTan2025} gave an efficient solution for the testing problem:  given a quasimetric $(T,\delta_T)$, test if  $(T,\delta_T)$ is an  Okamura-Seymour quasimetric.  It is often the case that a solution for the testing problem can be readily turned into a solution for the embedding problem. (The planarity testing algorithm by Hocroft and Tarjan~\cite{HT74} is a famous example.) However, for Okamura-Seymour quasimetrics, this is not the case: a key step (see \Cref{subsec:ideas}) in the planar embedding framework of Chen and Tan's proof is existential, which, interestingly, suffices for the testing problem. 

Our main result in this work is a polynomial-time algorithm for constructing a planar embedding of any given Okamura-Seymour quasimetric, fully resolving \Cref{problem:main}.

\begin{theorem}\label{thm:main} Let $(T,\delta_T)$ be an Okamura-Seymour quasimetric where the distance between any two terminals is an integer in $\{0,1,\ldots, W\}$. We can construct a planar embedding of $(T,\delta_T)$ in time $\poly(|T|,\log(W))$.
\end{theorem}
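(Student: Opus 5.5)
We follow the Chen--Tan framework and replace its single existential step with an explicit linear program. Their testing argument runs roughly as follows. First determine the cyclic order of $T$ on the outer face; any valid embedding forces this order up to reflection, and it can be recovered from the quasimetric via four-point (crossing) conditions in polynomial time. Then reduce to a canonical family of planar ``grid-like'' or ``tidy'' embeddings. In these, for every crossing pair $(a,b)$ and $(c,d)$ with $a,c,b,d$ in cyclic order, the shortest paths $P_{ab}$ and $P_{cd}$ must meet. Chen and Tan show that some assignment of lengths to the pieces of these paths exists, namely the segments between consecutive crossing points. From such an assignment the embedding is built by overlaying the paths and planarizing at crossings.

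Our plan has four steps. \textbf{Step 1.} Compute the cyclic order of $T$ and the combinatorial skeleton $H$: draw one curve per ordered terminal pair inside a disk, chosen so that every two curves cross at most once and only when their endpoints interleave. Then insert a vertex at each crossing. $H$ is a planar digraph with $O(|T|^4)$ vertices and edges, and it depends only on the order of $T$, not on the edge weights. \textbf{Step 2.} Introduce a variable $x_e\ge 0$ for each edge $e$ of $H$, with $x_e=+\infty$ allowed by deleting the edge when $\delta_T$ is infinite. Impose two kinds of constraints. The length of the designated path $P_{tt'}$ must equal $\delta_T(t,t')$. For every pair $t,t'$ and every other $t\to t'$ route in $H$, its length must be at least $\delta_T(t,t')$. \textbf{Step 3.} Handle the exponentially many route constraints with a separation oracle: given $x$, run Dijkstra in $H$ from each terminal. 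Alternatively, use the dual formulation with potentials $\pi_t(v)$ satisfying $\pi_t(v)\le \pi_t(u)+x_{uv}$, $\pi_t(t)=0$, and $\pi_t(t')\ge \delta_T(t,t')$. Since $\pi_t(t')$ is also at most the length of $P_{tt'}$, which equals $\delta_T(t,t')$, the distance $\delta_H(t,t')$ is exactly $\delta_T(t,t')$. This gives a polynomial-size LP with $O(|T|\cdot|V(H)|)$ variables, solvable in time $\poly(|T|,\log W)$ by ellipsoid or interior-point methods. The integrality of $\delta_T$ bounds the bit complexity. \textbf{Step 4.} Output $H$ weighted by $x$. It is planar, has $T$ on its outer face, and realizes $\delta_T$.

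The main obstacle is feasibility: proving that whenever $(T,\delta_T)$ is Okamura-Seymour, this LP on the fixed skeleton $H$ is feasible. That is exactly the existential content of Chen and Tan's argument. To establish it, start from an arbitrary embedding $G$, which exists by assumption. Uncross its shortest paths to obtain a family in which each pair crosses at most once, and in which the paths realize a minimal crossing pattern. Then contract $G$ onto the arrangement of these paths and argue that the arrangement is isomorphic to, or a minor of, $H$. Subpaths lengths in $G$ then induce a feasible $x$, with contracted segments getting length $0$ and missing edges getting large lengths.

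The delicate part is showing that the crossing pattern is forced by the terminal order alone. Directed shortest paths for opposite pairs $(a,b)$ and $(b,a)$ may be pinned differently, so $H$ may need to be a universal skeleton, for instance a sufficiently fine grid of all crossing curves, into which every tidy embedding maps monotonically. If one fixed skeleton does not suffice, we would enumerate polynomially many candidate skeletons and test each LP, relying on Chen and Tan's structure to bound the candidates.
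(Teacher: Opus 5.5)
\paragraph{Where your proof has a gap.}
Your Steps 2--4 are sound, but they are the easy part. Once the combinatorial drawing is fixed, finding weights is an LP with a shortest-path separation oracle; this is exactly \Cref{lm:check-feasibility}. The whole difficulty of the theorem sits in Step 1, and you assume it away. You posit one skeleton $H$, determined by the cyclic order of $T$ alone, in which any two curves cross at most once and only when their endpoints interleave. You then need this single $H$ to admit a feasible weighting for \emph{every} OS quasimetric with that order. Nothing in your proposal proves this, and three things stand in the way.
\begin{enumerate}
\item Your uncrossing argument does not work for digraphs. Exchanging subpaths of two shortest paths requires that they meet at $u$ and then $v$ \emph{in the same order on both}. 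If they meet in opposite orders, no exchange is available. So you cannot conclude that each pair crosses at most once, nor that pairs with non-interleaving endpoints are disjoint. The structure actually available (\Cref{lem:nest_reduction}) only rules out two paths meeting twice in the same direction, and its size bound is $O(|T|^6)$, not the $O(|T|^4)$ of your skeleton.
\item ``The'' skeleton is not determined by the order. Arrangements obeying your crossing rule differ in the order of crossings along each curve (triangle flips), and in which side of $P_{t't}$ the curve $P_{tt'}$ runs on. Nothing guarantees that one fixed choice works for all $\delta_T$ with that order.
\item Your description of Chen and Tan's existential step is off. \Cref{lm:Chen-Tan-keylemma} does not say that lengths exist on a canonical skeleton. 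It says that, given the graph built from earlier paths, \emph{some} drawing of the next path through its faces admits a feasible weighting.
\end{enumerate}
Choosing that drawing is the real problem. Your fallback of enumerating candidate skeletons is the exponential search over dual face-sequences used in the paper's warm-up algorithm, and you give no reason the candidates are polynomially many.

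\paragraph{How the paper closes it.}
The paper never fixes a skeleton in advance. It inserts paths one at a time, and after each insertion applies nest reduction, with a conformity argument that the reduction preserves the framework. It finds each new path's trajectory greedily, one face at a time.

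To test whether a prefix of faces extends to a valid trajectory, it cuts the graph along the prefix's wall. It then writes \textsc{PrefixValidityLP}, which has:
\begin{itemize}
\item edge-weight variables and interior/exterior boundary distance variables;
\item Monge constraints on the exterior distances, which by Chen--Tan's characterization certify that the unknown suffix from $\pi[\ell]$ to $b$ can be drawn later;
\item gluing constraints, phrased as path constraints in an auxiliary multigraph.
\end{itemize}
The LP is solved by the ellipsoid method with a shortest-path separation oracle. Because a valid extension always exists, the greedy search never gets stuck, and only polynomially many LPs are solved.
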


Here $\poly(\cdot)$ is a polynomial function of (one or more) arguments. The exponent of the polynomial in \Cref{thm:main} is rather large, of about $15$. A key bottleneck is that we will be writing an (exponentially) large linear program and using the ellipsoid method to solve it.  We make no attempt to optimize the running time. 

Our technique for proving \Cref{thm:main} can also be extended to solve a more general problem: embedding arbitrary quasimetrics into Okamura-Seymour quasimetrics with minimum distortion following a canonical ordering. Specifically, given a (not necessarily Okamura-Seymour) quasi-metric $(T,\delta_T)$, and a permutation $\sigma_T: T\rightarrow T$, an \EMPH{Okamura-Seymour embedding of $(T,\delta_T)$ following $\sigma_T$} is an Okamura-Seymour instance $(G,T)$ such that (i) for very pair of terminals $(t,t')\in T$, $\delta_G(t,t')\geq \delta_T(t,t')$, and (ii) the clockwise ordering of $T$ along the outerface of $G$ is exactly $\langle \sigma_T[1], \sigma_T[2],\ldots, \sigma_T[|T|] \rangle$. As usual, the \EMPH{distortion} of the Okamura-Seymour embedding is defined as $\max_{t\not= t' \in T}\frac{\delta_G(t,t')}{\delta_T(t,t')}$.  We show that finding an Okamura-Seymour embedding with minimum distortion following a canonical ordering can be solved in polynomial time.

\begin{restatable}{theorem}{OSEmbedding}\label{thm:os-embedding} Given a quasi-metric $(T,\delta_T)$, and a permutation $\sigma_T: T\rightarrow T$, there is a polynomial time algorithm to find an Okamura-Seymour embedding $(G, T)$ of $(T,\delta_T)$ following $\sigma_T$ such that the distortion is minimum.
\end{restatable}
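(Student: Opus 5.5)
Fix the cyclic order $\sigma_T$ and work with a canonical, fixed-topology host graph. For the given order, every Okamura-Seymour instance $(G,T)$ whose terminals appear in the order $\langle\sigma_T[1],\ldots,\sigma_T[|T|]\rangle$ can be replaced by a graph on a canonical planar "grid-like" skeleton $H_{\sigma}$. Here I have in mind the structure underlying the Chen--Tan framework: shortest paths between terminal pairs that cross must share a common crossing vertex. So I will represent a candidate embedding by a set of edge-length variables on a fixed skeleton (or, more abstractly, by variables $x_{t,t'}=\delta_G(t,t')$ together with constraints certifying planarity). The goal is to show that the set of achievable distance matrices $x$ following $\sigma_T$ is a polyhedron $P_\sigma$ described by linear inequalities. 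The relevant inequalities are the triangle inequalities plus the "crossing" (Monge-type / four-point) inequalities $x(a,c)+x(b,d)\ge x(a,d)+x(b,c)$ for terminals $a,b,c,d$ appearing in that cyclic order, with the appropriate directed variants. The set is possibly exponentially described, and it is characterized in the paper's main LP section where the existential step of Chen--Tan is made algorithmic.

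Given this, minimizing distortion becomes a linear program once the distortion is fixed by scaling. Introduce a variable $\alpha$ and the variables $x_{t,t'}$, with constraints $x\in P_\sigma$, $x_{t,t'}\ge \delta_T(t,t')$ for all pairs, and $x_{t,t'}\le \alpha\,\delta_T(t,t')$; then minimize $\alpha$. Because everything is homogeneous, this is a genuine LP and not a bilinear program. Pairs with $\delta_T=+\infty$ are left unconstrained from above, and pairs with $\delta_T=0$ force $x=0$. I solve it with the ellipsoid method, using the same separation oracle for $P_\sigma$ that the paper builds for the proof of \Cref{thm:main}. The oracle must be polynomial-time even though $P_\sigma$ may have exponentially many facets. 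Finally, from the optimal $x^*$, which is an Okamura-Seymour quasimetric following $\sigma_T$ by construction, I invoke the constructive embedding of \Cref{thm:main} to output $(G,T)$. Since the coordinates of $x^*$ are rational with polynomial bit complexity (a vertex of the LP), the polynomial in $\log W$ is preserved.

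The main obstacle is the first step: proving that the membership condition "$x$ is realizable by a planar digraph with terminals on the outer face in order $\sigma_T$" is exactly a polyhedron with a polynomial-time separation oracle. A mere necessary condition would not suffice, because then the LP optimum might not be realizable. I would attack it by reusing the LP from the embedding algorithm. There, the unknowns are edge lengths on the canonical skeleton, and the distances are linear (min-of-paths) functions. I would lift the distortion LP into that space: the variables become the edge lengths $w_e\ge 0$ on the skeleton, and I add constraints "every skeleton path from $t$ to $t'$ has length $\ge \delta_T(t,t')$", separable by shortest path computation. I also add the designated path for each pair having length $\le \alpha\,\delta_T(t,t')$. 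The delicate point is showing that the skeleton with fixed routing of designated paths loses no generality for the optimum. For this I would use the Chen--Tan uncrossing argument: any optimal embedding can be uncrossed so that shortest paths respect the skeleton, without increasing any terminal distance above its value or decreasing it below $\delta_T$.
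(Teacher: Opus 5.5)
Your outer shape (an LP in the terminal distances $x$ with $\delta_T\le x\le\alpha\,\delta_T$, followed by the constructive embedding of \Cref{thm:main}) matches the paper's. However, the step you call the main obstacle is left open, and the routes you propose for it do not work as stated.

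You treat the triangle and Monge inequalities as merely necessary. You suggest the realizable set $P_\sigma$ may need exponentially many facets, and you appeal to a separation oracle ``built in the main LP section.'' That LP (\textsc{PrefixValidityLP}) is formulated for a \emph{fixed} partial quasimetric and a fixed partial drawing. Its oracle decides whether a prefix of a trajectory of $\pi_i$ can be extended, and the terminal distance matrix is not a variable in it, so it cannot separate a candidate $x$ from $P_\sigma$.

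Your fallback is a distance-independent skeleton $H_\sigma$ with fixed designated routings, onto which every instance can be uncrossed without loss. This is asserted rather than proved, and the Chen--Tan framework does not provide it: there, which drawings of $\pi_i$ are feasible (forbidden faces, valid trajectories) depends on the actual distance values, and the nest reduction is instance-specific. Without such universality the edge-length LP is unjustified. The requirement $\delta_G(t,t')\le\alpha\,\delta_T(t,t')$ is a minimum over paths, which is non-convex in the edge lengths unless the path is fixed in advance.

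The missing idea is the sufficiency half of \Cref{thm:Chen-Tan}. A quasimetric satisfying the Monge inequalities with respect to $\sigma_T$ is realized by an Okamura--Seymour instance whose terminals appear on the outer face in the order $\sigma_T$, since the framework places $T$ on the disk in that order. Hence $P_\sigma$ is cut out \emph{exactly} by the inequalities you already listed: nonnegativity, the triangle inequalities, and the Monge inequality for every $\langle t_1,t_2,t_3,t_4\rangle\subseq\sigma_T$. That is $O(|T|^4)$ explicit constraints, so no oracle, skeleton, or uncrossing argument is needed.

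With this in hand your formulation goes through. Two of your choices are fine or slightly cleaner than the paper's:
\begin{itemize}
\item treating $\alpha$ as an LP variable instead of binary searching;
\item including the triangle inequalities explicitly, so that $x^*$ is a genuine quasimetric to which \Cref{thm:Chen-Tan} applies (Monge alone is vacuous on three terminals).
\end{itemize}
Then solve the LP, scale $x^*$ to integers, and run the algorithm of \Cref{thm:main} with the prescribed order $\sigma_T$ in place of a computed one.
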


If we are not given the ordering $\sigma_T$ as part of the input, then the problem becomes to find an Okamura-Seymour embedding of  $(T,\delta_T)$ with minimum distortion. We show in \Cref{thm:embedding-hardness} below that this problem is NP-hard, even for the undirected case (the proof is in \Cref{subsec:hardness}). 

\begin{restatable}{theorem}{OSEmbeddingNPhardness}\label{thm:embedding-hardness} 
Given an $n$-point finite metric $(X,d)$, and any constant $1<D<\frac{1+\sqrt{5}}{2}$, it is NP-hard to decide  whether there exists an Okamura-Seymour metric $(X,\rho)$ such that
    \[
    d(x,y)\le \rho(x,y)\le D \cdot d(x,y)
    \qquad \forall x,y\in X.
    \]
\end{restatable}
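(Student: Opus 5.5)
We prove NP-hardness by a reduction from a known NP-hard problem about cyclic orderings of points, namely \textsc{Betweenness} (or \textsc{Cyclic Ordering}). The observation driving the reduction is this: in an Okamura-Seymour metric $(X,\rho)$, the terminals lie on the outer face in some cyclic order. For any four terminals $a,b,c,d$ in cyclic order, shortest paths from $a$ to $c$ and from $b$ to $d$ must cross by planarity, which gives the four-point inequality
\[\rho(a,c)+\rho(b,d)\ge \rho(a,b)+\rho(c,d)\quad\text{and}\quad \rho(a,c)+\rho(b,d)\ge \rho(a,d)+\rho(b,c).\]
Conversely, a metric that satisfies these ``Kalmanson-type'' crossing inequalities for some cyclic order is realizable as an Okamura-Seymour metric. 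This follows from the classical characterization (Hurkens et al., Chepoi) that circular decomposable / Kalmanson metrics are exactly the outerplanar-boundary distances. So the question ``does an OS metric $\rho$ exist with $d\le\rho\le Dd$?'' reduces to ``is there a cyclic order $\pi$ and a Kalmanson metric for $\pi$ within distortion $D$ of $d$?''

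\textbf{Gadget design.} I would build $d$ from an instance of \textsc{Cyclic Ordering}: triples $(a,b,c)$ that must appear clockwise. Each variable becomes a point, and auxiliary points are added with distances in $\{1,2\}$ or $\{1,\varphi\}$-type values. The aim is that any ordering violating a constraint forces some four-point inequality to fail by a multiplicative factor at least $\frac{1+\sqrt5}{2}$. The golden ratio threshold suggests the critical gadget is four points with distances $1,1,1,1$ and diagonals $x$. Making a crossed pair (rather than the true diagonals) lie opposite forces
\[ \rho(\text{diag}) \ge \rho(\text{sides}) - \cdots, \]
and balancing $D\cdot 1$ against $D^2$ or $1+1/D$ yields $D^2=D+1$. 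So I would first analyze a single 4- or 5-point gadget. The goal is to show that for $D<\varphi$ it admits an OS metric within distortion $D$ only for a prescribed set of cyclic orders. The golden ratio is then exactly where a forbidden order becomes feasible.

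\textbf{Completeness and soundness.} For completeness, given a satisfying cyclic order, I would explicitly construct a planar graph (for instance a cycle with chords, or a weighted ``wheel'' with a hub) realizing a $\rho$ within factor $D$, and ideally within factor $1$ off the gadget-internal pairs. For soundness, from any OS $\rho$ I would read off the cyclic order of the outer face, restrict it to each gadget, and deduce that the order satisfies each constraint. To make gadgets interact only through the order, I would place them at well-separated scales, or connect all points to a universal hub at distance $M$. Distances between different gadgets are then essentially fixed and do not constrain anything.

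\textbf{Main obstacle.} The hardest step is soundness at the sharp constant: proving that every OS metric with distortion $<\varphi$ on the gadget induces an allowed order. This means solving a small LP/inequality system for each of the finitely many cyclic orders of the gadget's points. I would try a computer-checkable case analysis using the crossing inequalities above.
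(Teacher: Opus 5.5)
Your reformulation is correct and is also the paper's starting point: a metric is Okamura--Seymour iff the crossing (Kalmanson/Monge) inequalities hold for some cyclic order. After that, though, there is no reduction. The gadget, its distances and the soundness argument are the entire content of the theorem, and they are left as a plan. The concrete ingredients you do name would also fail:
\begin{itemize}
\item \textbf{Cyclic Ordering.} The crossing inequalities are invariant under reversing the cyclic order, so the feasible orders of any gadget form a reversal-closed set. A gadget that sees only its own points can therefore never force $(a,b,c)$ to be clockwise; at best it could tie that orientation to a global reference, which is again a global interaction.
\item \textbf{Betweenness.} Betweenness refers to a linear order, and a cyclic order supplies one only after anchors cut it.
\item \textbf{Decoupling.} This is the most serious failure: quadruples mixing points of different gadgets obey the same inequalities. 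If a hub $h$ is at the same $\rho$-distance from every other point, the inequality for $\langle h,x,y,z\rangle$ gives $\rho(x,z)\ge\max\{\rho(x,y),\rho(y,z)\}$ for every triple in the linear order obtained by cutting at $h$. That is, the whole remaining metric must be Robinson. Similarly, if two groups are at mutual distance more than $D$ times their internal distances, their points can never alternate in the cyclic order. Shared variable points cannot be placed at separate scales in the first place.
\end{itemize}
So the interaction between gadgets is a strong global constraint, and your plan leaves it unanalyzed.

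Your hub computation is in fact the missing idea, and the paper exploits it rather than trying to neutralize it. It reduces from the \emph{unit-interval graph sandwich} problem, whose structure is exactly Robinson.
\begin{itemize}
\item Anchors $p,s,q,r$ force the order $\langle p,s,v_1,\dots,v_n,q,r\rangle$. The point $r$ lies on the opposite $p$--$q$ arc from $V\cup\{s\}$, and $d(p,s)=D-1$ makes any order $\langle p,v,s,q\rangle$ violate Monge.
\item Distortion lets a hub's distances vary by a factor $D$, so the paper pins them. In $\langle p,s,v,q\rangle$ one has $\rho(p,v)+\rho(s,q)\le 2D\le\rho(p,q)+\rho(s,v)$, so Monge forces equality and $\rho(p,v)=D$ for every $v\in V$.
\item Monge on $\langle p,v_i,v_j,v_k\rangle$ then makes $\rho|_V$ Robinson. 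Hence the threshold graph $\{uv:\rho(u,v)\le U\}$ has an umbrella ordering and is a unit-interval graph. It contains $E^+$ (where $d=U/D$) and avoids $E^-$ (where $d=L>U$).
\item Completeness needs no explicit planar graph. Taking $\rho(u,v)=1+\epsilon|c_u-c_v|$ from a bounded-gap unit-interval representation gives a line metric plus a constant, which is Monge in the order of the centers; the anchor pairs are checked by a finite table.
\end{itemize}
The restriction on $D$ in the paper comes from this anchor system: it must pin $\rho(p,v)$ with a tight Monge inequality while a strict one forces $s$ before $V$. It does not come from a per-constraint four-point gadget like your square with diagonals.
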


The NP-hardness in \Cref{thm:embedding-hardness} is perhaps not surprising, given that analogous problems for trees metrics~\cite{ABFPT98} and line metrics~\cite{badoiu2005,badoiuSTOC05} are NP-hard. (All of these problems have polynomial time algorithms when the distortion is exactly $1$.) However, what we found surprising is that if one can find a canonical ordering, one can bypass the NP-hardness by  \Cref{thm:os-embedding}. This is exactly what we will do in an algorithmic application of our embedding results discussed in \Cref{subsec:SSSP}. Specifically, we manage to get the canonical ordering $\sigma_T$ in this specific setting and get a polynomial-time algorithm.

\subsection{An application: distributed approximate SSSP in planar digraphs in $\Tilde{O}(D)$ round}\label{subsec:SSSP}

The distributed model we study is the \congest model where the communication graph is a (possibly edge-weighted and directed) graph $G = (V,E)$ with a hop diameter\footnote{The hop diameter is the diameter of the unweighted and undirected version $G$.} $D$. In every (synchronous) round, $O(\log n) = \Tilde{O}(1)$ bits can be sent along each edge in any direction, i.e., the communication graph is bidirectional (c.f.~\cite{Parter20}). (We use $\Tilde{O}$ notation to suppress polylog factors.) The local computation time at each node, though not explicitly specified in the model, should be polynomial. 

A recent line of work in the distributed \congest model that has attracted a lot of attention focuses on solving fundamental problems on planar graphs. The starting point is  the pioneering work of  Ghaffari and Haeupler~\cite{GH16} who gave a distributed algorithm for planar embedding in $\OO(D)$ rounds, matching a trivial lower bound $\Omega(D)$ which holds for a vast majority of distributed problems in planar graphs. The central goal of this line of research is to design distributed \congest algorithms for planar (di)graph problems in $\OO(D)$ rounds. 

\paragraph{Undirected planar graphs.} Over the past decade, there has been significant progress on undirected planar graphs where $\OO(D)$-round algorithms are known for many basic problems, including MST~\cite{GH16b}, global mincut~\cite{GH16,GZ22}, approximate single-source shoretst path SSSP~\cite{ZGYHS22,RGHZL22}, DFS~\cite{GP17,JMR25},  cycle separators~\cite{GP17,JMR25,ADW26}, weighted girth~\cite{ADPW25}, and recently maxflow~\cite{ADPW25} (in slightly larger $D \cdot n^{o(1)}$ rounds), to name a few.  Along with these results are deep technical innovations, for example, congestion shortcuts~\cite{GH16}, minor aggregation~\cite{ZGYHS22,RGHZL22}, and metric compression~\cite{li2019planar}.

\paragraph{Planar digraphs.} Solving distributed problems on digraphs is much harder, and the progress has been slower. For many problems, $\Tilde{O}(D^2)$  rounds remain the state of the art, including maxflow~\cite{ADPW25} and global mincut~\cite{ADPW25}. Reachability and strongly connected components (SCCs) in planar digraphs are the only known nontrivial problems which admit $\OO(D)$ rounds~\cite{Parter20}, to the best of our knowledge. 

Here, we study approximate SSSP in planar digraphs, a fundamental problem in the distributed \congest model. We give the first algorithm for solving this problem in $\OO(D)$ rounds. 

\begin{restatable}{theorem}{Distributed}\label{thm:distributed} Let $G=(V,E,w)$ be an edge-weighted, planar digraph where every edge has an integer weight in $\{0,1,\ldots, n^{O(1)}\}$. Let $s$ be any vertex in $V$, $D$ be the hop diameter of $G$, and $\epsilon\in (0,1)$ be any parameter. There is a deterministic algorithm in the \congest model that finds a $(1+\epsilon)$-SSSP rooted at $s$ and runs $\OO(D/\epsilon)$ rounds and polynomial local time.
\end{restatable}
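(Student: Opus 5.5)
We follow the standard template for planar distributed SSSP: recursive separator decomposition combined with metric compression on boundary vertices, in the spirit of Li and Parter~\cite{li2019planar}. The difference is that each compressed piece is replaced by a planar Okamura-Seymour embedding obtained from \Cref{thm:os-embedding}, rather than by a dense distance graph. First we compute a hierarchical decomposition of $G$ in $\OO(D)$ rounds using the distributed cycle separator and low-congestion shortcut machinery~\cite{GH16,GP17,JMR25}. Each piece $P$ at each level has boundary vertices lying on $O(1)$ faces (holes), and the recursion has $O(\log n)$ levels. To reduce to the single-face case, we separate a piece with holes into $O(1)$ regions, or add portal paths that cut the holes open; on each resulting region the boundary lies on one face. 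With $\epsilon$-portals we keep only $\OO(1/\epsilon)$ portals per boundary path. The cost is a $(1+\epsilon/\log n)$ factor per level, which is why the final bound carries $\OO(D/\epsilon)$.

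Bottom-up, for each piece we compute (approximate) distances between its portal terminals $T_P$ by recursion on its children. The children are already replaced by small planar emulators, so the union stays planar. The terminal ordering $\sigma_{T_P}$ is read directly from the embedding, namely the clockwise order of portals along the face. Local computation at a designated leader then applies \Cref{thm:os-embedding} to $(T_P,\delta)$ with this ordering. Because the true distances are realized by a planar graph with the terminals on the outer face in order $\sigma_{T_P}$, the minimum distortion is $1$ (or $1+\epsilon$ for approximate inputs). Hence the algorithm returns a planar digraph $H_P$ that preserves terminal distances; a polynomial-size bound on $H_P$ is given by \Cref{thm:main}-type constructions. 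This replacement is what keeps the composed graph planar at the parent level. Dense distance graphs fail here: they are not planar, so they cannot be fed into the next level's planar routines, and that loss of planarity is what forced $D^2$ in prior work.

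Top-down, we compute SSSP from $s$ by running on the skeleton consisting of the emulators of the top-level pieces. This skeleton is small and planar, so a leader can solve it after gathering it via shortcuts in $\OO(D)$ rounds. We then push distances down: the distances to a piece's boundary are broadcast within the piece, and the piece recursively solves SSSP with those boundary values as sources, using its children's emulators. All pieces at a level run in parallel with $\OO(D)$-round shortcut aggregation, for $\OO(D/\epsilon)$ rounds in total, and local computation stays polynomial by \Cref{thm:os-embedding}. The errors multiply as $(1+\epsilon/\log n)^{O(\log n)}=1+O(\epsilon)$.

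The main obstacle is bandwidth. An emulator $H_P$ may have $\poly(|T_P|)$ edges, and these must be communicated to the leader of the parent piece within $\OO(D)$ rounds. We would therefore bound $|T_P|$ by $\OO(1/\epsilon)$ via portals, so that each emulator has polylogarithmic size. This requires arguing that \Cref{thm:os-embedding} yields an embedding whose size is polynomial in the number of terminals and independent of $n$.
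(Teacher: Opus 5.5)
\textbf{The gap: the $\epsilon$-portal step fails.} Your whole bandwidth argument rests on the $\epsilon$-portal reduction, and that step does not hold here. Portals give a $(1+\epsilon)$-approximation only when the boundary paths are shortest paths of the weighted graph. Then a shortest path crossing the boundary at $v$ can be detoured to a nearby portal at small additive cost. In this setting the boundaries are pieces of hop-based cycle separators, i.e., paths of a BFS tree of the unweighted, undirected version of $G$. The weight along the boundary from $v$ to a portal has no relation to the distance you are approximating. In a digraph the boundary edges need not even be traversable in the required direction. (Thorup's directed oracle builds its own shortest-path separators scale by scale for exactly this reason.)

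So in the worst case a piece keeps $\tilde\Theta(D)$ terminals. These are the up to $O(D\log n)$ separator vertices on $O(\log n)$ holes (\Cref{lem:logn_holes}), plus both copies of every vertex on the tree paths you cut along to merge the holes. Those copies must stay terminals so the pieces can later be glued back. With $|T_P|=\tilde\Theta(D)$, the emulator from \Cref{thm:os-embedding} has $\mathrm{poly}(D)$ size, since the available bound is Chen--Tan's $O(|T|^6)$, and it encodes $\Theta(|T|^2)$ terminal distances anyway. Shipping $H_P$ to the parent's leader, or gathering the top-level skeleton, therefore does not fit in $\tilde O(D/\epsilon)$ rounds. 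The obstacle you name in your last paragraph remains open.

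\textbf{The missing idea: never transmit the emulator.} The cut-open emulator is a planar digraph, so the leader can build Thorup's $(1+\epsilon')$-oracle on it (\Cref{lem:distance_oracle}). It then broadcasts only the labels of the $\tilde O(D)$ terminals, $\tilde O(1/\epsilon)$ words each, together with their cyclic order on the single face. That is $\tilde O(D/\epsilon)$ words in total, and this is exactly where the $1/\epsilon$ in the round bound comes from.

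The parent's leader then works locally:
\begin{enumerate}
\item it decodes approximate terminal distances from the received labels;
\item it rebuilds a planar emulator via \Cref{thm:os-embedding} using the known order;
\item it glues the children's emulators along $S_X$ and cuts open its own connecting paths;
\item it computes a new oracle for the next level up.
\end{enumerate}
This is the real reason planarity of the emulator matters: Thorup's compact directed oracle is available for planar digraphs, so a non-planar dense distance graph cannot be compressed this way.

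For the final step, the paper does not push distances top-down. Each vertex stores Parter-style labels of size $\tilde O(D)$, holding approximate distances to and from $\partial\mathbb{S}^+_X$ at every level. The algorithm broadcasts $L(s)$, and each vertex combines labels at the first bag that separates it from $s$. A top-down phase like yours may also work, but only on top of the same oracle-based compression.
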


Now we give a very brief overview of how we use \Cref{thm:os-embedding} to prove \Cref{thm:distributed}. Like other distributed algorithms in planar (di)graphs~\cite{li2019planar,Parter20,ADPW25}, our algorithm uses the bounded diameter decomposition (BDD)~\cite{li2019planar}.  Starting with the input graph $G$, we find a cycle balanced separator of length $O(D)$ to separate $G$ into $G^{in}$ and $G^{out}$, and then recurse on  $G^{in}$ and $G^{out}$. The resulting recursion tree, where each node is associated with $\OO(1)$ cycles of length $O(D)$ each, is the BDD. Li and Parter~\cite{li2019planar} gave an algorithm to compute BDD in $\OO(D)$ rounds, so we can assume that a  BDD is given.

Our algorithm is closely related to the algorithm for reachability by Parter~\cite{Parter20}, who solves this problem by designing a reachability labeling scheme. Here, each vertex is given a label of $\Tilde{O}(1)$ bits such that the reachability of two vertices can be inferred from their labels alone. Recursively, assume that one has already computed a reachability labeling scheme for vertices in $G^{in}$ and $G^{out}$, then one can broadcast the labels of the vertices in the cycle separator to a vertex $u$. Next, $u$ infers a connectivity preserver (a possibly non-planar graph preserving connectivity) of separator vertices, which it could use to compute its own reachability label in $G$. Parter's labeling scheme is designed so that the inference is possible. 

Naturally, for approximate SSSP computation, one could use the approximate distance labeling scheme by Thorup~\cite{Thorup04}, which only has size $\OO(1)$ bits per vertex.  Then one can compute a \EMPH{non-planar distance emulator} of separator vertices from these labels alone. However, when putting two non-planar distance preservers from $G^{in}$ and $G^{out}$ back together (by identifying the shared vertices on the separator), we get a non-planar graph. It is unclear how one can compute an approximate distance labeling scheme from this non-planar graph.  Now our \Cref{thm:os-embedding} comes to the rescue: we can use the labels from  $G^{in}$ and $G^{out}$, along with the ordering along the separator, to compute an approximate \EMPH{planar distance emulator} in polynomial time. Once we have a planar distance emulator, we can apply Thorup's labeling scheme~\cite{Thorup04} to get the approximate distance labels in $G$.   As  mentioned above, in this setting, we can actually get the ordering of the vertices along the separator to make the polynomial construction possible; \Cref{thm:embedding-hardness} implies that without the ordering, the problem is NP-hard. One can view our technique as an efficient construction of a compact approximate distance emulator for planar digraphs, the first of its kind. Existing compact distance emulators are known only for unweighted graphs~\cite{CO20,CKT22,LTZ25,CC25}.

\subsection{Technical ideas: Proof overview of \Cref{thm:main}}\label{subsec:ideas}

Our planar embedding of Okamura-Seymour quasimetrics, or \EMPH{OS quasimetrics} for short, is built directly on the ideas of Chen and Tan~\cite{ChenTan2025}. For completeness, we first review their ideas.

For a planar edge-weighted planar-embedded digraph $G = (V,E)$ and terminal set $T\subseteq V$ on the outerface of $G$, we call $(G,T)$ a \EMPH{directed Okamura-Seymour instance}. Let $\sigma_T$ be a sequence representing the \EMPH{clockwise ordering} of all terminals in $T$ on the outerface of $G$. If $s$ is a subsequence of $\sigma_T$, we write $s\subseq \sigma_T$.  By the triangle inequality, it is well known that the terminal distances in a directed Okamura-Seymour instance satisfy the (directed) \EMPH{Monge property}, namely: 
\begin{equation}\label{eq:monge}
\delta_G(x_1,y_2)  + \delta_G(x_2,y_1) \leq \delta_G(x_1,y_1) + \delta_G(x_2,y_2) \quad \forall ~ \langle x_1,y_2,y_1,x_2\rangle  \subseq \sigma_T  
\end{equation}

Thus, for a quasimetric $(T,\delta_T)$ to be an OS quasimetrics, it has to satisfy the Monge property w.r.t some ordering $\sigma_T$ of the terminals. Chen and Tan~\cite{ChenTan2025} show that this necessary condition is also sufficient.  Additionally, they (i) give a rather simple algorithm to find a valid ordering $\sigma_T$ and (ii) show that there exists an OS instance realizing $(T,\delta_T)$ that has polynomial size.

\begin{theorem}[Theorems 1 and 2 in~\cite{ChenTan2025}]\label{thm:Chen-Tan} A quasimetric $(T,\delta_T)$ is an Okamura-Seymour quasimetric if and only if there exists an ordering $\sigma_T$ such that $\delta_T$ satisfies the Monge property w.r.t. $\sigma_T$.  Furthermore, if $(T,\delta_T)$ is an Okamura-Seymour quasimetric, then:
\begin{enumerate}
    \item the ordering $\sigma_T$ can be found in time $\poly(|T|)$, and
    \item  there exists an Okamura-Seymour instance $(G,T)$ realizing $(T,\delta_T)$ has  $O(|T|^6)$ vertices.
\end{enumerate}
\end{theorem}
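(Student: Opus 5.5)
The statement has three parts: necessity of the Monge condition, its sufficiency (which includes the size bound in item 2), and the ordering algorithm in item 1. I would treat them in that order.

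\textbf{Necessity.} Let $(G,T)$ realize $(T,\delta_T)$, and let $\sigma_T$ be the clockwise order of $T$ on the outerface. Fix $\langle x_1,y_2,y_1,x_2\rangle \subseq \sigma_T$. The pairs $\{x_1,y_1\}$ and $\{x_2,y_2\}$ interleave on the boundary of the outerface. Hence, by the Jordan curve theorem, a shortest $x_1\to y_1$ path $P$ and a shortest $x_2\to y_2$ path $Q$ must share a vertex $v$. Splicing the paths at $v$ gives
\[
\delta_G(x_1,y_2)+\delta_G(x_2,y_1)\le \bigl(|P[x_1,v]|+|Q[v,y_2]|\bigr)+\bigl(|Q[x_2,v]|+|P[v,y_1]|\bigr)=\delta_G(x_1,y_1)+\delta_G(x_2,y_2),
\]
which is \eqref{eq:monge}. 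This direction is routine.

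\textbf{Sufficiency and size.} Fix an ordering $\sigma_T$ for which \eqref{eq:monge} holds, and place $T$ on a circle in this order. For every ordered pair $(t,t')$, draw a chord $c_{t,t'}$ from $t$ to $t'$, chosen so that any two chords cross at most once, and cross only when their endpoints interleave. Planarize the arrangement: every crossing becomes a vertex, and every chord segment becomes a pair of antiparallel arcs. There are $O(|T|^2)$ chords, so the result $H$ has $O(|T|^4)$ vertices, well within the claimed $O(|T|^6)$. The extra room allows refinements such as splitting vertices or parallel copies.

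I would then look for nonnegative arc weights satisfying two conditions.
\begin{enumerate}
\item \emph{Lower bound.} $\delta_H(t,t')\ge\delta_T(t,t')$ for all $t,t'$. This is encoded by potentials: variables $\phi_t(v)$ with $\phi_t(t)=0$, $\phi_t(v)\le\phi_t(u)+w(u,v)$, and $\phi_t(t')\ge\delta_T(t,t')$.
\item \emph{Upper bound.} The directed path along the chord $c_{t,t'}$ has length exactly $\delta_T(t,t')$. This is linear in $w$.
\end{enumerate}
This gives a linear feasibility system, and any solution realizes $\delta_T$.

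\textbf{Main obstacle: feasibility of the system.} Showing that this system is feasible is where the Monge property must enter. I would argue through Farkas' lemma. Assume the system is infeasible. The dual certificate combines chord-length equalities with potential constraints, and it can be decomposed into cycles of ``go along chord $c_{x_1,y_1}$, switch at a crossing, continue along chord $c_{x_2,y_2}$''. I would then show that each such switch is paid for by one Monge inequality on the interleaved quadruple, together with the triangle inequality for non-interleaved configurations. As a fallback, I would try induction on $|T|$: delete one terminal, realize the remaining quasimetric, and reinsert the deleted terminal with new arcs whose lengths are set using Monge. The risk in the fallback is that reinsertion may need to change existing weights, and this is exactly the difficulty the dual argument avoids.

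\textbf{Finding the ordering (item 1).} Process the terminals one at a time. Maintain the set of cyclic orders of the processed terminals that are consistent with all Monge inequalities among them. Inserting a new terminal $t$ allows only $O(|T|)$ possible gap positions, and each can be tested against the $O(|T|^3)$ quadruples that contain $t$. The delicate point is that several consistent orders may coexist. I would handle this by showing that the consistent orders are closed under reversing the segments between equal-distance ``twin'' terminals. That closure would let me keep a single canonical representative at each step, giving $\poly(|T|)$ time overall.
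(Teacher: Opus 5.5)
\textbf{Where the proposal breaks.} Your necessity argument is fine. The gap is in sufficiency, and as you set it up it cannot be closed: on a pseudo-chord arrangement fixed in advance, the Monge property does \emph{not} imply that your system is feasible.

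Take terminals $0,\dots,5$ in clockwise order and the pairwise interleaving chords $A=c_{0,3}$, $B=c_{4,1}$, $C=c_{2,5}$. If they cross pairwise once, their three crossings bound a triangle. There are two possible combinatorial types, and for these orientations the triangle is a directed cycle in both.

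In one type, the pieces of $A,B,C$ off the triangle recombine at the crossings into paths $0\to1$, $2\to3$, $4\to5$. Their total length is $\delta(0,3)+\delta(2,5)+\delta(4,1)$ minus the length of the triangle, so any feasible weighting forces
\[
\delta(0,1)+\delta(2,3)+\delta(4,5)\le\delta(0,3)+\delta(2,5)+\delta(4,1).
\]
In the other type the pieces recombine into $0\to5$, $2\to1$, $4\to3$, which forces the same bound on $\delta(0,5)+\delta(2,1)+\delta(4,3)$. Summing the three Monge inequalities at the crossings only shows that \emph{one} of the two bounds holds.

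Concretely, draw just these three directed paths in the second type, with weight $1$ on the triangle edges and $0$ elsewhere (add long boundary edges if you want finite distances). This is an honest OS quasimetric with $\delta(0,1)=\delta(2,3)=\delta(4,5)=2$ and $\delta(0,3)=\delta(2,5)=\delta(4,1)=1$. No weighting of a first-type arrangement realizes it. The analogous example drawn in the first type gives $\delta(0,5)=\delta(2,1)=\delta(4,3)=2$ and defeats the second type.

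The same example shows that your Farkas step (``each switch is paid for by one Monge inequality'') cannot work. The infeasibility certificate consists of three switch paths against three chords. The Monge inequality at each crossing needs a partner switch path that runs through the triangle, and that path is not in the certificate.

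\textbf{What is actually needed.} The drawing must therefore depend on $\delta_T$, and choosing it is the real difficulty. Your plan gives no rule for this choice, and your $O(|T|^4)$ size count rests on the fixed arrangement. Note also that if a $\delta_T$-independent arrangement always worked, the embedding problem this paper solves would reduce to a single LP of the kind in \Cref{lm:check-feasibility}.

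The proof the paper relies on (Chen--Tan) works differently:
\begin{itemize}
\item It inserts paths one at a time, each routed adaptively through faces of the current nest so as to avoid forbidden faces.
\item The existence of a feasible route comes from a backward induction over walls with $W$-good tuples.
\item The $O(|T|^6)$ bound comes from the nest reduction (\Cref{lem:nest_reduction}). That reduction only rules out two paths meeting twice \emph{in the same order}, so pairs of paths may cross several times; the pseudo-chord picture is not what it produces.
\end{itemize}

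Item~1 is also only sketched. The set of Monge-consistent cyclic orders can be exponentially large (for the uniform metric every order works). The twin-reversal closure that would let you keep one representative and never backtrack is exactly what needs proof; the paper simply imports this step from Chen and Tan.
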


\Cref{thm:Chen-Tan} implies a simple algorithm for testing of a given quasi-metric $(T,\delta_T)$ is an OS quasimetric: first compute the ordering $\sigma_T$ and then test the Monge property in \Cref{eq:monge} w.r.t. $\sigma_T$  for every quadruple of terminals forming a subseuence of $\sigma_T$.  

As mentioned above, the difficult step of proving \Cref{thm:Chen-Tan} is to show that the Monge property w.r.t. $\sigma_T$ is sufficient. For this, Chen and Tan construct a planar digraph $G$ directly as follows:

\begin{enumerate}
    \item \textbf{path drawing:~} Draw a curve $\pi_{tt'}$ for each (ordered) pair of terminals $(t,t')\in T\times T$ representing the planar embedding of the shortest path $\pi_G(t,t')$. The set of curves $\{\pi_{tt'}\}_{(t,t')\in T\times T}$ has the property that any two curves cross at a \EMPH{finite} number of points, and each crossing point is \EMPH{proper}, i.e., it is not a touching point. The crossing points and the terminals are the vertices of $G$, and the segments connecting them are the edges of $G$.
    \item \textbf{edge-weight assignment:} Show that there exists a weight assignment to every edge such that  $\delta_G(t,t') = \delta_T(t,t')$ for every pair $(t,t')\in T\times T$.
\end{enumerate}

The first step is arguably more difficult because one has to draw paths in a way that there exists a weight assignment scheme to the edges such that the terminal distances are preserved. 
Next, we sketch the ideas of the path drawing  by Chen  and Tan~\cite{ChenTan2025}. 

\begin{figure}[ht]
    \centering
    \begin{subfigure}[b]{0.45\textwidth}
        \centering
        \includegraphics[page=1, trim=70pt 160pt 70pt 90pt, clip, width=\textwidth]{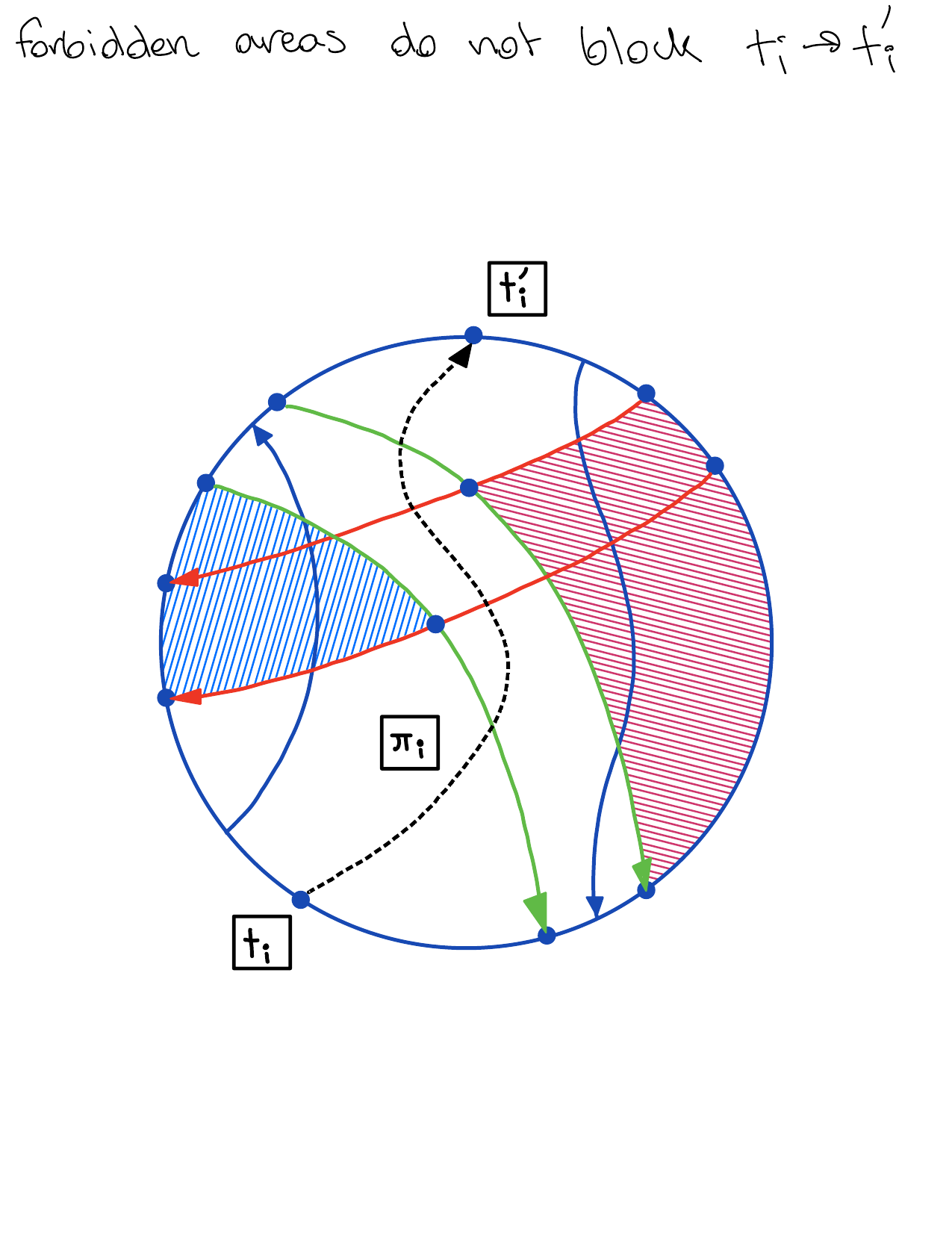}
        \caption{Forbidden Faces}
        \label{fig:ChenTanForbidden}
    \end{subfigure}
    \hfill 
    \begin{subfigure}[b]{0.45\textwidth}
        \centering
        \includegraphics[page=2, trim=70pt 230pt 70pt 90pt, clip, width=\textwidth]{figuresfinal/IntroAndUseCaseFinal.pdf}
        \caption{The Wall}
        \label{fig:ChenTanWall}
    \end{subfigure}
    \hfill
    \caption{An illustration depicting a) the forbidden faces (dashed areas) and a directed path from $t_i$ to $t'_i$ (black dotted arrow) to show that the forbidden faces do not prevent $t_i$ from reaching $t'_i$. In b) we fix the embedding of $\pi_i$ of a) (black solid line), and highlight the complete wall (bold boundary of the green region).}
    \label{fig:ChenTan}
\end{figure}

\paragraph{Chen-Tan path drawing.} Chen and Tan proposed a very natural greedy process: drawing one path at a time, in arbitrary order. That is, the terminal pairs are ordered arbitrarily, say $\{(t_1,t_1'),(t_2,t_2')\ldots\}$, and for each pair $(t_i,t'_i)$, draw a new path $\pi_i$ in the current (planar embedded) graph $G_{i-1}$, which contains paths drawn in previous steps. (The starting point graph $G_0$ has no edges.) 

The tricky part of this greedy process is that, when one tries to draw $\pi_i$ on top of $G_{i-1}$ to realize the distance $\delta_T(t_i,t_i')$, there are certain faces of $G_{i-1}$ that $\pi_{i}$ cannot pass through: if one draws $\pi_i$ going through such a face, there is no way to assign edge weights to the graph $G_i$ to realize the distances $\delta_{T}(t_1,t_1'),\ldots, \delta_T(t_i,t_i')$ (due to the Monge property). Such a face is called a \EMPH{forbidden face}. (One can also regard the outerface as a forbidden face since we will never draw paths going through the face.) They show that the union of all forbidden faces does not block $t_i$ from reaching $t_i'$ (see Figure~\ref{fig:ChenTanForbidden}). However, avoiding forbidden faces may not be sufficient:  there are many ways to draw a path $\pi_i$ from $t_i$ and $t_i'$ avoiding forbidden faces, and not all of them are feasible. Here, a \EMPH{feasible drawing} of $\pi_i$ means that it is possible to assign weights to the edges of $G_i$ to realize all the distances of the first $i$ pairs of $T\times T$, called a \EMPH{feasible weight assignment} to $G_i$. Their key contribution is the following lemma, which shows that a feasible drawing always exists. Thus,  we can draw $\pi_i$ and continue drawing curves until all the terminal distances are realized.

\begin{lemma}[Claim 6~\cite{ChenTan2025}, rephrased]\label{lm:Chen-Tan-keylemma}  There always exists a feasible drawing of $\pi_i$ on top of $G_{i-1}$.
\end{lemma}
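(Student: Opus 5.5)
The plan is induction on $i$, combined with an LP/Farkas view of feasibility and an explicit choice of \emph{which} drawing to use. Assume $G_{i-1}$ has a feasible weight assignment $w_{i-1}$ realizing $\delta_T(t_j,t_j')$ for $j<i$, with each $\pi_j$ a shortest path. A drawing of $\pi_i$ subdivides old edges at its crossings. Feasibility of $G_i$ is then the LP in the edge weights $w\ge 0$ with constraints
\[
w(\pi_j)=\delta_T(t_j,t_j') \quad (j\le i), \qquad w(P)\ge \delta_T(t_j,t_j') \quad \text{for every } t_j\to t_j' \text{ path } P \text{ in } G_i .
\]

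\textbf{Choosing the drawing.} First I rely on the stated fact that the forbidden faces do not separate $t_i$ from $t_i'$. Among all drawings that avoid forbidden faces, I take the \emph{extremal} one: the route hugging the boundary of the region of non-forbidden faces on one fixed side, say the clockwise side, which is the ``wall'' of Figure~\ref{fig:ChenTanWall}. I then perturb it so that every crossing with an old curve $\pi_j$ is proper and happens at most once per pair of curves. The reason for extremality is that every old path $\pi_j$ crossing $\pi_i$ will then cross it in a way dictated by the cyclic positions of $t_j,t_j'$ relative to $t_i,t_i'$ in $\sigma_T$. This is exactly the situation where the Monge inequality \eqref{eq:monge} applies to the quadruple $\{t_i,t_i',t_j,t_j'\}$.

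\textbf{Weights.} I keep the old weights on $G_{i-1}$, splitting each crossed edge so that its total weight is preserved. Weights on $\pi_i$ are defined through a potential $\phi$ along $\pi_i$. At the crossing point $x$ with $\pi_j$, the potential $\phi(x)$ must lie in the interval
\[
\bigl[\,\max_j\bigl(\delta_T(t_j,t_i')\text{-type lower bounds}\bigr),\ \min_j\bigl(\delta_T(t_i,t_j')-d_{G_{i-1}}(x,t_j')\bigr)\,\bigr].
\]
Concretely, the lower bound is $\phi(x)\ge \delta_T(t_i,t_i')-\bigl(\delta_T(t_j,t_i')-d_{G_{i-1}}(t_j,x)\bigr)$, and symmetrically for the upper bound. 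I also require $\phi$ to be nondecreasing along $\pi_i$, with $\phi(t_i)=0$ and $\phi(t_i')=\delta_T(t_i,t_i')$.

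\textbf{Verification.} The constraints are a family of intervals plus monotonicity. A Helly/greedy argument along the path (take $\phi$ as the running maximum of the lower bounds) shows that a solution exists iff every lower bound at a crossing $x$ is at most every upper bound at every crossing $y$ occurring no earlier than $x$. Each such comparison unwinds, via the triangle inequality in $G_{i-1}$ and induction, to a Monge inequality for a subsequence of $\sigma_T$. The extremal choice of drawing is what guarantees the quadruple appears in the right cyclic order. Finally, I check the LP inequalities. Any $t_j\to t_j'$ path using part of $\pi_i$ enters $\pi_i$ at some crossing $x$ and leaves at $y$, and the interval bounds on $\phi(x),\phi(y)$ are exactly what is needed to show it is no shorter than $\delta_T(t_j,t_j')$. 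Shortcuts of $\pi_i$ through old edges are excluded by the same bounds.

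\textbf{Main obstacle.} I expect the main difficulty to be the consistency step, namely that the lower bound at $x$ is at most the upper bound at every $y\ge x$. It requires a careful case analysis of the cyclic position of $(t_j,t_j')$ and $(t_k,t_k')$ relative to $(t_i,t_i')$. It also needs a proof that the extremal drawing never creates a configuration in which the Monge inequality points the wrong way; I would argue that any such configuration forces the crossed region to be a forbidden face. My first attempt would be to reduce that case to the definition of forbidden faces.
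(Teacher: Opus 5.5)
Your proposal has a genuine gap: it keeps the old weights fixed, which the lemma does not permit and which fails. Separately, the potential argument and the choice of drawing are both unsupported.

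\textbf{The fixed old weights.} The step that fails outright is keeping $w_{i-1}$ on the edges of $G_{i-1}$. Your induction hypothesis only says that each $\pi_j$, $j<i$, is a shortest path of length $\delta_T(t_j,t_j')$. It says nothing about other terminal pairs, and with old weights fixed, inserting $\pi_i$ can only decrease distances.

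Here is a concrete counterexample. Take terminals in cyclic order $t_1,t_2,t_3,t_4$, with $\pi_1\colon t_1\to t_3$ and $\pi_2\colon t_2\to t_4$ crossing once at $x$. Give weight $0$ to $\pi_1[t_1,x]$ and to $\pi_2[x,t_4]$, and the full lengths to the other two edges. This is feasible for $G_2$, since each $\pi_j$ is the only path between its endpoints, but it gives $d_{G_2}(t_1,t_4)=0$. If the next pair is $(t_1,t_4)$ with $\delta_T(t_1,t_4)>0$, no drawing of $\pi_3$ is feasible unless old edges are re-weighted.

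The weighting condition allows a completely new $w_i$ on $G_i$, and the Chen--Tan argument relies on this. The paper itself remarks that assuming existing weights stay unchanged ``is not warranted.'' To repair this you would need a much stronger invariant: no terminal pair, known or future, is undercut in $G_{i-1}$. You would also need a proof that this invariant can always be maintained without touching old edges. That is stronger than the lemma, and your plan does not address it.

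\textbf{The potential step and the choice of drawing.} Even granting fixed weights, the $\phi$-argument does not go through.
\begin{enumerate}
\item Your two bounds are swapped. The path $t_j\to x$ followed by $\pi_i[x,t_i']$ forces the upper bound $\phi(x)\le \delta_T(t_i,t_i')-\delta_T(t_j,t_i')+d_{G_{i-1}}(t_j,x)$. The path $\pi_i[t_i,x]$ followed by $x\to t_j'$ gives the lower bound.
\item More seriously, these unary bounds protect the pairs $(t_j,t_i')$ and $(t_i,t_j')$, which need not be known pairs. What the weighting condition actually demands concerns a $t_k\to t_k'$ path with $k\le i$ that enters $\pi_i$ at a crossing $x$ and leaves at a later crossing $y$. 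It must satisfy the two-variable constraint $\phi(y)-\phi(x)\ge \delta_T(t_k,t_k')-d(t_k,x)-d(y,t_k')$, and such paths may enter and leave $\pi_i$ several times.
\item The bounds that $(t_k,t_k')$ itself contributes at $x$ and $y$ only guarantee a length of $\delta_T(t_i,t_k')+\delta_T(t_k,t_i')-\delta_T(t_i,t_i')$. When $(t_k,t_k')$ interleaves with $(t_i,t_i')$ in $\sigma_T$, \eqref{eq:monge} says this is \emph{at most} $\delta_T(t_k,t_k')$. So the ``running maximum'' argument proves nothing exactly where crossings occur. You would need a no-negative-cycle argument for a genuine system of difference constraints whose cycles chain several old paths.
\end{enumerate}

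Finally, the feasibility of your extremal wall-hugging drawing is the entire content of the lemma, and you leave it at ``I would argue.'' The paper stresses that avoiding forbidden faces is not sufficient. Chen and Tan never exhibit a particular drawing. They assume every drawing is infeasible and attach $W$-good tuples to walls, i.e.\ prefixes of a complete wall. They then show by backward induction (their Claim 9) that if both one-step extensions of a wall carry a good tuple, so does the wall. Iterating down to the empty wall yields a certificate that $G_{i-1}$ was already infeasible, a contradiction. If an explicit extremal drawing were provably feasible, polynomial-time embedding would follow at once from the LP of \Cref{lm:check-feasibility}, and this paper's prefix-validity LP would be unnecessary.
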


However, the proof of \Cref{lm:Chen-Tan-keylemma} is existential and is by contradiction. At a very high level, suppose that for every possible drawing\footnote{There is a simple way to enumerate all drawings of $\pi_i$} of $\pi_i$, there is no way to assign the weights to realize distances if the first $i$ pairs, then they show that before $\pi_i$ is inserted, $G_{i-1}$ already has no feasible weight assignment (for the first $i-1$ pairs), which is a contradiction. In more detail, let us fix a drawing of $\pi_i$. The boundary of the sequence of faces of $G_{i-1}$ that $\pi_{i}$ goes through is called a \EMPH{complete wall}. A prefix of the complete wall that contains a prefix of $\pi_i$ (starting from $t_i$) is called a \EMPH{wall} (see Figure~\ref{fig:ChenTanWall}). 

For a wall $W$, Chen and Tan~\cite{ChenTan2025} defined a so-called \EMPH{$W$-good tuple}: roughly speaking, a $W$-good tuple is a witness for the fact that there is no feasible weight assignment. (The precise definition of a  $W$-good tuple is too complicated to recite here in full.) Every complete wall (containing the entire curve $\pi_i$) contains a $W$-good tuple since $G_i$ has no feasible weight assignment. Their idea is to do backward induction to show that every wall $W$ contains a $W$-good tuple. Specifically, if one grows $W$ by one edge on the left boundary or one edge on the right boundary, one gets two walls $W'$ and $W''$. By induction both $W'$ and $W''$ contain $W'$-good and $W''$-good tuples, respectively. Then they can show that $W$ contains a $W$-good tuple (Claim 9 in~\cite{ChenTan2025}). Working back to the case $W = \varnothing$, the empty wall also contains a $(W = \varnothing)$-good tuple. However,  when $W = \varnothing$, $\pi_i$ is an empty curve, so the graph is $G_{i-1}$. The existence of $\varnothing$-good tuple implies that $G_{i-1}$ has no feasible weight assignments for the first $i-1$ pairs, which is a contradiction. 

\begin{figure}[htbp]
    \centering
    \begin{subfigure}[b]{0.49\textwidth}
        \centering
        \includegraphics[page=6, trim=40pt 160pt 70pt 60pt, clip, width=.45\textwidth]{figuresfinal/IntroAndUseCaseFinal.pdf}
        \caption{An Example Embedding of $\pi_i$}
        \label{fig:introsplittingsetup}
    \end{subfigure}
    \hfill 
    \begin{subfigure}[b]{0.49\textwidth}
        \centering
        \includegraphics[page=5, trim=40pt 160pt 70pt 60pt, clip, width=.45\textwidth]{figuresfinal/IntroAndUseCaseFinal.pdf}
        \caption{Cutting Yields $(K, T^{+})$}
        \label{fig:introsplittingccl}
    \end{subfigure}
    \hfill
    \caption{Consider an embedding of $\pi_i$ as in a) (black line connecting $t_i$, and $\pi[\ell] = \pi[3]$). The faces $F_0, F_1, ..., F_{\ell}$ are the red shaded region. Then in b), cutting out $F_0\cup\ldots \cup F_{\ell}$ (here two faces delimited by green boundaries) along the boundary we have another directed OS instance $(K,T^{+})$. $K$ is the collection of blue paths, and $T^{+}$ contains all terminals and vertices on the boundary of the faces (red nodes).}
    \label{fig:introsplitting}
\end{figure}

\paragraph{Our ideas.~} As a warm-up, we can turn the existential proof of \Cref{lm:Chen-Tan-keylemma} into an exponential time algorithm as follows: we enumerate all possible drawings of $\pi_i$, and for each drawing of $\pi_i$, solve a linear program (LP) to find the set of weights on the edges. Each drawing of $\pi_i$ is precisely characterized by a sequence of faces of $G_{i-1}$ that $\pi_i$ passes through. A subtle step in this algorithm is to guarantee that $G_{i}$ only has $\poly(|T|)$ faces by sparsifying $G_i$ using \Cref{thm:Chen-Tan} in case its number of faces is too large. Sparsifying $G_i$ essentially changes the greedy embedding process, so we have to ``restore'' (sparsified) $G_i$'s greedy structure (\Cref{cor:static_recognition_iter_i}). Once we make sure that $G_{i}$ has $O(\poly(|T|))$ faces, there are $2^{\poly(|T|)}$ different embeddings of $\pi_i$. Finally, to find the weight assignment, we observe that the LP can be solved in polynomial time using the Ellipsoid method, by providing a simple polynomial time separation oracle using a shortest path algorithm. Clearly, the enumeration step is the bottleneck in this embedding pipeline. 

To improve the running time to polynomial, instead of enumerating all embeddings of $\pi_i$,  we grow $\pi_i$ one step at a time, starting from $t_i$ until reaching $t_{i}'$. As mentioned above, $\pi_i$ can be represented by a sequence of faces that it goes through.  At any intermediate step $\ell$, we have a sequence of $\ell$ faces $(F_0,F_1,\ldots, F_\ell)$ where $F_0$ contains $t_i$ representing a prefix of length $\ell$ of $\pi_i$. (This sequence of faces constitutes a wall in Chen-Tan's terminology.) The invariant that our algorithm maintains is:

\begin{quote}
    \textbf{Valid Prefix Invariant:} For any $\ell\geq 0$,  there exists a feasible drawing of $\pi_i$ on $G_{i-1}$ such that $\pi_i$ goes through all the faces $F_0,\ldots, F_{\ell}$. 
\end{quote}

The next face $F_{\ell+ 1}$ that (a feasible) $\pi_i$ goes through is adjacent to $F_{\ell}$, i.e., $F_{\ell}$ and $F_{\ell+ 1}$  share an edge. For each edge $e$ on the boundary of $F_{\ell}$, except the edge between 
$F_{\ell}$ and $F_{\ell-1}$, let $F\not= F_{\ell}$ be the other face whose boundary contains $e$. Our \EMPH{key technical contribution  is a polynomial-time algorithm to test if $F$ is a candidate for $F_{\ell+1}$, i.e., deciding if there exists a feasible $\pi_i$ going through $(F_0,\ldots, F_{\ell},F)$}. By the valid prefix invariant, there will be a face $F$ adjacent to $F_{\ell}$ such that the test will pass. Therefore, we can set $F_{\ell+1} = F$ and the prefix invariant holds for $F_{\ell+1}$. The algorithm will stop whenever we reach $t_i'$, i.e., $F_{\ell+1}$ contains $t_i'$ on the boundary. Since the number of faces of $G_{i-1}$ is $O(\poly(|T|))$, the number of growing steps is polynomial, and the whole algorithm is in polynomial time.

Our testing algorithm is based on linear programming: we will write a linear program with an exponential number of constraints, and then give a polynomial-time separation oracle to solve it using the Ellipsoid method. Our LP is formulated by exploiting the Monge property. 
Let $\pi_i[0] = t_i$, and $\pi_i[j]$ be the crossing point of $\pi_i$ and the edge  shared between $F_{j-1}$ and $F_{j}$ for $1\leq j \leq \ell$. We draw $\pi_i$ passing through $F$ by picking a point  on $e$, denoted by $\pi[\ell]$, and drawing an edge from $\pi_i[\ell-1]$ to $\pi[\ell]$, which will be a new edge of $\pi_i$. Our problem now is reduced to deciding if there exists a feasible weight assignment to the first $\ell$ edges of $\pi_i$, namely $\{(\pi_i[j-1],\pi_i[j])\}_{j=1}^{\ell}$ such that there exists a feasible way to extend $\pi_i$ from $\pi_i[\ell]$ all the ways to $t'_{i}$.  

To illustrate the idea of our linear program for the above problem, let us assume (incorrectly) that we can assign weight to edges in $\pi_i$ without changing the weight of any edge, as well as the terminal distances, in $G_{i-1}$. (The edges intersecting $\pi_i$ at $\{\pi[j]\}_{1\leq j \leq \ell}$ will be split into two edges, but the total weight of the two splitting edges is the same as the original edge.) Let $x(\pi_i[j-1],\pi_i[j])$ be a variable denoting the weight assign to edge  $(\pi_i[j-1],\pi_i[j])$ for $j\in [1,\ell]$. We introduce another \EMPH{variable $d(\pi[\ell_i],t_i')$} to denote the total length of the suffix of $\pi_i$ from  $\pi_i[\ell]$ to $t'_i$. The first constraint of our LP is straightforward:

\begin{equation}\label{eq:length-constraint}
     \sum_{j=1}^{\ell}  x(\pi_i[j-1],\pi_i[j])  + d(\pi[\ell_i],t_i') = \delta_{T}(t_i,t_{i}')~,
\end{equation}
which simply means that the sum of the weights of the edges along $\pi_i$ must realize the distance $\delta_{T}(t_i,t_{i}')$. To formulate the second type of inequality, which encodes the fact that the weight of $d(\pi[\ell_i],t_i')$ can be realized in the future as well, we observe the following: If we cut out $F_0\cup\ldots \cup F_{\ell}$ along the boundary (see Figure~\ref{fig:introsplitting}), then we have another directed OS instance, denoted by $(K,T^{+})$ with the set $T^{+}$ of terminals containing all the terminals and vertices on the boundary of the faces. Thus, by \Cref{thm:Chen-Tan}, we can realize $d(\pi[\ell_i],t_i')$ if this distance, along with other distances between terminals in $T^{+}$, satisfies the Monge property in $K$ w.r.t. a clockwise ordering \EMPH{$\sigma_K$} on the boundary of $K$. More precisely:
\begin{equation}\label{eq:monge-constraint}
    \delta_{K}(u,t_i') + \delta_{K}(\pi_i[\ell],v) \leq \delta_{K}(u,v) + d(\pi[\ell_i],t_i') \quad \forall~ \langle u,\pi_i[\ell],v, t'_{i}\rangle \subseq \sigma_{K}
\end{equation}
As we assume that the weight assignment of edges in $\pi_i$ does not change the weight of existing edges, only $d(\pi[\ell_i],t_i')$ is a variable in \Cref{eq:monge-constraint}; all other terms $\delta_{K}(u,t_i'), \delta_{K}(\pi_i[\ell],v), \delta_{K}(u,v)$ are constants. Our linear program is to find a feasible non-negative assignment to variables $\{ x(\pi_i[j-1],\pi_i[j])\}_{j=1}^{\ell}$ and $d(\pi[\ell_i],t_i')$ satisfying the constraints in \Cref{eq:length-constraint} and \Cref{eq:monge-constraint}. There are only a polynomial number of constraints, so this problem is solvable in polynomial time. Of course, the assumption that the weight assignment of edges in $\pi_i$ does not change the weight of existing edges is not warranted.  

In general, adding the edges $\{(\pi_i[j-1],\pi_i[j])\}_{j=1}^{\ell}$ and assigning weights could change the structure of shortest paths in $G_{i-1}$ completely, as these edges could create unintentional shortcuts. Even vertices not reachable from each other in $G_{i-1}$ (infinite distances) now can be reachable via new edges (finite distances). In our linear program, we have to (i) reassign weights to all edges of $G_{i-1}$, and (ii) make sure that the new weights and the new edges (from $\pi_i$) still realize existing distances between terminal pairs $\{(t_j,t_j')\}_{j=1}^{i-1}$, and (iii) $d(\pi[\ell_i],t_i')$ is realizable in the cut out graph $K$. The Monge constraints to realize (iii) in \Cref{eq:monge-constraint} remain crucial, but now the terms $\delta_{K}(u,t_i'), \delta_{K}(\pi_i[\ell],v), \delta_{K}(u,v)$,  instead of being constants, are variables and constrained by linear combinations of other edge-weight variables encoding the distances. The number of constraints is exponential to prevent shortcuts in (ii). (See \hyperlink{validityLP}{\textsc{PrefixValidityLP}} for details.) Luckily, we are still able to design a polynomial time separation oracle to solve this exponential LP. (Very) roughly speaking, a key observation for deriving the separation oracle is that all the curves $\pi_{1},\ldots, \pi_{i}$ must remain shortest paths between their terminal endpoints, even after reassigning weights, and therefore, we can reduce the separation oracle to shortest path computation.

\section{An Exponential Time Algorithm}\label{sec:Exp-time}

The goal of this section is to introduce relevant terminology and the embedding framework of Chen and Tan~\cite{ChenTan2025}. Once the framework is set up, we will give a simple exponential-time algorithm for embedding an OS quasimetric on the plane. A technical subtlety of this section is handling a graph sparsification step, which will also be used in our polynomial-time algorithm. We believe the introductory materials in this section will make our polynomial-time algorithm easier to understand, as it is somewhat technically heavy.

\begin{theorem}\label{thm:exp} Let $(T,\delta_T)$ be an Okamura-Seymour quasimetric where the distance between any two terminals is an integer in $\{0,1,\ldots, W\}$. There exists an algorithm with running time $2^{\poly(|T|)}\poly(\log W)$  for constructing a planar embedding of $(T,\delta_T)$. 
\end{theorem}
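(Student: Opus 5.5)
We follow the greedy Chen--Tan framework and make each step constructive by brute force. First we compute a valid ordering $\sigma_T$ in $\poly(|T|)$ time by \Cref{thm:Chen-Tan}(1). Then we fix an arbitrary ordering $(t_1,t_1'),(t_2,t_2'),\ldots$ of the ordered terminal pairs with finite distance, and build planar embedded graphs $G_0,G_1,\ldots$. Here $G_0$ consists of the terminals placed on a circle in the order $\sigma_T$ (the outer face). $G_i$ is obtained from $G_{i-1}$ by drawing a curve $\pi_i$ from $t_i$ to $t_i'$ through the interior faces. We maintain as an invariant that $G_i$ admits a feasible weight assignment realizing $\delta_T(t_j,t_j')$ for all $j\le i$, and that $G_i$ has only $\poly(|T|)$ faces. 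At the end, $G_{|T|^2}$ with its feasible weights is the desired embedding. We must also check that no pair becomes too short; this is built into the LP constraints below.

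To realize step $i$, we enumerate drawings of $\pi_i$. A drawing is determined, up to combinatorial equivalence, by the sequence of faces $F_0,\ldots,F_\ell$ of $G_{i-1}$ it traverses, together with the edges it crosses. We may take it simple (no face visited twice), so with $f=\poly(|T|)$ faces there are at most $f!\cdot\poly = 2^{\poly(|T|)}$ candidates. For each candidate we form $G_i$ by subdividing crossed edges and adding the new crossing vertices. We then decide feasibility by the following LP over edge weights $w\ge 0$: for every $j\le i$, the sum of $w$ along $\pi_j$ equals $\delta_T(t_j,t_j')$, and for every terminal pair $(t,t')$ and every $t$-to-$t'$ path $P$ in $G_i$, $w(P)\ge \delta_T(t,t')$. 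Together these force $\delta_{G_i}(t_j,t_j')=\delta_T(t_j,t_j')$ and prevent shortcuts. The LP has exponentially many constraints, but a separation oracle is simply an all-pairs shortest path computation in $G_i$ under the current $w$. A violated constraint is a terminal pair whose shortest path is shorter than $\delta_T$, and this path serves as the separating hyperplane. The ellipsoid method then runs in $\poly(|G_i|,\log W)$ time. By \Cref{lm:Chen-Tan-keylemma} some candidate is feasible, so the enumeration succeeds.

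The main obstacle is keeping $|G_i|$ polynomial. Each new curve may cross every existing edge, so without control the size of $G_i$ could grow exponentially in $i$, which would break both the face-count bound and the running time. Our plan is to sparsify after each step. Once $G_i$ has feasible weights, we consider the terminal set $T$ together with the partial quasimetric realized so far. By \Cref{thm:Chen-Tan}(2) there is a realizing instance with $O(|T|^6)$ vertices. However, the greedy process requires $G_i$ itself to be the union of the drawn curves $\pi_1,\ldots,\pi_i$, because the argument behind \Cref{lm:Chen-Tan-keylemma} relies on that structure. So we would instead replace $G_i$ by the union of the $i$ curves, rerouted so that they are pairwise shortest paths crossing each pair at most a bounded number of times. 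Concretely, we uncross any two curves crossing twice by exchanging their subpaths between the crossings; since both are shortest, the exchange preserves lengths. After uncrossing, any two curves cross $O(1)$ times, so $G_i$ has $O(|T|^4)$ faces. We must verify that this restored graph is still a valid intermediate state of a greedy run, i.e., that it is what the greedy process would produce with these curves, so that \Cref{lm:Chen-Tan-keylemma} continues to apply. That verification is the delicate step. Given it, the total time is $|T|^2\cdot 2^{\poly(|T|)}\cdot\poly(|T|,\log W)=2^{\poly(|T|)}\poly(\log W)$.
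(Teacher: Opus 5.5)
Your skeleton matches the paper's: the Chen--Tan framework, candidate trajectories enumerated as simple dual paths, and the ellipsoid LP with an all-pairs shortest-path separation oracle. The gap is in the sparsification step, which is where the running-time bound comes from.

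\textbf{The size bound does not follow from your uncrossing.} Your exchange move only works when two crossings $u,v$ of $\pi_j$ and $\pi_k$ occur in the \emph{same} order along both directed curves. Then both subpaths are shortest $u$-to-$v$ paths and can be swapped. Even in that case, $u$ and $v$ become touching points that must be split, and the pair must be chosen so that the rerouted curves stay simple. If $u$ precedes $v$ on $\pi_j$ but follows it on $\pi_k$, the two subpaths run in opposite directions and there is nothing to exchange. Nothing in your argument controls such oppositely ordered crossings. Take a chord directed from $a_1$ to $b_1$, and a curve from $a_2$ to $b_2$ that zigzags across it $2k$ times, visiting the crossings right-to-left. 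This is a nest in which each curve is the unique directed path between its endpoints, so both are shortest for any weights. No two crossings occur in the same order, yet there are $2k$ crossings with only four terminals. So ``any two curves cross $O(1)$ times, hence $O(|T|^4)$ faces'' is unjustified. Without a polynomial face bound, the $2^{\mathrm{poly}(|T|)}$ count of trajectories is lost. \Cref{thm:Chen-Tan}(2) does not rescue this, since it is only existential. The paper instead invokes Chen--Tan's algorithmic nest-size reduction (\Cref{lem:nest_reduction}). In time polynomial in $|V(G_i)|$, it returns a nest realizing $\delta^{\le i}_T$ with only proper crossings, no two paths sharing two vertices in the same order, \emph{and} $O(|T|^6)$ vertices.

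\textbf{The conformity step is deferred, not proved.} You leave open that the sparsified graph is a legitimate intermediate state of a greedy run. That is exactly what is needed to keep applying \Cref{lm:Chen-Tan-keylemma} and \Cref{lm:iterative-correctness}. The paper proves it (\Cref{cor:static_recognition_iter_i}) by a short reconstruction. Start from the empty nest and insert $\pi_1,\pi_2,\dots$, copying their drawings from the reduced graph $\Tilde{G}_i$. Suppress every crossing with a curve not yet inserted by merging its two incident edges into one edge whose weight is their sum, and undo the merge when that curve is inserted later. Every path in the resulting $\Tilde{G}_j$ corresponds to a path of the same weight in $\Tilde{G}_i$. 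Hence distances can only grow, while the inserted curves keep their lengths. So each $\Tilde{G}_j$ satisfies the embedding and weighting conditions and is a valid output of \textsc{EmbedNextPath}$(\Tilde{G}_{j-1},t_j,t_j')$.

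If you replace your uncrossing with the Chen--Tan reduction and supply this conformity argument, the rest of your running-time accounting goes through.
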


 For a (di)graph $G$, we use $V(G)$ and $E(G)$ to denote its vertex set and edge set, respectively. 
 
 Imagine a greedy procedure that embeds an Okamura-Seymour quasimetric $(T,\delta_T)$ by drawing the shortest path between terminal pairs one at a time. The next path is drawn in a way that it cannot share edges with previous paths  (but it can share vertices, i.e., crossing points). The resulting  Okamura-Seymour instance at the end of this process, $(G,T)$, is called a nest, as formally defined below.

\begin{definition}[Nest~\cite{ChenTan2025}]  A directed Okamura-Seymour instance $(G, T)$ realizing quasi-metric $(T,\delta_T)$ is a \EMPH{nest}  if and only if $G$ is composed of a set $\mathcal{P}$ of directed paths subject to the following conditions:

    \begin{enumerate}
        \item every path $P \in \mathcal{P}$ begins and ends in $T$,
        \item $P_1, P_2 \in \mathcal{P}$ are edge-disjoint,
        \item $\E = \cup_{P \in \mathcal{P}}E(P)$.
    \end{enumerate}
\end{definition}

The nest is central in the greedy embedding framework by Chen and Tan~\cite{ChenTan2025} given below.

\begin{tcolorbox}[
 colback=white,          
  colbacktitle=gray,       
  coltitle=white,          
  colframe=gray,           
 title={\textsc{Chen-Tan Embedding Framework}}]
 \textbf{Input:} an Okamura-Seymour quasimetric $(T,\delta_T)$\\
 \textbf{Output:} A directed Okamura-Seymour instance $(G, T)$ realizing  $(T,\delta_T)$.\\
 
 \begin{enumerate}
     \item $\sigma\leftarrow$ a valid ordering of  $T$ from \Cref{thm:Chen-Tan}.
     \item $\{(t_1,t_1'),(t_2,t_2')\ldots,\}$ an (arbitrary) ordering of $|T|^2$ terminal (ordered) pairs in $T\times T$. 
     \item $G_0 = (T,\varnothing)$ be a planar embedded graph with no edge, where $T$ is ordered on the boundary of a disk according to $\sigma$.
     \item  For each pair of terminals $(t_i,t_{i'})$ in the ordering, where $i \in [1,|T^2|]$
     \begin{enumerate}
         \item $\Pi_{\leq i-1} \leftarrow \{\pi_j: j\leq i-1\}$.
         \item $(G_i, \pi_i)\leftarrow \textsc{EmbedNextPath}(G_{i-1}, t_i,t_i', \Pi_{\leq i-1})$. This procedure inserts  a curve $\pi_i$ from $t_i$ to $t'_{i}$ on top of the current embedding $G_{i-1}$ such that two following conditions are satisfied:
         \begin{itemize}
             \item \textbf{(Embedding condition)}  $\pi_{i}$ is a simple curve connecting $t_i$ to $t'_i$ such that for every previously inserted curve $\pi_{j}$ for $j\leq i-1$, the curves $\pi_{i}$ and $\pi_{j}$ intersect only via finitely \emph{proper} crossings (i.e., they do not merely touch tangentially and then separate). Moreover, we view each crossing as a vertex and each curve segment between a pair of consecutive crossings as a directed edge. 
             \item \textbf{(Weighting condition)} Let $G_i$ be the graph obtained by drawing $\pi_i$ on top of $G_{i-1}$. The exists an edge-weight function   $w_i:E(G_i)\to \mathbb{R}_{\ge 0}$ such that for every
  $1\le j\le i$, the curve $\pi_{j}$ is a directed shortest path from $t_j$ to $t'_j$ in $(G_{i},w_i)$, and its length equals to $\delta_T(t_j,t'_j)$.
         \end{itemize}
     \end{enumerate}
 \end{enumerate}
\end{tcolorbox}

In this framework, the procedure $\textsc{EmbedNextPath}(G_{i-1}, t_i,t_i', \Pi_{i-1})$ will largely determine the running time of the resulting algorithm. We refer to this step as \EMPH{iteration $i$}. We note that the embedding condition implies that $G_i$ is a nest after each iteration, and that there is no vertex of degree 2 except on the boundary. Chen and Tan~\cite{ChenTan2025} show that the framework will give a valid planar embedding of $(T,\delta_T)$, as formally stated in the following lemma.

\begin{lemma}[Claim 6~\cite{ChenTan2025}]\label{lm:iterative-correctness} If the embedding and weighting conditions of $\textsc{EmbedNextPath}(G_{i-1}, t_i,t_i', \Pi_{\leq i-1})$ are satisfied for every $i$, the final graph $G_{|T|^2}$ (when $i = |T|^2$) is a planar embedding of $(T,\delta_T)$. \\
Conversely, if  $(T,\delta_T)$ is an Okamura-Seymour quasimetric, then for every $i$, there exists $G_i$ satisfying both  the embedding and weighting conditions of $\textsc{EmbedNextPath}(G_{i-1}, t_i,t_i', \Pi_{\leq i-1})$. (That is, the Chen-Tan embedding framework is guaranteed to find a valid planar embedding.) 
\end{lemma}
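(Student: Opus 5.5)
The lemma has two directions, and I would handle them separately. The forward direction follows directly from the framework. The converse is the content of Chen and Tan's key claim (\Cref{lm:Chen-Tan-keylemma}), and I would reproduce its structure.

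\textbf{Forward direction.} Assume the embedding and weighting conditions hold at every iteration, and set $G = G_{|T|^2}$ with weights $w = w_{|T|^2}$.

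First, planarity. Each $\pi_i$ is drawn inside the disk bounded by the terminal circle, and every crossing is made into a vertex. Hence $G$ is a plane digraph whose terminals lie on the outer face, placed in the order $\sigma$.

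Second, distances. The weighting condition at the final iteration $i=|T|^2$ says that for every ordered pair $(t_j,t_j')$, the curve $\pi_j$ is a shortest path in $(G,w)$ of length $\delta_T(t_j,t_j')$. Since every ordered pair in $T\times T$ appears in the enumeration, $\delta_G(t,t')=\delta_T(t,t')$ for all $(t,t')$. So $(G,T)$ is a planar embedding. This part is routine, with one caveat: pairs with $\delta_T=\infty$ must be treated by drawing no path, or equivalently by checking that no directed path exists.

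\textbf{Converse.} I would argue by induction on $i$ that a feasible $G_i$ exists, given a feasible $G_{i-1}$ (with $G_0$ trivially feasible). Suppose for contradiction that no drawing of $\pi_i$ admits a feasible weighting. Two preliminary facts are needed.

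\begin{itemize}
\item \emph{Finitely many drawings.} Up to homotopy, a drawing of $\pi_i$ is determined by the sequence of faces of $G_{i-1}$ it passes through. We may restrict to sequences that avoid the forbidden faces and the outer face and do not repeat a face, since each previous path is crossed at most once.
\item \emph{Existence of some route.} The forbidden faces do not separate $t_i$ from $t_i'$. I would prove this from the Monge property \eqref{eq:monge} with respect to $\sigma$: a blocking chain of forbidden faces would produce a quadruple $\langle x_1,y_2,y_1,x_2\rangle$ that violates it.
\end{itemize}

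Now fix the set of all partial walls, meaning face sequences starting at the face containing $t_i$. For a wall $W$, define a $W$-good tuple as a certificate of infeasibility that involves only the edges of $W$ together with previous paths. Concretely, I would take it to be a Farkas-type dual certificate for the LP whose variables are the edge weights of $G_{i-1}$ plus the prefix of $\pi_i$ and a suffix variable $d$. Its constraints are: each $\pi_j$ has length $\delta_T(t_j,t_j')$, every other path is no shorter, and $d$ satisfies the Monge constraints \eqref{eq:monge-constraint} in the cut-out instance.

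By assumption, every complete wall has a good tuple. I would then do backward induction over walls. Given a wall $W$, both extensions $W'$ (one more edge on the left boundary) and $W''$ (one more edge on the right boundary) have good tuples. The step is to combine these two certificates into a $W$-good tuple.

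I expect this combination step, which is Claim 9 of Chen--Tan, to be the main obstacle. The plan is to add the two dual certificates with suitable nonnegative multipliers so that the coefficients on the new edge variables cancel. The Monge inequality applied at the boundary vertices where $W'$ and $W''$ branch is what should guarantee that the resulting combination is still a valid infeasibility witness.

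Reaching the empty wall yields a certificate that $G_{i-1}$ has no feasible weighting for the first $i-1$ pairs. This contradicts the induction hypothesis.

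Finally, the number of vertices stays finite, because each new curve crosses each old curve finitely often. This keeps all the LPs finite-dimensional, so the Farkas arguments are valid.
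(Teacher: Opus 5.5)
Your outline is the Chen--Tan argument that the paper cites without reproof and only sketches in its introduction, and your forward direction is fine. The gap is in how you instantiate a $W$-good tuple. You take it to be a Farkas certificate for an LP that contains, besides the constraints for the first $i-1$ pairs, everything about the new pair $(a,b)=(t_i,t_i')$:
\begin{itemize}
\item a suffix variable $d$ completing $\pi_i$ to length $\delta_T(a,b)$;
\item the requirement that no other $a$--$b$ path is shorter;
\item Monge constraints on distances in the cut-out instance.
\end{itemize}
These constraints do not disappear at the empty wall. There $d$ is the whole of $\pi_i$, and the LP asks for a reweighting of $G_{i-1}$ that realizes the first $i-1$ pairs \emph{and} accommodates $(a,b)$ Monge-compatibly.

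A weighting realizing the first $i-1$ pairs does not guarantee this. For instance, it may have $\delta_{G_{i-1}}(a,b)<\delta_T(a,b)$ via a crossing of two earlier paths. So infeasibility of the empty-wall LP contradicts nothing. Worse, by the paper's own prefix-validity characterization, feasibility of that LP is equivalent to the existence of a valid drawing of $\pi_i$, which is the very statement you are proving. The argument is therefore circular. Chen and Tan avoid this because their good tuples combine path-length inequalities in the graph with Monge inequalities of the \emph{given} numbers $\delta_T$, which hold by hypothesis. When the wall is empty, every term involving $\pi_i$ vanishes, and what remains is an inconsistency among the constraints of the first $i-1$ pairs alone.

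The combination step has a related defect, and it is the heart of the proof that you leave undone. $\mathrm{LP}(W')$ and $\mathrm{LP}(W'')$ are different systems, with different cut-out boundaries, distance variables and Monge constraints. Their constraints are not consequences of those of $\mathrm{LP}(W)$, so a nonnegative combination of their dual vectors is not a certificate for $\mathrm{LP}(W)$. What you need is the disjunctive implication ``$\mathrm{LP}(W)$ feasible $\Rightarrow$ $\mathrm{LP}(W')$ or $\mathrm{LP}(W'')$ feasible''. Infeasibility certificates do not compose across such a disjunction by adding dual vectors so that the new-edge coefficients cancel. Chen--Tan's Claim 9 is instead a combinatorial splicing of two witnesses, using planarity and the Monge property of $\delta_T$.

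A smaller point: you may restrict to face sequences without repeats, but not because each old path is crossed at most once, which is false in a nest. The justification is that a repeated face can be shortcut, as in Claim~\ref{clm:simple_path}.
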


For each iteration $i$, we define a  \EMPH{partial quasi-metric}, denoted by $\delta^{\leq i}_T$, as follows: for any ordered terminal pair $(t_j,t-J')\in T\times T$, if $j\leq i$, we set $\delta^{\leq i}_T(t_j,t'_j)= \delta_T(t_j,t'_j)$; otherwise, we set $\delta^{\leq i}_T(t_j,t'_j)=\ast$ to indicate that the value is undefined (unknown). Note that for $i$, $\delta^{\leq i}_T(t_i,t_i')=\delta_T(t_i,t_i')$.

We say that the instance $(G_i,T)$ output by  $\textsc{EmbedNextPath}(G_{i-1}, t_i,t_i')$ is a nest instance realizing the partial quasimetric $(T,\delta^{\leq i}_{T})$.

\begin{figure}[ht]
    \centering
    \begin{subfigure}[b]{0.45\textwidth}
        \centering
        \includegraphics[page=1, trim=70pt 160pt 70pt 170pt, clip, width=\textwidth]{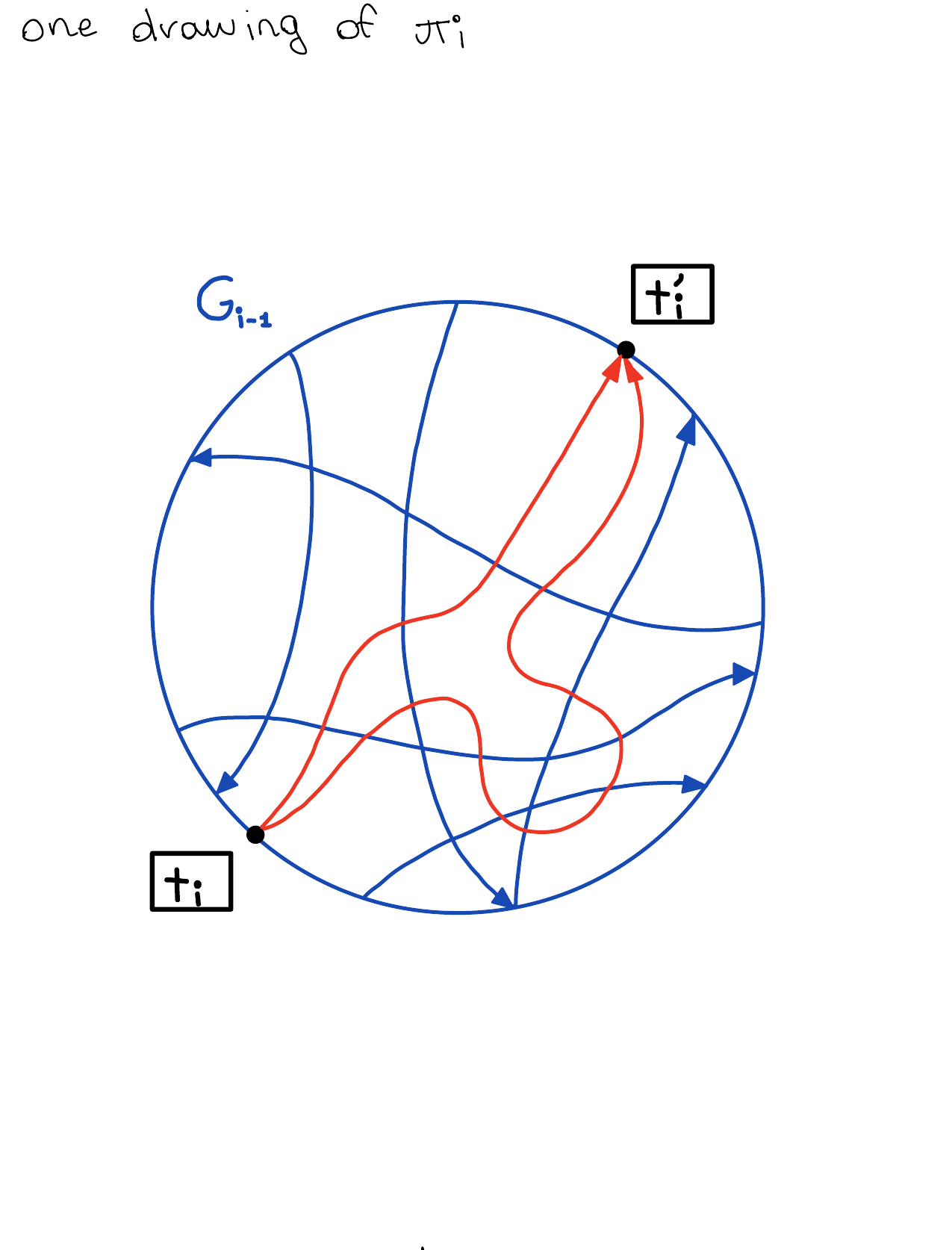}
        \caption{Two Possible Ways of Drawing $\pi_i$}
        \label{fig:simple}
    \end{subfigure}
    \hfill 
    \begin{subfigure}[b]{0.45\textwidth}
        \centering
        \includegraphics[page=2, trim=70pt 160pt 70pt 170pt, clip, width=\textwidth]{figuresfinal/SimpleFinal.pdf}
        \caption{The Corresponding Simple Dual Path}
        \label{fig:simpledual}
    \end{subfigure}
    \hfill
    \caption{a) An illustration of two different embeddings of $\pi_i$ in $G_{i-1}$ (red dipaths) and b) the corresponding simple dual path in $G_{i-1}^*$ of the proof of Claim~\ref{lm:iterative-correctness} for both embeddings (here drawn on top of $G_{i-1}$). In b), we omit the dual vertex corresponding to the infinite face and adjacent edges in $G_{i-1}^*$ for visual clarity.}
    \label{fig:simpledualcomparaison}
\end{figure}

\paragraph{Drawing and enumerating $\pi_i$.} The drawing of $\pi_i$ on top of $G_{i-1}$ is precisely characterized by a sequence of faces $\langle F_0,F_1,\ldots, F_{h}\rangle$ where (i) $F_0$ contains $t_i$ and $F_{h}$ contains $t_{i}'$ on their respective boundaries, (ii) $F_{\ell}$ shares exactly one edge with $F_{\ell+1}$ for all $\ell\in [0,h-1]$. $\pi_i$ is drawn as a curve passing through these faces, and intersecting only with edges shared between two consecutive faces~\cite{ChenTan2025}.  Condition (ii) means that this sequence of faces forms a path in the dual graph $G^*_{i-1}$. In the following claim, we show that enumerating all possible drawings of $\pi_i$ is equivalent to enumerating \EMPH{simple paths} in the (undirected) dual graphs $G^*_{i-1}$ where the endpoints correspond to faces containing $t_i$ and $t_i'$ on their boundaries; see \Cref{fig:simpledualcomparaison}. We refer to the sequence of faces  $\langle  F_0,F_1,\ldots, F_{h}\rangle$  as a \EMPH{candidate trajectory} of $\pi_i$. 

\begin{claim} \label{clm:simple_path}
    Enumerating all embeddings of $\pi_{i}$ is equivalent to enumerating all simple paths in the undirected dual graph $G^*_{i-1}$ where the endpoints correspond to faces containing $t_i$ and $t_i'$ on their boundaries.
\end{claim}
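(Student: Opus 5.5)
We will set up a correspondence between embeddings of $\pi_i$, taken up to the natural equivalence (isotopy that fixes $G_{i-1}$ and preserves the combinatorial type), and simple paths in the undirected dual $G^*_{i-1}$. The endpoints of these dual paths are required to be a face with $t_i$ on its boundary and a face with $t_i'$ on its boundary. We also exclude the outer face, since no path is drawn through it. The proof then has three parts: the map is well defined, its image consists of simple dual paths, and it is surjective. Injectivity up to equivalence then follows from the fact that a drawing is determined by its trajectory, as stated above following~\cite{ChenTan2025}.

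\emph{From embeddings to dual paths.} Fix an embedding of $\pi_i$ that satisfies the embedding condition. It crosses $G_{i-1}$ only at finitely many proper crossings, and each crossing lies in the interior of an edge. The segments of $\pi_i$ between consecutive crossings therefore each lie inside a single face. Listing these faces in order gives $\langle F_0,\ldots,F_h\rangle$, where $F_0\ni t_i$ and $F_h\ni t_i'$, and consecutive faces share the crossed edge. This sequence is a walk in $G^*_{i-1}$.

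The main step, and the one I expect to be hardest, is to show that the walk is simple. Equivalently, it suffices to show that any embedding visiting some face twice can be replaced by one that visits it once. Suppose $F_a=F_b$ with $a<b$. The curve enters $F_a$ at point $p$ and, after its later visit, leaves $F_b$ at point $q$. Since a face is an open disk, we can reroute: keep the part of $\pi_i$ up to $p$, join $p$ to $q$ by a simple arc inside $F_a$, and keep the part from $q$ onward. The new curve is still simple, still crosses $G_{i-1}$ properly and finitely often, and has trajectory $\langle F_0,\ldots,F_a,F_{b+1},\ldots,F_h\rangle$.

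The point to check is that the rerouted curve is a legitimate candidate. We must confirm that the simple trajectories we keep still include the feasible drawings, which the framework needs. Here I would argue that shortcutting never hurts feasibility. The shortened drawing uses a subset of the crossings. Given weights for the original drawing, assign the new arc inside $F_a$ the total weight of the deleted portion, and merge the split edges back to their original weights. Distances between old terminal pairs cannot decrease below the original paths' lengths, because the new edges in $G_i$ form a sub-configuration of the old one. I expect this monotonicity argument to need care.

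\emph{From dual paths to embeddings.} Conversely, take a simple dual path $\langle F_0,\ldots,F_h\rangle$, and let $e_\ell$ be the edge shared by $F_{\ell-1}$ and $F_\ell$. Choose an interior point $p_\ell$ on each $e_\ell$, and set $p_0=t_i$ and $p_{h+1}=t_i'$. Inside each face $F_\ell$, which is a disk, draw a simple arc from $p_\ell$ to $p_{\ell+1}$. Because the faces are distinct, these arcs lie in disjoint open regions. Their concatenation is therefore a simple curve whose crossings with $G_{i-1}$ are exactly the proper crossings at the points $p_\ell$. This is an embedding of $\pi_i$ with the prescribed trajectory.

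Together the two directions give the claimed equivalence of the two enumerations.
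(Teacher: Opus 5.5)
Your argument follows the paper's route. You take a drawing whose trajectory repeats a face and replace the part of $\pi_i$ between entering and leaving that face by one arc inside it. That arc gets the total weight of the deleted part, the split edges are merged, and no shortcut arises. You also spell out the easy converse, that a simple dual path yields a drawing, which the paper leaves implicit.

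One step fails as written. For an arbitrary pair $a<b$ with $F_a=F_b$, the rerouted curve need not be simple. An earlier or later visit of $\pi_i$ to $F_a$ is a chord of $F_a$, and it can separate $p$ from $q$, since the curve can get from one side of that chord to the other by travelling outside $F_a$. Then every arc from $p$ to $q$ inside $F_a$ crosses $\pi_i$. The fix is to take $a$ and $b$ to be the first and last visits to the repeated face, just as the paper takes $u,v$ to be the first and last points of $\pi_i$ on $\partial F'$. Then the new arc is the only part of the curve inside $F_a$.

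The monotonicity step you flag is a simple lifting argument. Each edge of the simplified graph is one of three kinds:
\begin{itemize}
\item an unchanged edge of $G_i$;
\item a merged edge, whose weight equals that of the two-edge path it replaces;
\item the new arc, whose weight equals that of $\pi_i[p,q]$.
\end{itemize}
So every path in the simplified graph lifts to a walk in $G_i$ with the same endpoints and the same weight. Distances can therefore only increase, while every $\pi_j$ with $j\le i$ keeps its length; this is the paper's ``no new crossings, hence no new shortcuts.''

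Your opening framing also overstates things: the map from all drawings does not land in simple dual paths. What you prove, and what the claim means, is that every feasible drawing can be replaced by a feasible one with a simple trajectory.
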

\begin{proof} Suppose that there is a valid embedding of $\pi$ such that the corresponding dual path, denoted by $\pi^*_i$, is non-simple. Let $G_i$ be obtained from $G_{i-1}$ by embedding $\pi_i$ and assigning a weight to every edge so that the distances between terminal pairs up to iteration $i$ are preserved. Since $\pi^*_i$ is non-simple, there exists a face $F'$  of $G_{i-1}$ such that $\pi^*_i$ visits at least twice. Let $\bdry F'$ be the boundary of $F'$. Let $u$ and $v$ be $\pi_i\cap \bdry F'$ be the first vertex and the last vertex of $\pi_i$, respectively, on the boundary of $F'$. Note that the subpath $\pi_i[u,v]$ of $\pi_i$ contains at least one vertex other than $u$ and $v$ by the definition of $F'$. Now we simplify $\pi_i$ by removing all internal vertices of $\pi_i[u,v]$  and adding a direct edge from $u$ to $v$ (inside $F'$) with weight to be the total weight of  $\pi_i[u,v]$. Clearly, the total weight of $\pi_i$ is preserved. Furthermore, simplifying $\pi_i$ does not create any new crossings and therefore cannot create any new shortcuts. By the weighting condition w.r.t $G_i$, simplifying $\pi_i$ does not change the weight of the shortest paths in $\Pi_{\leq i-1}$, which might have intersected $\pi_i$ before the simplification. Thus, $G_i$ after the simplification still satisfies both the embedding and the weighting conditions. By repeatedly simplifying $\pi$ until its dual path $\pi^*_i$ is simple, we get the claim.
\end{proof}

\paragraph{Checking feasibility of $\pi_i$.~} Given a candidate trajectory $\mathcal{F}$ of $\pi_i$ on top of $G_{i-1}$, we have to test if the weighting condition in the Chen-Tan framework can be satisfied after we draw $\pi_i$ following $\mathcal{F}$. If the test passes, then we accept the drawing of $\pi_i$, and set $G_i$ to be the resulting graph $G_{i-1}\cup \{\pi_i\}$. In this case, we call $\mathcal{F}$ a \EMPH{valid trajectory}. Otherwise (the test fails), we reject the current candidate trajectory of $\pi_i$, and try the next one. Our test is based on a simple linear program. Recall that $\Pi_{\leq i} = \cup_{1\leq j \leq i} \pi_j$.

\begin{lemma}\label{lm:check-feasibility} Let $\hat{G}_i$ be the graph obtained by drawing $\pi_i$ on top of $G_{i-1}$ following a given candidate trajectory $\{F_0,F_1,\ldots, F_{h}\}$. There is an algorithm , denoted by $\textsc{AssignWeight}(\hat{G}_i, \delta^{\leq i}_{T},\Pi_{\leq i})$, that either:
\begin{enumerate}
    \item returns a weight function $w_i$ that assigns a weight $w_i(e)$ to each edge $\hat{G}_i$ such that $(\hat{G}_i,T)$ realizes the partial metric $(T,\delta^{\leq i}_{T})$.
    \item or returns $\bot$ if such a weight function $w_i$ does not exist.
\end{enumerate}
The running time of the algorithm is $\poly(|V(\hat{G_i})|, \log(W))$.
\end{lemma}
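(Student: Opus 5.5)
We will write the weighting condition as a linear feasibility problem over the edge weights of $\hat{G}_i$ and solve it with the ellipsoid method, using a shortest-path computation as the separation oracle. The variables are $w(e)\ge 0$ for every $e\in E(\hat{G}_i)$. The partial quasimetric $\delta^{\leq i}_T$ is realized exactly when, for every $j\le i$, (a) the curve $\pi_j$ has length exactly $\delta_T(t_j,t_j')$, and (b) no directed $t_j$-to-$t_j'$ path is shorter. So the LP consists of the equality constraints
\[
\sum_{e\in E(\pi_j)} w(e) = \delta_T(t_j,t_j') \qquad \forall\, j\le i,
\]
together with the path constraints
\[
\sum_{e\in E(P)} w(e) \ge \delta_T(t_j,t_j') \qquad \forall\, j\le i,\ \forall\ \text{directed } t_j\text{-}t_j' \text{ paths } P \text{ in } \hat{G}_i ,
\]
and nonnegativity. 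The equality constraints are fine as stated, because each $\pi_j$ is a fixed edge set in $\hat{G}_i$: $\hat{G}_i$ is a nest built from the curves $\Pi_{\le i}$. Any feasible $w$ makes every $\pi_j$ a shortest path of length $\delta_T(t_j,t_j')$, which is exactly the weighting condition. Conversely, any valid weight function is feasible, so the LP is feasible if and only if a valid $w_i$ exists.

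The path constraints are exponentially many, so we handle them with a separation oracle. Given a candidate $w$, first check nonnegativity and the $i$ equality constraints directly. Then, for each $j\le i$, run Dijkstra on $(\hat{G}_i,w)$ from $t_j$. If $\delta_{\hat{G}_i,w}(t_j,t_j')<\delta_T(t_j,t_j')$, the shortest path found is a violated constraint and is returned as the separating hyperplane. Otherwise all constraints hold. This oracle runs in time polynomial in $|V(\hat{G}_i)|$, since the number of edges of the planar graph $\hat{G}_i$ is linear in its vertex count and $i\le |T|^2 \le |V(\hat{G}_i)|^2$.

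The remaining work, and the main technical point, is the ellipsoid bookkeeping. Every constraint has coefficients in $\{0,1\}$ and right-hand side an integer in $[0,W]$, so the facet complexity is $O(|E(\hat G_i)|\log W)$ bits. By the standard Grötschel–Lovász–Schrijver theorem (equivalence of separation and optimization), ellipsoid then either finds a feasible rational point or certifies infeasibility in time $\poly(|V(\hat G_i)|,\log W)$.

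The ellipsoid method needs some care with non-full-dimensional polytopes, and the equalities here force lower-dimensionality. We can avoid this directly. Replace the objective by "minimize $\sum_j \sum_{e\in\pi_j} w(e)$ subject to the path inequalities for all $j\le i$, with the $\pi_j$-length constraints relaxed to $\ge$", and then check whether the optimum value equals $\sum_{j\le i}\delta_T(t_j,t_j')$. Alternatively, we can invoke the GLS version that handles arbitrary well-described polyhedra. In either case, if the result is infeasible we return $\bot$; otherwise we return the vertex solution as $w_i$.

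We may also add the bound $w(e)\le W$ for all $e$. This is harmless, because any feasible solution can be truncated without creating shortcuts. Edges on some $\pi_j$ already have weight at most $W$. An edge on no $\pi_j$ is never needed for the equality constraints, and lowering its weight only shortens non-$\pi_j$ paths, so a weight of $W$ still keeps every such path at length at least $W \ge \delta_T(t_j,t_j')$. This bound keeps the polytope bounded, as the ellipsoid method requires.
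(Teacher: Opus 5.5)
Your proposal is correct and follows essentially the same route as the paper: the same path-based LP over edge weights, the same Dijkstra-based separation oracle, and the ellipsoid method. Your equality constraints on $\pi_j$ are equivalent to the paper's $\le$ constraints, because $\pi_j$ itself belongs to the $\ge$ family. Your extra bookkeeping on facet complexity, non-full-dimensionality and boundedness is sound and makes explicit what the paper leaves implicit. Note that every edge of the nest $\hat{G}_i$ lies on some $\pi_j$, so the bound $w(e)\le W$ already follows from the equalities and nonnegativity.
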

\begin{proof} For every terminal pair $(t_j,t_j')$, we denote by $\mathcal{P}(t_j,t_j')$ the set of all simple directed paths from $t_j$ to $t'_j$ in $\hat{G}_i$. We check the existence of $w_i$ by deciding if the following LP is feasible:

\begin{align*}
\sum_{e\in E(\pi_{j})} x(e) &\le \delta^{\leq i}_{T}(t_j,t'_j) 
&& \forall\, 1\le j\le i, \\
\sum_{e\in E(P)} x(e) &\ge \delta^{\leq i}_{T}(t_j,t_j')
&& \forall\, 1\le j\le i,\ \forall\ P\in \mathcal{P}(t_j,t_j')\\
x(e) &\ge 0
&& \forall\, e\in E(\hat{G}_i).
\end{align*}

Here the variables are $\{x(e)\}_{e\in E(\hat{G}_i)}$, which assign a weight $x(e)$ to each edge $e$. The first and the second constraints together guarantee that $\pi_j$ must be a shortest path between its terminal endpoints, and its weight is exactly   $\delta^{\leq i}_{T}(t_j,t_j')$. 

Observe that the LP has an exponential number of constraints, since  $\mathcal{P}(t_j,t_j')$ could have an exponential number of paths. However, the LP admits a very simple polynomial-time separation oracle: given an assignment to the variables, denoted by $\{\Bar{x}(e)\}_{e\in E(\hat{G}_i)}$, we can find a violated constraint as follows. First, we compute the shortest distance between every two terminals $(t_j,t_j')$ in $\hat{G}_i$ w.r.t. the weight function $\Bar{x}$ on the edges, which can be done in polynomial time using Dijkstra's algorithm. We denote the distance by  $\delta_{\Bar{x}}(t_j,t_j')$.

\begin{itemize}
    \item  For the first constraint, we simply compute the total weight of $\pi_j$, namely $\sum_{e\in E(\pi_{j})} \Bar{x}(e)$, and compare with  $\delta^{\leq i}_{T}(t_j,t'_j)$. If this weight is strictly larger than $\delta^{\leq i}_{T}(t_j,t'_j)$, we have found a violated constraint.
   \item  For the second constraint,  if $\delta_{\Bar{x}}(t_j,t_j') \ge  \delta^{\leq i}_{T}(t_j,t_j')$, then every directed path $P$ from $a_j$ to $b_j$ satisfies the second family of constraints.  Otherwise, the second constraint corresponding to a shortest path from  $t_j$ to $t_j'$ is a violated constraint. 
\end{itemize}

Since the feasibility LP admits a polynomial-time separation oracle, we can decide its feasibility in polynomial time using the ellipsoid method. 
\end{proof}

\paragraph{The algorithm.~} As established above, we can find a feasible drawing of $\pi_i$ by enumerating all possible candidate trajectories and testing each candidate using \Cref{lm:check-feasibility}. However, note that every time we draw $\pi_i$, the faces in the candidate trajectory of $\pi_i$ are split into two faces. Thus, in the worst case, the number of faces increases exponentially in the number of iterations (which is $|T|^2$). 

To keep the running time $2^{\poly(|T|)}$ as in \Cref{thm:exp}, we need to guarantee that $|V(G_{i})| = \poly(|T|)$. For this, we use the following nest size reduction subroutine of Chen and Tan~\cite{ChenTan2025} to reduce the size of $G_i$ after every step. 

\begin{lemma}[Nest Size Reduction, Section 5~\cite{ChenTan2025}]\label{lem:nest_reduction}
Let $(G_i,T)$ be the nest obtained from $\textsc{EmbedNextPath}(G_{i-1}, t_i,t_i')$. There is an algorithm, denoted by $\textsc{ReduceNest}(G_{i},T)$, that runs in time $\poly(|V(G_i)|)$ and transform $(G_i,T)$ into an instance $(\Tilde{G}_i,T)$, with a different weight function  on the edges of $\Tilde{G}_i$, such that
\begin{enumerate}
    \item $(\Tilde{G}_i,T)$ is also a nest instance realizing the partial quasi-metric $\delta^{\leq i}_T$.
    \item vertices of $\Tilde{G}_{i}$ are created only at proper intersections of paths in $\Tilde{\Pi}_{\leq i}$ (viewed as curves), and no two paths have the same two vertices along the same direction of both paths. 
    \item $|V(\Tilde{G}_i)| = O(|T|^{6})$.
\end{enumerate}
\end{lemma}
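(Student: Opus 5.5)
The plan is a two-phase procedure. First we shrink the number of edges that carry each path without altering the topology of the drawing. Then we bound the size of the result by counting, in the spirit of Section~5 of~\cite{ChenTan2025}.

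\textbf{Phase 1: local uncrossing and merging.} Take $(G_i,T)$ with its path family $\Pi_{\le i}$ and weights $w_i$, where each $\pi_j$ is a shortest $t_j\to t'_j$ path of length $\delta_T(t_j,t'_j)$. Suppose two paths $\pi_a,\pi_b$ meet at two vertices $u$ and $v$ in the same order along both paths. By the weighting condition, the subpaths $\pi_a[u,v]$ and $\pi_b[u,v]$ are both shortest $u\to v$ paths, so they have equal length. We can therefore reroute $\pi_b[u,v]$ along $\pi_a[u,v]$ without changing any length. Afterwards we make the shared stretch one bundle and later split it back into edge-disjoint parallel copies, so that the nest condition holds. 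Each such rerouting strictly decreases a potential, for example the total number of crossings counted with multiplicity, or the number of edges of the nest. Hence the process terminates after $\poly(|V(G_i)|)$ steps, and each step takes polynomial time.

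Rerouting can never create a shortcut. Lengths of the $\pi_j$ are unchanged, and the edge set only shrinks or is subdivided. So every path in the new graph corresponds to a walk of no smaller length in the old one, and every $\delta_G(t_j,t'_j)$ is preserved for $j\le i$. We then delete degree-2 interior vertices, merging their two edges and adding the weights. This establishes item~2: vertices are proper crossings only, and no two paths share two vertices in the same direction. Item~1 follows from the preservation argument.

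\textbf{Phase 2: the size bound.} Once no two paths share two same-direction vertices, any two curves $\pi_a,\pi_b$ cross in a bounded number of points. Both run between boundary points of a disk, and uncrossing has removed every ``digon'' of the same orientation. A same-direction double meeting is forbidden, and an opposite-direction double meeting is possible only in a limited way. We aim to show that each pair crosses $O(1)$ times, or at worst $O(|T|^2)$ times. There are $O(|T|^4)$ pairs of paths, since there are $|T|^2$ paths. Combining these counts gives $|V|=O(|T|^6)$ in the weaker case, which matches item~3.

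\textbf{Main obstacle.} The hard step is bounding the number of crossings between a single pair of paths. Directed planarity allows two paths to wind around each other alternately, and the same-direction condition alone may not cap the count at a constant. I would first try to charge each crossing of $\pi_a$ and $\pi_b$ to a distinct terminal pair $(t,t')$ whose shortest path is separated by the corresponding bigon region. The Monge property (\Cref{eq:monge}) would then rule out two bigons charged to the same pair. That charging would give $O(|T|^2)$ crossings per pair of paths, and the $O(|T|^6)$ total as above.
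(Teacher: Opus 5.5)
There is a genuine gap, and it is at item~3, which is the real content of the lemma. (The paper does not reprove the lemma. It imports it from Section~5 of \cite{ChenTan2025}, whose reduction goes through pairs of paths and cuts down the number of intersections of each pair.) Your Phase~1 enforces only the same-direction part of item~2, and Phase~2 then assumes that opposite-direction meetings are ``possible only in a limited way''. They are not limited. Take a chord $P$ from $t_1$ to $t_2$ with points $x_1,\dots,x_k$ in order along it. Let $Q$, from $t_3$ to $t_4$, snake leftwards across $P$, crossing it properly at $x_k,x_{k-1},\dots,x_1$ in that order. These two paths alone form a nest. The $P$-edges go $x_j\to x_{j+1}$ and the $Q$-edges go $x_{j+1}\to x_j$, so $P$ and $Q$ are the unique simple paths between their endpoints. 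Hence any weights with the right totals realize both known distances. Every pair of common vertices occurs in opposite orders on $P$ and $Q$, so items~1 and~2 hold. Yet the nest has $k+4$ vertices while $|T|=4$.

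So no argument can bound the size of a nest merely because it satisfies item~2. In particular, charging the crossings of a pair injectively to terminal pairs is impossible, and the output of your Phase~1 need not satisfy item~3. The missing idea is a second reduction that also removes opposite-direction crossings that are not forced. An empty bigon $P[x,y]\cup Q[y,x]$ can be removed by pulling $P[x,y]$ across it. This only deletes the junctions at $x$ and $y$, so it keeps both path lengths and cannot create shortcuts. The hard case is a bigon traversed by other paths. There, moving one side changes which paths cross it and in which order. You would have to use the distance (Monge) structure to show such a bigon can be removed or charged to something of which there are few. Only after such a step could you prove a per-pair crossing bound that, summed over the $O(|T|^4)$ pairs of paths, gives $O(|T|^6)$.

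Phase~1 also needs repair. Suppose the shared stretch is split into a parallel copy right away. Then the rerouted $\pi_b$ crosses every path that crosses $\pi_a[u,v]$. If there are $m\ge 3$ such crossings and $\pi_b[u,v]$ crossed nothing, both of your potentials increase. The new crossings can also create fresh same-direction pairs between $\pi_b$ and third paths, so termination is not established. Suppose instead that bundles are kept until the end. Then the edge count of the union graph does drop, but the final splitting creates new crossing vertices and edges. Your argument that `the edge set only shrinks or is subdivided' then no longer rules out shortcuts. You would need, for example, weight $0$ on the short pieces of crossing paths between a copy and the original stretch, plus a projection of walks onto that stretch, and a re-check of item~2 after the split. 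Finally, if $\pi_b$ outside $[u,v]$ meets $\pi_a(u,v)$, the rerouted curve is not simple, so $(u,v)$ must be chosen minimal. These points look fixable. The missing treatment of opposite-direction crossings is not a detail; it is the main step of the lemma.
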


The nest reduction algorithm by  Chen and Tan~\cite{ChenTan2025} considers every pair of shortest paths between terminals and reduces the number of intersections between each pair. Here, we can reduce the size of $G_i$ slightly faster, by exploiting the fact that $G_{i-1}$ is already a reduced nest and therefore, we only need to consider pairs of paths involving $\pi_i$. This does not improve the running time of the overall algorithm, though. 

\begin{algorithm}
\caption{$\textsc{EmbedNextPath}(G_{i-1}, t_i,t_i', \Pi_{\leq i-1})$:}
\begin{algorithmic}[1]
\algnotext{EndIf}
\algnotext{EndFor}
\For{each candidate trajectory $\mathcal{F}$ of $\pi_i$ on $G_{i-1}$}
    \State   draw $\pi_{i}$ according to $\mathcal{F}$ into $G_{i-1}$ to obtain $\hat{G}_i$.
    \State $\Pi_{\leq i}\leftarrow \Pi_{\leq i-1}\cup \{\pi_i\}$
    \State $w_i \leftarrow \textsc{AssignWeight}(\hat{G}_i, \delta^{\leq i}_{T},\Pi_{\leq i})$ \Comment{From \Cref{lm:check-feasibility}}
    \If{$w_i\not= \bot$} \Comment{There is a feasible weight assignment}
       \State $G_{i}\leftarrow \textsc{ReduceNest}(\hat{G}_i, T)$ \Comment{From Lemma~\ref{lem:nest_reduction}}
       \State \textbf{return} $(G_i,\pi_i)$
    \EndIf
\EndFor
\Statex \Comment{An exponential time algorithm for embedding the next path $\pi_i$ from $t_i$ to $t_{i}'$ on top of the graph $G_{i-1}$ from the previous iterations.}
\end{algorithmic}
\label{alg:next-path-exp}
\end{algorithm}

The algorithm for the next path embedding is given in \Cref{alg:next-path-exp}.  For running time, observe that by \Cref{lem:nest_reduction}, $|V(G_i)| = \poly(|T|)$ for every $i$. Thus, the number of candidate trajectorys of $\pi_i$ is $2^{O(|V(G_i)|\log |V(G_i)|)} = 2^{\poly(|T|)}$. By \Cref{lm:check-feasibility}, the running time of next path embedding is $2^{\poly(|T|)}\poly(\log W)$, which is also the asymptotic running time of the whole planar embedding algorithm.

\paragraph{Correctness.~} By \Cref{lm:check-feasibility}, $\hat{G}_i$ satisfies both the embedding condition and the weighting condition. Thus, if we use $\hat{G}_i$ as the graph for the next step, by \Cref{lm:iterative-correctness}, we will get a planar embedding of $(T,\delta_T)$. However, a subtlety here is that $G_i$ is obtained by applying a nest reduction to $\hat{G}_i$, which potentially modifies $\hat{G}_i$. To address this issue, we formalize the notion of conforming, which roughly means that $G_i$ (after the reduction) has all the properties of $\hat{G}_i$ and therefore, the correctness is still followed from \Cref{lm:iterative-correctness}.

For clarity, let $\Tilde{G}_i = \textsc{ReduceNest}(\hat{G}_i, T)$ be the graph obtained by applying nest reduction to $\hat{G}_i$.  We say that $\Tilde{G}_i$ \EMPH{conforms} with the first $i$ iterations of the Chen-Tan embedding framework if we can form a sequence of nests $(\Tilde{G}_0,T), (\Tilde{G}_1,T),\ldots, (\Tilde{G}_i,T)$ such that:
\begin{itemize}
    \item $\Tilde{G}_0$ has no edge.
    \item  For every $1\leq j \leq i$, $\Tilde{G}_j$ is an output of $\textsc{EmbedNextPath}(\Tilde{G}_{j-1}, t_j,t_j')$. 
\end{itemize}

Note that the next path embedding could have more than one valid output; the Chen-Tan embedding framework implies that any output satisfying the embedding and weighting conditions is good enough. Clearly, if $\Tilde{G}_i$ conforms with the first $i$ iterations of the Chen-Tan embedding framework, then we can set $G_i \leftarrow \Tilde{G}_i$ for the next iteration. Thus, the correctness of our algorithm follows from \Cref{lm:iterative-correctness} and the following lemma.

\begin{lemma}\label{cor:static_recognition_iter_i} $\Tilde{G}_i$ conforms with the first $i$ iterations of the Chen-Tan embedding framework.
\end{lemma}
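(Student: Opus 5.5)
The plan is to build the conforming sequence by restriction. Write $\Tilde{\Pi}_{\leq i}=\{\tilde\pi_1,\ldots,\tilde\pi_i\}$ for the paths of the reduced nest $\Tilde{G}_i$ (with weight function $\tilde w$). For each $0\leq j\leq i$, let $\Tilde{G}_j$ be the subgraph of $\Tilde{G}_i$ formed by the curves $\tilde\pi_1,\ldots,\tilde\pi_j$ only. Concretely, we draw these curves in the plane, take their proper crossings (and the terminals) as vertices, and merge each maximal segment between consecutive such vertices into a single edge. That edge gets the sum of the $\tilde w$-weights of the $\Tilde{G}_i$-edges it contains. Then $\Tilde{G}_0=(T,\varnothing)$ has no edges, $\Tilde{G}_i$ is the given graph, and $\Tilde{G}_j$ is exactly $\Tilde{G}_{j-1}$ with the curve $\tilde\pi_j$ drawn on top.

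\textbf{Embedding condition.} We verify that $\tilde\pi_j$ is a legal insertion into $\Tilde{G}_{j-1}$. Since $\Tilde{G}_i$ is a nest, the paths are edge-disjoint, so $\tilde\pi_j$ shares no edge with earlier curves. By item 2 of \Cref{lem:nest_reduction}, every vertex of $\Tilde{G}_i$ is a proper crossing of paths in $\Tilde{\Pi}_{\leq i}$. In particular, each intersection of $\tilde\pi_j$ with $\tilde\pi_k$, $k<j$, is a proper crossing, and there are finitely many. Simplicity of $\tilde\pi_j$ follows from the nest's paths being simple directed paths; if needed, we first shortcut any self-revisit as in the proof of \Cref{clm:simple_path}. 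Removing later curves can only delete crossings, never create new ones. Hence $\Tilde{G}_{j-1}\cup\tilde\pi_j$ is precisely the graph obtained by drawing $\tilde\pi_j$, and it also remains a nest.

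\textbf{Weighting condition.} Fix $k\leq j$. On the weight side, $\tilde\pi_k$ lies entirely in $\Tilde{G}_j$, and merging segments preserves its total weight, which by item 1 of \Cref{lem:nest_reduction} equals $\delta_T(t_k,t'_k)$. For shortestness, note that every directed path in $\Tilde{G}_j$ corresponds to a directed path in $\Tilde{G}_i$ of the same weight: each merged edge of $\Tilde{G}_j$ expands back into a subpath of $\Tilde{G}_i$ along the same curve and direction. So $\delta_{\Tilde{G}_j}(t_k,t'_k)\geq \delta_{\Tilde{G}_i}(t_k,t'_k)=\delta_T(t_k,t'_k)$, and $\tilde\pi_k$ attains this bound. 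Thus $\tilde\pi_k$ is a shortest path in $\Tilde{G}_j$ of the correct length for every $k\leq j$, and $\Tilde{G}_j$ is a valid output of $\textsc{EmbedNextPath}(\Tilde{G}_{j-1},t_j,t_j')$.

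The main obstacle I expect is the bookkeeping in the embedding step. We must make sure that the nest reduction's paths $\tilde\pi_j$ really are the curves for the pairs $(t_j,t_j')$ in the fixed ordering, and that removing later curves leaves no degree-2 artifacts or touching points. The first point is handled by indexing the reduced paths by the same terminal pairs, which \Cref{lem:nest_reduction} provides since it realizes $\delta_T^{\leq i}$ path-by-path. The second is handled by the merging of segments described above, together with item 2 of \Cref{lem:nest_reduction}, which guarantees that every surviving vertex is a proper crossing.
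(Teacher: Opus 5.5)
Your proposal is correct and follows essentially the same route as the paper. Both proofs obtain $\Tilde{G}_j$ by keeping only the curves $\tilde\pi_1,\ldots,\tilde\pi_j$ of $\Tilde{G}_i$, suppressing their crossings with not-yet-inserted curves, and giving each merged edge the summed weight; the paper does this incrementally (insert $\pi_j$, undo earlier merges), while you define the restriction directly, and your expansion argument (each merged edge unfolds into an equal-weight, same-direction subpath of $\Tilde{G}_i$, so $\delta_{\Tilde{G}_j}\geq\delta_{\Tilde{G}_i}$) makes explicit the paper's bare claim that no shortcut is introduced.
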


\begin{proof}

We construct a sequence of nests $(\Tilde{G}_0,T), (\Tilde{G}_1,T),\ldots,(\Tilde{G}_{i-1},T), (\Tilde{G}_i,T)$ to serve as a witness that $\Tilde{G}_i$ conforms as follows. 

To start, let $\Tilde{G}_0$ be an empty nest. In iteration $j$, we insert the path $\pi_j$ into $\Tilde{G}_{j-1}$ by copying its embedding from $\Tilde{G}_i$. This embedding may contain intersection vertices in $\Tilde{G}_i$ that are between $\pi_j$ and paths of $\Tilde{G}_i$ that have not been inserted in $\Tilde{G}_{j-1}$. In order to construct $\Tilde{G}_{j}$ which satisfies the embedding condition we remove any such intersection vertex $v$, and merge it's two adjacent edges into one. The new edge is assigned weight in $w_j$ equal to the sum of weights of the merged edges in $w_{i}$. 
For intersection vertices that have already been removed once in a previous iteration, we find the edge that was created by merging when the vertex was removed, and we reverse the merge.
The above operations do not change the length of $\pi_j$ or any $\pi_k$ with $1 \leq k < j$, nor does it introduce a shortcut for any of these paths. As such the weighting condition will be satisfied after each iteration and thus we have shown each $\Tilde{G}_{j}$ constructed this way is an output of $\textsc{EmbedNextPath}(\Tilde{G}_{j-1}, t_j,t_j')$. 
\end{proof}

\section{A Polynomial Time Algorithm}\label{sec:polytime}

In this section, we give a polynomial-time algorithm for embedding an Okamura-Seymour quasimetric. We follow the Chen-Tan embedding framework set up in \Cref{sec:Exp-time}. Observe that the exponential running time complexity comes from enumerating all candidate trajectories of
the curve $\pi_{i}$. Our goal here is to avoid this exhaustive search: instead of enumerating all trajectories, we design a polynomial algorithm for finding a feasible trajectory of $\pi_{i}$ (satisfying the two conditions in the framework). Therefore, the overall planar embedding algorithm is polynomial.

\paragraph{Simplifying notation.} Since we only focus on embedding $\pi_i$, we simplify notation by removing the subscripts as much as we can. Specifically:
\begin{itemize}
    \item We use $a$ and $b$ to denote $t_i$ and $t_{i}'$, respectively, and use $\pi$ to denote $\pi_i$.
    \item  We use $G$ to denote $G_{i-1}$. We also slightly abuse notation to use $\delta_T$ in place of $\delta^{\leq i}(T)$.  $(T,\delta_T)$ is a partial Okamura-Seymour quasimetric: some distance values (corresponding to pairs in iterations $i+1$ or after) are $\ast$ (unknown), while other distance values (corresponding to pairs in iterations $i$ or before)  are integers from $\{0,1\ldots, W\}$.  We call a terminal pair with a known distance value a \EMPH{known terminal pair}. Pair $(a,b)$ is a known terminal pair.
    \item We use $\Pi$ to denote $\Pi_{\leq i-1}$, which is the set of edge-disjoint paths in $G$ between known terminal pairs, except for the pair $(a,b)$. 
\end{itemize}
Additionally, let $\mathcal{F}^a$ (equivalently $\mathcal{F}^b$) be the set of faces of $G$ incident to $a$ (equiv. $b$). 
\bigskip

Our goal is to design an algorithm to implement $\textsc{EmbedNextPath}(G, a,b,\Pi)$  in polynomial time. More precisely, to draw $\pi$ on top of $G$ such that the embedding and weighting conditions are satisfied.

\bigskip Recall that a trajectory $\mathcal{F} = \langle F_{0} \in \mathcal{F}^a, F_1,\ldots,F_r\in \mathcal{F}^b \rangle$ of $\pi$ is a sequence of adjacent faces that $\pi$ passes through in the drawing of $\pi$ on $G$. The linear program in \Cref{lm:check-feasibility} for testing the validity of a trajectory and assigning weights to the edges crucially relies on knowing the trajectory of $\pi$.

\begin{algorithm}
\caption{$\textsc{FindValidTrajectory}(G, T, \pi, \Pi)$:  \Comment{An algorithm for finding a valid trajectory of $\pi$ in $G$.}}  
\begin{algorithmic}[1]
\algnotext{EndIf}
\algnotext{EndFor}

\For{each $F^a \in \mathcal{F}^a$}
\State $\mathcal{F}^{p} \leftarrow \langle F^{a}\rangle$
\State $F_{current} \gets F^{a}$
\While{$F_{current} \notin \mathcal{F}^b$}
     \For{each $F_{next} \in \mathcal{N}(F_{current})$ \textbf{and} $F_{next} \notin \mathcal{F}^{p}$}
        \If{$\textsc{CheckPrefixValidity}(G, T, \mathcal{F}^{p}\cup \{F_{next}\}, \pi,\Pi)$}   \Comment{Testing if $\mathcal{F}^{p}\cup \{F_{next}\}$ is a valid prefix.}
            \State $\mathcal{F}^{p}\leftarrow \mathcal{F}^{p}\cup \{F_{next}\}$
            \State $F_{current} \leftarrow F_{next}$
            \State \textbf{break} the for loop in line 4      
        \EndIf
    \EndFor
\EndWhile
\EndFor
\State  return $\mathcal{F}^{p}$
\end{algorithmic}
\label{alg:find-trajectory}
\end{algorithm}

\paragraph{A greedy algorithm.} To get a polynomial time algorithm, as described in \Cref{subsec:ideas}, we will build a valid trajectory $\mathcal{F}$ of $\pi$ greedily, adding one face to $\mathcal{F}$ at a time, starting from an initial guess $F^a \in \mathcal{F}^a$ until we reach a face in $\mathcal{F}^{b}$. For each face $F$ in $G$, we denote by $\mathcal{N}(F)$ the set of neighboring faces of $F$, which are faces sharing an edge in $F$. Note that there are no degree 2 vertices in $G$, except the terminals.

We say that a sequence of faces $\mathcal{F}^{p} = \langle F_0 \in \mathcal{F}^a, \ldots, F_{\ell}\rangle$ is a \EMPH{valid prefix} for $\pi$ if there exists a valid trajectory $\mathcal{F}$ for $\pi$ such that $\mathcal{F}^{p}$ is a prefix of $\mathcal{F}$.  Our key technical contribution is to design a (complicated) linear program \hyperlink{validityLP}{\textsc{PrefixValidityLP}} that tests whether the current (incomplete) trajectory can be extended to a complete, valid trajectory of $\pi$. This test allows us to decide which neighboring face of $F_{\ell}$ could be part of a valid prefix, and therefore, we can grow the prefix by one more face. The pseudocode is given in \Cref{alg:find-trajectory}. 

We note that by \Cref{clm:simple_path}, each face of $G$ appears at most once in the prefix, so in line 5 of \Cref{alg:find-trajectory}, we only consider $F_{next}$ not in $\mathcal{F}^{p}$. The bulk of the technical detail is to check the validity of a prefix in line 6. First, we need to set up some key concepts.

\subsection{The Setup}

\begin{figure}[ht]
    \centering
    \begin{subfigure}[b]{0.32\textwidth}
        \centering
        \includegraphics[page=1, trim=70pt 160pt 70pt 90pt, clip, width=\textwidth]{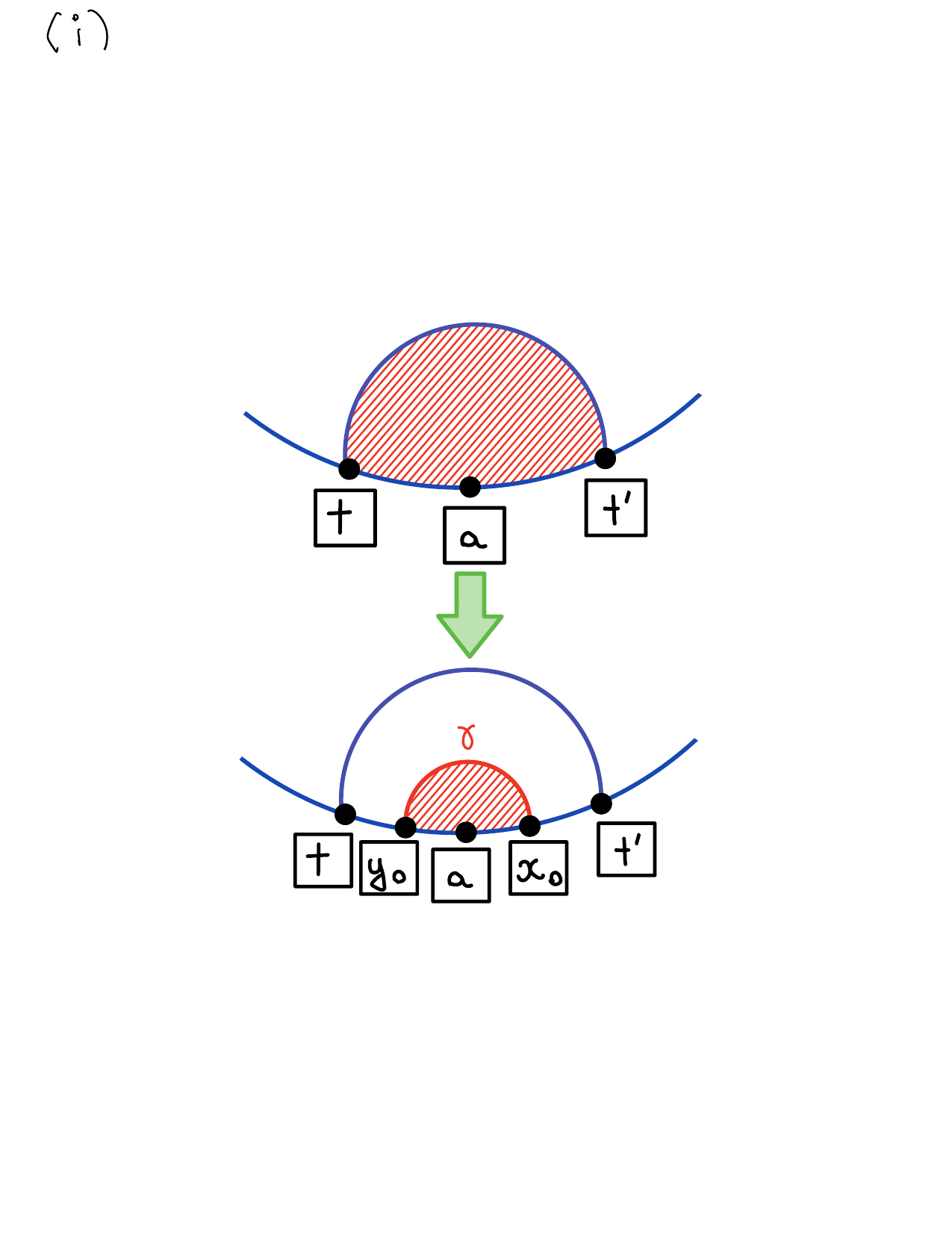}
        \caption{(i)}
        \label{fig:disci}
    \end{subfigure}
    \hfill 
    \begin{subfigure}[b]{0.32\textwidth}
        \centering
        \includegraphics[page=2, trim=70pt 120pt 70pt 90pt, clip, width=\textwidth]{figuresfinal/DiscFinal.pdf}
        \caption{(ii)}
        \label{fig:discii}
    \end{subfigure}
    \hfill
    \begin{subfigure}[b]{0.32\textwidth}
        \centering
        \includegraphics[page=3, trim=70pt 120pt 70pt 90pt, clip, width=\textwidth]{figuresfinal/DiscFinal.pdf}
        \caption{(iii)}
        \label{fig:disciii}
    \end{subfigure}
    \caption{It is convenient to think of $F^a$ (red dashed area) as the \emph{unique} face of $G$ incident to $a$. We get this property by locally modifying the embedding of $G$ at $a$. There are three cases, each requiring their own modification: (i) $|\mathcal{F}^a| = 1$, (ii) $|\mathcal{F}^a| = 2$, and (iii) $|\mathcal{F}^a| > 2$. Note that in case (iii), if $\partial F^a$ shares a non-empty segment with exactly one of $\{\pi_j, \pi_{j+1}, \dots, \pi_{j+k}\}$ then we can indeed redraw as an instance of case (ii).}
    \label{fig:disc}
\end{figure}

First, we give a formal definition of a wall and corridor edges. These concepts were introduced by Chen and Tan~\cite{ChenTan2025}, as mentioned briefly in \Cref{subsec:ideas}. To do this, we will first need to modify $G, \mathcal{F}^p, \text{and } F^a$, to guarantee a useful property of $F^a$.

To further simplify our presentation, it would be convenient if $F^a$ is the \EMPH{unique face} of $G$ incident to $a$. This property confers an important advantage: $a$ is separated from $b$ by the wall, which makes subsequent topological arguments simpler. Chen and Tan~\cite{ChenTan2025} described carving out a tiny disc around $a$ and making that the face $F^a$. For our analysis, we will also need to refer to the endpoints of the disc. We will describe how to carve the disk by locally modifying the embedding of the instance at $a$. There are three case, depending on how many faces of $G$ contain $a$ on their boundaries: (i) $|\mathcal{F}^a| = 1$, (ii) $|\mathcal{F}^a| = 2$, and (iii) $|\mathcal{F}^a| > 2$ (see Figure~\ref{fig:disc}). We denote the boundary of the outer face as $\partial G$ and model it as an undirected cycle where the distance between consecutive vertices in the circular ordering $\sigma$ is $+\infty$. (Think of these cycle edges on $\partial G$ as fake edges.) 

\begin{enumerate}[label=(\roman*)]
  \item \textbf{(Figure~\ref{fig:disci}):} $|\mathcal{F}^a| = 1$, we augment $V(G)$ by inserting auxiliary vertices $x_0, y_0$ into $\partial G$ such that the sequence $(x_0, a, y_0)$ appears in clockwise order. We set the boundary distances $\delta_{\partial G}(x_0, a) = \delta_{\partial G}(a, y_0) = 0$. We introduce a directed arc $\gamma$ in the interior of $G$, connecting $x_0$ to $y_0$. We view $\gamma$ as an undirected edge by adding $(x_0, y_0)$, $(y_0, x_0)$ to $E(G)$ with weight $\delta_{G}(x_0, y_0) = \delta_{G}(y_0, x_0) = 0$. The augmented graph is $G'$. The area delimited by combining $\gamma$ and the boundary segment $\partial G[x_0, y_0]$ contained in $F^a$ forms an infinitesimal semi-disk, and is the unique face of $G'$ incident to $a$, denoted by $F^*$.  Now we replace $F_{a}$ in $\mathcal{F}^p$  by two faces $F^*$ and $F_a \setminus F^*$; the first face in $\mathcal{F}^p$ is now $F^{*}$.
    
  \item \textbf{(Figure~\ref{fig:discii}):} If $|\mathcal{F}^a| = 2$ then there exists a known terminal pair either $(a, t)$ or $(t, a)$, with $t \neq b$. Without loss of generality assume that $(a, t)$ is a known terminal pair inserted at some previous iteration $j$, then there exists a non-empty segment $Q$ of $\partial F^a$ which is shared with $\pi_j$ and $\mathcal{F}^a\setminus\{F^a\}$, and $Q$ has $a$ as one of its endpoints. Let $t'$ be the first terminal that we encounter traveling along $\partial F^a$ by starting at $a$ and traversing $Q$ first. If the remaining piece of $\partial F^a$ continues clockwise along $\partial G$ to $a$, then we relabel $a$ as $y_0$ and augment $V(G)$ by inserting auxiliary vertices $x_0, a$ into $\partial G[t, y_0]\setminus\{t, y_0\}$ such that the sequence $(x_0, a, y_0)$ appears in clockwise order. Else relabel $a$ as $x_0$, and insert vertices $a, y_0$ into $\partial G[x_0, t]\setminus\{x_0, t\}$ such that the sequence $(x_0, a, y_0)$ appears in clockwise order. From this point, we continue analogously to (i) to update $\mathcal{F}^{p}$.

  \item \textbf{(Figure~\ref{fig:disciii}):} When $|\mathcal{F}^a| > 2$, then there exists $> 1$ known terminal pairs involving $a$ whose embeddings were inserted during previous iterations. Denote them $\{\pi_j, \pi_{j+1}, \dots, \pi_{j+k}\}$, with $j+k < i$, so that the ordering of the indices is with respect to the clockwise circular order at $a$ in the embedding of $G$. If $\partial F^a$ shares a non-empty segment with exactly one of $\{\pi_j, \pi_{j+1}, \dots, \pi_{j+k}\}$, then we apply the transformation of (ii). Otherwise $\partial F^a$ shares a non-empty segment with exactly two embeddings of those known terminal pairs, and they appear consecutively in $\{\pi_j, \pi_{j+1}, \dots, \pi_{j+k}\}$, say $\pi_{j'}, $ and $\pi_{j'+1}$. We augment $V(G)$ by splitting $a$ into three distinct vertices $x_0, a, y_0$ such that the sequence $(x_0, a, y_0)$ appears in clockwise order on $\partial G$. For all $\pi_r \in \{\pi_j, \pi_{j+1}, \dots, \pi_{j+k}\}$, if $r \leq {j'}$ then we modify the edge of $\pi_r$ incident to $a$ by replacing $a$ with $x_0$. When $r \geq {j'+1}$, we modify the edge of $\pi_r$ incident to $a$ by replacing $a$ with $y_0$. For modified edges, the embedding of $G$ is perturbed to satisfy the new assignment. From this point, we continue analogously to (i) to update $\mathcal{F}^{p}$.  
\end{enumerate}

\begin{figure}[htbp]
    \centering
    \begin{subfigure}[b]{0.24\textwidth}
        \centering
        \includegraphics[page=1, trim=90pt 160pt 90pt 150pt, clip, width=\textwidth]{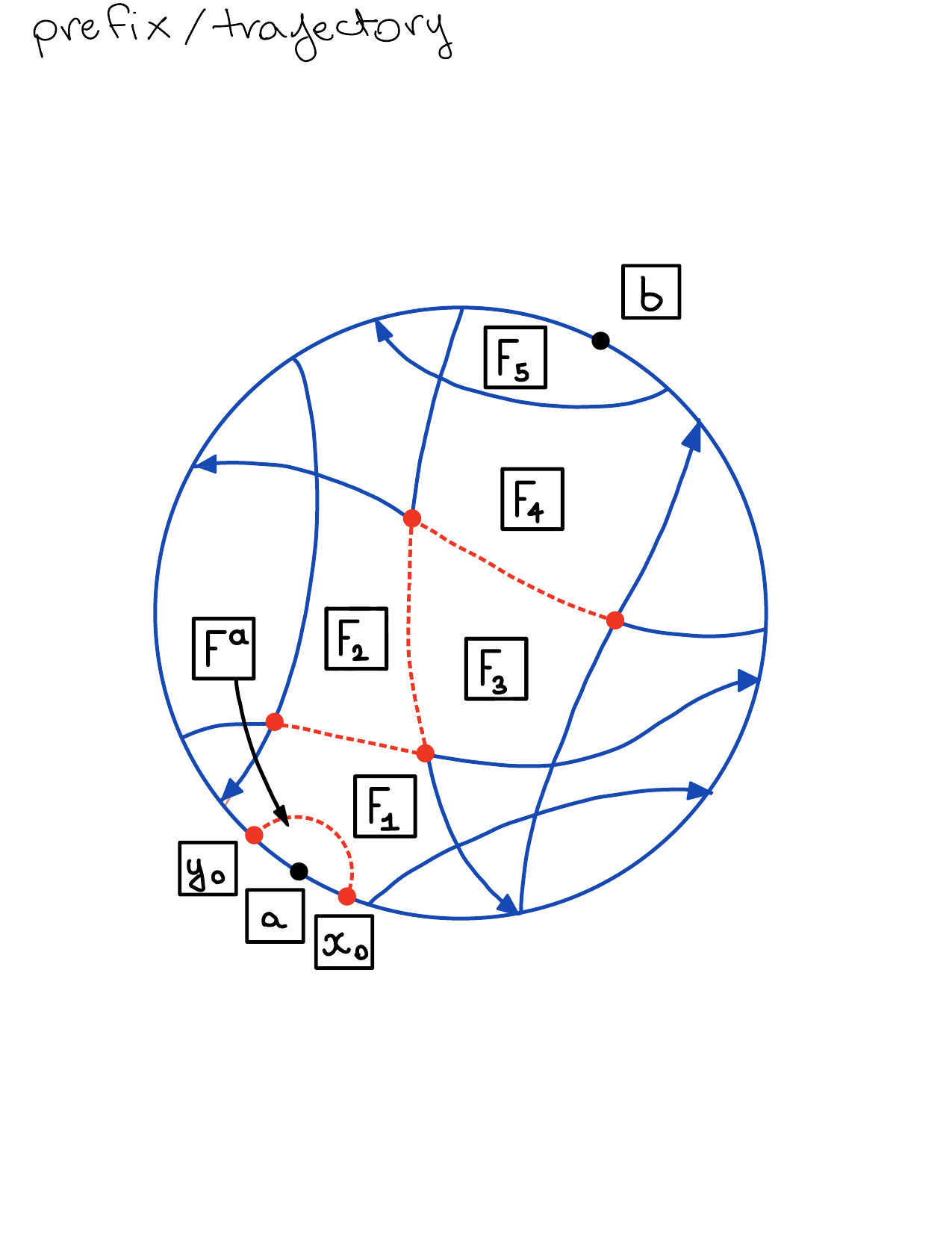}
        \caption{Candidate Prefix}
        \label{fig:validprefix}
    \end{subfigure}
    \hfill 
    \begin{subfigure}[b]{0.24\textwidth}
        \centering
        \includegraphics[page=2, trim=90pt 160pt 90pt 150pt, clip, width=\textwidth]{figuresfinal/LPConstructionFinal.pdf}
        \caption{Wall and Corridor Edges}
        \label{fig:wallcorridor}
    \end{subfigure}
    \hfill
    \begin{subfigure}[b]{0.24\textwidth}
        \centering
        \includegraphics[page=3, trim=90pt 160pt 90pt 110pt, clip, width=\textwidth]{figuresfinal/LPConstructionFinal.pdf}
        \caption{Minimal Subcurve}
        \label{fig:minimalseparator}
    \end{subfigure}
    \begin{subfigure}[b]{0.24\textwidth}
        \centering
        \includegraphics[page=6, trim=90pt 200pt 90pt 130pt, clip, width=\textwidth]{figuresfinal/LPConstructionFinal.pdf}
        \caption{Harder Case}
        \label{fig:hard}
    \end{subfigure}
    \caption{Let $\mathcal{F}^p = (F^a, F_1, F_2, F_3, F_4)$ depicted in a) be the candidate prefix of Definition~\ref{wallandcorridoredges}. In b) we emphasize the corridor edges $e_1, e_2, e_3, e^4$ (dashed red lines) and the wall $W$. The left wall (purple part of the highlighted walk) starts at $y_0$, and travels clockwise around the boundary $\partial F_1$ until an endpoint of $e_2$, then clockwise along $\partial F_2$ until an endpoint of $e_3$, and finally clockwise following $\partial F_3$ until an endpoint of $e_4$. The right wall (orange part of the highlighted walk) starts at $x_0$ then follows $\partial F_1$ in counterclockwise order, continuing analogously to the description of the left wall until $e_4$. The wall is the combined left wall, corridor edge $e_4$, and right wall (complete highlighted walk). c) The graph $G$ along with $\mathcal{F}^p$ as drawn is a special case where the minimal subcurve of the wall separating $a$ and $b$ (blue highlighted walk) is the wall itself. In b) and c) the red dashed region is the interior, and the rest is the exterior (excluding the infinite face). A case where the minimal subcurve separating $a$ and $b$ is not equal to the wall is presented in d) with the wall (complete highlighted walk), and the minimal subcurve separating $a$ and $b$ (blue highlighted walk). The interior is the combined dashed regions, and the exterior is the blank region.}
    \label{fig:wallintex}
\end{figure}
For the remainder of this analysis, we assume $G$ refers to the augmented graph $G'$, and $F^p$ has been modified so that $F_0$ is the unique face of the graph incident to $a$. Once we have found the valid trajectory for $\pi$ we undo the changes by following the above instructions in the reverse order. (We view both $x_0$ and $y_0$ as the same terminal $a$ since their pairwise distances are $0$ in $G'$.) Now, with the new unique face $F_0$ containing $a$, we can formally define wall and corridor edges. 
\begin{definition}[Wall and Corridor Edges.]\label{wallandcorridoredges}
Let $\mathcal{F}^p \;=\; (F_0,F_1,\ldots,F_\ell)$ be a candidate prefix of a (valid) any trajectory for $\pi$. For each $1\le j\le \ell$, let $e_j$ be the edge shared by the two faces $F_{j-1}$ and $F_j$. We call $e_1,\ldots,e_\ell$ the \EMPH{corridor edges} of $\mathcal{F}^p$.
For each $1\le j\le \ell-1$, let $\partial_\mathrm{L}(F_j, e_j, e_{j+1})$ be the minimal segment encountered when walking along $\partial F_j$ from an endpoint of $e_j$ to another endpoint of $e_{j+1}$ in \EMPH{clockwise order}. Similarly, let $\partial_\mathrm{R}(F_j, e_j, e_{j+1})$ be the minimal segment from an endpoint $e_j$ to another $e_{j+1}$ in \EMPH{counterclockwise} order. The \EMPH{left wall} of the prefix $\mathcal{F}^p$ is defined as $\bigcup_{j=1}^{\ell-1} \partial_\mathrm{L}(F_j, e_j, e_{j+1})$,  and the \EMPH{right wall} is defined as $\bigcup_{j=1}^{\ell-1} \partial_\mathrm{R}(F_j, e_j, e_{j+1})$. The union of the \emph{left wall}, the \emph{right wall}, and the \emph{corridor edge $e_l$} is called the \EMPH{wall}, denoted by $\mathcal{W}$, of the prefix $\mathcal{F}^p$. Formally: \[
\mathcal{W}
~=~
\left(\bigcup_{j=1}^{\ell-1} \partial_\mathrm{L}(F_j, e_j, e_{j+1})\right)
~\cup~
\left(\bigcup_{j=1}^{\ell-1} \partial_\mathrm{R}(F_j, e_j, e_{j+1})\right)
~\cup~
e_\ell.
\]
\end{definition}

\begin{observation}\label{obs:wall-separating} The wall $\mathcal{W}$ is a simple curve. Furthermore,  removing $\mathcal{W}$ will separate the disk containing the embedding of  $\bdry G_{p}$ into two regions, one containing $a$ and the other containing $b$.
\end{observation}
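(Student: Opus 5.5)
The plan is to view the wall $\mathcal{W}$ as the boundary, minus the corridor edge $e_1$ side, of the region $R=\overline{F_0\cup F_1\cup\cdots\cup F_\ell}$ swept out by the prefix. I would first fix the endpoints of each corridor edge $e_j$ as $\ell_j$ (left) and $r_j$ (right), consistently with the orientation of a curve crossing $e_j$ from $F_{j-1}$ into $F_j$. After the local modification at $a$, the segment $e_1$ is essentially the arc $\gamma$ from $x_0$ to $y_0$, so $\ell_1=y_0$ and $r_1=x_0$. Under this labelling, $\partial_\mathrm{L}(F_j,e_j,e_{j+1})$ is a path in $\bdry F_j$ from $\ell_j$ to $\ell_{j+1}$, and $\partial_\mathrm{R}(F_j,e_j,e_{j+1})$ is a path from $r_j$ to $r_{j+1}$. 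Concatenating these pieces gives a walk from $y_0$ to $\ell_\ell$ along the left wall, then across $e_\ell$ to $r_\ell$, then back along the right wall to $x_0$. This is exactly the walk described in the caption of Figure~\ref{fig:wallintex}.

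The first main step is simplicity. By \Cref{clm:simple_path} the faces $F_0,\dots,F_\ell$ are pairwise distinct, and they are open disks because $G$ is a nest with the outer boundary modelled as a cycle. I would argue inductively that $R_j=\overline{F_0\cup\cdots\cup F_j}$ is a closed disk whose boundary is: the left wall up to $\ell_j$, then $e_j$, then the right wall back to $r_1$, closed off by $\bdry G[x_0,y_0]$. The inductive step glues the disk $\overline{F_{j+1}}$ onto $R_j$ along $e_{j+1}$. The result is again a disk only if $\overline{F_{j+1}}\cap R_j=e_{j+1}$.

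This gluing condition is the main obstacle. A later face could touch an earlier wall at a vertex, or even at further edges; the first case is the situation in Figure~\ref{fig:hard}. I would handle it by treating $\mathcal{W}$ not as a point set but as the boundary walk of the disk obtained by gluing the faces only along the corridor edges. This abstract gluing is a disk regardless of other contacts, so its boundary walk is a simple closed curve, and $\mathcal{W}$ is that curve minus $\gamma$. The claimed simplicity should be read in this sense: any extra touching between $\mathcal{W}$ and itself can be resolved by a small perturbation of the drawing, because the faces are distinct and consecutive faces share exactly one edge. I would also verify directly that $\partial_\mathrm{L}$ and $\partial_\mathrm{R}$ within a single $F_j$ are disjoint except possibly at shared corridor endpoints. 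This holds because the two pieces are complementary arcs of the cycle $\bdry F_j$ once $e_j$ and $e_{j+1}$ are removed.

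For separation, the simple arc $\mathcal{W}$ runs from $y_0$ to $x_0$, and both endpoints lie on $\bdry G$. Its interior lies in the open disk bounded by $\bdry G_p$: the faces $F_j$ are interior faces, and the walk only uses outer-boundary points where it leaves and returns. By the Jordan curve theorem, together with the fact that a cross-cut of a disk splits it into two components, removing $\mathcal{W}$ leaves two regions. One region is bounded by $\mathcal{W}\cup\bdry G[x_0,y_0]$ and contains $a$, since $a$ lies strictly between $x_0$ and $y_0$ on $\bdry G$. The other region contains the complementary boundary arc $\bdry G[y_0,x_0]$, and in particular contains $b\neq a$, provided $b\notin\mathcal{W}$. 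That proviso holds while the prefix has not yet reached a face of $\mathcal{F}^b$ in a way that places $b$ on the wall, which is exactly the regime in which the prefix test of \Cref{alg:find-trajectory} is invoked.
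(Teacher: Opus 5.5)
Your plan follows the same idea as the paper's two-line proof: distinct prefix faces give simplicity, and $F_0$ being the only face at $a$ puts $a$ on the inner side of the wall. However, both places where you try to make this rigorous break down.

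\textbf{Simplicity.} You correctly notice that distinctness of $F_0,\dots,F_\ell$ does not stop non-consecutive prefix faces from meeting; the paper's one-liner does not address this either. But ``read simplicity in the abstract gluing'' plus an unproved perturbation claim does not prove the statement. Suppose $F_i$ and $F_k$, with $1\le i$ and $i+2\le k\le \ell-1$, share an edge $f$. Then $f$ is neither $e_i,e_{i+1}$ nor $e_k,e_{k+1}$, so it lies in $\partial_\mathrm{L}\cup\partial_\mathrm{R}$ of both faces. Hence $\mathcal{W}$, as defined, traverses $f$ twice. Perturbing it off the graph gives a different curve, and you would still have to transfer any conclusion back to $\mathcal{W}$.

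\textbf{Separation.} This step is the more serious problem. It applies the cross-cut theorem, which needs exactly what you gave up: an honest simple arc in the disk. It also needs the arc's interior to avoid $\bdry G_p$, and that already fails in the simplest configuration. In case (i) of the carving, near $y_0$ the face $F_1=F^a\setminus F^*$ is bounded by $\gamma$ and by the fake edge of $\bdry G$ leaving $y_0$. So $\partial_\mathrm{L}(F_1,e_1,e_2)$ begins by running along $\bdry G$, and symmetrically at $x_0$. More generally, any prefix face adjacent to the outer face can contribute fake edges, and the wall can leave $\bdry G$ and touch it again. This is why the paper passes to the minimal subcurve $\mathcal{W}[x_0,y_0]$ (Figure~\ref{fig:hard}). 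In such cases the complement of $\mathcal{W}$ has more than two components, so ``exactly two regions'' is the wrong target. What holds, and what is used afterwards, is that $a$ and $b$ lie in different components.

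That separation can be proved without simplicity or the Jordan curve theorem. Write $e^\circ$ for an edge without its endpoints. Let $U$ be the union of the open boundary arc of $\bdry G$ between $x_0$ and $y_0$ containing $a$, the faces $F_0,\dots,F_{\ell-1}$, and $e_1^\circ,\dots,e_{\ell-1}^\circ$.
\begin{itemize}
\item $U$ is connected, since consecutive faces are joined through open corridor edges.
\item $U$ is relatively open in the closed disk, because $F_0$ is the semi-disk at $a$ and each $e_j$ with $j<\ell$ borders only $F_{j-1}$ and $F_j$.
\item $U$ is disjoint from $\mathcal{W}$.
\item The frontier of $U$ lies in $\mathcal{W}$. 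For $1\le j\le\ell-1$, the cycle $\bdry F_j$ minus $e_j^\circ$ and $e_{j+1}^\circ$ is exactly $\partial_\mathrm{L}(F_j,e_j,e_{j+1})\cup\partial_\mathrm{R}(F_j,e_j,e_{j+1})$; this is your complementary-arc remark. Also $e_\ell\subseteq\bdry F_{\ell-1}$ is on the wall, and $\bdry F_0=e_1\cup\bdry G[x_0,y_0]$ with $x_0,y_0\in\mathcal{W}$.
\end{itemize}
Hence $U$ is clopen in the disk minus $\mathcal{W}$, so it is a connected component of it. It contains $a$ and not $b$. Moreover $b\notin\mathcal{W}$, because $F_1,\dots,F_{\ell-1}\notin\mathcal{F}^b$ by the loop guard of \Cref{alg:find-trajectory}.

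This also fixes an off-by-one in your setup. The region bounded by $\mathcal{W}\cup\bdry G[x_0,y_0]$ is $\overline{F_0\cup\cdots\cup F_{\ell-1}}$, not $\overline{F_0\cup\cdots\cup F_\ell}$, since $e_\ell$ itself lies on the wall.
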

\begin{proof}
    $\mathcal{W}$ is simple because $\mathcal{F}^{p}$ does not have repeated faces. The second claim is because $F_0$ is the unique face containing $a$.
\end{proof}

Intuitively, $\mathcal{W}$ separates $G$ into two regions: one containing the prefix $\mathcal{F}^p$ (and the prefix of the drawing of $\pi$ as well), and the other containing possible future extensions of the prefix. We follow this intuition and give the following definitions. 

We draw the prefix of $\pi$ on $\mathcal{F}^{p}$ on $G$ as follows. Let $\pi[0] = a$. Then for  each $j \in [1,\ell]$, we view each edge $e_j$ as a segment on the plane, choose a point $\pi[j]$ on $e_j\setminus \{\text{$e_j$'s endpoints}\}$ and draw a  (directed) curve from $\pi[j-1]$ to $\pi[j]$ strictly inside the face $F_{j}$. The concatenation of the curves from $\pi[0]$ to $\pi[\ell]$ is the prefix drawing of $\pi$. Let \EMPH{$G_{p}$} be the resulting graph. That is $V(G_p) = V(G)\cup \{\pi[j]\}_{1\leq j \leq \ell}$ and $E(G_p) = E(G)\cup \{(\pi[j-1], \pi[j])\}_{1\leq j\leq \ell}$. We still refer to $\mathcal{W}$ as a wall in $G_p$. 

\begin{figure}[htbp]
    \centering
    \begin{subfigure}[b]{0.45\textwidth}
        \centering
        \includegraphics[page=4, trim=70pt 160pt 70pt 110pt, clip, width=.9\textwidth]{figuresfinal/LPConstructionFinal.pdf}
        \caption{}
        \label{fig:cuttingeasy}
    \end{subfigure}
    \hfill 
    \begin{subfigure}[b]{0.45\textwidth}
        \centering
        \includegraphics[page=7, trim=70pt 160pt 70pt 110pt, clip, width=.9\textwidth]{figuresfinal/LPConstructionFinal.pdf}
        \caption{}
        \label{fig:cuttinghard}
    \end{subfigure}
    \hfill
    \caption{An illustration of cutting $G_{p}$ along the wall $\mathcal{W}$ in the example initialized at a) Figure~\ref{fig:validprefix}, b) Figure~\ref{fig:hard}. Note that the edges doubled along $W[x_0, y_0]$ (blue highlighted walks) are removed in the exterior (black dotted lines). The pulled out region is exactly the interior.} 
    \label{fig:cutting}
\end{figure}

\paragraph{Cutting $G_{p}$ along the wall $\mathcal{W}$.}  Let $\bdry G_{p}$ be the infinite face of $G_p$. Let \EMPH{$\mathcal{W}[x_0,y_0]$} be the minimal subcurve of $\mathcal{W}$ such that: (i) both $x_0$  and $y_0$ are on $\bdry G_{p}$ and (ii) removing $\mathcal{W}[x_0,y_0]$ will separate the disk containing the embedding of  $\bdry G_{p}$ into two regions, one containing $a$ and the other containing $b$ (see \cref{fig:minimalseparator}).   Such a subcurve $\mathcal{W}[x_0,y_0]$ exists since $\mathcal{W}$ is a valid candidate by \Cref{obs:wall-separating}. We define the operation of cutting $G_p$ along $\mathcal{W}$, denoted by \EMPH{$G_p \cut \mathcal{W}$}, as follows:
\begin{itemize}
    \item We double edges and vertices along $\mathcal{W}[x_0,y_0]$. This basically splits $G_p$ into two smaller graphs where one graph contains $a$ (and every vertex/edge inside the wall), and the other contains $b$ (and every vertex/edge outside the wall). Note that  $\mathcal{W}[x_0,y_0]$ appears in both graphs (see \Cref{fig:cutting}.)
    \item  The we delete edges (but keep vertices) of  $\mathcal{W}[x_0,y_0]$ in the subgraph containing $b$. 
\end{itemize}

The output of $G_p \cut \mathcal{W}$ is two (edge-disjoint) subgraphs, denoted by \EMPH{$F^{ex}(\mathcal{W})$} and  \EMPH{$F^{in}(\mathcal{W})$}, where  $F^{ex}(\mathcal{W})$ contains $b$ and  $F^{in}(\mathcal{W})$ contains $a$, respectively. We call $F^{ex}(\mathcal{W})$ and  $F^{in}(\mathcal{W})$ the \EMPH{exterior} and \EMPH{interior} induced by $\mathcal{W}$, respectively. We define the set of \EMPH{exterior terminals} as $T^{ex} = \bigl(T \cup V(\mathcal{W})\bigr)\ \cap\ \partial F^{ex}(\mathcal{W})$, which contains all vertices of $T$ and the wall on the infinite face of $F^{ex}(\mathcal{W})$. Similarly, we define the set of \EMPH{interior terminals} as $T^{in} = \bigl(T \cup V(\mathcal{W})\bigr)\ \cap\ \partial F^{in}(\mathcal{W})$. 

\paragraph{Decomposing $G_p$ and $\Pi$.} The idea of cutting $G_p$ along the wall is that it produces two (directed) Okamura-Seymour instance  $(F^{ex}(\mathcal{W}), T^{ex})$ and $(F^{in}(\mathcal{W}), T^{in})$, which can be seen as a decomposition of the directed Okamura-Seymour instance $(G_p,T)$. This decomposition induces a decomposition of shortest paths in $\Pi$ between terminals as we describe next.

Let $\Tilde{\pi}$ be a path in $\Pi$ from terminal $t$ to terminal $t'$.  Let $\langle u_0 = t, u_1,\ldots, u_{r}  = t'\rangle$ be the sequence of vertices on $\Tilde{\pi}$ such that each subpath $\Tilde{\pi}[u_k, u_{k+1}]$ is a maximal subpath that is either entirely contained in   $F^{ex}(\mathcal{W})$  or  in $F^{in}(\mathcal{W})$. Let \EMPH{$\Delta(\Tilde{\pi}) \coloneqq \langle  \Tilde{\pi}[u_0, u_{1}], \ldots, \Tilde{\pi}[u_{r-1}, u_{r}] \rangle$} be a decomposition of $\Tilde{\pi}$ into a sequence of subpaths.  Since  $F^{ex}(\mathcal{W})$ and $F^{in}(\mathcal{W})$ are edge-disjoint and the subpaths $\Tilde{\pi}[u_k, u_{k+1}]$ are maximal, we have:

\begin{observation}\label{obs:boundary_decomposition} If a subpath in  $\Delta(\Tilde{\pi})$ is entirely contained in $F^{ex}(\mathcal{W})$ (or  $F^{in}(\mathcal{W})$), then its endpoints are in $T^{ex}$ (or $T^{in}$, resp.). Furthermore, any two consecutive subpaths of  $\Delta(\Tilde{\pi})$ cannot both be contained in $F^{ex}(\mathcal{W})$ and cannot both be contained in $F^{in}(\mathcal{W})$.
\end{observation}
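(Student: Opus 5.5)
The first claim follows from the cutting construction and the fact that the cut creates no new vertices or edges beyond the doubled copies. The second follows from a topological argument about how a path crosses $\mathcal{W}[x_0,y_0]$.

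\emph{Endpoints lie in $T^{ex}$ or $T^{in}$.} Fix $\Tilde{\pi}\in\Pi$ from $t$ to $t'$ and a subpath $\Tilde{\pi}[u_k,u_{k+1}]$ of $\Delta(\Tilde{\pi})$ contained in, say, $F^{ex}(\mathcal{W})$. Each endpoint $u_k$ falls into one of two cases.
\begin{itemize}
\item If $u_k$ is an endpoint of $\Tilde{\pi}$ itself ($k=0$ or $k+1=r$), then $u_k\in T$. Since $T$ lies on $\bdry G$, and cutting along $\mathcal{W}[x_0,y_0]$ keeps every vertex of $\bdry G$ on the outer face of whichever piece contains it, we get $u_k\in T\cap \bdry F^{ex}(\mathcal{W})\subseteq T^{ex}$.
\item Otherwise $u_k$ is shared by an edge of $\Tilde{\pi}$ in $F^{ex}(\mathcal{W})$ and an edge of $\Tilde{\pi}$ in $F^{in}(\mathcal{W})$. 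This holds by maximality of the subpaths and because every edge of $G_p$ lands in exactly one of the two pieces; the two pieces are edge-disjoint, since the doubled copies of $\mathcal{W}[x_0,y_0]$ are deleted from the exterior. A vertex with incident edges in both pieces must be one of the vertices doubled by the cut, so $u_k\in V(\mathcal{W}[x_0,y_0])$. These doubled vertices lie on the cut curve, which by construction bounds the outer face of both pieces. Hence $u_k\in V(\mathcal{W})\cap\bdry F^{ex}(\mathcal{W})\subseteq T^{ex}$.
\end{itemize}
The interior case is symmetric.

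\emph{Consecutive subpaths alternate.} The definition gives $\langle u_0,\dots,u_r\rangle$ as breakpoints between maximal pieces, each lying entirely in one side. If two consecutive subpaths $\Tilde{\pi}[u_{k-1},u_k]$ and $\Tilde{\pi}[u_k,u_{k+1}]$ lay in the same side, their concatenation would be a longer subpath of $\Tilde{\pi}$ in that side, contradicting maximality of $\Tilde{\pi}[u_{k-1},u_k]$.

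The main subtlety is in the endpoint case: checking that a breakpoint vertex really is a doubled wall vertex and appears on the outer boundary of both pieces. This needs care when $\Tilde{\pi}$ runs along edges of $\mathcal{W}[x_0,y_0]$. Such edges belong to the interior copy, so a path running along the wall is counted as interior, and its switch points are still vertices of the wall. I would argue this directly from the definition of $G_p\cut\mathcal{W}$, using \Cref{obs:wall-separating} to ensure the cut curve is simple and has both endpoints $x_0,y_0$ on $\bdry G_p$.
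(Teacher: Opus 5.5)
Your proof is correct and is essentially the paper's argument: the paper justifies this observation in one line, via the edge-disjointness of $F^{ex}(\mathcal{W})$ and $F^{in}(\mathcal{W})$ and the maximality of the pieces of $\Delta(\Tilde{\pi})$, which is exactly what you use, and your case split just makes explicit the step the paper leaves implicit. In that split, an endpoint of $\Tilde{\pi}$ is a terminal on $\bdry G$, and a switch vertex must be a doubled vertex of $\mathcal{W}[x_0,y_0]$ lying on the outer boundary of both pieces. One small correction: the alternation claim needs only maximality, not the topological argument your opening sentence promises.
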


Now we can decompose the set of paths $\Pi$ into two sets of paths, denoted by \EMPH{$\Pi^{ex}$} and \EMPH{$\Pi^{in}$}, as follows. For every path $\Tilde{\pi} \in \Pi$, and for every subpath $P\in \Delta(\Tilde{\pi})$, if $P$ is entirely contained in $F^{ex}(\mathcal{W})$, we add $P$ to $\Pi^{ex}$; otherwise, we add $P$ to  $\Pi^{in}$.

\begin{claim}
\label{clm:unique_path_pi} For every path $P \in \Pi^{ex}$, there exactly one path $\Tilde{\pi} \in \Pi$ such that $P$ is a subpath of $\Tilde{\pi}$, and therefore, every terminal pair $(u,v)\in T^{ex}\times T^{ex}$ has at most one path from $u$ to $v$ in $\Pi^{ex}$. The same holds for $\Pi^{in}$.
\end{claim}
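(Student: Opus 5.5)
The first part is structural and follows from the nest property of $G$. Every $P\in\Pi^{ex}$ is nonempty, because each maximal subpath in $\Delta(\Tilde{\pi})$ has at least one edge. Fix an edge $e$ of $P$. Since $G$ is a nest, the paths of $\Pi$ are pairwise edge-disjoint, so $e$ lies on exactly one $\Tilde{\pi}\in\Pi$.

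To make this rigorous in $G_p$, I would check that the only new edges of $G_p$ come from two sources. The prefix edges $(\pi[j-1],\pi[j])$ do not belong to any path in $\Pi$. The subdivisions of the corridor edges $e_j$ at the points $\pi[j]$ each split an edge of a single path into two pieces still owned by that path. Cutting along $\mathcal{W}[x_0,y_0]$ only duplicates wall edges and deletes the copies on the exterior side, so every edge of $F^{ex}(\mathcal{W})$ is a copy of a unique edge of $G_p$. Hence ownership of each edge by a path of $\Pi$ is well defined, and $P$, being contained in the edge set of some $\Tilde{\pi}$, is a subpath of exactly that $\Tilde{\pi}$. Here I am using that $P$ was added to $\Pi^{ex}$ as an element of $\Delta(\Tilde{\pi})$ for some $\Tilde{\pi}$, and that any other $\Tilde{\pi}'$ containing $P$ would share the edge $e$ with $\Tilde{\pi}$, which is impossible.

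For the second part, suppose two distinct paths $P,P'\in\Pi^{ex}$ both go from $u$ to $v$. If they come from different paths $\Tilde{\pi}\neq\Tilde{\pi}'$ of $\Pi$, then both paths pass through $u$ and then $v$ in the same direction. This contradicts property 2 of \Cref{lem:nest_reduction}, which says no two paths share the same two vertices in the same direction. I also need to check that $u$ and $v$ are genuine vertices of $G$ and not duplicated copies. If they are copies, I would argue about the original vertices before cutting, noting that vertices on the wall are vertices of $G$ or crossing points.

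If instead $P$ and $P'$ come from the same $\Tilde{\pi}$, then they are edge-disjoint subpaths of the simple path $\Tilde{\pi}$. Both start at $u$ and end at $v$, so $\Tilde{\pi}$ would visit $u$ twice or have a zero-length subpath. Neither is possible since $\Tilde{\pi}$ is simple and each subpath in $\Delta$ has an edge. (The degenerate case $u=v$ is excluded the same way.)

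The statement for $\Pi^{in}$ is symmetric. The only difference is that the interior graph keeps the wall edges, which again are owned by unique paths. The corridor-edge subdivisions and the $a$-disc augmentation (the fake $\gamma$ edges belong to no path) need only a remark.

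The main obstacle I expect is the bookkeeping around the cut: making sure the "same two vertices in the same direction" property survives vertex duplication and the disc augmentation at $a$. My first approach would be to map every vertex of $F^{ex}(\mathcal{W})$ back to its preimage in $G_p$ and then to $G$, and apply \Cref{lem:nest_reduction} there.
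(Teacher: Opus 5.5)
Your proposal is correct and is essentially the paper's argument: in both, the key step is that two distinct paths of $\Pi$ with subpaths from $u$ to $v$ would pass through $u$ and then $v$ in the same direction, contradicting property 2 of \Cref{lem:nest_reduction}. The paper's proof uses only this property; it also yields the first part, since a shared subpath gives shared ordered endpoints. You instead get the first part from edge-disjointness of the nest, and you also treat the same-parent case and the vertex-copy bookkeeping at the cut, both of which the paper leaves implicit.
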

\begin{proof}
 Suppose, for contradiction, that there exist two distinct paths $P,P'\in \Pi$ such that both contain subpaths with the same ordered endpoints $(u,v)$, namely, a subpath from $u$ to $v$ in $P$ and a subpath from $u$ to $v$ in $P'$. Then $P$ and $P'$ intersect at least at $u$ and $v$, and these two intersections appear in the same order along both paths. This means that $P$ and $P'$ intersect twice in the same direction. However, by the nest reduction algorithm in Lemma~\ref{lem:nest_reduction}, which is applied whenever we find a valid trajectory of $\pi_i$ (as in line 6 of \Cref{alg:next-path-exp}), no two distinct paths in $\Pi$ can intersect twice in the same direction. This contradiction proves the claim.
\end{proof}

Next, we will use the decomposition of $G_p$ and $\Pi$ to formulate a linear program to check the validity of the prefix $\mathcal{F}^p$.

\subsection{Checking Prefix Validity via Linear Programming}

For every pair of terminals $(t,t')\in T^{ex}\times T^{ex}$, let $\mathcal{P}_{t,t'}^{ex}$ be the set of all simple paths from $t$ to $t'$ contained in $F^{ex}(\mathcal{W})$. The set $\mathcal{P}_{t,t'}^{in}$ is defined similarly w.r.t $F^{in}(\mathcal{W})$ and terminal pair $(t,t') \in T^{in}\times T^{in}$. Let $\sigma_{ex}$ denote the order in which the vertices of $T^{ex}$ appear along the boundary of $F^{ex}(\mathcal{W})$.

Recall that checking the validity of $\mathcal{F}^p$ amounts to checking if there exists a way to assign weights to edges of $G_p$ so that in the future we can realize the suffix of $\pi$ from $\pi[\ell]$ to $b$ in $F^{ex}(\mathcal{W})$ without changing the distances of known terminal pairs. Though all edges of  $F^{ex}(\mathcal{W})$ are in $G$ and hence currently have weights (assigned from the previous iteration), we now have to change their weights to account for the insertion of the prefix of $\pi$.  Therefore, the linear program of this section is  more complicated than the linear program in \Cref{lm:check-feasibility}.  

\paragraph{The variables.} There are only two types: edge variables and distance variables. Specifically, each $e \in G_{p}$ has an edge weight variable $x(e)$. Each order pair $(t,t')$ in $T^{{ex}}\times T^{{ex}}$ or $ T^{{in}}\times  T^{{in}}$ has a distance variable $d_{ex}(t, t')$ or $d_{in}(t, t')$, respectively. Note that a single pair $(t,t')$ could appear in both  $T^{{ex}}\times T^{{ex}}$ and $ T^{{in}}\times  T^{{in}}$, and in this case, we have two different variables.

\paragraph{The constraints.} The first two constraints (\Cref{c1} and \Cref{c2}) encode the simple fact that the total weight along any path between every two terminals must be at least their distance. 

Let \EMPH{$\mathcal{S}^{ex}\subseteq T^{ex}\times T^{ex}$} be the set of (ordered) pairs of the endpoints of paths in $\Pi^{ex}$. Similarly, we define \EMPH{$\mathcal{S}^{in}\subseteq T^{in}\times T^{in}$}  from $\Pi^{in}$.   For each pair $(t,t')\in \mathcal{S}^{ex}$, we denote by $\pi^{ex}(t,t')$ the shortest path between them in $\Pi^{ex}$. For every pair $(t,t') \in\mathcal{S}^{in}$, their path is defined analogously, and denoted by $\pi^{in}(t,t')$. While we generally do not know the distance between terminal pairs in $T^{ex}$ and $T^{in}$, we at least know the \emph{shortest paths} of the pairs in $\mathcal{S}^{ex}$ and $\mathcal{S}^{in}$. The next two constraints in our linear program (\Cref{c3} and \Cref{c4}) encode this fact: the total weight along each shortest path must be at most the endpoints' distance.

Recall that the prefix $\mathcal{F}^{p}$ is valid if we can draw the remaining suffix of $\pi$ from $\pi[\ell]$ to $b$ on $F^{ex}(\mathcal{W})$. To guarantee that, we add Monge constraints on $\sigma_{ex}$ (\Cref{c5}) w.r.t $F^{ex}(\mathcal{W})$. Note that we do not need the Monge constraints in  $F^{in}(\mathcal{W})$ since all the edges in the prefix of $\pi$ (from $a$ to $\pi[\ell]$) are already drawn inside  $F^{in}(\mathcal{W})$. 

 Let \EMPH{$\mathcal{S} \subseteq T\times T$} be the set of known terminal pairs. That is, $\mathcal{S}$ contains (i) pairs of the endpoints of paths in $\Pi^{ex}$ and (ii) the pair $(a,b)$ as well. Note that only pairs in $\mathcal{S}$ have non-trivial distance in the partial metric $\delta_T$; the distances of other pairs are $\ast$.  All the constraints so far are applied separately to each graph  $F^{ex}(\mathcal{W})$ and  $F^{in}(\mathcal{W})$. We need a final set of constraints, which are also the most involved, to encode the intuition that when we put $F^{ex}(\mathcal{W})$ and $F^{in}(\mathcal{W})$ together to form $G_p$, the distances between terminal pairs in $\mathcal{S}$ are realized. To describe these constraints, we introduce a supporting \EMPH{directed multigraph} $H$ constructed as follows.

The vertex set $V(H)$ of $H$ is $T^{ex}\cup T^{in}$. For every ordered terminal pair $(t,t')\in T^{ex}\times T^{ex}$ or in $T^{in}\times T^{in}$, we add an directed edge $e = (t,t')$ to $E(H)$. Since the same terminal pair $(t,t')$ could appear in both $ (T^{ex}\times T^{ex})$ and $T^{in}\times T^{in}$, we could have two different directed edges, both from $t$ to $t'$. Thus, $H$ is a directed multigraph. Each edge  $e = (t,t')$ is associated with a variable $d_H(e)$ defined as:
\begin{equation}\label{eq:dHe}
    d_H(e) \coloneqq\
\begin{cases}
d_{\mathrm{ex}}(t,t') & \text{if $e$ is created from $(t,t')\in  T^{ex}\times T^{ex}$},\\[2pt]
d_{\mathrm{in}}(t,t') & \text{if $e$ is created from $(t,t')\in  T^{in}\times T^{in}$}.
\end{cases}
\end{equation}
Note that  $d_H(e)$ is not a new variable, i.e., it is used as a replacement for an existing variable $d_{\mathrm{ex}}(t,t')$ or $d_{\mathrm{in}}(t,t')$, depending on whether $(t,t')\in  T^{ex}\times T^{ex}$ or $(t,t')\in  T^{in}\times T^{in}$. (Notation $d_H(e)$  is used to simplify the description of our linear program.) Also note that $H$ is not necessarily planar. 

For every (ordered) pair of vertices $(t,t')\in \mathcal{S}$, we denote by $\mathcal{P}^H_{t,t'}$ the set of all simple (directed) paths from $t$ to $t'$ in $H$. Recall that $\mathcal{S}$ contains the terminal pairs whose distance we need to realize in $G_p$, including $(a,b)$. If $(t,t')\not= (a,b)$, its shortest path is known, which is a path in $\Pi$, while for $(a,b)$, its shortest path is only partially known (the prefix in $F^{in}(\mathcal{W})$). In both cases, the shortest path between $(t,t')$ can be decomposed into sequence of subpaths,  each with endpoints $u,u'$ such that the ordered terminal pair $(u,u')$ belongs to $\mathcal{S}^{ex}$ or $\mathcal{S}^{in}$ (see \Cref{obs:boundary_decomposition}).  (For pair $(a,b)$, there are only two subpaths, one from $a$ to $\pi[\ell]$, and the other is from $\pi[\ell]$ to $b$, which is unknown.)   By the construction of $H$, each such subpath corresponds to a unique directed edge in $H$, including the subpath  $\pi[\ell]$ to $b$.   Moreover, these directed edges appear consecutively according to the natural order along the shortest path between $t$ and $t'$, and therefore together form a unique directed path from $t$ to $t'$ in $H$. We denote this directed path by $\pi^{H}_{t,t'}$. 

The important point here is that distances between terminals in $H$ represent the distances in $G_p$ obtained by gluing back $F^{ex}(\mathcal{W})$ and  $F^{in}(\mathcal{W})$ together. Using $H$, we can now add two constraints (\Cref{c6} and \Cref{c7}) to realize the distances between terminal pairs in $\mathcal{S}$. The first constraint specifies that the total weight along any path must be at least the shortest distance, and the second constraint guarantees that $\pi^H_{t,t'}$ is the shortest path between its endpoints. 

\begin{tcolorbox}[
 colback=white,           
  colbacktitle=gray,      
  coltitle=white,          
  colframe=gray,          
 title={\textsc{\hypertarget{validityLP}{PrefixValidityLP} (Linear Program For Checking Validity of A Prefix)}}] 
  \begin{footnotesize}
\begin{flalign}
&\sum_{e \in E(P_{t,t'})} x(e) \geq d_{ex}(t,t') \quad &\forall~  (t,t') \in T^{{ex}}\times T^{{ex}},\;\forall \text{ path } P_{t,t'} \in \mathcal{P}_{t,t'}^{ex}\label{c1} \\
&\sum_{e \in E(P_{t,t'})} x(e) \geq d_{in}(t,t') \quad &\forall~ (t,t') \in T^{{in}}\times T^{{in}}, \;\forall \text{ path } P_{t,t'} \in \mathcal{P}_{t,t'}^{in} \label{c2} \\
&\sum_{e \in E(\pi_{t,t'}^{ex})} x(e) \leq d_{ex}(t,t') \quad &\forall ~ (t,t') \in \mathcal{S}^{ex} \label{c3} \\
&\sum_{e \in E(\pi_{t,t'}^{in})} x(e) \leq d_{in}(t,t') \quad &\forall~  (t,t') \in \mathcal{S}^{in} \label{c4} \\
&\phantom{0} d_{ex}(t_1, t_3) + d_{ex}(t_2, t_4) \geq d_{ex}(t_1, t_4) + d_{ex}(t_2, t_3) &\forall ~ \langle t_1, t_2, t_3, t_4 \rangle  \subseq  \sigma_{ex} \label{c5} \\
&\sum_{e \in E(P_{t,t'})} d_H(e) \geq \delta_T(t,t') \quad &\forall~ (t,t')\in \mathcal{S},\; \forall \text{ path } P_{t,t'} \in \mathcal{P}_{t,t'}^{H}\label{c6} \\
&\sum_{e \in E(\pi_{t,t'}^H)} d_H(e) \leq \delta_T(t,t') \quad &\forall~ (t,t')\in \mathcal{S} \label{c7} \\
&\phantom{0} x(e) \geq 0,  \quad d_{in}(t_1,t'_1)  \geq 0, \quad  d_{ex}(t_2,t'_2) \geq 0  &\forall~ e \in G_{p},\; \forall~ (t_1,t'_1)\in  T^{{in}}\times T^{{in}},\; \forall~ (t_2,t'_2)\in \cup  T^{{ex}}\times T^{{ex}}\label{c8} 
\end{flalign}
\end{footnotesize}
\end{tcolorbox}

\paragraph{Solving the LP via the ellipsoid method.} To check if $\mathcal{F}^p$ is a valid prefix, we simply check if \hyperlink{validityLP}{\textsc{PrefixValidityLP}} is feasible. Clearly, the LP has an exponential number of constraints.

A key subroutine to solve  \hyperlink{validityLP}{\textsc{PrefixValidityLP}}  using the ellipsoid method~\cite{GLS81} is a polynomial-time separation oracle: given a candidate assignment to the variables $\{x(e),\; d_{ex}(t,t'),\; d_{in}(u,u') \mid e \in E(G_p), \; t,t' \in T^{ex}, \; u,u' \in T^{in}\}$, either certifies that the assignment satisfies all the constraints, or outputs a violated constraint in time polynomial in the number of variables. Our algorithm for implementing the separation oracle is given below.

\begin{tcolorbox}[
 colback=white,           
  colbacktitle=gray,       
  coltitle=white,          
  colframe=gray,          
 title={\textsc{Separation Oracle}}] 
  \textbf{Input:} an assignment to the variables $\{x(e),\; d_{ex}(t,t'),\; d_{in}(u,u') \mid e \in E(G_p), \; t,t' \in T^{ex}, \; u,u' \in T^{in}\}$\\
 \textbf{Output:} \textsc{Feasible} or a violated constraint.\\
 
 \begin{enumerate}
     \item Constraints in \Cref{c3,c4,c5,c7,c8} can be checked directly in polynomial time, since there are only polynomially many of them.
     \item  Constraints in \Cref{c1,c2} can be checked using shortest path computation. Specifically, for \Cref{c1}, we compute the shortest distance between every pair of terminals in $T^{ex} \times T^{ex}$ in the graph $F^{ex}(\mathcal{W})$ with edge weights $x(e)$. If there exists $\delta_{F^{ex}(\mathcal{W})}(t,t') < d_{ex}(t,t')$ for some pair $(t,t')$, then the constraint  \Cref{c1} corresponds to the shortest between $t$ and $t'$ in  $F^{ex}(\mathcal{W})$ is a violated constraint. Otherwise,  every directed path $P_{t,t'}$ satisfies the constraint in \Cref{c1}. The same holds for \Cref{c6}.
     \item Lastly, checking constraints in \Cref{c6} can also be reduced to shortest path computation. Specifically, we compute the shortest distance between every pair of terminals $(t,t')\in \mathcal{S}$ in $H$, where each edge   $e$ is weighted by $d_H(e)$ from \Cref{eq:dHe}. If $\delta_H(t,t') < \delta_T(t,t')$, then we have found a violated constraint corresponding to the shortest path from $t$ and $t'$ in $H$. Otherwise, all constraints in \Cref{c6} are satisfied.  
 \end{enumerate}
\end{tcolorbox}

\begin{lemma}\label{lm:oracle-polytime} The separation oracle has running time $\poly(|T|)$, and therefore, the running time of the ellipsoid method to decide the feasibility of  \hyperlink{validityLP}{\textsc{PrefixValidityLP}}  is $\poly(|T|,\log W)$.
\end{lemma}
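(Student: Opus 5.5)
The proof has two parts. First, the separation oracle runs in time polynomial in the size of $G_p$ and $H$, and both sizes are polynomial in $|T|$. Second, the standard equivalence of separation and optimization (the ellipsoid method of~\cite{GLS81}) then decides feasibility in time polynomial in the number of variables and in the encoding length of the constraints.

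\textbf{Sizes.} By \Cref{lem:nest_reduction}, $|V(G)| = O(|T|^6)$. Since $G$ is planar, it has $O(|T|^6)$ edges and faces, and the modifications at $a$ in the setup add only $O(1)$ vertices. The prefix $\mathcal{F}^p$ has no repeated faces by \Cref{clm:simple_path}, so $\ell = O(|T|^6)$. Hence $G_p$ has $\poly(|T|)$ vertices and edges, and cutting along $\mathcal{W}$ at most doubles these counts. Consequently $|T^{ex}|,|T^{in}| \le |T| + |V(\mathcal{W})| = \poly(|T|)$. From this we get:
\begin{itemize}
\item $H$ has $\poly(|T|)$ vertices and $O(|T^{ex}|^2+|T^{in}|^2)$ edges;
\item the number of variables is $|E(G_p)| + |T^{ex}|^2 + |T^{in}|^2 = \poly(|T|)$;
\item $|\mathcal{S}^{ex}|,|\mathcal{S}^{in}| \le |T^{ex}|^2+|T^{in}|^2$ by \Cref{clm:unique_path_pi};
\item $|\mathcal{S}| \le |T|^2$.
\end{itemize}

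\textbf{Oracle cost.} We check each constraint family directly.
\begin{itemize}
\item \Cref{c3,c4,c7,c8}: there are polynomially many constraints, each a sum over a path of polynomial length (a subpath of some $\pi_j$, or a path in $H$ of at most $|V(H)|$ edges).
\item \Cref{c5}: there are $O(|T^{ex}|^4)$ quadruples, each checked in $O(1)$ time.
\item \Cref{c1,c2}: since $x\ge 0$ (otherwise \Cref{c8} already yields a violated constraint), we run Dijkstra from every terminal in $F^{ex}(\mathcal{W})$ and in $F^{in}(\mathcal{W})$. If some distance is below the corresponding $d$ value, the returned shortest path is simple, so it is a valid member of $\mathcal{P}^{ex}_{t,t'}$ (resp.\ $\mathcal{P}^{in}_{t,t'}$) and gives a violated constraint. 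Otherwise every simple path satisfies the constraint.
\item \Cref{c6}: we run the same shortest-path check in $H$ with weights $d_H(e)\ge 0$.
\end{itemize}
Together this is $\poly(|T|)$ time. I would also note the small correction that step 2 of the oracle should read \Cref{c2} rather than \Cref{c6}.

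\textbf{Ellipsoid bound and main obstacle.} The one point needing care is bounding the encoding length, so that the ellipsoid iteration count is $\poly(|T|,\log W)$. Every constraint has coefficients in $\{0,\pm1\}$ and right-hand side either $0$ or $\delta_T(\cdot,\cdot)\in\{0,\dots,W\}$, so each inequality has encoding size $\poly(|T|)\cdot O(\log W)$. By~\cite{GLS81}, feasibility of a polyhedron with facet complexity $\varphi$ in $n$ variables can be decided, and a feasible vertex found, with $\poly(n,\varphi)$ oracle calls and arithmetic operations. The polytope may not be full-dimensional (for example, equality is forced by combining \Cref{c6,c7}), so the strong-separation version of the ellipsoid method for well-described polyhedra is needed, not the basic volume argument. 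Multiplying the oracle cost by the number of calls gives $\poly(|T|,\log W)$.
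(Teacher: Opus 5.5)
Your proposal is correct and follows the same route as the paper. Like the paper, you bound $|V(G_p)|$, $|V(H)|$ and $|E(H)|$ polynomially via the nest reduction, implement the oracle by shortest-path computations in $F^{ex}(\mathcal{W})$, $F^{in}(\mathcal{W})$ and $H$, and invoke the ellipsoid method of~\cite{GLS81} with coefficients in $\{0,\pm1\}$ and right-hand sides in $\{0,\dots,W\}$. Your additional care fills in details the paper leaves implicit: you check nonnegativity before running Dijkstra so that the returned paths are simple, and you use facet complexity together with the well-described-polyhedron version of the ellipsoid method to handle the non-full-dimensional feasible region. Your remark that step~2 of the oracle should cite constraint~(\ref{c2}) rather than~(\ref{c6}) is also right.
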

\begin{proof}
Since $|V(G)| = O(|T^6|)$ by \Cref{lem:nest_reduction}, $|V(G_p)| = O(T^6)$. Thus, $|V(H)| = O(|T|^6)$ and $|E(H)| =  O(|T|^{12})$. Since the shortest path between all pairs of terminals can be computed in polynomial time, the separation can be implemented in time $\poly(|T|)$. 

Since the number of variables is $\poly(T)$, and the coefficients of every constraint is either $\{\pm 1\}\cup \{0,\ldots, W\}$ as $\delta_T(t,t') \in \{0,\ldots, W\}$, the ellipsoid algorithm will terminate in $\poly(|T|,\log W)$ iterations~\cite{GLS81}. Thus, the total running time to solve  \hyperlink{validityLP}{\textsc{PrefixValidityLP}}   is  $\poly(|T|,\log W)$.
\end{proof}

\paragraph{The algorithm.} The algorithm for checking if a candidate prefix for $\pi$ is valid is given in \Cref{alg:isvalidprefix}. 

\begin{algorithm}[htbp]
\caption{\textsc{CheckPrefixValidity}$(G, T, \mathcal{F}^{p}, \pi,\Pi)$} \begin{algorithmic}[1]

\State Let $G_{p}$ be obtained from $G$ by drawing the prefix of $\pi$ following  $\mathcal{F}^{p}$.
\State Let $\mathcal{W}$ be the wall of  $\mathcal{F}^{p}$.
\State  $(F^{{ex}}(\mathcal{W}), F^{{in}}(\mathcal{W}))\leftarrow G_p\cut W$ 
\State Let $T^{ex}, T^{in}$ be the terminal sets on the outerface of  $F^{{ex}}(\mathcal{W})$ and $F^{{in}}(\mathcal{W})$, respectively.
\State  Let $\Pi^{ex}, \Pi^{in}$ be paths decomposed from $\Pi$. 
\State Formulate  \hyperlink{validityLP}{\textsc{PrefixValidityLP}} based on $(F^{{ex}}(\mathcal{W}), T^{ex}, \Pi^{ex})$ and $(F^{{in}}(\mathcal{W}), T^{in}, \Pi^{in})$.
\State Check the feasibility of \hyperlink{validityLP}{\textsc{PrefixValidityLP}} using \Cref{lm:oracle-polytime}.
\State \textbf{if} \hyperlink{validityLP}{\textsc{PrefixValidityLP}} is feasible, return \textsc{True}; otherwise return \textsc{False}
\end{algorithmic}
\label{alg:isvalidprefix}
\end{algorithm}

\subsection{Correctness of the Algorithm}

In this section, to simplify the notation, we use  \EMPH{$F^{ex}$} and \EMPH{$F^{in}$} to replace $F^{ex}(\mathcal{W})$ and $F^{in}(\mathcal{W})$, respectively.

\begin{lemma}
    \textsc{CheckPrefixValidity}$(G, T, \mathcal{F}^{p}, \pi,\Pi)$ returns \textsc{True} if and only if $\mathcal{F}^p$ is a valid prefix.
\end{lemma}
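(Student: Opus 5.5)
The plan is to prove the two directions separately. Both rest on the same dictionary: a feasible weight assignment on a full drawing of $\pi$ corresponds to a feasible solution of \hyperlink{validityLP}{\textsc{PrefixValidityLP}}. The bridge between them is \Cref{thm:Chen-Tan} applied to the exterior instance $(F^{ex},T^{ex})$.

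\emph{($\Rightarrow$) If $\mathcal{F}^p$ is a valid prefix, the LP is feasible.} Fix a valid trajectory $\mathcal{F}$ extending $\mathcal{F}^p$, and let $w$ be the weight function on the resulting graph $\hat G$ given by \Cref{lm:check-feasibility}. We may assume the drawing of $\pi$ agrees with the prefix drawing in $G_p$ up to $\pi[\ell]$. Merge the edges of $\hat G$ that are split by the suffix of $\pi$, summing their weights; this gives a weight $x(e)$ for every $e\in E(G_p)$, with the suffix of $\pi$ removed. Set $d_{in}(t,t')=\delta_{(F^{in},x)}(t,t')$. For the exterior, set $d_{ex}(t,t')$ to be the distance in $F^{ex}$ together with the suffix of $\pi$, under $w$. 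Then:
\begin{itemize}
\item \Cref{c1,c2} hold because these are distances, and adding the suffix edges can only shorten paths.
\item \Cref{c3,c4} hold because every subpath of a shortest path in $\Pi$ is itself shortest, and it lies inside $F^{ex}$ or $F^{in}$.
\item \Cref{c5} holds because $F^{ex}$ with the suffix drawn in is a planar instance with $T^{ex}$ on its outer face, so \Cref{eq:monge} applies.
\item \Cref{c6,c7} hold because a path in $H$ concatenates into a walk in $\hat G$, so its $H$-length is at least $\delta_{\hat G}=\delta_T$. Conversely, $\pi^H_{t,t'}$ has $H$-length at most the length of the actual shortest path, which is $\delta_T(t,t')$. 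The term $d_{ex}(\pi[\ell],b)$ is at most the length of the suffix.
\end{itemize}

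\emph{($\Leftarrow$) If the LP is feasible, $\mathcal{F}^p$ is a valid prefix.} This is the harder direction. Take a feasible solution $(x,d_{ex},d_{in})$. The $d_{ex}$ satisfy Monge with respect to $\sigma_{ex}$ by \Cref{c5}, and the triangle inequality can be assumed after replacing $d_{ex}$ by its shortest-path closure over $T^{ex}$; I still need to check that Monge survives this closure. So by \Cref{thm:Chen-Tan} there is an OS instance $K$ realizing $d_{ex}$ on $T^{ex}$. The main obstacle is that we need a realization inside $F^{ex}$ that keeps its existing edges, not an arbitrary $K$.

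For this I would run the Chen-Tan framework on $(F^{ex},T^{ex})$, starting from the nest formed by $\Pi^{ex}$, and insert a path from $\pi[\ell]$ to $b$. The key tools are the converse part of \Cref{lm:iterative-correctness} and the conformity argument of \Cref{cor:static_recognition_iter_i}. Together they say that whenever the partial quasimetric is Monge-consistent, the next path can be drawn with a feasible weighting. Constraints \Cref{c1,c3} certify that $x$ realizes $d_{ex}$ on $\mathcal{S}^{ex}$, so the current nest is a valid intermediate state, and the suffix can be drawn with length $d_{ex}(\pi[\ell],b)$. I should double-check whether this path is really enough, or whether other exterior pairs must also be realized.

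Finally, glue $F^{in}$ and the new exterior back along $\mathcal{W}[x_0,y_0]$. Any path in the glued graph between a pair in $\mathcal{S}$ decomposes, by \Cref{obs:boundary_decomposition}, into alternating pieces in $F^{in}$ and $F^{ex}$. Each piece is at least the corresponding $d$ by \Cref{c1,c2}, so by \Cref{c6} the whole path has length at least $\delta_T$. By \Cref{c7} together with \Cref{c3,c4}, the paths in $\Pi\cup\{\pi\}$ achieve $\delta_T$. This gives a valid trajectory extending $\mathcal{F}^p$.

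The step I expect to fail first is making sure the exterior re-weighting does not shorten the pieces of paths in $\Pi^{ex}$ below $d_{ex}$, which \Cref{c6} relies on. If that breaks, I would enlarge the induction so that all pairs in $T^{ex}\times T^{ex}$ are realized exactly, not just the pairs in $\mathcal{S}^{ex}$.
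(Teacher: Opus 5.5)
Your direction ``valid prefix $\Rightarrow$ LP feasible'' is sound and follows the paper's proof. Defining $d_{ex}$ as distances in $F^{ex}$ \emph{together with the suffix} is in fact the consistent choice. Measured in $F^{ex}$ alone, $\pi[\ell]$ can have no exterior edges (for instance when $e_\ell\subseteq\mathcal{W}[x_0,y_0]$), so $d_{ex}(\pi[\ell],b)=+\infty$ and \Cref{c7} fails for $(a,b)$.

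The genuine gap is in the other direction, at the step you flag, but you locate it slightly wrongly. Inserting only the path from $\pi[\ell]$ to $b$ via the converse of \Cref{lm:iterative-correctness} gives a new weighting $w^{ex}$ of \emph{all} exterior edges. That weighting is exact on the known pairs $\mathcal{S}^{ex}\cup\{(\pi[\ell],b)\}$, so the $\Pi^{ex}$ pieces are not the problem. It says nothing, however, about the other pairs of $T^{ex}\times T^{ex}$. Once the exterior is reweighted, \Cref{c1}, which is a statement about $x$, no longer gives $\delta_{w^{ex}}(u,v)\ge d_{ex}(u,v)$. In your gluing step, a competing path $\gamma$ between a pair of $\mathcal{S}$ can have exterior pieces between arbitrary $u,v\in T^{ex}$. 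Since \Cref{c6} only controls sums of $d$-values, a single exterior shortcut with $\delta_{w^{ex}}(u,v)<d_{ex}(u,v)$ breaks the contradiction.

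The missing idea, which is how the paper proceeds, is your fallback plus a deletion step.
\begin{enumerate}
\item Continue the Chen--Tan process on $(F^{ex},T^{ex})$ over \emph{every} remaining pair with finite $d_{ex}$, putting $(\pi[\ell],b)$ first. Skip the nest-size reduction, so each intermediate graph is a subgraph of the next.
\item This reaches a nest $F^{ex}(r)$ that realizes $d_{ex}$ exactly, with no path for pairs where $d_{ex}=+\infty$.
\item Delete the auxiliary curves $\pi_r,\dots,\pi_2$. Smooth each resulting degree-two crossing vertex into a single edge whose weight is the sum of the two merged edges.
\end{enumerate}
Deletion can only increase distances, while the $\Pi^{ex}$ pieces and the suffix keep their exact lengths. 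You end with $F^{ex}$ plus the suffix alone, which you need because the trajectory must draw only $\pi$ on $G$. In it, every exterior distance is $\ge d_{ex}$; the paper calls this pseudo-realization. With this property, your gluing argument via \Cref{c2,c6,c7} goes through.

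On your closure worry: the paper simply treats $d_{ex}$ as a quasimetric, but your closure $\bar d$ is harmless.
\begin{itemize}
\item \Cref{c6,c7} force $\bar d=d_{ex}$ on every pair lying on some $\pi^H_{t,t'}$, so \Cref{c3,c7} survive.
\item \Cref{c6} survives because $\bar d$-lengths are $d_{ex}$-lengths of $H$-walks.
\item Monge survives because shortest walks in the complete digraph on $T^{ex}$ between two interleaved pairs must meet. They meet either at a shared terminal or at two crossing chords, and at crossing chords \Cref{c5} lets you swap the tails.
\end{itemize}
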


\begin{proof}

For the forward direction $(\implies)$ we get an assignment of LP variables $x(\cdot),$ $d_{ex}(\cdot, \cdot)$, and $d_{in}(\cdot, \cdot)$ satisfying all the constraints of \hyperlink{validityLP}{\textsc{PrefixValidityLP}}. We will show that this implies the existence of a valid trajectory $\mathcal{F}$ with $\mathcal{F}^p$ as a prefix.

We note that a feasible solution of the linear program by the ellipsoid method might assign a value larger than $W\cdot n$ to $d_{ex}(t,t')$ or $d_{in}(t,t')$   for some $(t,t')$. This case happens when $t'$ is not reachable from $t$ in the corresponding graph. Therefore, we will interpret these large values as $+\infty$.  

\paragraph{The Exterior as an OS Instance:} 

Consider the exterior, $F^{ex}$. The exterior terminals $T^{ex}$ lie on the infinite face of $F^{ex}$, in the circular order $\sigma_{ex}$. 
Thus, we may view $(F^{ex}, T^{ex})$ as a directed Okamura-Seymour instance.
Furthermore, the set of distance variables $d_{ex}(t, t'),$ $\forall(t, t') \in T^{ex} \times T^{ex}$ forms a quasimetric, and in particular LP constraint (\ref{c5}) is satisfied as:  
\[d_{ex}(t_1, t_3) + d_{ex}(t_2, t_4) \geq d_{ex}(t_1, t_4) + d_{ex}(t_2, t_3) \forall ~ \langle t_1, t_2, t_3, t_4 \rangle  \subseq  \sigma_{ex} \text{ (Monge)},\]
which implies $(T^{ex}, d_{ex})$ is an Okamura-Seymour quasimetric. 

Recall that $\Pi^{ex}$ is the set of maximal subpaths of paths in $\Pi$ which are contained entirely in $F^{ex}$. 
With the following claim, we view $(T^{ex}, d_{ex})$ as an input to the Chen-Tan embedding framework, and $F^{ex}$ is the graph obtained after $|\Pi^{ex}|$ iterations where the known terminal pairs are exactly the pairs of the endpoints of paths in $\Pi^{ex}$.

\begin{claim}
    $F^{ex}$ conforms with the first $|\Pi^{ex}|$ iterations of the Chen-Tan embedding framework.
\end{claim}
\begin{proof}
    Follow the proof of Lemma~\ref{cor:static_recognition_iter_i}, it suffices to show that $F^{ex}$ satisfies the embedding and weighting conditions w.r.t $\Pi^{ex}$. 
    
    The embedding condition is inherited from $G$ because the construction of $\pi$ within $\mathcal{F}^p$ preserves the original embedding of $G$ on the exterior face $F^{ex}$ of $G_p$.
    
    Let $w:E(F^{ex}) \to \mathbb{R}_{\geq 0}$ be the edge-weight function induced from the edge weight variables $d_{ex}(\cdot, \cdot)$, i.e. $w(e) = x(e)$, $\forall e \in E(F^{ex})$. 
    The LP constraints (\ref{c1}), (\ref{c2}), (\ref{c3}), (\ref{c4}), and (\ref{c8}) guarantee that $w$ is an edge-weight function which satisfies the weighting condition of the Chen-Tan framework w.r.t $\Pi^{ex}$.
\end{proof}

Now, we will finish constructing the embedding of $(T^{ex}, d_{ex})$ by inserting paths into $F^{ex}$ for each of the remaining pairs of terminals, including $(\pi[\ell], b)$.
Let $\{(t_j, t'_j)\}_{j=1}^r$ be an enumeration of the pairs $(t, t') \in T^{ex} \times T^{ex}$ that satisfies three conditions: (i) $(t_1, t'_1) = (\pi[\ell], b)$, (ii) $(t_j, t_j')$ are not the endpoints of any path in $\Pi^{ex}$, and (iii) $d_{ex}(t_j, t_j') < +\infty$. 
We iteratively construct a sequence of graphs $\{F^{ex}(j)\}_{j=0}^r$ such that: 
\begin{enumerate}
    \item $F^{ex}(0) = F^{ex}$
    \item For every $1\leq j \leq r$, $F^{ex}(j)$ is an output of $\textsc{EmbedNextPath}(F^{ex}(j-1), t_j,t_j')$ \EMPH{omitting line 6} of Algorithm~\ref{alg:next-path-exp}.
\end{enumerate}
Let \EMPH{$\pi_j$} be the simple curve connecting $t_j$ to $t_j'$ embedded by $\textsc{EmbedNextPath}$ at iteration $j$.

The reason for omitting line 6 of Algorithm~\ref{alg:next-path-exp} is that for every $1 \leq j \leq r$, $F^{ex}(j)$ is exactly the graph $F^{ex}(j-1)$ augmented by the embedding of $\pi_j$. 
More formally, $V(F^{ex}(j-1)) \subseteq V(F^{ex}(j))$, $E(F^{ex}(j-1))  \subseteq E(F^{ex}(j))$.
In particular, $F^{ex}(1)$ is specifically the exterior $F^{ex}$ augmented by the embedding of $\pi_1$, a directed path from $\pi[\ell]$ to $b$. 

\paragraph{Constructing an Edge-Weight Function for $F^{ex}(1)$:} 

First we must find an edge-weight function for $F^{ex}(r)$.
By the weighting condition on the graph $F^{ex}(r)$ there exists an edge-weight function $w^r: E(F^{ex}(r)) \to \mathbb{R}_{\leq 0}$ such that:
\begin{itemize}
    \item (P1) every path $P \in \Pi^{ex}$ with endpoints $(t, t')$ is a directed shortest path from $t$ to $t'$and the sum of the weights of its edges under $w^r$ is equal to $d_{ex}(t, t')$,
    \item (P2) for every $(t, t') \in T^{ex} \times T^{ex}$ with $d_{ex}(t, t') = +\infty$, there is no directed path between $t$ and $t'$ in $F^{ex}(r)$.
    \item (P3) for every $1\leq j \leq r$ the curve $\pi_j$ is a directed shortest path from $t_j$ to $t_j'$ and the sum of the weights of its edges under $w^r$ is equal to $d_{ex}(t_j, t_j')$,
\end{itemize}
In other words, $F^{ex}(r)$ with $w^r$ realizes $(T^{ex}, d_{ex})$. 

\begin{remark}
    Intuitively, in the construction of the LP we have encoded minimum path lengths for pairs $(t, t') \in T^{ex} \times T^{ex}$ in $d_ex(\cdot, \cdot)$ so that, however $\textsc{EmbedNextPath}$ embeds the path $\pi_1$ from $\pi[\ell]$ to $b$ in $F^{ex}(1)$, there exists an edge-weight function such that $\pi_1$ does not shortcut any paths inserted in previous iterations of the embedding process. 
    While $w_r$ perfectly satisfies each of those minimums, the edge-set of $w_r$ is different than $E(F^{ex}(1))$.
\end{remark}

We will iteratively construct an edge-weight function $w^1: E(F^{ex}(1)) \to \mathbb{R}_{\geq 0}$ from $w^r$ which retains the first two above properties, but departs from the third property in its strict sense. Instead of realizing the original distances for all $1 \leq j \leq r$, $w^1$ will satisfy the following relaxed condition for $\pi_j$:
\begin{itemize}
    \item (P3r) $d_{F^{ex}(1)}(t_j, t_j') \geq d_{F^{ex}(r)}(t_j, t_j') = d_{ex}(t_j, t_j')$, where $d_{F^{ex}(r)}$ is the shortest path metric on $F^{ex}(r)$ induced by $w^r$ (likewise for $d_{F^{ex}(1)}$, $F^{ex}(1)$, $w^{1}$).
\end{itemize}
Note that we allow $d_{F^{ex}(1)}(t_j, t_j') = +\infty$. We will say that $w^j$ for $j \in [1, r]$ \EMPH{pseudo-realizes} $(T^{ex}, d_{ex})$ when $w^j$ satisfies the stricter property (P1) $\forall \pi_k$ with $k \leq j$, and the relaxed property (P3r) otherwise.

\begin{claim}\label{clm:pseudorealizes}
    There exists an edge-weight function $w^1$ pseudo-realizing $(T^{ex}, d_{ex})$.
\end{claim}
\begin{proof}
    The proof of Claim~\ref{clm:pseudorealizes} is by induction.

    \bigskip 

    \textbf{Base Case $(w^r)$:} Let $(t_r, v_1), (v_1, v_2), \dots, (v_{k-1}, v_k), (v_k, t_r')$ be the sequence of edges making up the embedding of $\pi_r$ in $F^{ex}(r)$. 
    \begin{enumerate}
        \item Copy $w^{r-1}=w^{r}$.
        \item Remove the edges $(t_r, v_1), (v_1, v_2), \dots, (v_{k-1}, v_k), (v_k, t_r')$, and their associated values from $w^{r-1}$.
        \item For each vertex $v_i$ for $1 \leq i \leq k$, there are now exactly $2$ edges involving $v_i$ in $w^{r-1}$ corresponding to a corridor edge of the wall of $\pi_r$ which was split. Denote them $(u, v), (v, w)$, with $u, w \in V(F^{ex}(r-1))$, and $v \notin V(F^{ex}(r-1))$.
        \item Remove, $(u, v), (v, w)$ from the current graph, and add a directed edge $(u,w)$ with weight $w^{r-1}(u, w) = w^r(u, v) + w^r(v, w)$.
    \end{enumerate}
    In the construction, we do not modify the weights of any edges on paths $\pi_j$ for $1 \leq j \leq r-1$, or paths $P \in \Pi^{ex}$. Thus $w^{r-1}$ is an edge-weight function for $F^{ex}(r-1)$ which pseudo-realizes $(T^{ex}, d_{ex})$.
\end{proof}

\textbf{Inductive Step: } By induction the edge-weight function $w^1: E(F^{ex}(1)) \to \mathbb{R}_{\geq 0}$ obtained from $w^r$ by repeatedly applying the above process in place pseudo-realizes $(T^{ex}, d_{ex})$.

\paragraph{Gluing $F^{ex}(1)$ with the Interior}

We construct the graph $G'$ by performing a vertex and edge identification between the boundary of $F^{in}$ and the corresponding boundary in $F^{ex}(1)$ (see Figure~\ref{fig:gluing}).
Note that $G'$ is exactly $G_p$ where we have extended the prefix of $\pi$ drawn following $\mathcal{F}^p$, from $\pi[\ell]$ to $b$ through the exterior face $F^{ex}$. 

\begin{figure}[htbp]
    \centering
    \begin{subfigure}[b]{0.49\textwidth}
        \centering
        \includegraphics[page=5, trim=90pt 100pt 90pt 110pt, clip, width=0.45\textwidth]{figuresfinal/LPConstructionFinal.pdf}
        \caption{}
        \label{fig:gluingeasy}
    \end{subfigure}
    \begin{subfigure}[b]{0.49\textwidth}
        \centering
        \includegraphics[page=9, trim=50pt 100pt 70pt 110pt, clip, width=0.45\textwidth]{figuresfinal/LPConstructionFinal.pdf}
        \caption{}
        \label{fig:gluinghard}
    \end{subfigure}
    \caption{An illustration of gluing $F^{ex}(1)$ with the interior in the example initialized at a) Figure~\ref{fig:validprefix}, b) Figure~\ref{fig:hard}. We do not know the embedding of the $\pi[\ell]$ to $b$ path in $F^{ex}(1)$, so we indicate it with a wavy dotted arrow.} 
    \label{fig:gluing}
\end{figure}

The following claim completes the proof of the forward direction $(\implies)$.

\begin{claim}\label{clm:Gprime-prop}
    $G'$ satisfies the embedding and weighting conditions w.r.t. $\Pi \cup \pi$.
\end{claim}
\begin{proof}
    The construction of $F^{ex}(1)$ ensures that $G'$ satisfies the embedding condition. Consequently, the remainder of this proof focuses on demonstrating that $G'$ satisfies the weighting condition.

    Let $w^{in}: E(F^{in}) \to \mathbb{R}_{\geq 0}$ be the edge-weight function induced by the setting of edge weight variables on the interior. More precisely: $w^{in}(e) = x(e)$ $\forall e\in E(F^{in})$. 
    We define the edge-weight function $w: E(G') \to \mathbb{R}_{\geq 0}$ as the union of edge-weight functions $w = w^{in} \cup w^1$. 
    Assign the edge weights of $G'$ according to $w$, and call $\delta_{G'}$ the induced shortest path metric. 
    We will show that $w$ satisfies the weighting condition w.r.t. $\Pi \cup \pi$.

    For contradiction, suppose the inverse; that there is some curve $\pi_j \in \Pi \cup \pi$ from $t_j$ to $t_j'$ such that either: (i) $\pi_j$ is not the shortest dipath between $t_j$ and $t_j'$ in $G'$, i.e. $\delta_{G'}(t_j, t_j') < \delta_T(t_j, t_j')$, or (ii) $\pi_j$ does not realize the value of the metric: $\sum_{e \in E(\pi_j)} \delta_{G'}(e) \neq \delta_T(t_j, t_j')$.
    
    Let $H$ be the directed multigraph constructed from $T^{ex}, T^{in}, d_{ex}$, and $d_{in}$ as described for \hyperlink{validityLP}{\textsc{PrefixValidityLP}}.

    \begin{enumerate}[label=(\roman*)]
        \item Let $\gamma$ be any path between $t_j$ and $t'_j$ in $G'$, we will show that $\sum_{e \in E(\gamma)} \delta_{G'}(e) \geq \delta_T(t_j, t'_j)$. If $\sum_{e \in E(\pi_j)} \delta_{G'}(e) = \delta_T(t_j, t'_j)$ then we immediately obtain that there are no shorter paths than $\pi_j$ from $t_j$ and $t'_j$ in $G'$.
        By a slight abuse of notation, $\Delta(\gamma)$ is defined as the decomposition of $\gamma$ into maximal subpaths either contained entirely in $\Pi^{in}$ or $\Pi^{ex}$, and according to Observation~\ref{obs:boundary_decomposition}, each subpath of $\Delta(\gamma)$ has its endpoints in either $T^{ex}$ or $T^{in}$.
        Furthermore, each ordered pair $(t, t') \in T^{ex}$ (equiv. $(t, t') \in T^{in}$) corresponds to an edge in $H$ with edge weight exactly $d_{ex}(t, t')$ (equiv. $d_{in}(t, t')$).

        Let $\Delta^{in}(\gamma) \subseteq \Delta(\gamma)$ be the set of subpaths contained entirely in $\Pi^{in}$, and likewise $\Delta^{ex}(\gamma)$ is the subpaths contained entirely in $\Pi^{out}$. 
        For each subpath $P \in \Delta^{in}(\gamma)$ with endpoints $t_P, t'_P \in T^{in}$, the LP constraints (\ref{c2}), (\ref{c4}), and (\ref{c8}) guarantee that:
        \[\sum_{e \in E(P)} w = \sum_{e \in E(P)} w^{in} = d_{in}(t_P, t'_P) = d_H(e=(t_P, t'_P)).\]
        For subpaths $Q \in \Delta^{ex}(\gamma)$ with endpoints $t_Q, t'_Q \in T^{ex}$, as $w^1$ pseudo-realizes $(T^{ex}, d_{ex})$, the edge-weight function $w$ inherits:
        \[\sum_{e \in E(Q)} w = \sum_{e \in E(Q)} w^{ex} \geq d_{ex}(t_Q, t'_Q) = d_H(e=(t_Q, t'_Q)).\]
        Let $P_{t_j, t'_j} \in \mathcal{P}^H_{t_j, t'_j}$ be the path in $H$ whose edges are exactly the naturally ordered pairs of the endpoints subpaths of $\Delta(\gamma)$.
        From the above and LP constraint (\ref{c6}) we obtain the following contradiction:
        \[\delta_T(t_j, t'_j) > \sum_{e \in E(\gamma)} \delta_{G'}(e) \geq \sum_{e \in E(P_{t_j, t'_j})} d_H(e) \geq \delta_T(t_j, t'_j).\]
        
        \item Following the steps of (i) with $\gamma = \pi_j$ we immediately obtain $\sum_{e \in E(\pi_j)} \delta_{G'}(e) \geq \delta_T(t_j, t_j')$.
        Let $\Delta^{in}(\pi_j) \subseteq \Delta(\pi_j)$ be the set of subpaths contained entirely in $\Pi^{in}$, and likewise $\Delta^{ex}(\pi_j)$ is the subpaths contained entirely in $\Pi^{ex}$. 
        For each subpath $P \in \Delta^{in}(\gamma)$ with endpoints $t_P, t'_P \in T^{in}$, the LP constraints (\ref{c2}), (\ref{c4}), and (\ref{c8}) guarantee that:
        \[\sum_{e \in E(P)} w = \sum_{e \in E(P)} w^{in} = d_{in}(t_P, t'_P) = d_H(e=(t_P, t'_P)).\]
        For subpaths $Q \in \Delta^{ex}(\gamma)$ with endpoints $t_Q, t'_Q \in T^{ex}$, as $w^1$ pseudo-realizes $(T^{ex}, d_{ex})$, the edge-weight function $w$ inherits:
        \[\sum_{e \in E(Q)} w = \sum_{e \in E(Q)} w^{ex} = d_{ex}(t_Q, t'_Q) = d_H(e=(t_Q, t'_Q)).\]
        From the LP constraint (\ref{c7}) along with (i) we obtain the following expression:
        \[\delta_T(t_j, t'_j) \leq \sum_{e \in E(\pi_j)} \delta_{G'}(e) = \sum_{e \in E(P_{t_j, t'_j})} d_H(e) \leq \delta_T(t_j, t'_j).\]        which is a contradiction with $\sum_{e \in E(\pi_j)} \delta_{G'}(e) \neq \delta_T(t_j, t_j')$.
    \end{enumerate}
\Cref{clm:Gprime-prop} now follows.
\end{proof}

For the converse $(\impliedby)$, suppose that $\mathcal{F}^p$ is a valid prefix, we will show that \textsc{CheckPrefixValidity} returns \textsc{True} by constructing a feasible solution to \hyperlink{validityLP}{\textsc{PrefixValidityLP}}.

\paragraph{Setting the LP variables $x(\cdot), d_{ex}(\cdot, \cdot)$, and $d_{in}(\cdot, \cdot)$:}

There exists a valid trajectory $\mathcal{F}$ of $\pi$ such that $\mathcal{F}^p$ is a prefix of $\mathcal{F}$.
Let $G'$ be the graph obtained by drawing the remaining suffix of $\pi$ on top of $G_p$ according to $\mathcal{F}\setminus{F^p}$.
Let $w'$ be an edge-weight function satisfying the weighting condition with $G'$. In particular, $(G', T)$ under $w'$ is a nest realizing the partial quasimetric induced by pairs of the endpoints of paths in $\Pi$ and $(a, b)$. 

Let $\{(\pi[0]=a, \pi[1]), (\pi[1], \pi[2]), \dots, (\pi[\ell-1], \pi[\ell]), (\pi[\ell], \pi[\ell+1]), \dots, (\pi[\ell+k-1], \pi[\ell+k]=b)\}$ be the sequence of directed edges making up the embedding of $\pi$ in $G'$, where $\{(\pi[0]=a, \pi[1]), (\pi[1], \pi[2]), \dots, (\pi[\ell-1], \pi[\ell])\}$ is the prefix of $\pi$ shared with $G_p$.
We obtain an edge-weight function $w^p: E(G_p) \to \mathbb{R}_{\ge 0}$ from $w'$ via the following process. 

\begin{enumerate}
    \item Copy $w^p=w'$.
    \item Remove the edges $\{(\pi[\ell], \pi[\ell+1]), \dots, (\pi[\ell+k-1], \pi[\ell+k]=b)\}$, and their associated values from $w^p$.
    \item For each vertex $v \in \{\pi[\ell+1], \dots, \pi[\ell+k-1]\}$ there are now exactly $2$ edges involving $v$ in $w^p$ (corresponding to split corridor edges of the wall induced by $\mathcal{F}$), denote them $(u, v), (v, w)$. Remove the values associated with $(u, v), (v, w)$ from $w^p$ as $v$ is not a vertex of $G_p$, and add $w^p(u, w) = w'(u, v) + w'(uv, w)$.
\end{enumerate}
By the construction, $w^p$ is an edge-weight function of $G_p$ satisfying:
\begin{itemize}
    \item $\delta_{G_p}(t, t') = \delta_T(t, t')$, $\forall$ pairs $(t, t')$ that are the endpoints of a path in $\Pi$.
    \item $\delta_{G_p}(u, v) \geq \delta_{G'}(u, v)$, $\forall u, v \in V(G_p)$.
\end{itemize}
Where $\delta_{G_p}$ is the shortest path metric induced from $w^p$.

We immediately obtain a setting of edge weight variables $x(e) = w^p(e)$ $\forall e \in E(G_p)$. 
For the exterior distance variables we compute for each ordered pair $(t, t')$ in $T^{ex} \times T^{ex}$, $d_{ex}(t, t') = \delta_{G_p \setminus F^{ex}}(t, t')$ where $\delta_{G_p \setminus F^{in}}$ is the shortest path metric induced by $w^p$ on the subgraph $F^{ex}$ of $G_p$. If there is no path on $F^{ex}$ connecting $t$ and $t'$ then set $d_{ex}(t, t') = \infty$. Analogously, for each ordered pair $(t, t') \in T^{in} \times T^{in}$, we define the interior distance variables as $d_{in} = \delta_{G_p \setminus F^{in}}(t, t')$, where the metric is induced by $w^p$ on the subgraph $F^{in}$. Again, when there is no path on $F^{in}$ connecting $t$ and $t'$ then set $d_{in}(t, t') = \infty$. 

It remains to show that $x(\cdot)$, $d_{ex}(\cdot, \cdot)$, $d_{in}(\cdot, \cdot)$ satisfy the constraints of \hyperlink{validityLP}{\textsc{PrefixValidityLP}}. 

\paragraph{Satisfying the LP constraints:}

The first four constraints of \hyperlink{validityLP}{\textsc{PrefixValidityLP}} are satisfied by construction. Specifically, (\ref{c1}) and (\ref{c2}) follow directly from the definitions of $d_{ex}(\cdot, \cdot)$, $d_{in}(\cdot, \cdot)$, and (\ref{c3}) and (\ref{c4}) from the asserted properties of $w^p$. 

Let $(F'^{ex}, F'^{in})$ be the result obtained from $G'\cut \mathcal{W}$. Then by our initial assumption $(F'^{ex}, T^{ex})$ is a nest realizing quasi-metric $(T^{ex}, d_{ex})$, and $d_{ex}$ satisfies the Monge property w.r.t $\sigma_{ex}$ ((\ref{c5})). 

Let $H$ be the directed multigraph constructed from $T^{ex}, T^{in}, d_{ex}$, and $d_{in}$ as described for \hyperlink{validityLP}{\textsc{PrefixValidityLP}}. Then (\ref{c6}) and (\ref{c7}) are immediate from the asserted properties of $w^p$ derived from $w'$ and the fact that $\mathcal{F}$ is a valid trajectory, and thus $(G', T)$ is a nest realizing $(T, \delta_T)$ with edge-weight function $w'$. 

Finally (\ref{c8}) is trivial, which completes the proof.

\end{proof}

\section{OS Embedding with Minimum Distortion}

In this section, we extend the technique for finding a planar embedding of an Okamura-Seymour quasimetric in \Cref{sec:polytime} to solve the  Okamura-Seymour embedding problem in \Cref{thm:os-embedding} in polynomial time. 

\OSEmbedding*

\begin{proof}

First, we solve the following decision problem.
\begin{quote}
     Given a quasimetric $(T,\delta_T)$,  a permutation $\sigma_T: T \rightarrow T$, and a parameter $\alpha > 0$, decide if there exists an  Okamura-Seymour embedding $(G,T)$ of $(T,\delta_T)$ w.r.t. $\sigma_T$ such that the distortion is at most $\alpha$.
\end{quote}

We solve the decision problem by linear programming. Specifically, for every pair $(t,t')\in T\times T$, we assign a variable $d(t,t') \geq 0$, which represents the distance $\delta_G(t,t')$ in $G$. Since distances in $G$ satisfy the Monge property w.r.t $\sigma_T$, $\{d(t,t')\}_{(t,t')\in T\times T}$ also must satisfy the same property. Thus, we have the following constraints:

\begin{align}
    d(t_i,t_k) + d(t_j,t_\ell) &\ge d(t_i,t_\ell) + d(t_j,t_k)
    \label{eq:min-dist-monge} \quad &\text{$\forall$  $\langle t_i,t_j,t_k,t_\ell\rangle \subseq\sigma_{T}$} \\
   \delta_T(t,t') \leq   & d(t,t') \leq \alpha\cdot   \delta_T(t,t')  \qquad &\forall t,t' \in T\times T
    \label{eq:min-dist-approx}\\
    d(t,t')&\geq 0 \qquad &\forall~ t,t' \in T\times T \label{eq:min-dist-diag}
\end{align}

The second set of constraints in \Cref{eq:min-dist-approx} encodes that the distortion is at most $\alpha$.

Since the linear program only has $O(|T|^4)$ constraints, we can decide if it has a feasible assignment to variables $\{d(t,t')\}_{(t,t')\in T\times T}$  in polynomial time. By \Cref{thm:Chen-Tan}, as long as the assignment of the variables satisfies \Cref{eq:min-dist-monge}, there exists an Okamura-Seymour instance $(G,T)$ realizing $\{d(t,t')\}_{(t,t')\in T\times T}$ as distances between terminals. Thus, the linear program is feasible if and only if the answer to the decision problem is \textsc{Yes}.

Using binary search and solving the decision problem, we can find the minimum distortion $\alpha^*$, as well as the associated distances $\{d^*(t,t')\}_{(t,t')\in T\times T}$ in polynomial time. By \Cref{thm:main}, we can construct an Okamura-Seymour instance  $(G^*,T)$ realizing $\{d^*(t,t')\}_{(t,t')\in T\times T}$ in polynomial time. Then $(G^*,T)$ is  an  Okamura-Seymour embedding $(G,T)$ of $(T,\delta_T)$ w.r.t. $\sigma_T$ of minimum distortion $\alpha^*$.
\end{proof}

\section{Application: Approximate SSSP in CONGEST Model}

In this section, we present an algorithm for computing a \((1+\epsilon)\)-approximate SSSP, as stated in \Cref{thm:distributed}. We also introduce notation that will be used only within this section.

\Distributed*

\subsection{Preliminaries}

We assume familiarity with the CONGEST model. In our high-level algorithm stated in \Cref{subsec:SSSP}, we recursively compute \EMPH{balanced cycle separators} and build a \EMPH{bounded diameter decomposition (BDD)} of the input graph $G$. To obtain an approximate SSSP tree rooted at any vertex $s \in G$, we process the resulting recursion tree from the bottom to the top and compute \EMPH{distance labels} for every vertex in $G$, inspired by Parter's~\cite{Parter20} reachability labeling scheme.

To achieve efficient communication, we adopt the approximate distance oracle by Thorup ~\cite{Thorup04} to broadcast compressed distance information between recursion levels. Such an oracle can only be computed on planar graphs. This becomes possible because our \Cref{thm:os-embedding} allows us to compute an approximate \EMPH{planar distance emulator} in polynomial time at each recursion level. To differentiate it from the distance labeling scheme by Parter ~\cite{Parter20}, we refer to the scheme by Thorup ~\cite{Thorup04} as the \EMPH{distance oracle} throughout this section.

In the following subsection, we introduce the technical tools and concepts mentioned above. 

\subsubsection{Bounded Diameter Decomposition}

\begin{figure}[t]
\centering
\begin{subfigure}[t]{0.49\textwidth}
\centering
\includegraphics[page=1, trim=65pt 225pt 73pt 209pt, clip, width=\linewidth]{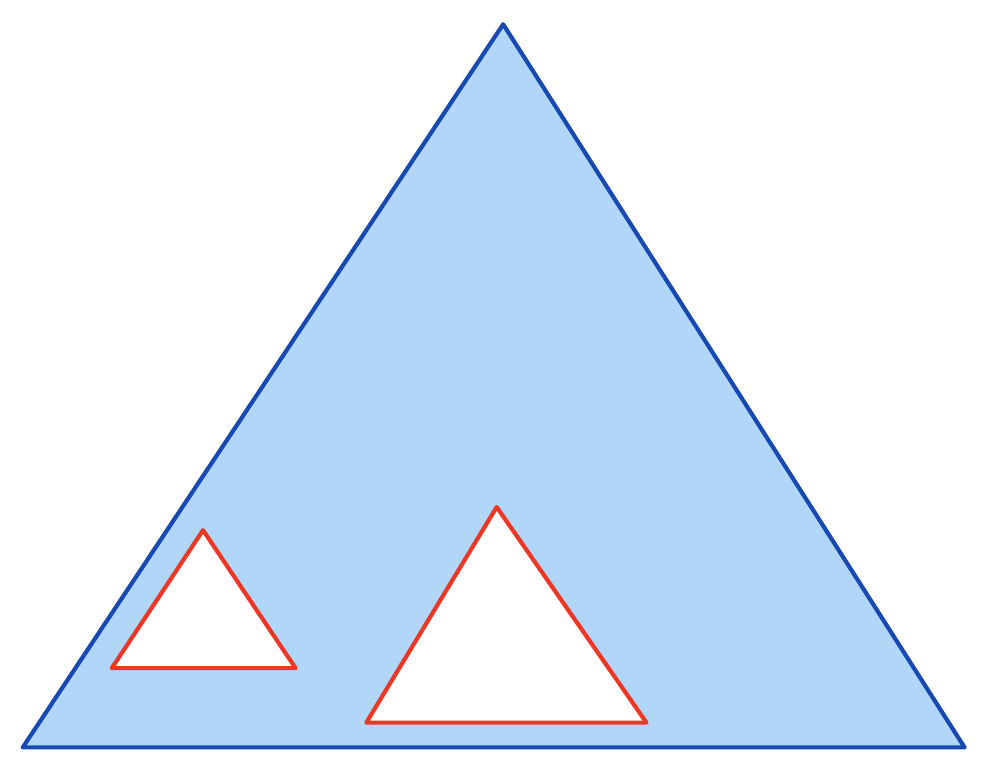}
\caption{Subgraph $G[X]$}
\label{fig:BDD-parent}
\end{subfigure}
\hfill
\begin{subfigure}[t]{0.376\textwidth}
\centering
\includegraphics[page=2, trim=100pt 174pt 102pt 146pt, clip, width=0.88\linewidth]{figuresfinal/SSSP.pdf}
\caption{Child bags of $X$}
\label{fig:BDD-parent-child}
\end{subfigure}

\caption{In this figure we give a simple illustration of the bounded diameter decomposition (BDD). 
In a), $X$ is non-leaf bag in the recursion tree induced by BDD; in b) the boundary of the green region is the balanced cycle separator $S_X$. The removal of $S_X$ partitions $X$ into two child bags, $X_1$ and $X_2$ each containing $S_X$. The drawing illustrates the tree structure. 
}
\label{fig:BDD}

\end{figure}

At a high level (see Fig.~\ref{fig:BDD}), a bounded diameter decomposition (BDD) recursively partitions a planar graph into smaller subgraphs of bounded diameter while incurring low congestion, to facilitate distributed computation. The recursion continues until the resulting subgraphs are sufficiently small to be handled directly using fast communication. Specifically, at each recursive step, we partition the planar graph using a balanced cycle separator.

\begin{remark}
In our algorithm, the sole purpose of the BDD is to partition the graph. It suffices to compute a BDD of the unweighted and undirected version of $G$. The edge weights and directions of $G$ are instead encoded in the distance labels and the distance oracle. We assume that the unweighted hop-diameter of $G$ is $D$. 
\end{remark}

\begin{definition}[Balanced Cycle Separator in ~\cite{Thorup04}]\label{def:cycle_separator}
Let $G=(V,E)$ be a weighted planar digraph, and let $T$ be a rooted spanning tree of the unweighted and undirected version of $G$. A \EMPH{balanced cycle separator} is a cycle $S$ defined by two vertices $y,z \in V$ and their lowest common ancestor $x$ in $T$. Specifically, $S$ consists of the unique path in $T$ from the ancestor $x$ to its descendant $y$, the unique path in $T$ from $x$ to its descendant $z$, and an additional edge $(y,z)$, where $(y,z)$ may be a virtual edge not contained in $E$. Moreover, the subgraph inside $S$ and every connected component of the subgraph outside $S$ each contain at most a constant fraction of the vertices of $G$.
\end{definition}

Here, for any two vertices $y,z \in V(T)$, we define the \EMPH{simple tree path} from $y$ to $z$ as the unique simple path between $y$ and $z$ in $T$. Note that neither vertex is required to be an ancestor of the other. Thus, a \EMPH{balanced cycle separator} can equivalently be viewed as the union of the simple tree path from $y$ to $z$ and an additional edge $(y,z)$, which may be a virtual edge. The following algorithm of Li and Parter computes a balanced cycle separator.

\begin{lemma}[Theorem 4.4 in ~\cite{li2019planar}]\label{lem:cycle_separator_computation}
Given an unweighted and undirected planar graph $G$ with diameter $D$ and a spanning tree $T$ of $G$, there exists an $\widetilde{O}(D)$-round algorithm that computes a balanced cycle separator of $G$.
\end{lemma}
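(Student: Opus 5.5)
The statement to prove is \Cref{lem:cycle_separator_computation}: a balanced cycle separator of an unweighted, undirected planar graph $G$ with hop-diameter $D$ and spanning tree $T$ can be computed in $\widetilde{O}(D)$ rounds. This is Theorem 4.4 of Li and Parter~\cite{li2019planar}. My plan is to follow their approach and use the standard planar distributed toolbox.

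\textbf{Step 1 (embedding).} Compute a combinatorial planar embedding in $\widetilde{O}(D)$ rounds using Ghaffari and Haeupler~\cite{GH16}. Triangulate it virtually: each non-tree edge $(y,z)$, real or virtual, closes a fundamental cycle in $T$. Every such cycle is a candidate separator of exactly the form in \Cref{def:cycle_separator}, and the Lipton--Tarjan argument guarantees that some candidate is balanced.

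\textbf{Step 2 (weights from the dual tree).} The non-tree edges form a spanning tree $T^*$ of the dual graph. The number of vertices strictly inside the fundamental cycle of $(y,z)$ can be written as a subtree sum in $T^*$ plus corrections along the tree paths from $x=\mathrm{lca}(y,z)$ to $y$ and to $z$. The corrections are computed from prefix sums along $T$.
\begin{itemize}
\item Prefix sums along $T$ take $O(D)$ rounds, because $T$ can be taken to be a BFS tree of depth $O(D)$.
\item The dual subtree sums are the hard part: $T^*$ can have depth $\Omega(n)$. Here I would use low-congestion shortcuts, or equivalently minor aggregation on planar graphs~\cite{GH16,ZGYHS22}, to evaluate the subtree aggregates in $\widetilde{O}(D)$ rounds. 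The tool is heavy-light style decomposition of $T^*$, with each dual path simulated on its primal face-boundary parts, which admit shortcuts of quality $\widetilde{O}(D)$.
\end{itemize}

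\textbf{Step 3 (selection).} Each non-tree edge now knows the inside count of its cycle, and the outside count is $n$ minus that. A global min or max aggregation over the BFS tree selects a cycle with both sides at most a constant fraction of the vertices. This takes $O(D)$ rounds.

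\textbf{Step 4 (when no single fundamental cycle is balanced).} If the faces have large degree, it can happen that no single fundamental cycle is balanced. In that case I would apply the usual fix: triangulate with virtual edges simulated over the face boundaries, or walk down $T^*$ toward the centroid, which again needs only subtree aggregates.

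The main obstacle is the dual subtree aggregation in $\widetilde{O}(D)$ rounds, since the dual tree has unbounded diameter. I would resolve it through the shortcut framework, citing its existing implementations rather than reproving it.
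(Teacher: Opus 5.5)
The paper gives no proof of this lemma. It imports it from Li and Parter, and remarks right afterwards that the cited algorithm works for an \emph{arbitrary} spanning tree $T$. Judged on its own, your sketch has a genuine gap, and it is not the obstacle you flag but the non-triangulated case. Step~1 already assumes a virtual triangulation, while Step~4 treats it as a fallback. Neither step says how the inside count of the fundamental cycle of a \emph{virtual} edge $(y,z)$ is obtained: $y$ and $z$ are not adjacent, may be $\Theta(n)$ apart along a face, and one face may need $\Theta(n)$ virtual edges. Your remedies do not close this. Simulating virtual edges along face boundaries costs the face length. Locating a centroid of $T^*$ only identifies a heavy face; it neither shows that a balanced virtual edge exists inside it nor computes its count. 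This is not a corner case: for $G=T$ there are no non-tree edges at all, so every candidate is virtual. The missing idea is that only one face needs virtual edges:
\begin{itemize}
\item Compute counts for the real non-tree edges only. If none is balanced, orient every edge of $T^*$ toward its heavy side.
\item Two edges on a dual path oriented toward its opposite ends have heavy sides with disjoint strict interiors. Hence there is a unique sink face $F$, and it lies on the heavy side of every edge.
\item Triangulate only $F$, e.g.\ by a fan from one corner $w$. If no fan edge were balanced, the fan's dual path would contain a sink triangle, and the Lipton--Tarjan case analysis excludes this. A triangle with two tree edges has an empty side. One with a single tree edge has two heavy sides with disjoint interiors. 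One with none has three heavy sides covering each vertex at most twice.
\item Since there is a single apex, $w$ broadcasts its $O(D)$-word data (pre-order number, depth, ancestor list, corner position) over a BFS tree in $O(D)$ rounds. Each vertex on $\partial F$ then evaluates its fan edges locally, and one global aggregation picks a balanced one.
\item The corners of $F$ are exactly those on the light side of no real non-tree edge. Light sides are arcs of the Euler tour of $T$, so identifying these corners is a $\pm 1$ prefix sum along that tour.
\end{itemize}

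Two smaller points. First, you may not replace the given $T$ by a BFS tree. By the definition, the separator must be a fundamental cycle of the given $T$, whose depth may be $\Theta(n)$. For such $T$, your prefix sums, the depths, and the computation of $x=\mathrm{lca}(y,z)$ (which your sketch never performs) all need shortcut-based tree aggregation. The restriction is harmless for the paper, which only applies the lemma to a BFS tree, but it falls short of the statement as written.

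Second, the dual subtree sums you call the main obstacle can be avoided, at least for trees of depth $O(D)$. Order the children of each vertex clockwise starting from its parent edge, and take the resulting pre-order. Suppose $\mathrm{pre}(y)<\mathrm{pre}(z)$ and $x\neq y$. Then one side of the cycle contains exactly $\mathrm{pre}(z)-\mathrm{pre}(y)-\mathrm{depth}(z)+\mathrm{depth}(x)-\sum_{c}|T_c|+\sum_{c'}|T_{c'}|$ vertices. Here $T_c$ is the subtree rooted at $c$, and $c$, $c'$ range over the children of $y$ and of $z$ lying clockwise between the parent edge and $(y,z)$. All these quantities are computable in $O(D)$ rounds, and the same formula serves the fan edges in $F$. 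This also sidesteps your unproved claim that face boundaries, which are far from vertex-disjoint parts, admit $\widetilde{O}(D)$-quality shortcuts.
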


\begin{remark}
The algorithm in \Cref{lem:cycle_separator_computation} only requires $T$ to be an arbitrary spanning tree of $G$. In our algorithm, however, we additionally require $T$ to have depth $O(D)$. We slightly modify the balanced cycle separator computation by first constructing a BFS tree $T$ using layered flooding, which can be completed in $O(D)$ rounds. Since the depth of a BFS tree is at most $D$, the resulting tree satisfies the required depth bound and can then be used in \Cref{lem:cycle_separator_computation}.
\end{remark}

We now define the bounded diameter decomposition of an unweighted and undirected graph $G = (V,E)$. A BDD is represented by a rooted tree $\mathcal{T}$ whose nodes, referred to as \EMPH{bags}, are subsets of $V$. For each bag $X \subseteq V$, we use $G[X]$ to denote the subgraph of $G$ induced by $X$. The \EMPH{root bag} is $V$, corresponding to the original graph $G$. In general, consider a bag $X \subseteq V$, and let $D'$ denote the diameter of $G[X]$. If $X$ is not sufficiently small, we compute a balanced cycle separator $S$ of $G[X]$. Removing the vertices of $S$ from $G[X]$ partitions the remaining graph into connected components $C_1,C_2,\dots,C_k$. For each component $C_i$, where $1 \leq i \leq k$, we form a \EMPH{child bag} $X_i \subseteq X$ consisting of the vertices embedded in the corresponding connected region together with the vertices of $S$ lying on its boundary. $G[X_i]$ is the subgraph corresponding to this child bag. The child bags are constructed such that:
\begin{enumerate}
\item $|X_i| \leq |X|/c$ for some constant $c>1$;
\item the diameter of $G[X_i]$ is bounded by a function of $D'$; and
\item each edge of $G[X]$ belongs to only a small number of the induced subgraphs $G[X_1], G[X_2], \dots, G[X_k]$.
\end{enumerate}
Moreover, every vertex in $X \setminus V(S)$ belongs to exactly one child bag of $X$, whereas separator vertices may belong to multiple child bags. The recursion continues independently on each child bag until the bag is sufficiently small.

This recursive procedure naturally induces the tree $\mathcal{T}$. Suppose that a bag $X$ gives rise to child bags $X_1, X_2, \dots, X_k$. Then the nodes corresponding to $X_1, X_2, \dots, X_k$ are the children of the node corresponding to $X$ in $\mathcal{T}$. Equivalently, we call $X$ the \EMPH{parent bag} of $X_1, X_2, \dots, X_k$. A bag without child bags is called a \EMPH{leaf bag}.

The following lemma provides an efficient algorithm for computing a BDD with the properties listed below. Let $G=(V,E)$ be an unweighted and undirected planar graph with diameter $D$, and let $T$ be a BFS spanning tree of $G$, rooted at an arbitrary vertex $r$, with depth $O(D)$. Similarly, for any $X \subseteq V$, we use $T[X]$ to denote the subgraph of $T$ induced by $X$.

\begin{lemma}[Definition 4.1 and Theorem 4.2 in ~\cite{li2019planar}]\label{lem:computation_of_BDD}
There exists a distributed algorithm that computes a BDD of $G$ in $\widetilde{O}(D)$ rounds. The recursive partition is represented by a rooted tree $\mathcal{T}$ satisfying the following properties:
\begin{enumerate}
\item The height of $\mathcal{T}$ is $O(\log n)$, and its root bag is $V$.
\item Every bag in $\mathcal{T}$ has a unique ID, and every vertex $v \in V$ knows the IDs of all bags containing $v$.
\item For every bag $X$, the induced subgraph $G[X]$ is connected and has diameter $O(D \log n)$.
\item For every bag $X$, $T[X]$ is a spanning tree of $G[X]$ with height $O(D)$.
\item Every leaf bag has size $O(D \log n)$.
\item For any fixed depth $d$, every edge $e \in E$ belongs to at most two subgraphs induced by bags at depth $d$ in $\mathcal{T}$.
\item For every non-leaf bag $X$, let \EMPH{$S_X$} be the balanced cycle separator of $G[X]$ computed by \Cref{lem:cycle_separator_computation}. For any two distinct child bags $X_1$ and $X_2$ of $X$, the following properties hold:
\begin{enumerate}
    \item Every vertex of $S_X$ belongs to at most three child bags of $X$. Any other vertex in $X$ belong to exactly one child bag of $X$.
    \item Every path in $G[X]$ from a vertex in $X_1$ to a vertex in $X_2$ intersects the vertices in $S_X$.
\end{enumerate}
\end{enumerate}
\end{lemma}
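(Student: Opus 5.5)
This lemma is cited directly from Li and Parter~\cite{li2019planar}, so the plan is to recall how their construction gives each property rather than prove it from scratch. We build $\mathcal{T}$ top-down. At a bag $X$ we take the BFS tree restricted to $X$, i.e.\ $T[X]$, and apply \Cref{lem:cycle_separator_computation} to get $S_X$. Removing $S_X$ splits $G[X]$ into components, and each component together with the separator vertices on its boundary forms a child bag. We recurse until a bag has $O(D\log n)$ vertices.

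The structural properties come out in the following order.
\begin{itemize}
\item \textbf{Height (property 1).} Each child has at most a constant fraction of its parent's vertices, so the height is $O(\log n)$.
\item \textbf{Separation (property 7b).} This is the definition of a cycle separator: every path between two children crosses $S_X$.
\item \textbf{Bounded multiplicity (properties 6 and 7a).} These are planarity statements. An edge lies strictly inside one component region or on the separator cycle. A separator edge borders at most the two faces of the cycle, and a vertex of $S_X$ touches at most three regions once the virtual edge $(y,z)$ is accounted for.
\item \textbf{Height of $T[X]$ (property 4).} $S_X$ consists of two root-ward tree paths, so a child bag can be reconnected through tree paths that already lie in the bag. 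Thus $T[X_i]$ stays a tree of depth at most that of $T$, which is $O(D)$ by the BFS remark.
\item \textbf{Connectivity and diameter (property 3).} This follows from property 4, since the diameter is at most twice the height plus the separator length, i.e.\ $O(D)$. The $\log n$ slack in the statement absorbs the case where the construction adds separator pieces from all $O(\log n)$ ancestor levels.
\end{itemize}

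Distributed implementation and IDs (property 2) are handled as follows. All bags at one depth are processed in parallel. By property 6 every edge serves at most two bags per level, so the congestion is $O(1)$. Each bag then runs \Cref{lem:cycle_separator_computation} in $\OO(D)$ rounds, using its own tree $T[X]$ of depth $O(D)$. Over $O(\log n)$ levels this totals $\OO(D)$ rounds. A bag's ID can be the root's ID concatenated with the child index, and a vertex learns it by broadcasting along $T[X]$.

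The main obstacle is property 4. It requires that cutting along $S_X$ does not disconnect any child's portion of the BFS tree. Otherwise the child's diameter could blow up, and the per-level cost would no longer be $\OO(D)$. I would handle it by keeping the entire fundamental-cycle separator in every child that it bounds. Since $S_X$ is a union of tree paths to the lowest common ancestor, every vertex of a child then has its full tree path to the root of $T[X_i]$ inside that child.
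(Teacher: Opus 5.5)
The paper gives no proof of this lemma. It is imported from Li and Parter (Definition~4.1 and Theorem~4.2), so deferring to that citation is exactly the paper's route. The problem is the reconstruction you offer of how the properties ``come out.'' As written, it does not produce a decomposition satisfying all of them at once, so it cannot stand in for the citation.

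\textbf{Properties 6 and 7(a) fail for your construction.} You form one child per component of $G[X]\setminus S_X$, and $G[X_i]$ is an \emph{induced} subgraph. Take two consecutive cycle vertices $u,v\in S_X$ and vertices $a_1,\dots,a_m$ on one side of the cycle, each adjacent only to $u$ and $v$ (a planar $K_{2,m}$ pattern).
\begin{itemize}
\item Each $a_j$ is its own component, so $u$ and $v$ lie in $m$ child bags.
\item The edge $uv$ lies in $m$ of the subgraphs $G[X_i]$.
\item ``A separator edge borders at most two faces'' and ``a vertex touches at most three regions'' bound sides of the cycle, not the number of \emph{children}, which is what the properties count.
\end{itemize}
Any such bound must come from how the cited construction groups components and assigns separator pieces, not from planarity of one cycle.

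\textbf{Your fix for property 4 makes this worse.} If every child contains all of $S_X$, then every separator vertex, and every edge of $G$ joining two separator vertices, lies in every child. That contradicts 6 and 7(a) as soon as there are more than two or three children. Without the fix, property 4 can genuinely fail, as you suspected. Parts of one component whose tree paths exit at different separator vertices are joined in $T$ only through separator vertices not in that child, so $T[X_i]$ can be a forest.

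\textbf{Property 6 is a per-depth bound, not a sibling bound.} It constrains \emph{all} bags at depth $d$. Nothing in your argument stops an edge already in two bags $X,X'$ at depth $d$ (e.g.\ an ancestor-separator tree edge) from lying on both $S_X$ and $S_{X'}$. It would then land in four bags at depth $d+1$. Your $O(1)$-congestion claim for running the separator algorithm in parallel rests on property 6, so it inherits the same gap.

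Reconciling property 4 with properties 6 and 7(a) is the actual content of the cited construction. You must decide exactly which separator vertices and edges go to which child, and show multiplicities do not compound across levels. Either cite Theorem~4.2 of Li--Parter as the paper does, or state their distribution rule explicitly and verify the per-depth bound by induction over the levels.
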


We also introduce some notation used later in the description of our algorithm. Let $X$ be an arbitrary bag of $\mathcal{T}$, and let $Anc^*(X)$ denote the set of proper ancestors of $X$ in $\mathcal{T}$. We use \EMPH{$S_X$} to denote either the balanced cycle separator formed by a simple tree path in $T[X]$ together with an additional (possibly virtual) edge, or the set of vertices on this cycle. We define \EMPH{$\mathbb{S}^*_X$} as the collection of balanced cycle separators used to subdivide the proper ancestors of $X$, namely, \EMPH{$\mathbb{S}^*_X$} $:= \{S_{X'} : X' \in Anc^*(X)\}$. We further define \EMPH{$\partial \mathbb{S}^*_X$} $:= \{S \cap G[X] : S \in \mathbb{S}^*_X \text{ and } S \cap G[X] \neq \emptyset\}$. As implied by Lemma 9 in ~\cite{Parter20}, which is stated below, for every $S \in \mathbb{S}^*_X$, the nonempty portion of $S$ contained in $G[X]$ forms a consecutive segment of $S$, or coincides with the entire cycle. We refer to each such portion of a separator as an \EMPH{almost-cycle}. Moreover, we define \EMPH{$\partial \mathbb{S}^+_X$} $:= \partial \mathbb{S}^*_X \cup \{S_X\}$. By a slight abuse of notation, \EMPH{$\mathbb{S}^*_X$}, \EMPH{$\partial \mathbb{S}^*_X$}, and \EMPH{$\partial \mathbb{S}^+_X$} may refer either to the corresponding collections of almost-cycles or to the sets of vertices contained in these almost-cycles, depending on the context. 

Moreover, we use the following property:
\begin{lemma}[Lemma 9 in ~\cite{Parter20}]\label{lem:logn_holes}
The vertices in $\partial \mathbb{S}^*_X$ lie on the boundaries of $O(\log n)$ holes of $G[X]$, and the separator vertices appear consecutively along the boundary of each such hole. Moreover, $\partial \mathbb{S}^*_X$ is of size $O(D\log n)$. Finally, each such hole has either an empty interior or an empty exterior.
\end{lemma}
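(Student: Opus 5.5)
This is Lemma 9 of \cite{Parter20}; the plan is to reprove it by induction on the depth of $X$ in the BDD tree $\mathcal{T}$. The root has no ancestor separators, so the statement is trivial there. For the inductive step, let $X$ be a child of $Y$. We know the structure of $\partial\mathbb{S}^*_Y$ in $G[Y]$, and the new separator $S_Y$ is added. We want to show that every separator piece in $G[X]$ lies on boundaries of holes of $G[X]$, that the number of holes grows by $O(1)$ per level, and that each hole stays one-sided.

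\textbf{Step 1 (holes).} The key structural fact is property 7 of \Cref{lem:computation_of_BDD}. A child bag $X$ is the closure of a single connected component $C$ of $G[Y]\setminus S_Y$, together with the vertices of $S_Y$ on its boundary. Since $S_Y$ is a cycle (tree path plus a possibly virtual edge $(y,z)$), $C$ lies entirely inside or entirely outside $S_Y$.

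If $C$ is inside, the region of $X$ is bounded by $S_Y$, so $S_Y\cap G[X]$ bounds one new face ("hole") whose exterior is empty in $G[X]$. If $C$ is outside, $S_Y$ bounds a face whose interior is empty.

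Old holes of $G[Y]$ are either kept or truncated. Their separator vertices were consecutive along the hole boundary. Intersecting with the region of $C$ cuts each such boundary at most where it meets $S_Y$. I would argue that the portion of an old almost-cycle lying in $G[X]$ is a consecutive segment, since it is a cycle segment clipped to one side of a Jordan curve. The real obstacle is that it could in principle be clipped into several segments.

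\textbf{Step 2 (main obstacle: consecutiveness and count).} A Jordan curve $S_Y$ may cross an old almost-cycle $S'$ many times, which would split $S'$ into many arcs. To rule this out I would exploit that both are built from the same BFS tree $T$: $S_Y$ and $S'$ consist of root-ward tree paths plus one extra edge each. Two tree paths intersect in a common subpath toward the root (their union is a tree), so they cannot cross repeatedly. Hence $S_Y$ meets $S'$ in a contiguous piece, plus possibly the crossing contributed by the single non-tree edge. This gives $O(1)$ arcs of $S'$ per side.

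With this in hand, each level adds $O(1)$ holes, and the $O(\log n)$ height of $\mathcal{T}$ (property 1) gives $O(\log n)$ holes. If the crossing bound only yields "at most two arcs," I would instead charge holes per ancestor separator, giving $O(1)$ holes per ancestor and $O(\log n)$ in total.

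\textbf{Step 3 (size).} Each separator consists of two tree paths in $T[X']$ of height $O(D)$ (property 4), so $|S_{X'}|=O(D)$. There are $O(\log n)$ ancestors, so $|\partial\mathbb{S}^*_X| = O(D\log n)$.

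Finally, one-sidedness of holes is preserved, since the side of $S_Y$ containing $C$ is fixed and the sub-regions of previously empty sides remain empty.
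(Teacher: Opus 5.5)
The paper does not prove this lemma (it imports it from Parter), so your argument has to stand on its own. Step~3 does. Steps~1--2 do not close, because you control each ancestor almost-cycle $S'$ separately and then try to read the hole count off the number of arcs.

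The fact you invoke, that two paths in a tree meet in a single subpath, only says that one new separator $S_Y$ meets $S'$ in one piece $Q$. Removing $Q$ from a segment of $S'$ leaves up to two arcs, so your hypothesis ``$S'\cap G[X]$ is one segment'' is not restored for the next level. To restore it you would have to show that $Q$ itself stays in the child and glues the arcs back together. That depends on how child bags treat separator vertices. Under the component-based description, suppose $S_Y$ bulges into the region and is attached to the middle of $S'$'s surviving arc. Then the interior vertices of $Q$ have no neighbour in the outer component, so the outer child keeps two arcs of $S'$, and later levels can cut it further.

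More seriously, ``with this in hand, each level adds $O(1)$ holes'' is a non sequitur. The number of arcs of a separator does not bound the number of faces of $G[X]$ they lie on, and up to $O(\log n)$ ancestors can be cut at the same level. The fallback ``$O(1)$ holes per ancestor'' is only asserted.

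The missing idea is that holes only merge, never split. Since $G[X]$ is a subgraph of $G[Y]$, every face of $G[Y]$ lies inside a face of $G[X]$. So a separator vertex that lies on a hole of $G[Y]$ and survives in $X$ lies on the boundary of the face of $G[X]$ containing that hole. Combine this with your own Step~1 observation that $S_Y$ contributes one new hole: the face containing its removed side, which is connected and free of vertices of $X$. Then the number of holes needed to cover $\partial\mathbb{S}^*_X$ grows by at most one per level, hence is $O(\log n)$. Equivalently, all surviving vertices of an ancestor $S'$ border the one face containing the removed side of $S'$, however many arcs they form. No tree-path argument is needed for the count.

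The consecutiveness in the lemma concerns all separator vertices on a hole, not each separator by itself, and it does not follow from per-separator segments. Pieces of several ancestors can share one hole, and you must rule out a non-separator stretch of that hole's boundary walk between them. One source of such a stretch is a virtual edge $(y,z)$: it is not an edge of $G$, so the removed side of a separator merges with the face of $G$ on the other side of that edge. Your plan does not address this. The one-sidedness claim should likewise be argued for faces of $G[X]$ rather than by ``previously empty sides remain empty.''
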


\subsubsection{Distance Labeling and Emulator}

Distance labels store compact information at each vertex, allowing the distance between two vertices to be calculated or estimated using only their labels, without accessing the entire graph. Both the labeling scheme by Parter ~\cite{Parter20} and the distance oracle by Thorup ~\cite{Thorup04} follow this general pattern. Typically, given a (di)graph $H$ and a vertex $v$, the distance label of $v$ stores the distances between $v$ and a set of \EMPH{hub vertices} of $H$. In the directed setting, it stores both the distances from $v$ to the hub vertices and the distances from the hub vertices to $v$. Fix any pair of vertices $u,v$ in $H$, and suppose that their distance labels contain distance information with respect to a common set of hub vertices $\{s_1,s_2,\dots,s_k\}$. The hub vertices are chosen so that a shortest path from $u$ to $v$ passes through at least one vertex in $\{s_1,s_2,\dots,s_k\}$. Consequently, $\text{dist}_H(u,v)=\min_{s' \in \{s_1,s_2,\dots,s_k\}}\left\{\text{dist}_H(u,s')+\text{dist}_H(s',v)\right\}$. An analogous equality holds for the distance from $v$ to $u$.

In our algorithm, we adapt the reachability labeling scheme of Parter~\cite{Parter20} to encode approximate distance information. Instead of recording reachability relations, our labels store approximate distances to selected vertices on balanced cycle separators, which are our hub vertices.

\begin{definition}[Distance Labels in ~\cite{Parter20}]\label{def:distance_label}
Let $v \in V$, and let $d$ be the minimum depth such that there exists a bag $X$ at depth $d$ in $\mathcal{T}$ with $v \in X$, where either $v \in S_X$ or $X$ is a leaf bag. By the definition of the BDD, for every depth $i \in [0,d]$, there exists a unique bag containing $v$, which we denote by $X_i$. For each $i \in [0,d]$, the distance label \EMPH{$L(X_i,v)$} stores the distances $\text{dist}_{G}(v,s')$ and $\text{dist}_{G}(s',v)$ for every $s' \in \partial \mathbb{S}^+_X$, together with the unique ID of $X_i$. The distance label of $v$, denoted by \EMPH{$L(v)$}, is the concatenation of these labels: $L(v)=L(X_0,v)\circ L(X_1,v)\circ \dots \circ L(X_d,v)$. We refer to $L(X_i,v)$ as the \EMPH{$i$-th layer} of the distance label of $v$.
\end{definition}

In the distance oracle of Thorup~\cite{Thorup04}, the graph is recursively decomposed using separators consisting of a constant number of shortest paths. For each vertex, the oracle selects a small set of portals that serve as the hub vertices. Since we employ this distance oracle as a black box in our algorithm, we avoid the details.

\begin{lemma}[Theorem 3.16 in ~\cite{Thorup04}]\label{lem:distance_oracle}
For any weighted planar digraph $G=(V,E)$, there exists a polynomial-time algorithm that computes a distance oracle capable of answering $(1+\epsilon)$-approximate distance queries for any pair of vertices in near-constant time. Moreover, the information maintained by the oracle can be distributed to each vertex, with each having size $\OO(1/\epsilon)$.
\end{lemma}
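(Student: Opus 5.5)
This lemma is quoted from Thorup~\cite{Thorup04}, and we use it as a black box. If we had to reprove it, we would follow Thorup's scaling-plus-path-separator scheme, since the polynomial local time and the $\OO(1/\epsilon)$ label size both come directly out of that construction.

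\textbf{Step 1: reduce to bounded distance ranges.} Fix a scale $\alpha$ ranging over the $O(\log(nW))$ powers of two, where $W=n^{O(1)}$ bounds the edge weights. It suffices to answer, for each scale, queries $(u,v)$ with $\alpha \le \delta_G(u,v) < 2\alpha$ up to additive error $\epsilon\alpha$. The final answer is the minimum of the estimates over all scales, and every estimate is an upper bound because it is the length of an actual path.

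\textbf{Step 2: build an $\alpha$-layered decomposition.} For a fixed scale, take a shortest-path tree from an arbitrary root, ignoring edge directions when defining layers. Cut it into layers of width $\alpha$ and group consecutive pairs of layers, so that any $u\to v$ path of length less than $2\alpha$ lies inside one group. Each group is contracted into a planar digraph in which every vertex is reachable from a root by a path made of $O(1)$ directed shortest paths of length $O(\alpha)$. In such a graph we apply the planar separator theorem with respect to the spanning tree formed by these paths. This yields a separator consisting of $O(1)$ directed paths, each a concatenation of $O(1)$ shortest paths of length $O(\alpha)$. Recursing gives a decomposition of depth $O(\log n)$ in which every $u\to v$ path crosses some separator path at the first recursion level where $u$ and $v$ are separated.

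\textbf{Step 3: portals, and the main obstacle.} For each vertex $v$ and each separator path $Q$ at each level, we choose a set of $O(1/\epsilon)$ portals $p$ on $Q$ that form an $\epsilon\alpha$-cover. This means that for every $q\in Q$ there is a portal $p$ with $\delta(v,p)+\delta_Q(p,q) \le \delta(v,q)+\epsilon\alpha$, and symmetrically for paths into $v$. The portal set is built greedily by walking along $Q$. This step is where directedness bites, and we expect it to be the main obstacle: one can only move forward along $Q$, so the cover for $u$ must be an out-cover and the cover for $v$ an in-cover. A query minimizes $\delta(u,p)+\delta_Q(p,p')+\delta(p',v)$ over portals $p$ of $u$ and $p'$ of $v$ with $p$ before $p'$ on $Q$. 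The delicate part is proving that $O(1/\epsilon)$ portals per path suffice. This relies on $Q$ having length $O(\alpha)$, and on the fact that if a shortest path first meets $Q$ at $q$ and leaves at $q'$, then it may follow $Q$ between them because $Q$ is itself shortest within the layer.

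\textbf{Step 4: size and running time.} Each vertex stores, for each of the $O(\log(nW))$ scales, $O(\log n)$ levels and $O(1)$ paths, the $O(1/\epsilon)$ portal distances together with each portal's position on $Q$. This is $\OO(1/\epsilon)$ words in total. Queries merge two sorted portal lists in $O(1/\epsilon)$ time per path. All components (shortest-path trees, separators, greedy covers) are computable by Dijkstra-type procedures in polynomial time.
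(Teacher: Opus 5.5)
Treating Theorem~3.16 of Thorup as a black box is exactly what the paper does; it gives no proof of its own. Your reconstruction, however, has a genuine gap. It is in Step~2, not in the portal step you flagged.

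\textbf{The gap in Step~2.} Layering by undirected distance does confine a dipath of length $<2\alpha$ to three (not two) consecutive width-$\alpha$ layers. It does not give what you then assert: a spanning tree of each group whose root paths are concatenations of $O(1)$ directed shortest paths. Undirected tree paths can switch direction arbitrarily often. For example, a zigzag path $x_0\to x_1\leftarrow x_2\to x_3\leftarrow\cdots$ with tiny weights lies in a single layer and is its own unique spanning tree, yet its root paths have unboundedly many directed pieces. If you instead meant a directed out-tree from $r$ with layers given by $\delta(r,\cdot)$, that tree need not span, and $\delta(r,\cdot)$ can drop arbitrarily along a short dipath, so confinement is lost. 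Either way, the separator produced by the planar separator theorem consists of undirected root paths. On such paths $\delta_Q(p,q)$ is meaningless, so Step~3 cannot even be set up. This is the real directed difficulty.

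\textbf{How to fix it.} Layer by alternating directed expansions, in the spirit of Thorup's reachability layering. Set $L_0=\{r\}$. For odd $i$, let $L_i$ be the unlayered vertices within directed distance $2\alpha$ \emph{from} $L_{\le i-1}$; for even $i$, those within directed distance $2\alpha$ \emph{to} $L_{\le i-1}$. Edges heavier than $2\alpha$ can be discarded at this scale. This layering has two properties:
\begin{itemize}
\item A dipath of length $<2\alpha$ starting in $L_j$ stays in $L_{j-2}\cup\cdots\cup L_{j+2}$, because the next forward expansion and the next backward expansion each absorb it.
\item Every vertex of $L_i$ is joined to $L_{\le i-1}$ by a single shortest dipath of length at most $2\alpha$ whose other vertices lie in $L_i$.
\end{itemize}
So keep five consecutive layers and contract the lower layers into a root; they form a connected set, so planarity is preserved. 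Every root path is then made of $O(1)$ shortest dipaths of length $O(\alpha)$, alternating in direction. You must also compute portal distances with the contracted root deleted. Otherwise a route through the root is not a real path, and your Step~1 claim that every estimate is an upper bound fails.

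\textbf{Step~3 is then routine.} Scan $Q$ forward, and add $q$ as a portal whenever the last portal $p$ has $\delta(u,p)+\delta_Q(p,q)>\delta(u,q)+\epsilon\alpha$. Each addition lowers the potential $\delta(u,p)+\delta_Q(p,\mathrm{end}(Q))$, which starts at $O(\alpha)$, by more than $\epsilon\alpha$. Hence $O(1/\epsilon)$ portals suffice, and a single crossing vertex of the shortest $u\to v$ path is enough; no first/last-contact argument is needed.

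\textbf{Query time.} Minimizing over all scales and levels gives $\tilde O(1/\epsilon)$ query time, not the near-constant time in the statement. The paper only uses the label size and polynomial preprocessing, so this does not affect it.
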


A planar distance emulator is a compact planar (di)graph that approximately preserves the pairwise distances among a designated set of vertices, called \EMPH{terminals}. Informally, when we are concerned only with distances among a set of terminals, Thorup's distance oracle can be viewed as a non-planar distance emulator since it provides a compact representation from which the approximate distance between any pair of terminals can be recovered efficiently. In our algorithm, however, we require a directed planar distance emulator. 

\begin{definition}[Approximate Planar Distance Emulator ~\cite{CKT22}]\label{def:planar_distance_emulator}
Let $G=(V,E)$ be a weighted (di)graph, let $T_0 \subseteq V$ be a set of terminals, and let $\epsilon>0$. A weighted planar (di)graph $H$ is a \EMPH{$(1+\epsilon)$-approximate planar distance emulator} of $G$ with respect to $T_0$ if $T_0 \subseteq V(H)$ and, for every pair $u,v \in T_0$, $\text{dist}_G(u,v) \leq \text{dist}_H(u,v) \leq (1+\epsilon)\text{dist}_G(u,v)$.
The graph $H$ need not be a subgraph or a minor of $G$ and may contain additional Steiner vertices.
\end{definition}

\subsection{Algorithm}

\begin{theorem}\label{thm:distributed_distance_label_computation}
For every weighted directed planar graph $G=(V,E,\omega)$ with $|V|=n$, let $D$ denote the unweighted hop-diameter of $G$. There exists an algorithm that computes a $(1+\epsilon)$-approximate distance label of size $\widetilde{O}(D)$ for every vertex in $G$ in $\OO(D/\epsilon)$ rounds.
\end{theorem}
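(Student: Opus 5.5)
We compute labels bottom-up over the BDD tree $\mathcal{T}$ from \Cref{lem:computation_of_BDD}, processing all bags of one depth in parallel. By property 6, each edge lies in at most two bags per depth. Because $\mathcal{T}$ has height $O(\log n)$, it suffices to process one level in $\OO(D/\epsilon)$ rounds.

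The invariant, for every bag $X$ after it is processed, is the following. Every vertex $v\in X$ knows $(1+\epsilon')$-approximate values of $\text{dist}_{G[X]}(v,s')$ and $\text{dist}_{G[X]}(s',v)$ for all $s'\in\partial\mathbb{S}^+_X$. Furthermore, $X$ holds a Thorup oracle (\Cref{lem:distance_oracle}) for a planar graph approximating $G[X]$, whose per-vertex information has size $\OO(1/\epsilon)$. We set $\epsilon'=\Theta(\epsilon/\log n)$ so that the $O(\log n)$ levels compound to only $1+\epsilon$.

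\emph{Leaf bags} have size $O(D\log n)$, so we gather the entire bag at one vertex along $T[X]$ in $\OO(D)$ rounds, solve the problem locally, and send the answers back.

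\emph{Internal bags.} Consider an internal bag $X$ with children $X_1,\dots,X_k$. By the invariant, every child knows approximate distances between its own boundary vertices in $\partial\mathbb{S}^+_{X_j}$. These boundary vertices lie on $O(\log n)$ holes (\Cref{lem:logn_holes}), and the boundary of each hole comes with a cyclic order given by the separator paths. This is exactly where \Cref{thm:os-embedding} enters. For each child and each hole we collect the $O(D\log n)$ boundary vertices. We then build an Okamura--Seymour embedding following the known cyclic order, using distortion $1+\epsilon'$. Such an embedding exists because the child restricted to that hole is itself an OS instance, so the true distances satisfy Monge.

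Holes with several boundary components need care. We would try to reduce to a single face by cutting along $O(\log n)$ tree paths of $T[X]$ that join the holes, and treating those cut paths as extra boundary.

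We then glue the planar emulators of the children along $S_X$ to obtain a planar emulator of $G[X]$ restricted to $\partial\mathbb{S}^+_X$. Gluing preserves distances up to $1+\epsilon'$ by property 7(b): every cross-child path goes through $S_X$. On this planar emulator we run Thorup's oracle, and broadcast its labels, which total $\OO(D/\epsilon)$ words, along $T[X]$ by pipelining in $\OO(D/\epsilon)$ rounds. Each $v\in X_j$ combines its child-level distances to $\partial\mathbb{S}^+_{X_j}$ with oracle queries, and so obtains its distances to $\partial\mathbb{S}^+_X$ in $G[X]$. It stores these as layer $L(X,v)$.

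Some points are left open. Only $O(\log n)$ layers are stored, each with $O(D\log n)$ hubs, so the label size is $\OO(D)$. Converting these labels into an SSSP tree, and bounding local computation as polynomial via \Cref{thm:os-embedding} and \Cref{thm:main}, are routine. The main obstacle is the previous step: obtaining a valid cyclic ordering with a single outer face for the child emulators, so that \Cref{thm:os-embedding} applies and the glued result is still planar. The remaining cost is keeping the per-level broadcast at $\OO(D/\epsilon)$ rather than $\OO(D^2)$, which is why we broadcast only Thorup labels and never all pairwise distances.
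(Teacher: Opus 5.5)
There is a genuine gap, and it is the step you flag yourself as the main obstacle. Building a separate Okamura--Seymour embedding for each hole of a child does not work. Each such emulator only preserves distances among terminals on that one hole. Every cross-hole distance in $G[X_j]$ is therefore lost, for instance from $S_X\cap X_j$ to pieces of older separators lying on another hole. You also give no way to assemble $O(\log n)$ independently built emulators into one planar graph that reproduces those distances. \Cref{thm:os-embedding} applies only when all terminals lie on a single face in a known cyclic order. So reducing to one face is the core of the argument, and ``we would try to cut along tree paths'' is not yet a proof.

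What is missing is a way to choose the cut paths \emph{consistently across levels}; this is what the paper's hub network $\mathcal{P}_X$ (\Cref{def:hub_network}) does. The paths cannot be picked afresh per bag, because their vertices, with both copies created by cutting, must be terminals of the child's emulator. This is needed for two reasons. First, after applying \Cref{thm:os-embedding} to the cut-open graph $O_{X_i}$ along the cyclic order $\sigma_{X_i}$ of $\mathcal{J}_{X_i}$, you glue the copies back to get a planar emulator of $G[X_i]$ whose terminals are embedded as in $G[X_i]$. Second, after gluing the children along $S_X$, the parent must locate $\mathcal{P}_X$ inside the glued emulator $H_X$ and cut it open again.

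This forces $\mathcal{P}_{X_i}$ to be derived from $\mathcal{P}_X$: delete overlaps with the tree path of $S_X$, add at most one tree path joining $\mathcal{P}_X\cup\partial\mathbb{S}^*_X$ to $S_X$, and restrict to $X_i$. You then need three invariants:
\begin{itemize}
\item the paths are pairwise internally disjoint and internally disjoint from $\partial\mathbb{S}^*_X$ (\Cref{clm:disjoint_hub_network});
\item together with $\partial\mathbb{S}^*_X$ they form a connected subtree of $T[X]$ (\Cref{lem:connectivity_of_hole_tree}), so cutting them in a suitable order yields exactly one hole without enclosing new faces (\Cref{lem:cut_hub_network_merges_holes});
\item $|\mathcal{P}_X|=\OO(D)$ (\Cref{clm:size_of_p}, using a BFS tree of depth $O(D)$), so the terminal set stays $\OO(D)$ and each broadcast stays $\OO(D/\epsilon)$.
\end{itemize}

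Accordingly, what a child passes to its parent must be an oracle for distances in the cut-open graph $O_{X_i}$ among $\mathcal{J}_{X_i}$, together with the order $\sigma_{X_i}$. Distances in $G[X_i]$ among separator vertices alone are not enough. Once this is in place, the rest of your outline matches the paper: gluing along $S_X$, Thorup labels, $\epsilon'=\Theta(\epsilon/\log n)$, and the per-vertex layer computation.
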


To begin, we fix a planar embedding of $G$ and compute a BFS spanning tree $T$ of the unweighted and undirected version of $G$ in $O(D)$ rounds. We then apply the algorithm in \Cref{lem:computation_of_BDD} to compute a BDD of $G$ in $\OO(D)$ rounds, and let $\mathcal{T}$ denote the resulting recursion tree. In the remainder of this subsection, we show how to compute the distance labels along $\mathcal{T}$.

We first show that \Cref{thm:distributed_distance_label_computation} directly implies \Cref{thm:distributed}. Suppose that the distance labels guaranteed by \Cref{thm:distributed_distance_label_computation} have already been computed. Let $s \in V$ be the source vertex of the DSSSP problem. Since the distance label of $s$ has size $\OO(D)$, it can be broadcast to every vertex in $G$ in $\OO(D)$ rounds.

Consider any vertex $u \in V$. Vertex $u$ now has access to both $L(u)$ and $L(s)$. By Property~(2) of \Cref{lem:computation_of_BDD}, we can identify the minimum-depth bag $X$ in $\mathcal{T}$ such that either $u$ and $s$ belong to two distinct child bags of $X$, or at least one of $u$ and $s$ belongs to the balanced cycle separator $S_X$. According to \Cref{def:distance_label}, the layers $L(X,u)$ and $L(X,s)$ contain distance information with respect to the common set of hub vertices $\partial \mathbb{S}^+_X$. By enumerating all hub vertices in $\partial \mathbb{S}^+_X$ and summing the corresponding distances stored in $L(X,u)$ and $L(X,s)$, we obtain $(1+\epsilon)$-approximations of both the distance from $s$ to $u$ and the distance from $u$ to $s$.

It remains to show how to compute all distance labels in $\OO(D/\epsilon)$ rounds.

\subsubsection{Hub Network}

\begin{figure}[t]
\centering
\begin{subfigure}[t]{0.49\textwidth}
\centering
\includegraphics[page=3, trim=64pt 239pt 73pt 206pt, clip, width=\linewidth]{figuresfinal/SSSP.pdf}
\caption{Removing the overlapping between $S_X$ and $\mathcal{P}_X$}
\label{fig:hub-network-parent}
\end{subfigure}
\hfill
\begin{subfigure}[t]{0.49\textwidth}
\centering
\includegraphics[page=4, trim=59pt 288pt 78pt 158pt, clip, width=\linewidth]{figuresfinal/SSSP.pdf}
\caption{Hub Network for the child bag}
\label{fig:hub-network-child}
\end{subfigure}

\caption{This figure illustrates the hub network construction in bag $X$. The boundary of the green region is $S_X$. In a) the black line on the boundary of the green region shows the overlapping between $S_X$ and $\mathcal{P}_X$. We remove this part from the hub network. In b) the paths represented by the black lines form the hub network for child bag $X_1$ (the child bag of $X$ outside $S_X$).}
\label{fig:hub-network}

\end{figure}

We now give an intuitive explanation of a \EMPH{hub network}. Similar to the reachability preserver of Parter~\cite{Parter20}, the key to computing the distance labels in \Cref{def:distance_label} is to obtain approximate pairwise distances among the vertices in $\partial \mathbb{S}^+_X$ for every bag $X$ in $\mathcal{T}$. We compute this information in a bottom-up manner along $\mathcal{T}$. To this end, for each child bag $X_i$ of $X$, we construct a planar distance emulator whose terminal set contains $\partial \mathbb{S}^*_{X_i}$. This allows us to compress the pairwise distance information among the vertices in $\partial \mathbb{S}^*_{X_i}$ into an approximate distance oracle, which can then be broadcast to the $X$. Thus, approximate pairwise distances among the vertices in $\partial \mathbb{S}^+_X$
can be obtained since $\partial \mathbb{S}^+_X \subseteq \bigcup_i \partial \mathbb{S}^*_{X_i}$ by definition.  

\Cref{thm:os-embedding} can only compute a planar distance emulator for an Okamura--Seymour quasimetric, where all terminals must lie on the boundary of a single face. However, by \Cref{lem:logn_holes}, the vertices in $\partial \mathbb{S}^*_X$ lie on the boundaries of $O(\log n)$ holes of $G[X]$. Our \EMPH{hub network} overcomes this obstacle by adding $O(\log n)$ unweighted and undirected paths from $T[X]$ that connect these holes, each of which has length $O(D)$. By cutting lengthwise along the aforementioned paths we create corridors which connect the previously disjoint empty sides of the holes, thereby merging them into a single hole. Consequently, in the resulting graph all vertices in $\partial \mathbb{S}^*_X$ lie on the boundary of one hole.

Formally, we give the following construction:

\begin{definition}[Hub Network]\label{def:hub_network}
For each bag $X$ of $\mathcal{T}$, we construct its corresponding \EMPH{hub network} $\mathcal{P}_X$ recursively in a top-down manner. The hub network $\mathcal{P}_X$ is a collection of simple tree paths in $T[X]$. By a slight abuse of notation, we also use $\mathcal{P}_X$ to denote the set of vertices contained in these paths.

If $X$ is the root bag of $\mathcal{T}$, we define $\mathcal{P}_X:=\emptyset$. Now suppose that $X$ is a non-leaf bag, and let $X_i$ be a child bag of $X$. We construct $\mathcal{P}_{X_i}$ as follows.

\begin{enumerate}
    \item \textbf{Initialization.} Initialize a temporary collection of paths by setting $\mathcal{P}':=\mathcal{P}_X$.

    \item \textbf{Removing overlaps with the current separator.} Recall that $S_X$ consists of a simple tree path $q$ in $T[X]$ together with an additional edge not contained in $T[X]$. For every path $p\in\mathcal{P}'$ that overlaps $q$, we remove the common subpath from $p$. More precisely: without loss of generality suppose that $p$ is a path between $a$ and $d$, with the vertices $a$--$b$--$c$--$d$ appearing in this order along $p$, where $a\neq d$ but consecutive vertices among $a,b,c,d$ may coincide. Furthermore, suppose the subpath between $b$ and $c$, is also a subpath of $q$. We remove $p$ from $\mathcal{P}'$ and add the subpath between $a$ and $b$ if $a\neq b$, and the subpath between $c$ and $d$ if $c\neq d$. (see Fig.~\ref{fig:hub-network} for visualization.)

    \item \textbf{Maintaining connectivity.} If $(\mathcal{P}_X\cup\partial\mathbb{S}^*_X)\cap S_X\neq\emptyset$, no additional path is needed, and we skip this step. Observe that Step~(2) and Step~(3) are mutually exclusive: Step~(2) is performed only when some path in $\mathcal{P}_X$ overlaps the tree-path portion of $S_X$, which implies $(\mathcal{P}_X\cup\partial\mathbb{S}^*_X)\cap S_X\neq\emptyset$, whereas Step~(3) is performed only when this intersection is empty. 
    
    If $(\mathcal{P}_X\cup\partial\mathbb{S}^*_X)\cap S_X = \emptyset$, we add a simple tree path connecting $\mathcal{P}_X\cup\partial\mathbb{S}^*_X$ to $S_X$. Since $\mathcal{P}_X\cup\partial\mathbb{S}^*_X\subseteq X$, $S_X\subseteq X$, and $T[X]$ is connected, there exists a simple tree path $p$ in $T[X]$ with one endpoint in $\mathcal{P}_X\cup\partial\mathbb{S}^*_X$ and the other endpoint in $S_X$, whose internal vertices belong to neither set. We add $p$ to $\mathcal{P}'$.

    \item \textbf{Restricting the hub network to a child bag.} After the preceding steps, every path in $\mathcal{P}'$ intersects $S_X$ only at its endpoints. Consequently, each such path is entirely contained in exactly one child bag of $X$. We define $\mathcal{P}_{X_i}:=\{p\in\mathcal{P}' : p\subseteq T[X_i]\}$.

\end{enumerate}
\end{definition}

We now explain how to compute the hub networks in the distributed setting. We only need to maintain that vertices and tree edges belonging to $\mathcal{P}_X$ know that they are in the hub network of $X$.

\begin{lemma}
    We can compute $\mathcal{P}_X$ for every $X \in \mathcal{T}$ in $\OO(D)$ rounds. 
\end{lemma}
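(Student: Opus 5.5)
We compute the hub networks top-down along $\mathcal{T}$, one depth at a time. By Property~(1) of \Cref{lem:computation_of_BDD} there are $O(\log n)$ levels, so it suffices to process all bags at a fixed depth $d$ in parallel in $\OO(D)$ rounds. The invariant is that after processing depth $d$, every vertex and tree edge knows, for each bag $X$ at depth $d$ containing it (at most $O(1)$ such bags by Property~(7a), and IDs are known by Property~(2)), whether it lies in $\mathcal{P}_X$ and which path of $\mathcal{P}_X$ it belongs to. For the root, $\mathcal{P}_V=\emptyset$ and nothing needs to be done. All steps for a bag $X$ run on $T[X]$, which is a spanning tree of $G[X]$ of height $O(D)$ by Property~(4).

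\textbf{Processing a non-leaf bag $X$.} Every separator vertex already knows it is in $S_X$ from the separator computation of \Cref{lem:cycle_separator_computation}.

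\emph{Step (2).} A vertex of $\mathcal{P}_X$ lying on the tree-path part $q$ of $S_X$ marks itself. A tree edge of a path $p\in\mathcal{P}_X$ whose two endpoints both lie on $q$ is a shared edge of $p$ and $q$, so it is removed from $p$. This is purely local and takes $O(1)$ rounds. The endpoints $b,c$ of a removed segment then become endpoints of the resulting subpaths.

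\emph{Step (3).} We must decide whether $(\mathcal{P}_X\cup\partial\mathbb{S}^*_X)\cap S_X=\emptyset$. This is one bit aggregated by convergecast and broadcast over $T[X]$ in $O(D)$ rounds. If the intersection is empty:
\begin{enumerate}
    \item Run a multi-source BFS restricted to tree edges of $T[X]$, starting from the set $A=\mathcal{P}_X\cup\partial\mathbb{S}^*_X$, for $O(D)$ rounds. Each vertex records its distance to $A$ and its parent pointer toward $A$.
    \item Select, by a minimum aggregation of (distance, ID) over $T[X]$, one vertex $s\in S_X$ of minimum distance to $A$.
    \item Mark the tree path from $s$ back to $A$ by following parent pointers, which takes $O(D)$ rounds.
\end{enumerate}
Because $s$ is a closest vertex of $S_X$ and the path follows BFS parents, its internal vertices avoid both $A$ and $S_X$. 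The path is a tree path because the BFS uses only edges of $T[X]$.

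\emph{Step (4).} Each surviving path now meets $S_X$ only at its endpoints, so each non-separator vertex simply inherits its marks into the unique child bag containing it. For a separator endpoint, it keeps the mark for child $X_i$ exactly when its neighbouring path edge lies in $G[X_i]$. The vertex can check this locally, since edges know their bags.

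\textbf{Congestion.} This is the step I expect to be the main obstacle. All bags at a given depth run simultaneously, and the $O(D)$-round per-bag bounds must not be destroyed by contention for the same edges. Property~(6) says each edge belongs to at most two bags at a given depth, so running the per-bag BFS, aggregations and path markings in parallel costs only an $O(1)$ overhead per edge. Each vertex lies in $O(1)$ bags per depth, so it keeps $O(1)$ state per depth.

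We also need the number of paths in $\mathcal{P}_X$ to stay small, so that path identifiers fit in messages. Step~(2) only splits paths at intersections with $q$, and Step~(3) adds one path. An induction over depth, combined with \Cref{lem:logn_holes}, should give $O(\log n)$ paths and hence $O(\log n)$-bit labels. Even without that bound, the per-bag operations above do not actually require path identifiers, so they remain $\OO(D)$ rounds.

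Summing $\OO(D)$ rounds over $O(\log n)$ depths gives the claimed $\OO(D)$ bound.
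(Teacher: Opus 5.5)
Your proposal is correct and follows essentially the same route as the paper: process $\mathcal{T}$ top-down one depth at a time, and do Steps~(2) and~(4) locally. Step~(3) becomes a one-bit aggregation over $T[X]$ followed by a multi-source search on $T[X]$ from $\mathcal{P}_X\cup\partial\mathbb{S}^*_X$, and all bags of a depth run in parallel via Property~(6), for $O(\log n)$ levels in total.

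Your version is more careful than the paper's on two points. You run the search when the intersection is \emph{empty}, whereas the paper's text has a typo here. You also pick a single closest separator vertex, so exactly one path with the required internal-disjointness is added.

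One side remark is unjustified: that a vertex lies in $O(1)$ bags per depth does not follow from Property~(7a), which is per parent. Nothing in the round bound depends on it, since congestion is controlled per edge by Property~(6).
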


\begin{proof}
We process the bags of $\mathcal{T}$ from top to bottom. Consider a non-leaf bag $X$, and assume that $\mathcal{P}_X$ has already been computed. 

In Step~(2) of \Cref{def:hub_network}, no global broadcast is needed. Every tree edge in $T[X]$ only checks whether it is marked as an edge of $\mathcal{P}_X$ and also lies on the tree path $q$ and removes the overlapping. 

In Step~(3), trivially, we can decide whether $(\mathcal{P}_X\cup\partial\mathbb{S}^*_X)\cap S_X$ is empty in $\OO(D)$ rounds. It it is not empty, we run a multi-source search on $T[X]$ from the vertices in $\mathcal{P}_X\cup\partial\mathbb{S}^*_X$ until reaching $S_X$. Since $T[X]$ has depth $O(D)$, this takes $\OO(D)$ rounds. 

In Step~(4), no further global communication is needed.

All bags at the same depth of $\mathcal{T}$ can be processed in parallel by Property~(6) of \Cref{lem:computation_of_BDD}. Hence, the hub networks for one level can be computed in $\OO(D)$ rounds, and over all $O(\log n)$ levels the total number of rounds is $\OO(D)$.

\end{proof}

\begin{claim} \label{clm:size_of_p}
For every bag $X \in \mathcal{T}$, the size of $\mathcal{P}_X$ is bounded by $\OO(D)$.
\end{claim}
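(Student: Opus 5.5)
We bound the number of paths in $\mathcal{P}_X$ and the length of each path separately, and then multiply. Every path ever placed in a hub network is a simple tree path in $T[X']$ for some ancestor bag $X'$. By Property~(4) of \Cref{lem:computation_of_BDD}, $T[X']$ has height $O(D)$, so such a path (up to its LCA and back down) has $O(D)$ vertices. Step~(2) of \Cref{def:hub_network} only replaces a path by subpaths of itself, so path lengths never increase. It therefore suffices to show that $\mathcal{P}_X$ consists of $O(\log n)$ (or $\mathrm{polylog}(n)$) paths. The size bound $|\mathcal{P}_X| = O(D)\cdot \mathrm{polylog}(n) = \OO(D)$ then follows.

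To count paths we argue by induction down $\mathcal{T}$ and track how the count changes from $X$ to a child $X_i$. Step~(3) adds at most one path. Step~(4) only discards paths. The dangerous step is Step~(2): one path $p$ can split into two pieces, and a single path might overlap $q$ in several disjoint subpaths. We handle this with tree structure. Both $p$ and $q$ are simple paths in the same tree $T[X]$, and the intersection of two paths in a tree is a single (possibly empty) subpath. Hence each $p$ has at most one overlap with $q$, matching the $a$--$b$--$c$--$d$ description, and produces at most two pieces.

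A naive count could still double every level. To avoid this, note that after Step~(4) only pieces lying in $X_i$ survive. Moreover, each resulting piece ends at a vertex of $q \subseteq S_X$, so in $X_i$ that endpoint lies on the almost-cycle $S_X\cap G[X_i]\in\partial\mathbb{S}^*_{X_i}$. We would charge each path of $\mathcal{P}_X$ to the hole boundaries it connects. We aim for the invariant that $\mathcal{P}_X$ forms a forest-like connector among the $O(\log n)$ almost-cycles of $\partial\mathbb{S}^*_X$, so its number of paths is at most $|\partial\mathbb{S}^*_X|$ up to a constant. By \Cref{lem:logn_holes}, $|\partial\mathbb{S}^*_X| = O(\log n)$, since there is at most one almost-cycle per ancestor and the height of $\mathcal{T}$ is $O(\log n)$ by Property~(1). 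Concretely, we prove by induction that every path of $\mathcal{P}_X$ has both endpoints on $\partial\mathbb{S}^*_X\cup\mathcal{P}_X$-junctions, and that contracting each almost-cycle to a node makes the paths edges of an acyclic structure. Splitting a path at $q$ turns one edge into two edges through the new node $S_X$, which preserves acyclicity and adds one node and one edge.

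We expect this invariant argument to be the main obstacle. The cruder fallback is: the number of paths grows by at most a factor $2$ plus $1$ per level, but a path can be split only by a separator it actually meets. We would then try to show that each original path (added in Step~(3) at some depth) is split at most $O(\log n)$ times in total, giving $O(\log n)$ paths times $O(\log n)$ splits, i.e.\ $O(\log^2 n)$ paths. This still yields $\OO(D)$.
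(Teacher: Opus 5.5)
Your route differs from the paper's, and the step you flag as the main obstacle is never actually closed. The paper does not count paths at all. By the convention in \Cref{def:hub_network}, $\mathcal{P}_X$ also denotes the set of vertices on its paths, and this set can only shrink under Steps~(2) and~(4): Step~(2) replaces a path by subpaths of itself, and Step~(4) discards paths. Step~(3) adds at most one simple tree path per level, with $O(D)$ vertices because the relevant $T[X]$ has height $O(D)$. Since $\mathcal{T}$ has height $O(\log n)$, this gives $|\mathcal{P}_X| = O(D\log n)$ immediately. You already noted that Step~(2) only produces subpaths; that observation alone finishes the proof. The splitting you call ``the dangerous step'' does not matter for the quantity being bounded.

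As written, your bound on the \emph{number} of paths is not proved. The forest invariant is only sketched: you would still need to handle Step~(3) paths that end at an interior vertex of an existing path, bound the resulting junction nodes, and verify acyclicity after contraction. The fallback is phrased as something you ``would try to show.'' It closes in one line if you apply the tree-intersection fact to the \emph{original} Step~(3) path $P_0$ rather than to its current pieces. Since both are paths in the tree $T$, $q\cap P_0$ is a single subpath of $P_0$. Removing it therefore deletes or truncates pieces of $P_0$ and splits at most one piece in two. So $P_0$ gains at most one piece per level and has $O(\log n)$ pieces. With at most one new path per level, this gives $O(\log^2 n)$ paths and $O(D\log^2 n)=\OO(D)$ vertices. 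That is a log factor worse than the paper's bound but enough for the claim. Finally, \Cref{lem:logn_holes} bounds $\partial\mathbb{S}^*_X$ by $O(D\log n)$ \emph{vertices}, not $O(\log n)$. The $O(\log n)$ your invariant needs is the number of almost-cycles, which holds because there is at most one per proper ancestor of $X$.
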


\begin{proof}
By the construction in \Cref{def:hub_network}, at each level of $\mathcal{T}$, we add to $\mathcal{P}_X$ at most one simple path in the tree $T[X]$. Since $T[X]$ has depth $O(D)$, each such path contains $O(D)$ vertices. Moreover, the recursion tree $\mathcal{T}$ has $O(\log n)$ levels. Hence, the total size of $\mathcal{P}_X$ is bounded by $O(D \log n)=\OO(D)$.
\end{proof}

Since $\mathcal{P}_X$ and $\partial \mathbb{S}^*_X$ are of size $\OO(D)$, we let every vertex in $G[X]$ know the information of $\mathcal{P}_X$ and $\partial \mathbb{S}^*_X$ in $\OO(D)$ rounds, for every bag $X \in \mathcal{T}$. Moreover, we have the following important properties in \Cref{clm:disjoint_hub_network} and \Cref{lem:connectivity_of_hole_tree}.

\begin{claim}\label{clm:disjoint_hub_network}
For any bag $X \in \mathcal{T}$ and any simple tree path $p \in \mathcal{P}_X$, the internal vertices of $p$ are disjoint from $\partial \mathbb{S}^*_X$, i.e., $p$ may intersect $\partial \mathbb{S}^*_X$ only at its endpoints. Moreover, the paths in $\mathcal{P}_X$ have pairwise disjoint sets of internal vertices: for any two distinct paths $p,q \in \mathcal{P}_X$, no vertex is an internal vertex of both $p$ and $q$. For brevity, we say that $\mathcal{P}_X$ is \EMPH{internally disjoint} from $\partial \mathbb{S}^*_X$ and that $\mathcal{P}_X$ is \EMPH{pairwise internally disjoint}.
\end{claim}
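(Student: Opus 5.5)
The plan is induction on the depth of $X$ in $\mathcal{T}$, following the top-down construction of \Cref{def:hub_network}. The invariant to carry is exactly the statement: (a) every $p\in\mathcal{P}_X$ meets $\partial\mathbb{S}^*_X$ only at its endpoints, and (b) distinct paths of $\mathcal{P}_X$ share no internal vertex. For the root, $\mathcal{P}_X=\emptyset$, so both hold trivially.

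For the inductive step, assume (a) and (b) for a non-leaf bag $X$. Let $X_i$ be a child bag. Then $\partial\mathbb{S}^*_{X_i}\subseteq(\partial\mathbb{S}^*_X\cup S_X)\cap X_i$, since the ancestors of $X_i$ are those of $X$ together with $X$ itself. I will check how each operation on $\mathcal{P}'$ affects the invariant.

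\textbf{Step (2).} Each new path is a subpath of some $p\in\mathcal{P}_X$. Its internal vertices are therefore internal vertices of $p$, so they avoid $\partial\mathbb{S}^*_X$ and the internal vertices of the other old paths, by (a) and (b). The cut removes the maximal overlap with $q$, so the pieces $p[a,b]$ and $p[c,d]$ touch $S_X$ only at $b$ and $c$. One subtlety: a path may overlap $q$ in several disjoint segments. In that case the operation is applied repeatedly, once per segment, and the same argument applies each time.

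\textbf{Step (3).} This step runs only when the intersection is empty, so Step (2) did nothing. By construction, the added path $p$ has internal vertices outside $\mathcal{P}_X\cup\partial\mathbb{S}^*_X\cup S_X$. Hence $p$ is internally disjoint from all old paths, from $\partial\mathbb{S}^*_X$, and from $S_X$.

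At this point every path in $\mathcal{P}'$ meets $\partial\mathbb{S}^*_X\cup S_X$ only at its endpoints, and the paths are pairwise internally disjoint.

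\textbf{Step (4).} This step only selects a subfamily, which preserves pairwise internal disjointness. Internal disjointness from $\partial\mathbb{S}^*_{X_i}$ follows from the containment of $\partial\mathbb{S}^*_{X_i}$ noted above.

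The main obstacle is Step (2). The claim that the leftover pieces avoid $S_X$ internally relies on the paths being tree paths. In a tree, the intersection of two paths is connected, so each $p$ overlaps $q$ in at most one segment. That handles the tree part $q$. It does not handle vertices of $p$ that lie on $S_X$ only via the extra edge $(y,z)$, because $y$ and $z$ are themselves on $q$. A single touching vertex of $q$ (a segment with $b=c$) also needs handling. I would treat $b=c$ as a degenerate overlap and split $p$ there as well, so the invariant stays intact.
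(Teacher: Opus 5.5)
Your proof is correct and follows the paper's argument essentially step for step. It proceeds by induction on depth: Step (2) only replaces paths by subpaths; Step (3) runs only when Step (2) did nothing, and adds a path whose interior avoids $\mathcal{P}_X\cup\partial\mathbb{S}^*_X\cup S_X$; and Step (4) passes to a subfamily, using $\partial\mathbb{S}^*_{X_i}\subseteq\partial\mathbb{S}^+_X$. Your extra remarks make explicit what the paper asserts in one line, namely that after Step (2) no path has an internal vertex in $S_X$: two tree paths meet in a single (possibly one-vertex) subpath, $V(S_X)=V(q)$ so the extra edge adds no vertices, and a one-vertex touch is split with $b=c$, which the definition explicitly permits.
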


\begin{proof}
We prove the claim by induction on the depth of $X$ in $\mathcal{T}$.

For the base case, $X$ is the root bag. By \Cref{def:hub_network}, we have $\mathcal{P}_X=\emptyset$, and hence both properties hold trivially.

Now assume that the claim holds for a non-leaf bag $X \in \mathcal{T}$. We prove that it also holds for every child bag $X_i$ of $X$.

Consider the construction of $\mathcal{P}_{X_i}$ in \Cref{def:hub_network}. We first show that, after Steps~(2) and~(3), $\mathcal{P}'$ is internally disjoint from $\partial \mathbb{S}^+_X$ and is pairwise internally disjoint.

Initially, $\mathcal{P}'=\mathcal{P}_X$. By the induction hypothesis, $\mathcal{P}'$ is internally disjoint from $\partial \mathbb{S}^*_X$ and is pairwise internally disjoint. In Step~(2), whenever a path in $\mathcal{P}'$ overlaps the tree-path portion of $S_X$, we remove the overlapping subpath and keep only the remaining subpaths. Thus, after Step~(2), no path in $\mathcal{P}'$ has an internal vertex in $S_X$. Since Step~(2) only replaces paths by their subpaths, it preserves pairwise internal disjointness and also preserves internal disjointness from $\partial \mathbb{S}^*_X$.

It remains to consider Step~(3). If Step~(3) is skipped, the desired properties already hold. Otherwise, Step~(3) adds a simple tree path whose internal vertices belong to neither $\mathcal{P}_X\cup\partial\mathbb{S}^*_X$ nor $S_X$. Since Step~(2) and Step~(3) are mutually exclusive by the construction, we have $\mathcal{P}'=\mathcal{P}_X$ before adding this path. Thus, the newly added path has no internal vertex in the previous paths of $\mathcal{P}'$ and no internal vertex in $\partial \mathbb{S}^+_X$. Hence, after Step~(3), $\mathcal{P}'$ remains pairwise internally disjoint and is internally disjoint from $\partial \mathbb{S}^+_X$.

Finally, in Step~(4), $\mathcal{P}_{X_i}$ is obtained by keeping exactly those paths of $\mathcal{P}'$ that are entirely contained in $T[X_i]$. Hence, pairwise internal disjointness is preserved. Moreover, since $\partial \mathbb{S}^*_{X_i}\subseteq \partial \mathbb{S}^+_X$, $\mathcal{P}_{X_i}$ is internally disjoint from $\partial \mathbb{S}^*_{X_i}$. This proves the claim for $X_i$, completing the induction.

\end{proof}

\begin{lemma}\label{lem:connectivity_of_hole_tree}
    For any bag $X \in \mathcal{T}$, consider the maximal simple tree paths in $T[X]$ contained in the almost-cycles of $\partial \mathbb{S}^*_X$ (A maximal simple tree path is not a subpath of any other simple tree path). The union of these paths together with the simple tree paths in $\mathcal{P}_X$ forms a connected subtree of $T[X]$. For simplicity, we say that $\mathcal{P}_X$ and $\partial \mathbb{S}^*_X$ form a connected subtree of $T[X]$.
\end{lemma}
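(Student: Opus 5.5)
We argue by induction on the depth of $X$ in $\mathcal{T}$, following \Cref{def:hub_network} step by step, as in \Cref{clm:disjoint_hub_network}. Write $U_X$ for the union of the maximal tree paths of $T[X]$ inside the almost-cycles of $\partial\mathbb{S}^*_X$, together with the paths of $\mathcal{P}_X$. We want $U_X$ to be a connected subtree of $T[X]$. At the root both sets are empty, so the claim holds vacuously. Now suppose the claim holds for a non-leaf bag $X$, and fix a child bag $X_i$.

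First we show that the union $U'$ of the separator portions of $\partial\mathbb{S}^+_X$ with the paths of $\mathcal{P}'$ after Steps (2)--(3) is a connected subtree of $T[X]$. We split into the two mutually exclusive cases noted in the definition.

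\textbf{Step (3) executed.} Then $\mathcal{P}'=\mathcal{P}_X\cup\{p\}$. Here $p$ is a tree path from a vertex of $U_X$ to a vertex of $S_X$. The tree-path part $q$ of $S_X$ is itself connected. Hence $U_X\cup p\cup q$ is connected, and it is a subtree because it lies in the tree $T[X]$.

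\textbf{Step (2) executed, or nothing to do.} In this case $U_X$ meets $S_X$. Replacing a path $a$--$d$ by its pieces $a$--$b$ and $c$--$d$ removes only the segment $b$--$c$, and that segment lies in $q$. Since $q$ is added to $U'$, connectivity is unchanged. So $U'=U_X\cup q$ is connected.

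Next we pass to the child. The inductive goal is that $U_{X_i}$, formed from $\partial\mathbb{S}^*_{X_i}$ (the portions of $\partial\mathbb{S}^+_X$ inside $G[X_i]$) and the paths of $\mathcal{P}'$ contained in $T[X_i]$, is connected. The key fact is that every path of $\mathcal{P}'$ meets $S_X$ only at its endpoints. So restricting the tree $U'$ to $X_i$ amounts to cutting it at vertices of $S_X$.

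\textbf{Main obstacle.} Here I expect the main obstacle: we must show $U'\cap T[X_i]$ stays connected. The only way disconnection could arise is if two pieces inside $X_i$ are joined in $U'$ solely through a route that leaves $X_i$. My plan is to argue that the portion of $S_X$ on the boundary of $X_i$ (an almost-cycle segment, by \Cref{lem:logn_holes}) belongs to $\partial\mathbb{S}^*_{X_i}$ and is connected. Every component of $U'\cap T[X_i]$ must touch this segment, since by \Cref{lem:computation_of_BDD}(7b) any tree path leaving $X_i$ passes through $S_X$. Therefore all components are linked through the segment.

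The delicate point is the virtual edge $(y,z)$ of $S_X$. If the boundary segment of $X_i$ is split into two tree pieces joined only by this edge, we need $T[X_i]$-connectivity via the maximal tree paths $q\cap X_i$. I would handle this by checking that $q$ restricted to $X_i$ is a single subpath, because $X_i$ is the region bounded by the cycle, and falling back on the maximality convention in the statement if it is not.
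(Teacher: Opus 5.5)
Your first half is correct and matches the paper: after Steps~(2)--(3), $U'$ (the paths of $\mathcal{P}'$ together with the tree parts of $\partial\mathbb{S}^+_X$) is a connected subtree of $T[X]$, by the same case split.

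The gap is the step you flag as the main obstacle: you never prove that $U'\cap T[X_i]$ is connected. Your route needs the separator portion $q\cap T[X_i]$ to be a single connected piece, and you leave this open. The geometric check you propose (``$X_i$ is the region bounded by the cycle'') does not match the construction. Child bags come from the connected components of $G[X]$ minus $S_X$, not from the two sides of the cycle, and you give no argument that the separator vertices of $X_i$ form one tree subpath. The fallback on the ``maximality convention'' does not help either. That convention only says which tree paths are extracted from the almost-cycles; it says nothing about whether those paths are connected to each other.

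The missing idea is Property~(4) of \Cref{lem:computation_of_BDD}: $T[X_i]$ is a spanning tree of $G[X_i]$, hence a connected subtree of the tree $T[X]$. In a tree, the intersection of two subtrees is again a subtree. Concretely, take $u,v\in U'\cap T[X_i]$. The unique $u$--$v$ path in $T[X]$ lies in $T[X_i]$, since $T[X_i]$ is connected and tree paths are unique. It also lies in $U'$, since $U'$ is a connected subtree. Hence it lies in $U'\cap T[X_i]$.

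This is exactly how the paper finishes. It makes your ``every component touches the separator segment'' detour unnecessary. Applied with $q$ in place of $U'$, it also shows directly that $q\cap T[X_i]$ is a single subpath, which disposes of your worry about the virtual edge $(y,z)$.
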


\begin{proof}
We prove the lemma by induction on the depth of $X$ in $\mathcal{T}$.

It suffices to prove connectivity. Indeed, $\mathcal{P}_X$ and the maximal simple tree paths induced by $\partial \mathbb{S}^*_X$ are subgraphs of $T[X]$, and hence their union is acyclic.

For the base case, $X$ is the root bag. Then $\partial \mathbb{S}^*_X=\emptyset$, and by the construction in \Cref{def:hub_network}, we also have $\mathcal{P}_X=\emptyset$. Thus, the lemma holds trivially.

For the induction step, suppose that the lemma holds for a non-leaf bag $X$. We prove that it also holds for every child bag $X_i$ of $X$.

We first show that, at the end of the construction in \Cref{def:hub_network}, the temporary collection $\mathcal{P}'$ together with $\partial \mathbb{S}^+_X$ forms a connected subtree of $T[X]$. By the induction hypothesis, $\mathcal{P}_X$ and $\partial \mathbb{S}^*_X$ form a connected subtree of $T[X]$. If Step~(2) of \Cref{def:hub_network} removes an overlap, then some path of $\mathcal{P}_X$ intersects the tree-path portion of $S_X$. Hence, $\mathcal{P}_X$ is already connected to $S_X$, and therefore $\mathcal{P}_X$ together with $\partial \mathbb{S}^+_X$ is connected in $T[X]$. Step~(2) only removes common subpaths between paths of $\mathcal{P}_X$ and the tree-path portion of $S_X$; the removed portions are now represented by $S_X$ itself, which is included in $\partial \mathbb{S}^+_X$. Moreover, Step~(3) is skipped whenever Step~(2) is performed. Thus, after Step~(2), the resulting collection $\mathcal{P}'$ together with $\partial \mathbb{S}^+_X$ still forms a connected subtree of $T[X]$.

If Step~(2) does not remove any overlap and $(\mathcal{P}_X\cup\partial \mathbb{S}^*_X)\cap S_X\neq \emptyset$, then $\mathcal{P}'=\mathcal{P}_X$, and $\partial \mathbb{S}^+_X$ is already connected to $\mathcal{P}_X\cup\partial \mathbb{S}^*_X$. Hence, $\mathcal{P}'$ together with $\partial \mathbb{S}^+_X$ forms a connected subtree of $T[X]$.

It remains to consider the case where Step~(2) does not remove any overlap and $(\mathcal{P}_X\cup\partial \mathbb{S}^*_X)\cap S_X=\emptyset$. In this case, Step~(3) adds a simple tree path in $T[X]$ connecting $\mathcal{P}_X\cup\partial \mathbb{S}^*_X$ to $S_X$. Therefore, after adding this path, $\mathcal{P}'$ together with $\partial \mathbb{S}^+_X$ forms a connected subtree of $T[X]$.

Let $T'$ denote the connected subtree of $T[X]$ formed by $\mathcal{P}'$ together with $\partial \mathbb{S}^+_X$. In Step~(4) of \Cref{def:hub_network}, we define $\mathcal{P}_{X_i}:=\{p\in\mathcal{P}' : p\subseteq T[X_i]\}$. Moreover, we have $\partial \mathbb{S}^+_X\cap X_i=\partial \mathbb{S}^*_{X_i}$. Hence, $\mathcal{P}_{X_i}$ together with $\partial \mathbb{S}^*_{X_i}$ is exactly the subgraph $T'\cap T[X_i]$.

It remains to show that $T'\cap T[X_i]$ is connected. Consider any two vertices $u,v\in T'\cap T[X_i]$. Since $T[X]$ is a tree, there is a unique simple path $p_{u,v}$ between $u$ and $v$ in $T[X]$. Since $T[X_i]$ is a connected subtree of $T[X]$ by Property~(4) of \Cref{lem:computation_of_BDD}, the path $p_{u,v}$ is contained in $T[X_i]$. Similarly, since $T'$ is a connected subtree of $T[X]$ and $u,v\in T'$, the same path $p_{u,v}$ is contained in $T'$. Hence, $p_{u,v}$ is contained in $T'\cap T[X_i]$, and thus any two vertices of $T'\cap T[X_i]$ are connected.

Consequently, $\mathcal{P}_{X_i}$ and $\partial \mathbb{S}^*_{X_i}$ form a connected subtree of $T[X_i]$. This completes the induction.

\end{proof}

For each bag $X \in \mathcal{T}$, let \EMPH{$\mathcal{H}^*_X$} denote the collection of $O(\log n)$ holes of $G[X]$ guaranteed by \Cref{lem:logn_holes} for $\partial \mathbb{S}^*_X$. 

We now explain why \Cref{clm:disjoint_hub_network} and \Cref{lem:connectivity_of_hole_tree} imply that, for every bag $X \in \mathcal{T}$, cutting open the simple tree paths in $\mathcal{P}_X$ merges the empty sides of all holes in $\mathcal{H}^*_X$ into a single hole.

We first explain the standard \EMPH{cut-open operation}: when cutting open a
simple path, we duplicate the vertices and edges on the path and split the
incident edges on the two sides according to the cyclic order in the planar
embedding.

\begin{lemma}\label{lem:cut_hub_network_merges_holes}
Fix a bag $X \in \mathcal{T}$. We cut open the paths in $\mathcal{P}_X$ in an
order such that if an endpoint of a path $p$ is an internal vertex of another
path $q$, then $q$ is cut before $p$. Then all holes in $\mathcal{H}^*_X$ are
merged into a single hole. Moreover, the copies of the vertices or the original vertices in $\mathcal{P}_X \cup \partial \mathbb{S}^*_X$ lie on the boundary of this single hole. 
\end{lemma}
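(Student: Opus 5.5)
The plan is to argue by induction on the number of paths cut, tracking one object: the set of holes together with the cut-open slits, viewed as faces of the current graph. The invariant is that after cutting a subset of $\mathcal{P}_X$ in the prescribed order, every connected component of the union of the already-cut paths and $\partial \mathbb{S}^*_X$ corresponds to a single hole. Here "component" is taken inside the subtree guaranteed by \Cref{lem:connectivity_of_hole_tree}. Moreover, the copies of all vertices of that component lie on the boundary of that hole.

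In the base case nothing has been cut. Each almost-cycle of $\partial \mathbb{S}^*_X$ lies, by \Cref{lem:logn_holes}, consecutively on the boundary of its own hole in $\mathcal{H}^*_X$, and that hole has an empty side. We therefore treat the empty side as the face, so the invariant holds trivially.

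For the inductive step we cut the next path $p$. By \Cref{clm:disjoint_hub_network}, the internal vertices of $p$ avoid $\partial \mathbb{S}^*_X$ and all other paths of $\mathcal{P}_X$. Hence $p$ touches the current structure only at its two endpoints. Each endpoint is either a vertex of an almost-cycle, an endpoint of another hub path, or an internal vertex of an earlier-cut path. The ordering rule guarantees that in the last case the host path $q$ has already been cut, so the endpoint's copy already lies on the boundary of the face associated with $q$'s component.

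Cutting $p$ open then does the following. Doubling its vertices and edges, and splitting incident edges by the cyclic order, creates a thin slit face bounded by the two copies of $p$. At each endpoint that already lies on a hole boundary, this slit opens into that hole locally. Concretely, I would check locally in the rotation system that the slit and the hole share the endpoint's corner, so the two faces become one. Consequently the slit merges the faces of the (at most two) components containing the endpoints of $p$. Both copies of every vertex of $p$ lie on the merged boundary, which preserves the invariant.

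After all paths are cut, \Cref{lem:connectivity_of_hole_tree} says the union is a single connected subtree. The invariant therefore yields a single hole whose boundary contains copies of every vertex of $\mathcal{P}_X \cup \partial \mathbb{S}^*_X$.

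The main obstacle is the local topological step at an endpoint: showing that cutting a path whose endpoint lies on an existing hole boundary really joins the slit to that hole, rather than to some other face. There are two delicate cases. The first is when the endpoint is an internal vertex of an already-cut path $q$; there we must pick the correct copy of that vertex, which is precisely why the cutting order is imposed. The second is when the endpoint is a separator vertex whose hole has an empty exterior rather than an empty interior. I would handle both by a rotation-system argument. The endpoint's incident edges are split into two angular intervals by the doubled path. At least one corner adjacent to $p$ at that endpoint belongs to the hole, because $p$ leaves into the nonempty side while the hole occupies the empty side, and that corner is retained after cutting. Acyclicity, which follows from everything lying in the tree $T[X]$, ensures that no cut disconnects the surface into separate pieces, so merging is all that happens.
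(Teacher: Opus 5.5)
Your proposal follows essentially the paper's argument: you process the paths in the prescribed order, use internal disjointness so that each newly cut path meets the already-opened structure only at its endpoints and so opens a corridor joining at most two current holes, and conclude via \Cref{lem:connectivity_of_hole_tree} and acyclicity of $T[X]$ that everything becomes one hole carrying copies of all of $\mathcal{P}_X \cup \partial \mathbb{S}^*_X$; your component-to-face invariant and local rotation-system check only make explicit what the paper asserts. One small correction: \Cref{clm:disjoint_hub_network} does not say that internal vertices of $p$ avoid all vertices of other hub paths, since an endpoint of another path may be internal to $p$, so the fact that $p$ touches the current structure only at its endpoints also needs the cutting order applied with the roles of $p$ and $q$ swapped.
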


\begin{proof}
Such an order exists because, by \Cref{clm:disjoint_hub_network}, the paths in $\mathcal{P}_X$ are pairwise internally disjoint, and by \Cref{lem:connectivity_of_hole_tree}, $\mathcal{P}_X$ form an acyclic connected subgraph of $T[X]$.

Consider the paths in $\mathcal{P}_X$ in the prescribed order. The first path that we cut open has endpoints on the boundaries of two distinct holes in $\mathcal{H}^*_X$. Indeed, if one endpoint were an internal vertex of another path, then that other path would have to be cut first by the choice of the order. Since the internal vertices of this first path are disjoint from $\partial \mathbb{S}^*_X$ by \Cref{clm:disjoint_hub_network}, the path meets
these two holes only at its endpoints. Thus, cutting it open creates a corridor between the two holes and merges them into a single hole.

Now consider any later path $p \in \mathcal{P}_X$ inductively. By the prescribed order, if an endpoint of $p$ is an internal vertex of another path $q$, then $q$ has already been cut open. Thus, after cutting open $q$, the corresponding copy of this endpoint lies on the boundary of a merged hole by the induction assumption. Otherwise, this endpoint of $p$ is a vertex or a copy of a vertex from $\partial \mathbb{S}^*_X$, which is also a boundary vertex of holes in $\mathcal{H}^*_X$ or merged hole by the induction assumption. Thus, when $p$ is cut open, its endpoints lie on the boundaries of two current holes, where current holes may already be the result of merging several holes from $\mathcal{H}^*_X$.

By \Cref{clm:disjoint_hub_network}, the internal vertices of $p$ are disjoint from $\partial \mathbb{S}^*_X$ and the internal vertices of all other paths in $\mathcal{P}_X$. Hence, cutting open $p$ creates a corridor between these two current holes without passing through any other hole in $\mathcal{H}^*_X$ or any previously opened corridor. Therefore, cutting open $p$ merges these two current holes into a single current hole.

Finally, by \Cref{lem:connectivity_of_hole_tree}, $\mathcal{P}_X$ and $\partial \mathbb{S}^*_X$ form a connected subtree of $T[X]$. Thus, after all paths in $\mathcal{P}_X$ have been cut open, the empty sides of all holes in $\mathcal{H}^*_X$ are connected through the opened corridors. Since this structure is a subtree of $T[X]$, these corridors do not enclose any additional face. Consequently, all holes in $\mathcal{H}^*_X$ are merged into a single hole. Since all the paths are cut open and all the holes in $\mathcal{H}^*_X$ are merged, the copies of the vertices or the original vertices in $\mathcal{P}_X \cup \partial \mathbb{S}^*_X$ lie on the boundary of this single hole. 
\end{proof}

We also provide a high-level and more intuitive interpretation of the above proof. Consider the auxiliary graph obtained as follows. For each hole in $\mathcal{H}^*_X$, contract the portion of $\partial \mathbb{S}^*_X$ lying on its boundary into a single node. Since, by \Cref{clm:disjoint_hub_network}, the paths in $\mathcal{P}_X$ do not contain vertices of $\partial \mathbb{S}^*_X$ as internal vertices, no path in $\mathcal{P}_X$ is contracted in this process. Moreover, by \Cref{lem:connectivity_of_hole_tree}, before the contraction, $\mathcal{P}_X$ and $\partial \mathbb{S}^*_X$ form a connected subtree of $T[X]$. Hence, after the contraction, the resulting auxiliary graph remains connected and acyclic, which is a tree. Cutting open the paths in $\mathcal{P}_X$ can then be viewed as cutting open the edges of this contracted tree. Topologically, this opens corridors along the tree and merges the holes in $\mathcal{H}^*_X$ into a single hole (Fig. \ref{fig:SSSP-cutting-gluing} illustrates how cutting merges the holes). The boundary of the merged hole follows the contour walk of the contracted tree, equivalently the DFS tour that records both the forward steps and the backtracking steps. In particular, each cut path appears twice on the boundary, once on each side of the opened corridor.

\subsubsection{Distance Label Computation}

\begin{figure}
\centering
\begin{subfigure}[t]{0.32\textwidth}
\centering
\includegraphics[page=7, trim=50pt 250pt 74pt 158pt, clip, width=\linewidth]{figuresfinal/SSSP.pdf}
\caption{DFS tour for the hub network of $X_1$}
\label{fig:SSSP-before-cutting}
\end{subfigure}
\hfill
\begin{subfigure}[t]{0.305\textwidth}
\centering
\includegraphics[page=5, trim=41pt 250pt 74pt 140pt, clip, width=\linewidth]{figuresfinal/SSSP.pdf}
\caption{Cutting open each path in $\mathcal{P}_{X_1}$}
\label{fig:SSSP-cutting}
\end{subfigure}
\hfill
\begin{subfigure}[t]{0.32\textwidth}
\centering
\includegraphics[page=6, trim=49pt 274pt 87pt 171pt, clip, width=\linewidth]{figuresfinal/SSSP.pdf}
\caption{Gluing $H_{X_1}'$ and $H_{X_2}'$ via vertices in $S_X$}
\label{fig:SSSP-gluing}
\end{subfigure}

\caption{In this figure we illustrate the cutting and gluing procedures. The DFS tour in (a) along with $\mathcal{P}_{X_1}$ and $\mathcal{H}^*_{X_1}$ corresponds to the boundary of the single hole after cutting open each path in $\mathcal{P}_{X_1}$, as illustrated in (b). In (c), in $K_{X_1}$, we glue each path in $\mathcal{P}_{X_1}$ to obtain $H_{X_1}'$. Then, we glue $H_{X_1}'$ and $H_{X_2}'$
by identifying the copies of the same vertex in $S_X$.}
\label{fig:SSSP-cutting-gluing}
\end{figure}

Now we explain how to compute the distance labels in \Cref{def:distance_label} in a bottom-up manner using the hub network. Recall that the key task is to compute the pairwise distances among the vertices in $\partial \mathbb{S}^+_X$ for every bag $X$ in $\mathcal{T}$.

Consider a non-leaf bag $X \in \mathcal{T}$, and let $X_1,X_2,\dots,X_k$ be its child bags. By the definition of $\partial \mathbb{S}^+_X$, $\partial \mathbb{S}^+_X \subseteq \bigcup_{i=1}^k \partial \mathbb{S}^*_{X_i}$. Thus, for each child bag $X_i$, the distances between pairs of vertices in $\partial \mathbb{S}^+_X$ whose shortest paths are contained in $G[X_i]$ can be obtained directly from the information already computed for $X_i$. It remains to handle the case where a shortest path between two vertices of $\partial \mathbb{S}^+_X$ is not contained in a single child bag of $X$. By Property~(7) of \Cref{lem:computation_of_BDD}, any path in $G[X]$ that goes from one child bag of $X$ to another must intersect the balanced cycle separator $S_X$. Hence, if a shortest path from $u \in \partial \mathbb{S}^+_X$ to $v \in \partial \mathbb{S}^+_X$ passes through multiple child bags of $X$, then it must pass through some vertex of $S_X$. In this case, the distance can be computed via $S_X$ easily since $S_X \subseteq \bigcup_{i=1}^k \partial \mathbb{S}^*_{X_i}$.

The high-level idea is now clear. The main difficulty lies in compressing the pairwise distance information among the vertices in $\partial \mathbb{S}^*_X$ so that it is small enough to be efficiently broadcast to a central vertex in the parent bag at each level of $\mathcal{T}$. By Property~(5) of \Cref{lem:computation_of_BDD}, every leaf bag has small size, and fast communication is possible within a leaf bag. For each leaf bag, we can directly compute an approximate distance oracle for the planar graph induced by that bag.

However, this small-size property does not necessarily hold for a non-leaf bag $X \in \mathcal{T}$. Suppose that, for every child bag $X_i$ of $X$, we have already computed the pairwise distances among the vertices in $\partial \mathbb{S}^*_{X_i}$. Instead of computing an approximate distance oracle directly for $G[X]$, we compute an approximate planar emulator for $G[X]$, with the vertices in $\bigcup_i \partial \mathbb{S}^*_{X_i}$ serving as terminals. We then compute an approximate distance oracle for this planar emulator and broadcast the compressed oracle information to the next parent bag and so on until $X$ is the root. To apply \Cref{thm:os-embedding}, we cut open the hub network $\mathcal{P}_{X_i}$ and apply \Cref{lem:cut_hub_network_merges_holes}, the holes associated with $\partial \mathbb{S}^*_{X_i}$ are merged into a single hole. Hence, the pairwise distances among these terminals form an Okamura--Seymour quasimetric, and \Cref{thm:os-embedding} can be applied to compute an approximate planar emulator for each child bag separately.

Formally, we give the detailed description of the algorithm.

To begin with, for each bag $X \in \mathcal{T}$, let \EMPH{$O_X$} denote the plane graph obtained from $G[X]$ by cutting open all paths in $\mathcal{P}_X$ according to the order specified in \Cref{lem:cut_hub_network_merges_holes}. Let \EMPH{$\mathcal{J}_X$} denote all the vertices and copies of $\mathcal{P}_X \cup \partial \mathbb{S}^*_X$ in $O_X$ after cutting open each path in $\mathcal{P}_X$. Note that $O_X$ is an Okamura-Seymour graph with vertices $\mathcal{J}_X$ lying on its one-face boundary. By \Cref{clm:size_of_p} and \Cref{lem:logn_holes}, $\mathcal{J}_X$ is bounded by $\OO(D)$.

Furthermore, let $d$ be the height of $\mathcal{T}$. We choose $\epsilon'>0$ such that $(1+\epsilon')^{d}=1+\epsilon$. Since $d=O(\log n)$ by Property~(1) of \Cref{lem:computation_of_BDD}, we have $\epsilon'=O(\epsilon/\log n)$.

First, we prove the following lemma by induction.

\begin{lemma}\label{lem:broadcasting_distance_oracle}
    For each bag $X \in \mathcal{T}$, we can compute a distance oracle associated with $X$ such that querying this oracle returns a $(1+\epsilon)$-approximation of the distance in $O_X$ between any two vertices in $\mathcal{J}_X$. Moreover, the size of the oracle information associated with the vertices in $\mathcal{J}_X$ is bounded by $\OO(D/\epsilon)$. The total communication can be completed in $\OO(D/\epsilon)$ rounds.
\end{lemma}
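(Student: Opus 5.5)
We prove the statement by bottom-up induction on $\mathcal{T}$, maintaining a stronger invariant with the per-level parameter $\epsilon'$. For a bag at height $h$ above the leaves, the oracle returns a $(1+\epsilon')^{h}$-approximation of $\text{dist}_{O_X}(u,v)$ for $u,v\in\mathcal{J}_X$. Since $h\le d$ and $(1+\epsilon')^{d}=1+\epsilon$, this gives the claimed bound.

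\textbf{Base case.} Let $X$ be a leaf bag. By Property~(5) of \Cref{lem:computation_of_BDD}, $|X|=O(D\log n)$, and by Property~(3), $G[X]$ has diameter $O(D\log n)$. So in $\OO(D)$ rounds all edges of $G[X]$, with their weights and directions, can be collected at a leader. The leader builds $O_X$ locally; this is a planar digraph. It then applies \Cref{lem:distance_oracle} with parameter $\epsilon'$, which takes polynomial local time and yields labels of size $\OO(1/\epsilon')$ per vertex. Only the labels of vertices in $\mathcal{J}_X$ are retained. Since $|\mathcal{J}_X|=\OO(D)$, their total size is $\OO(D/\epsilon')=\OO(D/\epsilon)$, and they can be pipelined to wherever they are needed in $\OO(D/\epsilon)$ rounds.

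\textbf{Inductive step.} Let $X$ be a non-leaf bag with children $X_1,\dots,X_k$. For each child $X_i$ we already have oracle information of size $\OO(D/\epsilon)$ for $\mathcal{J}_{X_i}$, with approximation factor $(1+\epsilon')^{h-1}$. We broadcast all of it to a leader of $X$. By Property~(7a) each separator vertex lies in at most three children, and by Lemma~\ref{lem:logn_holes} the terminal sets are of size $\OO(D)$; since these sets are essentially disjoint apart from $S_X$, the total information is $\OO(D/\epsilon)$. Because $G[X]$ has diameter $\OO(D)$, pipelined broadcast costs $\OO(D/\epsilon)$ rounds. Bags at the same depth run in parallel by Property~(6), and there are $O(\log n)$ levels, so the total is $\OO(D/\epsilon)$.

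Locally, the leader queries the child oracles to obtain approximate quasimetrics $\hat\delta_i$ on $\mathcal{J}_{X_i}$. Since the oracle estimates are not exactly Monge, we do not use them directly. Instead, by Lemma~\ref{lem:cut_hub_network_merges_holes}, $\mathcal{J}_{X_i}$ lies on one face of $O_{X_i}$, and we know its cyclic order $\sigma_i$ along that face from the DFS contour of the hub network. We apply \Cref{thm:os-embedding} to $(\mathcal{J}_{X_i},\hat\delta_i)$ with ordering $\sigma_i$. The true distances $\text{dist}_{O_{X_i}}$ scaled by $(1+\epsilon')^{h-1}$ are Monge and dominate $\hat\delta_i$ with that distortion, so the minimum-distortion embedding $H_i$ has distortion at most $(1+\epsilon')^{h-1}$. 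Its distances are also within that factor of $\text{dist}_{O_{X_i}}$, because $\hat\delta_i$ never underestimates the true distances. Each $H_i$ is a planar OS instance with $\mathcal{J}_{X_i}$ on its outer face.

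Next we re-glue the copies created by cutting the paths of $\mathcal{P}_{X_i}\setminus\mathcal{P}_X$; by Definition~\ref{def:hub_network}, these are paths of $\mathcal{P}'$ internal to $X_i$. Gluing boundary copies of the same vertex within one outer face keeps the graph planar, since the paths form a tree (Lemma~\ref{lem:connectivity_of_hole_tree}) and copies are matched along the corridor. We then glue the resulting $H'_i$ along the shared vertices of $S_X$, in the same way the children are glued in $G[X]$. Finally, we cut open $\mathcal{P}_X$ again to obtain a planar emulator of $O_X$ on $\mathcal{J}_X$.

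\textbf{Correctness of the glued emulator.} Any shortest path in $O_X$ between terminals decomposes into pieces that lie in single children. The consecutive piece endpoints lie on $S_X$ or on cut paths, and all such vertices are terminals of the children. Hence the glued graph preserves the terminal distances up to the factor $(1+\epsilon')^{h-1}$. We then build a Thorup oracle (\Cref{lem:distance_oracle}) on the glued planar emulator with parameter $\epsilon'$, which yields the factor $(1+\epsilon')^{h}$. We keep only the labels for $\mathcal{J}_X$, and $|\mathcal{J}_X|=\OO(D)$ keeps the size bound.

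\textbf{Main obstacle.} The hardest step is to justify this gluing. We must show that re-gluing the corridor copies and identifying the copies on $S_X$ keeps the emulator planar. We must also check that the cut paths of $\mathcal{P}_{X_i}$ which are not in $\mathcal{P}_X$ contribute exactly the vertices needed for pieces of paths in $O_X$ to have terminal endpoints. We would handle both points using the DFS-contour description of the merged hole given after Lemma~\ref{lem:cut_hub_network_merges_holes}.
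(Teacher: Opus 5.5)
Your outline follows the paper's proof: Thorup's oracle on $O_L$ at the leaves, and at an internal bag you ship the children's labels to a leader, convert them into OS instances with \Cref{thm:os-embedding} using the cyclic order, re-glue the hub paths, glue along $S_X$, cut open $\mathcal{P}_X$, and rebuild a Thorup oracle. The one step you justify beyond the paper, however, does not go through.

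Put $\alpha=(1+\epsilon')^{h-1}$ and let $\delta=\text{dist}_{O_{X_i}}$ on $\mathcal{J}_{X_i}$, so $\delta\le\hat\delta_i\le\alpha\delta$. Your witness $\alpha\delta$ correctly shows that the optimal distortion $\alpha^*$ found by \Cref{thm:os-embedding} is at most $\alpha$. But that theorem only promises $\hat\delta_i\le\rho_i\le\alpha^*\hat\delta_i$ for the returned distances $\rho_i$. The fact that $\hat\delta_i$ never underestimates gives only the lower bound $\rho_i\ge\delta$. For the upper bound, the oracle's overestimate and the embedding's stretch multiply, so all you can derive is $\rho_i\le\alpha^*\hat\delta_i\le\alpha^2\delta$.

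With your invariant, the provable factor therefore satisfies $\alpha_h=(1+\epsilon')\,\alpha_{h-1}^2$ with $\alpha_0=1+\epsilon'$. (Your stated invariant is also off by one at the leaves.) Hence $\alpha_h=(1+\epsilon')^{2^{h+1}-1}$, which over $d=O(\log n)$ levels is $(1+\epsilon')^{n^{O(1)}}$. Keeping this at $1+\epsilon$ would force $\epsilon'=\epsilon/n^{O(1)}$ and destroy the $\OO(D/\epsilon)$ label size. The paper's proof asserts the same $(1+\epsilon')^i$ bound for $K_{X_i}$ without argument, so this gap is inherited from its write-up rather than introduced by you; still, your explicit justification is incorrect.

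The fix is to use as witness the \emph{exact} OS quasimetric of the graph on which the child's oracle was built, not the scaled true distances. Strengthen the invariant as follows. For each bag $X$ at height $h$ there is a planar graph $O'_X$ (equal to $O_L$ at a leaf, and the re-cut glued emulator otherwise) such that:
\begin{itemize}
\item $\mathcal{J}_X$ lies on one face of $O'_X$ in the order $\sigma_X$;
\item $\text{dist}_{O_X}\le\text{dist}_{O'_X}\le(1+\epsilon')^{2h}\,\text{dist}_{O_X}$ on $\mathcal{J}_X$;
\item the stored labels form a Thorup $(1+\epsilon')$-oracle for $O'_X$.
\end{itemize}
At the parent, $(1+\epsilon')\,\text{dist}_{O'_{X_i}}$ is exactly Monge with respect to $\sigma_{X_i}$, dominates $\hat\delta_i$, and has distortion at most $1+\epsilon'$ relative to it. Hence $\alpha^*\le 1+\epsilon'$ and $\text{dist}_{O'_{X_i}}\le\hat\delta_i\le\rho_i\le(1+\epsilon')^2\,\text{dist}_{O'_{X_i}}$.

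Only the fresh Thorup error is squared at each level, so your gluing argument yields the invariant at the parent with exponent $2h$. The root oracle then has factor $(1+\epsilon')^{2d+1}$. Choosing $(1+\epsilon')^{2d+1}=1+\epsilon$ still gives $\epsilon'=\Theta(\epsilon/\log n)$ and all the stated size and round bounds.

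Two smaller points. First, the leader of $X$ must actually know $\sigma_{X_i}$, so ship it along with the labels, as the paper does. Second, it is cleaner to re-glue all of $\mathcal{P}_{X_i}$, glue along $S_X$, and only then cut open $\mathcal{P}_X$ in $H_X$, as the paper does, rather than leaving some paths cut and ``cutting again''.
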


\noindent \textbf{Base Case:}

\begin{enumerate}

\item \textbf{Broadcasting.} By Properties~(5) and~(6) of \Cref{lem:computation_of_BDD}, for every leaf bag $L \in \mathcal{T}$, all edge information inside $G[L]$ can be broadcast to a central vertex $c_L$ in $L$ in $\OO(D)$ rounds. This can be done for all leaf bags in parallel. In particular, $c_L$ learns the direction, weight, and embedding information of every edge in $G[L]$.

\item \textbf{Cutting open the hub network.} The following computation is performed locally at $c_L$. Following \Cref{lem:cut_hub_network_merges_holes}, we cut open every path in $\mathcal{P}_L$. As a result, all holes in $\mathcal{H}^*_L$ are merged into a single hole, whose boundary contains all vertices in $\mathcal{J}_L$. The resulting Okamura--Seymour instance is $O_L$ by definition. 

\item \textbf{Computing the distance oracle.} Applying the algorithm in \Cref{lem:distance_oracle} to $O_L$ yields a distance oracle whose queries return $(1+\epsilon')$-approximate distances in $O_L$ between any two vertices in $\mathcal{J}_L$. This computation takes polynomial time. Since $|\mathcal{J}_L|=\OO(D)$, the oracle information associated with the vertices in $\mathcal{J}_L$ has size $\OO(D/\epsilon)$.
\end{enumerate}

\noindent \textbf{Inductive Step:}
Suppose that every leaf bag in $\mathcal{T}$ has height $0$. Let $X \in \mathcal{T}$ be a non-leaf bag of height $i$. By the induction hypothesis, for each child bag $X_i$ of $X$, we have already computed a distance oracle associated with $X_i$ such that querying this oracle returns a $(1+\epsilon')^i$-approximation of the distance in $O_{X_i}$ between any two vertices in $\mathcal{J}_{X_i}$. Moreover, the size of the oracle information associated with the vertices in $\mathcal{J}_{X_i}$ is bounded by $\OO(D/\epsilon)$.

\begin{enumerate}
    \item \textbf{Broadcasting.} For each child bag $X_i$ of $X$, we broadcast the circular ordering \EMPH{$\sigma_{X_i}$} of the vertices in $\mathcal{J}_{X_i}$, together with the $(1+\epsilon')^i$-approximate oracle information associated with $\mathcal{J}_{X_i}$, from the central vertex $c_{X_i}$ to a central vertex $c_X$ in $X$. Since the size of this information is $\OO(D/\epsilon)$, this step can be completed in $\OO(D/\epsilon)$ rounds.

    \item \textbf{Realizing the Okamura--Seymour quasimetric.} At $c_X$, for each child bag $X_i$, we use the distance oracle associated with $X_i$ to obtain a $(1+\epsilon')^i$-distorted Okamura--Seymour quasimetric on the terminal set $\mathcal{J}_{X_i}$, with respect to the distances in $O_{X_i}$. By \Cref{thm:os-embedding}, we can compute in polynomial time an Okamura--Seymour graph \EMPH{$K_{X_i}$} with vertices in $\mathcal{J}_{X_i}$ lying on the one-face boundary that follow the circular ordering \EMPH{$\sigma_{X_i}$}. Moreover, the pairwise distances among $\mathcal{J}_{X_i}$ in $K_{X_i}$ have distortion at most $(1+\epsilon')^i$ compared with the corresponding distances in $O_{X_i}$. Since $|\mathcal{J}_{X_i}|=\OO(D)$, the size of $K_{X_i}$ is polynomial in $D$.

    \item \textbf{Obtaining an approximate planar emulator.} For each child bag $X_i$, starting from $K_{X_i}$, we locally glue back every path in $\mathcal{P}_{X_i}$ by identifying the copies of the same vertex in $\mathcal{P}_{X_i}$. This gives a planar graph \EMPH{$H'_{X_i}$}. By construction, $H'_{X_i}$ is a $(1+\epsilon')^i$-approximate planar emulator for $G[X_i]$, with the vertices in $\mathcal{P}_{X_i}\cup \partial \mathbb{S}^*_{X_i}$ serving as terminals. Moreover, the planar embedding of these terminals in $H'_{X_i}$ is \EMPH{aligned} with their embedding in $G[X_i]$.

    Hence, we can locally glue together the emulators $H'_{X_i}$ over all child bags $X_i$ of $X$ by identifying the copies of the same vertex in $S_X$ since the child bags only overlap along the separator vertices by Property~(7) of \Cref{lem:computation_of_BDD}. This produces a planar graph \EMPH{$H_X$}, which $H_X$ is a $(1+\epsilon')^i$-approximate planar emulator for $G[X]$. By the construction of the hub network in \Cref{def:hub_network}, the terminal set of $H_X$ is $\bigcup_i \left(\mathcal{P}_{X_i}\cup \partial \mathbb{S}^*_{X_i}\right) = \mathcal{P}_X\cup \partial \mathbb{S}^+_X$. Moreover, since each $K_{X_i}$ has size polynomial in $D$, the size of $H_X$ is also polynomial in $D$. (Fig. \ref{fig:SSSP-cutting-gluing} illustrates this procedure)

    \item \textbf{Cutting open the hub network.} Since $H_X$ is an aligned emulator for $G[X]$ and $\mathcal{P}_{X} \cup \partial \mathbb{S}^*_X \subseteq\mathcal{P}_{X} \cup \partial \mathbb{S}^+_X$ are the terminals, the paths in $\mathcal{P}_X$ can be identified in $H_X$. We locally cut open all paths in $\mathcal{P}_X$ in $H_X$ according to \Cref{lem:cut_hub_network_merges_holes}. Then the vertices in $\mathcal{J}_X$ lie on the boundary of a single hole. We denote the resulting planar graph by \EMPH{$O'_X$}. Since $H_X$ is a $(1+\epsilon')^i$-approximate planar emulator for $G[X]$, the pairwise distances among vertices in $\mathcal{J}_X$ in $O'_X$ have distortion at most $(1+\epsilon')^i$ compared with the corresponding distances in $O_X$.

    \item \textbf{Computing the distance oracle.}
    We locally compute a $(1+\epsilon')$-approximate distance oracle for $O'_X$ using the algorithm in \Cref{lem:distance_oracle}. Since $O'_X$ has size polynomial in $D$ and $|\mathcal{J}_X|=\OO(D)$, the distance oracle information associated with the vertices in $\mathcal{J}_X$ has size $\OO(D/\epsilon)$. Combining this additional $(1+\epsilon')$ distortion with the previous $(1+\epsilon')^i$ distortion, querying the resulting distance oracle gives a $(1+\epsilon')^{i+1}$-approximation of the pairwise distances among vertices in $\mathcal{J}_X$ in $O_X$. This completes the induction. 
\end{enumerate}

By \Cref{lem:broadcasting_distance_oracle}, for each non-leaf bag $X \in \mathcal{T}$, we have computed a $(1+\epsilon)$-approximate planar emulator $H_X$ for $G[X]$, with the vertices in $\mathcal{P}_X\cup \partial \mathbb{S}^+_X$ serving as terminals. Hence, we can obtain approximate pairwise distances among the vertices in $\partial \mathbb{S}^+_X$ within $G[X]$. 

It remains to compute approximate distances between each vertex $v \in X$ and every vertex in $\partial \mathbb{S}^+_X$ within $G[X]$. More precisely, in the directed setting, we need to compute both the distances from $v$ to the vertices in $\partial \mathbb{S}^+_X$ and the distances from these vertices to $v$. We show how to compute this information in the following lemma. This directly completes the computation of the distance labels in \Cref{def:distance_label}.

\begin{lemma}
For every bag $X \in \mathcal{T}$ and every vertex $v \in X$, vertex $v$ can locally compute $(1+\epsilon)$-approximate distances from $v$ to every vertex in $\partial \mathbb{S}^+_X$, as well as $(1+\epsilon)$-approximate distances from every vertex in $\partial \mathbb{S}^+_X$ to $v$.
\end{lemma}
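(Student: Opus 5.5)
We will proceed by induction along the BDD tree $\mathcal{T}$, top-down, using the emulators $H_X$ and the oracles from \Cref{lem:broadcasting_distance_oracle}. The target quantity for $v\in X$ and $s\in\partial\mathbb{S}^+_X$ is $\text{dist}_{G[X]}(v,s)$ (and symmetrically $\text{dist}_{G[X]}(s,v)$). The key structural fact is Property~(7) of \Cref{lem:computation_of_BDD}. Let $X_i$ be the child bag containing $v$ (or any child containing $v$ if $v\in S_X$). Then every path in $G[X]$ from $v$ to $s$ either stays inside $G[X_i]$, or leaves $X_i$ and therefore first passes through a vertex of $S_X\cap X_i\subseteq \partial\mathbb{S}^*_{X_i}$. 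Hence
\[
\text{dist}_{G[X]}(v,s)=\min\Bigl\{\text{dist}_{G[X_i]}(v,s),\ \min_{s'\in \partial\mathbb{S}^*_{X_i}}\bigl(\text{dist}_{G[X_i]}(v,s')+\text{dist}_{G[X]}(s',s)\bigr)\Bigr\},
\]
and the analogous identity holds for the reverse direction.

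The plan is therefore as follows. First, by induction on depth (from leaves upward for the label contents, but evaluated top-down for dissemination), we assume $v$ already knows $(1+\epsilon')^{j}$-approximate distances to and from all of $\partial\mathbb{S}^*_{X_i}$ within $G[X_i]$. At a leaf bag this is immediate: the central vertex $c_L$ knows $G[L]$ entirely and can send the $O(D\log n)$ distances to each vertex. Second, since $\partial\mathbb{S}^+_X\subseteq\bigcup_i\partial\mathbb{S}^*_{X_i}$ and $H_X$ is an approximate planar emulator of $G[X]$ with terminals $\mathcal{P}_X\cup\partial\mathbb{S}^+_X$, the central vertex $c_X$ computes all pairwise approximate distances $\text{dist}_{H_X}(s',s)$ for $s',s\in\partial\mathbb{S}^+_X$. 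This set has size $\OO(D)$, but $\OO(D^2)$ pairs is too many to broadcast. Instead we broadcast the Thorup oracle labels of $O'_X$ (size $\OO(D/\epsilon)$) together with the cut-open correspondence, so that each vertex locally queries distances. Third, $v$ evaluates the min-formula above locally. Distortion accumulates multiplicatively, $(1+\epsilon')$ per level over $O(\log n)$ levels, giving $1+\epsilon$ by the choice of $\epsilon'$.

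The main obstacle is the term $\text{dist}_{G[X_i]}(v,s)$ for $s\in\partial\mathbb{S}^+_X\cap X_i$, together with distances inside $G[X]$ versus inside $O_X$. The oracle approximates distances in $O_X$, the cut-open graph, not in $G[X]$. I would handle this by noting that every $s\in\partial\mathbb{S}^+_X$ lying in $X_i$ is itself in $\partial\mathbb{S}^*_{X_i}$, so the direct term is covered by the inductive knowledge at $v$. For the gluing discrepancy, I would argue that $H_X$ (before cutting) is the emulator of $G[X]$ and that cut-open copies of a vertex are at distance zero after regluing. Querying all copies and taking the minimum therefore recovers $G[X]$ distances up to the claimed factor. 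I would check this carefully: a path in $G[X]$ crossing a hub path corresponds to a concatenation of copy-to-copy pieces in $O_X$, which the minimum over copies handles by one more application of the separator-style decomposition.
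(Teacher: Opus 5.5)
Your skeleton is the paper's. It is a bottom-up induction: you call it top-down, but the hypothesis you actually invoke is the child's statement. You use the fact that $v\in X_i$ already knows distances within $G[X_i]$ to and from $\partial\mathbb{S}^*_{X_i}\supseteq S_X\cap X_i$. You route any path leaving $X_i$ through $S_X$ via Property~(7). And you get the separator-to-separator term from an $\OO(D/\epsilon)$-size broadcast oracle.

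\textbf{The gap is in which oracle you broadcast.} You had the right object in hand, since $c_X$ holds $H_X$, but you broadcast the labels of $O'_X$, i.e., the oracle of \Cref{lem:broadcasting_distance_oracle}. That oracle is built to be sent to the \emph{parent} of $X$. It approximates distances in the cut-open graph $O_X$, not in $G[X]$. It is also only guaranteed on $\mathcal{J}_X$, the copies of $\mathcal{P}_X\cup\partial\mathbb{S}^*_X$. In general $\mathcal{J}_X$ does not contain the vertices of $S_X$. Yet $S_X\subseteq\partial\mathbb{S}^+_X$, and $S_X$ is exactly where your min-formula routes the paths that leave $X_i$. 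So the term $\text{dist}_{G[X]}(s',s)$ with $s'\in S_X\cap X_i$ is not available from what you broadcast.

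Your patch for the cut does not work as stated either. A shortest path in $G[X]$ may cross the hub paths many times, so it corresponds to a chain of $O_X$-pieces through several intermediate copies. A single minimum over the copies of the two endpoints, or ``one more'' decomposition step, does not recover it. You would need a full shortest-path computation over all copies, with zero-weight edges identifying copies, plus labels for $S_X$. That amounts to re-gluing $O'_X$ back into $H_X$.

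The paper avoids this by computing a fresh $(1+\epsilon')$-approximate Thorup oracle (\Cref{lem:distance_oracle}) on the uncut planar emulator $H_X$. Its terminal set $\mathcal{P}_X\cup\partial\mathbb{S}^+_X$ already contains $S_X$, and only the labels of $\partial\mathbb{S}^+_X$ are broadcast. Queries then return $(1+\epsilon')^{i+1}$-approximate $G[X]$-distances among $\partial\mathbb{S}^+_X$ directly. Then $v$ finishes with the same decomposition through $S_X$ that you wrote down.

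A minor point concerns the leaf bag. Having $c_L$ send every vertex its own list of $\OO(D)$ distances pushes $\OO(D^2)$ words out of $c_L$, which does not fit in $\OO(D)$ rounds. The paper instead broadcasts the $\OO(D)$ edges of $G[L]$ to all of $L$ and lets each vertex compute locally.
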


\begin{proof}

We prove this lemma by induction in a bottom-up manner on $\mathcal{T}$. 

\noindent \textbf{Base Case:}

\begin{enumerate}

\item \textbf{Broadcasting.} By Properties~(5) and~(6) of \Cref{lem:computation_of_BDD}, for every leaf bag $L \in \mathcal{T}$, all edge information inside $G[L]$ can be broadcast to every vertex in $L$ in $\OO(D)$ rounds. This can be done for all leaf bags in parallel.

\item \textbf{Computing the approximate distances.} After the broadcasting step, every vertex $v \in L$ knows the entire graph $G[L]$. Hence, $v$ can locally compute $(1+\epsilon')$-approximate distances from $v$ to every vertex in $\partial \mathbb{S}^+_L$, and also $(1+\epsilon')$-approximate distances from every vertex in $\partial \mathbb{S}^+_L$ to $v$.

\end{enumerate}

\noindent \textbf{Inductive Step:}
Let $X \in \mathcal{T}$ be a non-leaf bag of height $i>0$, and let $X_i$ be an arbitrary child bag of $X$. By the induction hypothesis, every vertex $v \in X_i$ has already locally stored $(1+\epsilon')^i$-approximate distances from $v$ to every vertex in $\partial \mathbb{S}^+_{X_i}$, as well as $(1+\epsilon')^i$-approximate distances from every vertex in $\partial \mathbb{S}^+_{X_i}$ to $v$.

\begin{enumerate}

    \item \textbf{Broadcasting.} Recall that the central vertex $v_X$ has locally computed a $(1+\epsilon')^i$-approximate planar emulator $H_X$, with $\mathcal{P}_X\cup \partial \mathbb{S}^+_X$ serving as terminals. We then compute a $(1+\epsilon')$-approximate distance oracle for $H_X$ using \Cref{lem:distance_oracle}, and broadcast only the oracle information associated with the vertices in $\partial \mathbb{S}^+_X$ to every vertex in $X$. Since this information has size $\OO(D/\epsilon)$, the broadcasting can be completed in $\OO(D/\epsilon)$ rounds. Combining the distortions, every vertex in $X$ obtains $(1+\epsilon')^{i+1}$-approximate pairwise distances among the vertices in $\partial \mathbb{S}^+_X$ within $G[X]$.

    \item \textbf{Computing approximate distances.} Now, for every vertex $v \in X$, we compute $(1+\epsilon')^{i+1}$-approximate distances from $v$ to every vertex $u \in \partial \mathbb{S}^+_X$, as well as from every vertex $u \in \partial \mathbb{S}^+_X$ to $v$.

    \begin{enumerate}
        \item \textbf{When $v \in \partial \mathbb{S}^+_X$.} In this case, both $v$ and $u$ belong to $\partial \mathbb{S}^+_X$. Hence, the $(1+\epsilon')^{i+1}$-approximate distances between $v$ and $u$ in both directions can be obtained directly from the distance oracle broadcast in the previous step.

        \item \textbf{When $v \notin \partial \mathbb{S}^+_X$.} Since $v \notin S_X$, by Property~(7) of \Cref{lem:computation_of_BDD}, the vertex $v$ belongs to exactly one child bag $X_i$ of $X$.

        If $u \in X_i$ and the shortest path between $u$ and $v$ in $G[X]$ is contained in $G[X_i]$, then the desired distances have already been computed in the child bag $X_i$ by the induction hypothesis.

        Otherwise, the shortest path between $u$ and $v$ in $G[X]$ goes outside $G[X_i]$. By Property~(7) of \Cref{lem:computation_of_BDD}, such a path must intersect some vertex $t \in S_X$. Thus, we can decompose the distance through a separator vertex $t$. Without the loss of generality, the shortest path between v and u can be divided into a subpath between $v$ and $t$ within $G[X_i]$, and a subpath between $t$ and $u$ within $G[X]$.The first distance is available from the information computed in the child bag $X_i$, since $t \in S_X \subseteq \partial \mathbb{S}^+_{X_i}$; the second distance is available from the distance oracle for $\partial \mathbb{S}^+_X$, since both $t$ and $u$ belong to $\partial \mathbb{S}^+_X$. Combining these estimates gives $(1+\epsilon')^{i+1}$-approximate distances between $v$ and $u$ in both directions.
    \end{enumerate}
\end{enumerate}
This completes the induction.

\end{proof}

\paragraph{AI Disclosure.~} All results except the NP-hardness proof (\Cref{thm:embedding-hardness}) are obtained by the authors without using any AI tool. The writing and presentation are done entirely by the authors: AI is only used in finding typos and minor mistakes. For the NP-hardness proof (\Cref{thm:embedding-hardness}) in \Cref{subsec:hardness},  the main idea, including the source NP-hard problem as well as the reduction, was discovered by ChatGPT Pro 5.5. The authors' main contributions to this specific result are: (1) refuting several wrong proofs by ChatGPT; (2) guiding the proof discovery; (3) verifying and rewriting the correct proof in our own way; and (4) finding a valid range for the distortion parameter $D$ in \Cref{thm:embedding-hardness}. The authors are solely responsible for all scientific content and claims presented in this paper.

\paragraph{Acknowledgments.} The first author would like to thank Merav Parter for discussions on distributed SSSP in planar graphs, which ultimately led to this work.  The last author would like to thank Zihan Tan for patiently answering many questions about~\cite{ChenTan2025}. This work is supported by NSF grant CCF-2517033 and NSF CAREER Award CCF-2237288.

\bibliographystyle{plain}
\bibliography{ref}

\begin{thebibliography}{10}

\bibitem{abboud2018near}
Amir Abboud, Pawel Gawrychowski, Shay Mozes, and Oren Weimann.
\newblock Near-optimal compression for the planar graph metric.
\newblock In {\em Proceedings of the Twenty-Ninth Annual ACM-SIAM Symposium on Discrete Algorithms}, pages 530--549. SIAM, 2018.

\bibitem{ADPW25}
Yaseen Abd-Elhaleem, Michal Dory, Merav Parter, and Oren Weimann.
\newblock Distributed maximum flow in planar graphs.
\newblock In {\em Proceedings of the ACM Symposium on Principles of Distributed Computing}, pages 278--286, 2025.

\bibitem{ADW26}
Yaseen Abd-Elhaleem, Michal Dory, and Oren Weimann.
\newblock A simple distributed deterministic planar separator, 2026.

\bibitem{ABFPT98}
Richa Agarwala, Vineet Bafna, Martin Farach, Mike Paterson, and Mikkel Thorup.
\newblock On the approximability of numerical taxonomy (fitting distances by tree metrics).
\newblock {\em SIAM Journal on Computing}, 28(3):1073--1085, 1998.

\bibitem{Bandelt1990}
Hans-J\"{u}rgen Bandelt.
\newblock Recognition of tree metrics.
\newblock {\em SIAM Journal on Discrete Mathematics}, 3(1):1–6, February 1990.

\bibitem{Batagelj1990}
Vladimir Batagelj, Tomaž Pisanski, and J.~M.~S. Sim\~{o}es Pereira.
\newblock An algorithm for tree-realizability of distance matrices.
\newblock {\em International Journal of Computer Mathematics}, 34(3–4):171–176, 1990.

\bibitem{badoiu2005}
Mihai Bundefineddoiu, Kedar Dhamdhere, Anupam Gupta, Yuri Rabinovich, Harald R\"{a}cke, R.~Ravi, and Anastasios Sidiropoulos.
\newblock Approximation algorithms for low-distortion embeddings into low-dimensional spaces.
\newblock In {\em Proceedings of the 16th Annual ACM-SIAM Symposium on Discrete Algorithms}, SODA '05, page 119–128, 2005.

\bibitem{buneman1971}
Peter Buneman.
\newblock The recovery of trees from measures of dissimilarity.
\newblock In F.~R. Hodson, D.~G. Kendall, and P.~Tautu, editors, {\em Mathematics in the Archaeological and Historical Sciences}, pages 387--395. Edinburgh University Press, Edinburgh, 1971.

\bibitem{badoiuSTOC05}
Mihai Bǎdoiu, Julia Chuzhoy, Piotr Indyk, and Anastasios Sidiropoulos.
\newblock Low-distortion embeddings of general metrics into the line.
\newblock In {\em Proceedings of the 37th annual ACM symposium on Theory of computing}, STOC '05, pages 225--233, 2005.

\bibitem{CCCLPP25}
Hsien-Chih Chang, Vincent Cohen-Addad, Jonathan Conroy, Hung Le, Marcin Pilipczuk, and Micha{\l} Pilipczuk.
\newblock Embedding planar graphs into graphs of treewidth ${O}(\log^3 n)$.
\newblock In {\em Proceedings of the 2025 Annual ACM-SIAM Symposium on Discrete Algorithms (SODA)}, pages 88--123. SIAM, 2025.

\bibitem{CC25}
Hsien-Chih Chang and Jonathan Conroy.
\newblock Distance approximating minors for planar and minor-free graphs.
\newblock In {\em Proceedings of the 66th Annual Symposium on Foundations of Computer Science}, FOCS '25, page 734–754. IEEE, 2025.

\bibitem{CCLMST23a}
Hsien-Chih Chang, Jonathan Conroy, Hung Le, Lazar Milenkovic, Shay Solomon, and Cuong Than.
\newblock Covering planar metrics (and beyond): O(1) trees suffice.
\newblock In {\em 2023 IEEE 64th Annual Symposium on Foundations of Computer Science (FOCS)}, pages 2231--2261, 2023.

\bibitem{CCLMST24}
Hsien-Chih Chang, Jonathan Conroy, Hung Le, Lazar Milenković, Shay Solomon, and Cuong Than.
\newblock {\em Shortcut Partitions in Minor-Free Graphs: Steiner Point Removal, Distance Oracles, Tree Covers, and More}, page 5300–5331.
\newblock SODA '24. 2024.

\bibitem{CKT22}
Hsien-Chih Chang, Robert Krauthgamer, and Zihan Tan.
\newblock Almost-linear $\varepsilon$-emulators for planar graphs.
\newblock In {\em Proceedings of the 54th Annual ACM SIGACT Symposium on Theory of Computing}, pages 1311--1324, 2022.

\bibitem{CO20}
Hsien{-}Chih Chang and Tim Ophelders.
\newblock Planar emulators for monge matrices.
\newblock In {\em Proceedings of the 32nd Canadian Conference on Computational Geometry}, CCCG' 20, pages 141--147, 2020.

\bibitem{CC08}
Manoj Changat and Victor Chepoi.
\newblock Discrete metric spaces and related graph classes.
\newblock In Manoj Changat, Sandi Klav\v{z}ar, Henry~Martyn Mulder, and A.~Vijayakumar, editors, {\em Convexity in Discrete Structures: Joint Proceedings of the International Instructional Workshop on Convexity in Discrete Structures, Thiruvananthapuram, India, 2006 and the International Workshop on Metric and Convex Graph Theory, Barcelona, Spain, 2006}, volume~5 of {\em Ramanujan Mathematical Society Lecture Notes Series}, pages 25--46. Ramanujan Mathematical Society, 2008.
\newblock Link to article \href{https://www.researchgate.net/profile/Manoj-Changat/publication/266940870_Convexity_in_discrete_structures_Joint_proceedings_of_the_international_instructional_workshop_on_convexity_in_discrete_structures_Thiruvananthapuram_Kerala_India_March_22-April_2_2006_and_the_interna/links/55102ca00cf27d62b913bed4/Convexity-in-discrete-structures-Joint-proceedings-of-the-international-instructional-workshop-on-convexity-in-discrete-structures-Thiruvananthapuram-Kerala-India-March-22-April-2-2006-and-the-in.pdf}{here}.

\bibitem{ChenTan2025}
Yu~Chen and Zihan Tan.
\newblock Paths and intersections: Characterization of quasi-metrics in directed okamura-seymour instances.
\newblock In {\em Proceedings of the 36th Annual ACM-SIAM Symposium on Discrete Algorithms (SODA 2025)}, 2025.

\bibitem{CS16}
Victor Chepoi and Morgan Seston.
\newblock An approximation algorithm for $\ell_{\infty}$ fitting robinson structures to distances.
\newblock volume~3, pages 265--276, 2009.

\bibitem{CLPP23}
Vincent Cohen{-}Addad, Hung Le, Marcin Pilipczuk, and Michał Pilipczuk.
\newblock Planar and minor-free metrics embed into metrics of polylogarithmic treewidth with expected multiplicative distortion arbitrarily close to 1.
\newblock In {\em 2023 IEEE 64th Annual Symposium on Foundations of Computer Science}, FOCS ’23, pages 2262--2277, 2023.

\bibitem{filtser2020face}
Arnold Filtser.
\newblock A face cover perspective to $\ell_1$ embeddings of planar graphs.
\newblock In {\em Proceedings of the Fourteenth Annual ACM-SIAM Symposium on Discrete Algorithms}, pages 1945--1954. SIAM, 2020.

\bibitem{FL22}
Arnold Filtser and Hung Le.
\newblock Low treewidth embeddings of planar and minor-free metrics.
\newblock In {\em Proceedings of the 63rd Annual Symposium on Foundations of Computer Science, {FOCS}' 22}, 2022.

\bibitem{FKS19}
E.~Fox{-}Epstein, P.~N. Klein, and A.~Schild.
\newblock Embedding planar graphs into low-treewidth graphs with applications to efficient approximation schemes for metric problems.
\newblock In {\em Proceedings of the 30th Annual ACM-SIAM Symposium on Discrete Algorithms}, SODA `19, page 1069–1088, 2019.

\bibitem{GH16}
Mohsen Ghaffari and Bernhard Haeupler.
\newblock Distributed algorithms for planar networks {I}: Planar embedding.
\newblock In {\em Proceedings of the 2016 ACM Symposium on Principles of Distributed Computing}, pages 29--38, 2016.

\bibitem{GH16b}
Mohsen Ghaffari and Bernhard Haeupler.
\newblock Distributed algorithms for planar networks {II}: {L}ow-congestion shortcuts, {MST}, and min-cut.
\newblock In {\em Proceedings of the twenty-seventh annual ACM-SIAM symposium on Discrete algorithms}, pages 202--219. SIAM, 2016.

\bibitem{GP17}
Mohsen Ghaffari and Merav Parter.
\newblock Near-optimal distributed dfs in planar graphs.
\newblock In {\em 31st International Symposium on Distributed Computing (DISC 2017)}, pages 21--1. Schloss Dagstuhl--Leibniz-Zentrum f{\"u}r Informatik, 2017.

\bibitem{GZ22}
Mohsen Ghaffari and Goran Zuzic.
\newblock Universally-optimal distributed exact min-cut.
\newblock In {\em Proceedings of the 2022 ACM Symposium on Principles of Distributed Computing}, pages 281--291, 2022.

\bibitem{GolumbicKaplanShamir1994}
Martin~Charles Golumbic, Haim Kaplan, and Ron Shamir.
\newblock Algorithms and complexity of sandwich problems in graphs.
\newblock In {\em Graph-Theoretic Concepts in Computer Science}, volume 790 of {\em Lecture Notes in Computer Science}, pages 57--69. Springer, 1994.

\bibitem{GolumbicKaplanShamir1995}
Martin~Charles Golumbic, Haim Kaplan, and Ron Shamir.
\newblock Graph sandwich problems.
\newblock {\em Journal of Algorithms}, 19(3):449--473, 1995.

\bibitem{Gordon1987}
A.~D. Gordon.
\newblock A review of hierarchical classification.
\newblock {\em Journal of the Royal Statistical Society. Series A (General)}, 150(2):119, 1987.

\bibitem{GLS81}
M.~Gr\"{o}tschel, L.~Lovász, and A.~Schrijver.
\newblock The ellipsoid method and its consequences in combinatorial optimization.
\newblock {\em Combinatorica}, 1(2):169–197, June 1981.

\bibitem{gnrs99}
Anupam Gupta, Alistair Sinclair, Ilan Newman, and Yuri Rabinovich.
\newblock Cuts, trees and $\ell_1$-embeddings of graphs.
\newblock In {\em Proceedings of the 40th Annual Symposium on Foundations of Computer Science}, FOCS '99, page 399, USA, 1999. IEEE Computer Society.

\bibitem{HHMM20}
Momoko Hayamizu, Katharina~T. Huber, Vincent Moulton, and Yukihiro Murakami.
\newblock Recognizing and realizing cactus metrics.
\newblock {\em Information Processing Letters}, 157:105916, May 2020.

\bibitem{HT74}
John Hopcroft and Robert Tarjan.
\newblock Efficient planarity testing.
\newblock {\em Journal of the ACM (JACM)}, 21(4):549--568, 1974.

\bibitem{hurkens1988tidy}
C.~A.~J. Hurkens, L{\'a}szl{\'o} Lov{\'a}sz, Alexander Schrijver, and {\'E}va Tardos.
\newblock How to tidy up your set-system?
\newblock In A.~Hajnal, L.~Lov{\'a}sz, and V.~T. S{\'o}s, editors, {\em Combinatorics}, volume~52 of {\em Colloquia Mathematica Societatis J{\'a}nos Bolyai}, pages 309--314. North-Holland, Amsterdam, 1988.
\newblock Proceedings of the 7th Hungarian Colloquium, Eger, Hungary, 1987.

\bibitem{JMR25}
Benjamin Jauregui, Pedro Montealegre, and Ivan Rapaport.
\newblock Deterministic distributed dfs via cycle separators in planar graphs.
\newblock In {\em Proceedings of the ACM Symposium on Principles of Distributed Computing}, pages 268--277, 2025.

\bibitem{karczmarz2025}
Adam Karczmarz and Da~Wei Zheng.
\newblock Subquadratic algorithms in minor-free digraphs:(weighted) distance oracles, decremental reachability, and more.
\newblock In {\em Proceedings of the 2025 Annual ACM-SIAM Symposium on Discrete Algorithms (SODA)}, pages 4338--4351. SIAM, 2025.

\bibitem{KS21}
Ken{-}ichi Kawarabayashi and Anastasios Sidiropoulos.
\newblock Embeddings of planar quasimetrics into directed $\ell_1$ and polylogarithmic approximation for directed sparsest-cut.
\newblock In {\em Proceedings of the 62nd {IEEE} Annual Symposium on Foundations of Computer Science}, FOCS' 21, pages 480--491, 2021.

\bibitem{KPR93}
Philip Klein, Serge~A. Plotkin, and Satish Rao.
\newblock Excluded minors, network decomposition, and multicommodity flow.
\newblock In {\em Proceedings of the Twenty-Fifth Annual ACM Symposium on Theory of Computing}, STOC '93, pages 682--690, 1993.

\bibitem{krauthgamer2019flow}
Robert Krauthgamer, James~R Lee, and Havana Rika.
\newblock Flow-cut gaps and face covers in planar graphs.
\newblock In {\em Proceedings of the Thirtieth Annual ACM-SIAM Symposium on Discrete Algorithms}, pages 525--534. SIAM, 2019.

\bibitem{krauthgamer2013mimicking}
Robert Krauthgamer and Inbal Rika.
\newblock Mimicking networks and succinct representations of terminal cuts.
\newblock In {\em Proceedings of the twenty-fourth annual ACM-SIAM symposium on Discrete algorithms}, pages 1789--1799. SIAM, 2013.

\bibitem{KR08}
Robert Krauthgamer and Tim Roughgarden.
\newblock Metric clustering via consistent labeling.
\newblock In {\em Proceedings of the 19th Annual ACM-SIAM Symposium on Discrete Algorithms}, SODA '08, page 809–818, 2008.

\bibitem{kumar2025approximate}
Nikhil Kumar.
\newblock An approximate generalization of the okamura--seymour theorem.
\newblock {\em SIAM Journal on Computing}, 54(5):FOCS22--159, 2025.

\bibitem{le2023vc}
Hung Le and Christian Wulff{-}Nilsen.
\newblock {VC} set systems in minor-free (di)graphs and applications.
\newblock In {\em Proceedings of the 2024 {ACM-SIAM} Symposium on Discrete Algorithms}, SODA' 24, pages 5332--5360, 2024.

\bibitem{LTZ25}
George~Z. Li, Zihan Tan, and Tianyi Zhang.
\newblock Paths and intersections: Exact emulators for planar graphs.
\newblock In {\em Proceedings of the 66th Annual Symposium on Foundations of Computer Science}, FOCS '25, pages 1903--1926, 2025.

\bibitem{li2019planar}
Jason Li and Merav Parter.
\newblock Planar diameter via metric compression.
\newblock In {\em Proceedings of the 51st Annual ACM SIGACT Symposium on Theory of Computing}, pages 152--163, 2019.

\bibitem{LO93}
Peter~J. Looges and Stephan Olariu.
\newblock Optimal greedy algorithms for indifference graphs.
\newblock {\em Computers \& Mathematics with Applications}, 25(7):15–25, 1993.

\bibitem{matouvsek2002open}
Jiri Matou{\v{s}}ek and Assaf Naor.
\newblock Open problems on embeddings of finite metric spaces.
\newblock In {\em Workshop on discrete metric spaces and their algorithmic applications}, 2002.

\bibitem{memoli2018quasimetric}
Facundo M{\'e}moli, Anastasios Sidiropoulos, and Vijay Sridhar.
\newblock Quasimetric embeddings and their applications.
\newblock {\em Algorithmica}, 80(12):3803--3824, 2018.

\bibitem{OS81}
Haruko Okamura and Paul~D Seymour.
\newblock Multicommodity flows in planar graphs.
\newblock {\em Journal of Combinatorial Theory, Series B}, 31(1):75--81, 1981.

\bibitem{Parter20}
Merav Parter.
\newblock Distributed planar reachability in nearly optimal time.
\newblock In {\em 34th International Symposium on Distributed Computing (DISC 2020)}, pages 38--1. Schloss Dagstuhl--Leibniz-Zentrum f{\"u}r Informatik, 2020.

\bibitem{Roberts1969}
Fred~S. Roberts.
\newblock Indifference graphs.
\newblock In Frank Harary, editor, {\em Proof Techniques in Graph Theory}, pages 139--146. Academic Press, 1969.

\bibitem{RGHZL22}
V{\'a}clav Rozho{\v{n}}, Christoph Grunau, Bernhard Haeupler, Goran Zuzic, and Jason Li.
\newblock Undirected $(1+\epsilon)$-shortest paths via minor-aggregates: near-optimal deterministic parallel and distributed algorithms.
\newblock In {\em Proceedings of the 54th Annual ACM SIGACT Symposium on Theory of Computing}, pages 478--487, 2022.

\bibitem{Thorup04}
Mikkel Thorup.
\newblock Compact oracles for reachability and approximate distances in planar digraphs.
\newblock {\em Journal of the ACM (JACM)}, 51(6):993--1024, 2004.

\bibitem{Zaretskii1965}
K.~A. Zaretsky.
\newblock Constructing trees from the set of distances between pendant vertices.
\newblock {\em Uspekhi Matematicheskikh Nauk}, 20:90--92, 1965.
\newblock In Russian.

\bibitem{ZGYHS22}
Goran Zuzic, Gramoz Goranci, Mingquan Ye, Bernhard Haeupler, and Xiaorui Sun.
\newblock Universally-optimal distributed shortest paths and transshipment via graph-based $\ell_1$-oblivious routing.
\newblock In {\em Proceedings of the 2022 Annual ACM-SIAM Symposium on Discrete Algorithms (SODA)}, pages 2549--2579. SIAM, 2022.

\end{thebibliography}
\appendix

\section{Appendix}

\subsection{Hardness of Bounded-Distortion OS Embedding}
\label{subsec:hardness}
In this section, we prove \Cref{thm:embedding-hardness}, that deciding whether a finite metric admits a bounded-distortion embedding into an Okamura--Seymour metric is NP-hard. We note that in this section, we deal only with metrics instead of quasi-metrics. Since any metric is a quasimetric, our NP-hardness for metric implies NP-hardness for quasi-metrics. 

\OSEmbeddingNPhardness*

We reduce from the unit-interval graph sandwich problem, which was shown to be NP-hard by Golumbic, Kaplan, and Shamir \cite{GolumbicKaplanShamir1994,GolumbicKaplanShamir1995}. Recall that a graph $G = (V, E)$ is a \EMPH{unit-interval graph} if each vertex $v \in V$ can be assigned a unit interval on the real line such that two vertices are adjacent if and only if their intervals overlap.

\begin{problem}[Unit-Interval Graph Sandwich]\label{def:graph_sandwich}
Given a vertex set $V$ and a pair of disjoint edge-sets $ E^+,E^-\subseteq \binom{V}{2}$, does there exist a unit-interval graph $H$ with vertex set $V$ such that $ E^+\subseteq E(H)$  and  $E(H)\cap E^-=\emptyset$?
\end{problem}

\subsection{Preparation}

We will be using the umbrella property of interval graphs first shown by Roberts~\cite{Roberts1969}. Given a graph $H$ with  $n$ vertices, a linear order on $V(H)$, denoted by $\langle v_1,\ldots,v_n \rangle $, is an \EMPH{umbrella ordering} of $H$ if for all \(1\leq i<j<k \leq n\), $v_i v_k\in E(H)$ implies that $v_i v_j\in E(H)$  and $
v_j v_k\in E(H)$. That is, $v_i$ and $v_k$ are adjacent  if and only if they are both adjacent to every $v_j$ between them. Roberts~\cite{Roberts1969} was the first to show the equivalence between unit interval graphs and  umbrella orderings. Looges and Olariu~\cite{LO93} showed that such an umbrella ordering can be computed in linear time. 

\begin{lemma}[Roberts~\cite{Roberts1969}]\label{lem:umbrella}
A graph \(H\) is a unit-interval graph if and only if it has an umbrella ordering. 
\end{lemma}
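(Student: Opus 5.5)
One direction is a construction and the other is a combinatorial extraction; I would prove the two directions of \Cref{lem:umbrella} separately.

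\textbf{Interval graph $\Rightarrow$ umbrella ordering.} Suppose $H$ is a unit-interval graph, realized by unit intervals $I_v=[\ell_v,\ell_v+1]$. Order the vertices by left endpoint, breaking ties arbitrarily, to get $\langle v_1,\ldots,v_n\rangle$ with $\ell_{v_1}\le\cdots\le\ell_{v_n}$. Take $i<j<k$ with $v_iv_k\in E(H)$. Since the intervals have equal length, adjacency of $v_a,v_b$ with $a<b$ is equivalent to $\ell_{v_b}-\ell_{v_a}\le 1$. From $\ell_{v_k}-\ell_{v_i}\le 1$ and $\ell_{v_i}\le\ell_{v_j}\le\ell_{v_k}$ we get both $\ell_{v_j}-\ell_{v_i}\le 1$ and $\ell_{v_k}-\ell_{v_j}\le 1$. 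Hence $v_iv_j,v_jv_k\in E(H)$, which is exactly the umbrella condition.

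\textbf{Umbrella ordering $\Rightarrow$ interval graph.} This is the substantive direction. Given an umbrella ordering $\langle v_1,\ldots,v_n\rangle$, let $r(i)$ be the largest index with $v_iv_{r(i)}\in E(H)$, or $r(i)=i$ if there is none. The umbrella property gives three facts.
\begin{enumerate}
\item The neighbors of $v_i$ to its right form a contiguous block $\{v_{i+1},\ldots,v_{r(i)}\}$.
\item $r$ is nondecreasing. If $i<j$ and $r(i)>j$, then $v_iv_{r(i)}\in E(H)$ together with $i<j<r(i)$ forces $v_jv_{r(i)}\in E(H)$, so $r(j)\ge r(i)$. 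If instead $r(i)\le j$, then $r(j)\ge j\ge r(i)$ trivially.
\item Consequently $v_a$ and $v_b$ with $a<b$ are adjacent if and only if $b\le r(a)$.
\end{enumerate}
So $H$ is determined by a monotone staircase.

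It remains to realize the staircase by real numbers. I want left endpoints $x_1<x_2<\cdots<x_n$ such that for all $a<b$ we have $x_b-x_a\le 1$ if and only if $b\le r(a)$. Monotonicity of $r$ means the threshold constraints are consistent. I would build the $x_i$ greedily from left to right with $x_1=0$. At step $b$, the new point must lie within distance $1$ of every $x_a$ with $r(a)\ge b$, and at distance greater than $1$ from every $x_a$ with $r(a)<b$. Because $r$ is monotone, $\{a<b : r(a)\ge b\}$ is a suffix $\{a_0,\ldots,b-1\}$ of the earlier indices. The step therefore reduces to choosing
\[
x_b\in\bigl(\max(x_{b-1},\,x_{a_0-1}+1),\;x_{a_0}+1\bigr],
\]
where the term $x_{a_0-1}+1$ is dropped if $a_0=1$.

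\textbf{Main obstacle.} The hard part is keeping this interval nonempty at every future step, and a naive greedy choice can fail. I would first try making the gaps $x_{a}-x_{a-1}$ strictly increasing, in particular strictly positive, and placing each new point with slack $\eta^{b}$ for a small $\eta$, maintaining by induction that $x_{a_0}+1>\max(x_{b-1},x_{a_0-1}+1)$. If the inequalities become too delicate, the fallback is the standard route: show that $H$ is a proper interval graph (the umbrella order lets one take $I_{v_i}=[i,\,r(i)+\tfrac12]$, whose nesting-freeness follows from monotonicity of $r$), and then apply Roberts' equivalence of proper and unit interval graphs by perturbing endpoints while preserving their order.
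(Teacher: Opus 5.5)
The paper gives no proof of this lemma (it is quoted from Roberts), so your argument has to stand on its own. Your forward direction is fine. The reduction of an umbrella ordering to a nondecreasing staircase is also fine, provided $r(i)$ is taken over indices $k\ge i$ only. As literally written, a vertex whose neighbours all lie to its left gets $r(i)<i$, and then both ``$r(j)\ge j$'' and the monotonicity of $r$ fail.

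The genuine gap is the realization step. You correctly identify it as the crux, but you do not close it, and neither route you sketch closes it. Strictly increasing gaps cannot be imposed in general: for the umbrella ordering $v_1,v_2,v_3$ with $E(H)=\{v_2v_3\}$ you need $x_2-x_1>1\ge x_3-x_2$. The fallback intervals $[i,r(i)+\frac12]$ do represent $H$, but they are not nesting-free. Whenever $a<b$ and $r(a)=r(b)$ (e.g.\ $r(n-1)=r(n)=n$ as soon as $v_{n-1}v_n\in E(H)$), the interval $[b,r(b)+\frac12]$ is properly contained in $[a,r(a)+\frac12]$. Moreover, the step ``proper interval $\Rightarrow$ unit interval'' you would then invoke is Roberts' theorem itself, so the fallback essentially cites what is to be proved.

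The missing idea is to make the induction hypothesis strict on the adjacency side. This is exactly what your naive greedy lacks: it fails because it may set $x_b=x_{a_0}+1$. For instance, with $r(1)=3$, choosing $x_2=x_1+1$ leaves the empty window $(x_1+1,x_1+1]$ for $x_3$. A zero gap is not the danger; an adjacent pair at distance exactly $1$ is.

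The invariant to maintain is: $x_1<\dots<x_{b-1}$, with $x_{a'}-x_a<1$ if $a<a'\le r(a)$ and $x_{a'}-x_a>1$ if $a'>r(a)$. At step $b$, if no earlier vertex is adjacent to $v_b$, set $x_b=x_{b-1}+2$. Otherwise, let $\{a_0,\dots,b-1\}$ be the suffix of earlier neighbours and pick $x_b$ in the \emph{open} window $\bigl(\max(x_{b-1},x_{a_0-1}+1),\,x_{a_0}+1\bigr)$, say its midpoint.

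This window is automatically nonempty. First, $x_{a_0-1}<x_{a_0}$ by strict monotonicity. Second, $x_{b-1}<x_{a_0}+1$: either $a_0=b-1$, or $b-1<b\le r(a_0)$, so $v_{a_0}v_{b-1}\in E(H)$ and the strict hypothesis applies.

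The invariant then propagates. For $a\ge a_0$ we get $x_b-x_a\le x_b-x_{a_0}<1$, and for $a<a_0$ we get $x_b-x_a\ge x_b-x_{a_0-1}>1$. Hence the closed intervals $[x_i,x_i+1]$ represent $H$, and no $\eta^b$ slack bookkeeping is needed.
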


Another way to represent an interval graph $H$ is to assign each vertex $v$ a point, called a \EMPH{center}, $c_v$ on the real line such that $uv\in E(H)$ if and only  if $|c_u - c_v|\leq 1$. For a non-adjacent pair, it only requires that $|c_u - c_v| > 1$. Here for our reduction, we need (i) the gap $|c_u - c_v| - 1$ is not too small, i.e., $|c_u - c_v| > 1 + 1/\poly(n)$, and (ii) the distance from the first center to the last center on the line is not too big, as formally stated in the following lemma.

\begin{restatable}[Bounded-gap unit-interval representation]{lemma}{boundedgap}
\label{lem:bounded_gap}
Let \(H\) be a unit-interval graph on \(n\) vertices.  Let
$\langle v_1,v_2,\dots,v_n \rangle$ be the ordering of the vertices according to the unit-interval representation of \(H\) on the real line. Let \(R=n+1\). Then \(H\) has a unit-interval representation,
where \(v_i\) corresponds to a center \(c_i\) on the real line, such that:
\begin{align}
c_j-c_i &\le 1
&&\text{if } i<j \text{ and } v_i v_j\in E(H),\label{eq:bg-eq-1}\\
c_j-c_i &\ge 1+\frac{1}{R}
&&\text{if } i<j \text{ and } v_i v_j\notin E(H),\label{eq:bg-eq-2}\\
c_i &\le c_{i+1}
&&\text{for } i=1,\ldots,n-1, \label{eq:bg-eq-3}\\
c_n - c_1 &\le R. \label{eq:bg-eq-4}
\end{align}
\end{restatable}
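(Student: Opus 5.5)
Since $H$ is a unit-interval graph, I start from any unit-interval representation. Write it with centers $c'_1\le\dots\le c'_n$, listed in the given order, so that $v_iv_j\in E(H)$ iff $|c'_j-c'_i|\le 1$. The conditions \eqref{eq:bg-eq-1}--\eqref{eq:bg-eq-3} form a system of difference constraints in the variables $c_i$, of the form $c_j-c_i\le 1$, $c_i-c_j\le -(1+1/R)$ and $c_i-c_{i+1}\le 0$. The plan is to show this system is feasible with a solution of small span, using either the standard shortest-path (potential) characterization or a direct combinatorial construction.

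\textbf{Direct construction.} I would use a discrete representation. For each $i$, set $c_i := p_i + i/R'$ for an integer-valued ``level'' function $p_i$ and a tiny perturbation. A cleaner route is the classical fact that a unit-interval graph with umbrella ordering $\langle v_1,\dots,v_n\rangle$ has a representation in which all centers are multiples of $1/n$ (Roberts / Bogart--West style). I would prove this via the potential method.

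\textbf{Potential method.} Build the constraint digraph $D$ on $\{v_1,\dots,v_n\}$ with the following edges: an edge $i\to j$ of weight $1$ for each edge $v_iv_j$ with $i<j$; an edge $j\to i$ of weight $-(1+\delta)$ for each non-edge with $i<j$; and an edge $i+1\to i$ of weight $0$. A feasible solution exists iff $D$ has no negative cycle. Setting $c_i$ as shortest-path distances from a super-source then gives the solution.

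The main obstacle is showing there are no negative cycles when $\delta=1/R$ with $R=n+1$. I would argue as follows.
\begin{itemize}
\item Any cycle uses some number $a$ of weight-$1$ forward edges and some number $b$ of weight $-(1+\delta)$ backward edges.
\item Because the original representation $c'$ satisfies all constraints with $\delta=0$ and strict inequality $c'_j-c'_i>1$ on non-edges, the cycle weight at $\delta=0$ is $a-b\ge 0$. By perturbing $c'$ generically this can be made strict in the $b$ terms.
\item More combinatorially: if $a-b\ge 1$, then $a-b-b\delta\ge 1-n\delta>0$, since $b\le n$ for a simple cycle and $\delta<1/n$.
\item If $a=b$, I would use the umbrella property to show that the cycle cannot close with $a=b$ and $b\ge 1$. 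Sum the constraints along the cycle using the representation $c'$: the forward edges give a total displacement of at most $a$, while the backward edges give a total of strictly more than $b$. So $a-b>0$ at $\delta=0$, and by integrality of $a-b$ we get $a-b\ge 1$.
\end{itemize}
Hence every simple cycle has weight at least $1-b\delta\ge 1-n/(n+1)>0$.

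For \eqref{eq:bg-eq-4}, I would take the shortest-path potentials normalized so that $c_1=0$. Since the ordering constraint gives $c_i\le c_{i+1}$, the quantity $c_n-c_1$ is bounded by the weight of some path from $1$ to $n$ in the reversed sense, that is, by a sum of at most $n$ forward unit edges minus nonnegative contributions. This yields $c_n-c_1\le n-1<R$. Concretely, I would define $c_i$ as the shortest-path distance from $v_1$ after adding forward edges $i\to i+1$ of weight $1$, which is harmless because consecutive vertices are either adjacent or already separated. Then verify that all constraints remain valid, and that $c_n\le n-1$ follows via the chain of $i\to i+1$ edges.
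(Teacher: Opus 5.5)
Your argument for \eqref{eq:bg-eq-1}--\eqref{eq:bg-eq-3} is correct and is essentially the paper's. Strict inequality on non-edges in the initial representation, plus integrality of $a-b$, gives $a-b\ge 1$ whenever $b\ge 1$. Then $b\le n$ on a simple cycle makes every cycle nonnegative at $\delta=1/(n+1)$. The paper phrases the same step via some $\gamma_0>0$ for which the system is feasible. You do not actually need the umbrella property there; your summation argument already does the work.

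The genuine gap is in the span bound \eqref{eq:bg-eq-4}.
\begin{itemize}
\item The added edges $i\to i+1$ of weight $1$ encode $c_{i+1}-c_i\le 1$. They are harmful precisely in the case you call ``already separated.'' If $v_iv_{i+1}\notin E(H)$, the system already contains the edge $i+1\to i$ of weight $-(1+1/R)$. You therefore create a $2$-cycle of weight $-1/R$, and the system becomes infeasible.
\item Your target $c_n-c_1\le n-1$ is false in general. For the edgeless graph on $n\ge 2$ vertices, summing the consecutive non-edge constraints forces $c_n-c_1\ge (n-1)(1+1/R)>n-1$. This is exactly why the lemma only claims $R=n+1$.
\item Without the added edges, normalizing at $v_1$ also fails. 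Only $v_1$'s connected component is reachable from $v_1$: that component is an initial segment of the order, forward edges are edges of $H$, and every other edge decreases the index. All later vertices get potential $+\infty$.
\end{itemize}

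The fix is to use the super-source potentials you already mention for feasibility. Add zero-weight edges $s_0\to v_i$ and set $c_i=\operatorname{dist}(s_0,v_i)$. Then $c_i\le 0$. Since there is no negative cycle, a shortest walk can be taken simple, so it uses at most $n-1$ constraint edges, each of weight at least $-(1+1/R)$. Hence $c_i\ge -(n-1)(1+1/R)$, and so $c_n-c_1\le (n-1)(1+1/R)<R$. This is what the paper does, with the slightly looser bound $n(1+\frac{1}{n+1})<n+1$.

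Alternatively, your chain idea can be salvaged by giving $i\to i+1$ weight $1+1/R$ whenever $v_iv_{i+1}\notin E(H)$. Translating the components so that consecutive ones are exactly $1+1/R$ apart shows these edges create no negative cycle. With $m$ components, this gives $c_n-c_1\le n-1+(m-1)/R<R$.
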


There is nothing special about $R = n+1$: any polynomial of $n$ at least $n$ would work. The idea of the proof of \Cref{lem:bounded_gap} is rather simple but the actual proof is somewhat tedious and hence is deferred to \Cref{app:proof-bounded_gap}. Note that our reduction does not require computing a representation in \Cref{lem:bounded_gap}; we only require that such a representation exists.

\subsection{The Reduction}

\begin{figure}
\centering
\begin{tikzpicture}[font=\small]
  \node[anchor=south west, inner sep=0pt] (img) at (0,0) {
    \includegraphics[
      page=1,
      trim=116pt 362pt 223pt 174pt,
      clip,
      width=0.33\textwidth
    ]{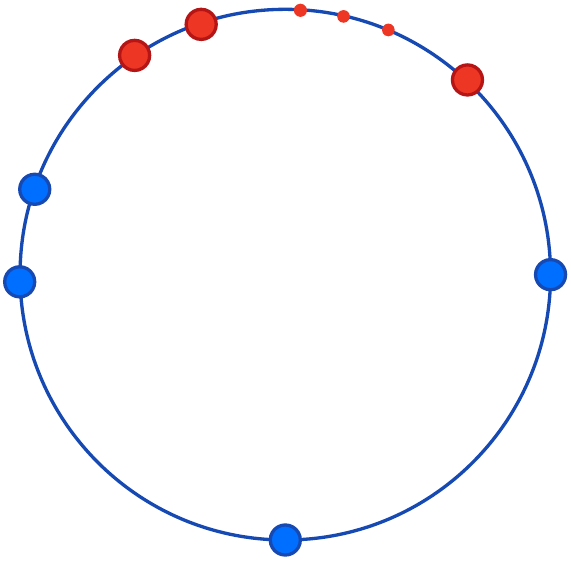}
  };

  \begin{scope}[x={(img.south east)}, y={(img.north west)}]
    \node[anchor=south east] at (0.215,0.930) {$v_1$};
    \node[anchor=south]      at (0.350,1.000) {$v_2$};
    \node[anchor=south west] at (0.850,0.885) {$v_n$};

    \node[anchor=west]  at (1.000,0.510) {$q$};
    \node[anchor=north] at (0.501,0.000) {$r$};
    \node[anchor=east]  at (0.000,0.498) {$p$};
    \node[anchor=east]  at (0.030,0.670) {$s$};
  \end{scope}
\end{tikzpicture}
\caption{The desired order of vertices and anchor points that we force in our construction.}
\label{fig:hardness}
\end{figure}

We now describe the reduction from the unit-interval graph sandwich problem to the  Okamura--Seymour metric embedding problem. For simplicity of the presentation, we fix the distortion $D =\frac75$; any value of $D \in (1, \sqrt2)$ would work.  First, we describe the reduction formally, and then we give the high-level ideas of the reduction.

\paragraph{Reduction:} Let $\mathcal{A}=(V,E^+,E^-)$ be an instance of the unit-interval graph sandwich problem with \(|V|=n\). Let $D = 7/5$ (the distortion) and $R  = n+1$ (the same parameter in \Cref{lem:bounded_gap}). Let  
\begin{equation}\label{eq:eps-UL}
    \epsilon=\frac{D-1}{2R}=\frac{1}{5R},
\qquad
U=1+\epsilon,
\qquad
L=1+\epsilon\left(1+\frac1R\right),
\end{equation}

Intuitively, what we want is that the product of $\eps$ and the maximum interval's length by \Cref{lem:bounded_gap} to be some constant. Moreover, $L$ should be slightly larger than $u$ with respect to the minimum gap by \Cref{lem:bounded_gap}. Next, we construct an instance $\mathcal{B}=(X,d,D = 7/5)$ of the Okamura--Seymour distortion embedding problem. Let $X=V\cup\{p,q,s,r\}$. The additional points $\{p,q,s,r\}$ serve as \EMPH{anchor points} based on which we will extract the linear ordering of $V$ representing the unit-interval graph $H$ from the Okamura--Seymour distortion embedding with distortion $D$.  The distances between pairs of points in $X$ are defined as follows.  First, for every two vertices $u\not= v\in V$ set:
\[
d(u,v)=
\begin{cases}
\frac{U}{D}, & uv\in E^+,\\
L, & uv\in E^-,\\
1, & uv\notin E^+\cup E^-.
\end{cases}
\]
Next, for \(v\in V\), we define its distances to the anchor points as:
\[
d(p,v)=d(q,v)=1,
\qquad
d(s,v)=D,
\qquad
d(r,v)=\frac{11}{10}.
\]
For the anchor points, we define:
\[
d(p,q)=D,
\qquad
d(q,s)=1,
\qquad
d(p,s)=D-1=\frac25,
\qquad
d(p,r)=d(q,r)=\frac D2=\frac{7}{10},
\qquad
d(r,s)=\frac{11}{10}.
\]
This completes the reduction.  Clearly, it is computable in polynomial time.

\paragraph{The high-level ideas:} The hard direction is to show that if there exists an Okamura-Seymour metric $\rho$ within distortion $D$ of $d$, then any cyclic ordering of $X$ satisfying Monge is of the form $\sigma_X = \langle p, s, v_1, \dots, v_n, q, r \rangle$ where $\langle s, v_1, \dots, v_n\rangle$ is the cyclic ordering of vertices restricted to $V \cup \{s\}$ \textbf{and} $v_1, \dots, v_n$ is a valid umbrella ordering characterizing the unit-interval graph $H$ in \Cref{def:graph_sandwich}.   The set of anchor points $\{p,s,q,r\}$ is designed so that if we imagine placing $X$ on the boundary of a disk in the order $\sigma_X$, then the points $p$ and $q$ define a pair of $p$-to-$q$ arcs covering the boundary. The parameters $\{R, \eps, U, L\}$ and distorted metric $d$ are to force every vertex of $V$ to lie on the opposite $p$-to-$q$ arc to the one containing $r$ and for the point $s$ to lie on the same $p$-to-$q$ arc as the vertices of $V$, between $p$ and $V$ (see Fig.~\ref{fig:hardness}). The vertex $p$ is forced by $s$ to be at the same distance from every vertex in
$\{v_1,\ldots,v_n\}$ in any valid $\rho$, so that the Monge property naturally implies the umbrella characterization.

For the other direction (which is easier), $\{p,s,q,r\}$ must satisfy the Monge property in that order. The parameters $\{R, \eps, U, L\}$ and distorted metric $d$ are set so that we may induce a line metric of the linear order on the centers of $V$ in the case of a yes-instance. Line metrics satisfy the Monge property and finally we ensure that after adding the anchor points $\{p,s,q,r\}$ we still have a metric, and it is an Okamura-Seymour metric w.r.t. the order $\sigma_X = \langle p, s, v_1, \dots, v_n, q, r \rangle$. 

By writing down all the distance constraints, we end up with the following system of inequalities: 

\[
M=1+\epsilon R, \;\frac{U}{D} \leq 1<U<L<M \leq D,\; M \leq DL
\]
\[
d(x,v)=d(x,u) \qquad \forall\, x\in\{p,s,q,r\},\ \forall\, v, u\in V,
\]
\[
D\,d(p,v)+D\,d(s,q) = d(p,q)+d(s,v) \qquad \forall\, v\in V,
\]
\[
D\,d(p,s)+D\,d(v,q) < d(p,v)+d(s,q) \qquad \forall\, v\in V,
\]
\[
d(p,q)+d(r,x) > D\,d(p,x)+D\,d(q,r) \qquad \forall\, x\in V\cup\{s\},
\]
\[
d(p,q)+d(r,x) > D\,d(q,x)+D\,d(p,r) \qquad \forall\, x\in V\cup\{s\}.
\]

Moreover, there should exist a metric $\rho$ satisfying
\[
d(a,b)\le \rho(a,b)\le D\,d(a,b)
\qquad \forall\, a,b\in V\cup\{p,s,q,r\},
\]
such that, for any permutation of $V$, the metric $\rho$ always satisfies the Monge property with respect to the cyclic order $p,s,\pi(V),q,r;$ $\rho$ and $d$ always satisfy the triangle inequality as a metric.

\begin{remark} 
If we treat $D$ as a variable, the system of inequalities above gives $1<D<\frac{1+\sqrt{5}}{2}$. We fix a rational $D=\frac75$ in the analysis below so that it is easier to follow the calculation.
\end{remark}

\subsection{From Okamura-Seymour Metric Embedding to Graph Sandwich}

This is the hard direction of the reduction.  Assume there exists an Okamura-Seymour metric
$\rho$ on $X$ such that:
\[
d(x, y)\le \rho(x,y)\le Dd(x,y)
\qquad
\forall x,y\in X,
\]
and $\rho$ satisfies the Monge property with respect to some cyclic order $\sigma_X$ of the elements $X = V \cup \{p,q,s,r\}$. 

Recall that $s, q$ act as the bookends of $V$ in $\sigma_X$. We will show below that our reduction forces $\sigma_X = \langle p, s, v_1, \dots, v_n, q, r \rangle$.
Let \EMPH{$\sigma_V$} be the linear order obtained by fixing an orientation of $\sigma_X$ where we encounter the vertices $V$ after $s$ and before $q$, and restricting to $V$. 
The key observation is that the metric $\rho$ restricted to $V$, denoted by \EMPH{$\rho|_V$}, is \EMPH{Robinson in the ordering $\sigma_V$}~\cite{CS16}:

\begin{definition}[Robinson Metric]\label{def:Robinson}
    A metric $y$ on a finite set of elements $Y$ is \EMPH{Robinson} if there exists an ordering $\sigma_Y$ of $Y$ such that for all $i <_{\sigma_Y} j <_{\sigma_Y} k \in Y$:
    \[y(i, k) \geq \max \{y(i, j), y(j, k)\}.\]
    We say that $Y$ is \EMPH{Robinson in the ordering $\sigma_Y$}.
\end{definition}

\begin{lemma}\label{lm:robinson} $\rho|_V$ is Robinson in the ordering $\sigma_V$. 
\end{lemma}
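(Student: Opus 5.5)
The plan is to extract the Robinson inequalities from the Monge property of $\rho$ applied to quadruples that contain the anchor $p$. This works once we know that $\rho(p,\cdot)$ is constant on $V$.

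Throughout, I use the cyclic structure $\sigma_X=\langle p,s,v_1,\dots,v_n,q,r\rangle$ that the reduction is designed to force. In fact I only need two facts about it. First, for every $v\in V$ the points $p,s,v,q$ appear in this cyclic order. Second, $p$ lies outside the arc of $\sigma_X$ spanned by $V$, so that for $i<j<k$ in $\sigma_V$ the quadruple $\langle p,v_i,v_j,v_k\rangle$ is cyclically ordered. If this placement of $s$, $V$ and $q$ relative to $p$ has not yet been established at this point, I would prove it first as a separate claim. I expect that to be the main obstacle: it requires a case analysis using the strict inequalities
\[
D\,d(p,s)+D\,d(v,q)<d(p,v)+d(s,q),\qquad d(p,q)+d(r,x)>D\,d(p,x)+D\,d(q,r),
\]
and the analogous inequality with $p,q$ swapped. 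Each of these rules out one bad crossing pattern, because in a Monge (undirected) order the crossing pair must dominate both non-crossing pairs.

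\textbf{Step 1: $\rho(p,v)=D$ for every $v\in V$.} In the metric setting, the Monge property for a cyclically ordered quadruple $\langle x_1,x_2,x_3,x_4\rangle$ says that the diagonal sum $\rho(x_1,x_3)+\rho(x_2,x_4)$ is at least each of the two side sums. Apply this to $\langle p,s,v,q\rangle$ and compare the diagonals with the sides $(p,q),(s,v)$:
\[
\rho(p,v)+\rho(s,q)\ \ge\ \rho(p,q)+\rho(s,v)\ \ge\ d(p,q)+d(s,v)=D+D=2D .
\]
The distortion bound gives the reverse inequality,
\[
\rho(p,v)+\rho(s,q)\le D\,d(p,v)+D\,d(s,q)=2D .
\]
Equality must therefore hold throughout. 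In particular $\rho(p,v)=D\,d(p,v)=D$ for every $v\in V$, which is the constant value $c=D$.

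\textbf{Step 2: Robinson inequalities.} Fix $i<j<k$ in $\sigma_V$. The quadruple $\langle p,v_i,v_j,v_k\rangle$ is cyclically ordered, and its diagonals are $(p,v_j)$ and $(v_i,v_k)$. Comparing the diagonals with the two side pairings gives
\[
\rho(p,v_i)+\rho(v_j,v_k)\le \rho(p,v_j)+\rho(v_i,v_k),
\]
\[
\rho(p,v_k)+\rho(v_i,v_j)\le \rho(p,v_j)+\rho(v_i,v_k).
\]
By Step 1 all three values $\rho(p,v_\cdot)$ equal $D$, so they cancel. This leaves $\rho(v_j,v_k)\le\rho(v_i,v_k)$ and $\rho(v_i,v_j)\le\rho(v_i,v_k)$. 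That is exactly $\rho(v_i,v_k)\ge\max\{\rho(v_i,v_j),\rho(v_j,v_k)\}$, the condition of Definition~\ref{def:Robinson}, so $\rho|_V$ is Robinson in $\sigma_V$.

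Steps 1 and 2 are routine. The only genuine work is the ordering claim used at the start, namely that $s$ and $q$ bracket $V$ on the side of the disk away from $p$ and $r$.
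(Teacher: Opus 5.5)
Your proof is correct and is essentially the paper's argument. The paper likewise first establishes the cyclic order $\langle p,s,v_1,\dots,v_n,q,r\rangle$ in separate claims preceding the lemma, using exactly the two strict inequalities you cite; it then derives $\rho(p,v)=D$ from Monge on $\langle p,s,v,q\rangle$ together with the distortion upper bound, and cancels the equal $\rho(p,\cdot)$ terms in Monge on $\langle p,v_i,v_j,v_k\rangle$ to obtain the Robinson inequalities.
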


We delay the proof of \Cref{lm:robinson} to the end of this section. Here we show how to use it to construct a  unit-interval graph $H$.

\paragraph{Constructing $H$.} The vertex set of $H$ is  $V$. For any two vertices $u\not=v \in V$, we add the edge $uv$ to $E(H)$ if and only if $\rho(u,v) \le U$.  We claim that $H$ is a unit-interval graph in the yes instance of \Cref{def:graph_sandwich}.

\begin{claim} \label{lem:rob_implies_yes}
    If $\rho|_V$ is Robinson in the ordering $\sigma_V$, then $H$ is a unit-interval graph certifying that $(V,E^+,E^-)$ is a yes instance of \Cref{def:graph_sandwich}.
\end{claim}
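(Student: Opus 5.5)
We must show two things: that $\sigma_V$ is an umbrella ordering of $H$, so $H$ is a unit-interval graph by \Cref{lem:umbrella}, and that $E^+\subseteq E(H)$ and $E(H)\cap E^-=\emptyset$. Both follow directly from the threshold definition of $E(H)$ together with the Robinson property and the distortion bounds $d\le\rho\le D\,d$.

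\textbf{Step 1: umbrella ordering.} Write $\sigma_V=\langle v_1,\dots,v_n\rangle$ and take $i<j<k$ with $v_iv_k\in E(H)$, so $\rho(v_i,v_k)\le U$. Since $\rho|_V$ is Robinson in $\sigma_V$, we have $\rho(v_i,v_j)\le\rho(v_i,v_k)\le U$ and $\rho(v_j,v_k)\le\rho(v_i,v_k)\le U$. Hence $v_iv_j$ and $v_jv_k$ are both edges of $H$. So $\sigma_V$ is an umbrella ordering of $H$, and \Cref{lem:umbrella} shows $H$ is a unit-interval graph.

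\textbf{Step 2: sandwich conditions.} If $uv\in E^+$, then $d(u,v)=U/D$, so $\rho(u,v)\le D\cdot U/D=U$, giving $uv\in E(H)$. If $uv\in E^-$, then $\rho(u,v)\ge d(u,v)=L$. From \Cref{eq:eps-UL}, $L=1+\epsilon(1+1/R)>1+\epsilon=U$, so $\rho(u,v)>U$ and $uv\notin E(H)$. Pairs outside $E^+\cup E^-$ impose no constraint. Thus $H$ certifies that $(V,E^+,E^-)$ is a yes instance of \Cref{def:graph_sandwich}.

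No step here is a real obstacle. The substantive work is \Cref{lm:robinson}, which the claim takes as a hypothesis. The only points to check are that the threshold $U$ sits correctly between $U/D\cdot D$ and $L$, which holds by the choice of parameters, and that an umbrella ordering needs only the one-directional implication verified in Step 1.
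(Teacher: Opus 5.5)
Your proof is correct and follows the paper's argument essentially step for step. The Robinson property with the threshold $U$ gives the umbrella ordering, and the distortion bounds $\rho\le D\cdot U/D=U$ on $E^+$ and $\rho\ge L>U$ on $E^-$ give the sandwich conditions. You also explicitly check $L>U$ from the parameter choice, which the paper only asserts.
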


\begin{proof}
    We first show that $H$ is a unit-interval graph by showing that $\sigma_V$ is an  umbrella ordering  of $H$. Let $1\leq i<_{\sigma_V}j<_{\sigma_V}k \leq n$ where $v_i, v_j, v_k \in V$ and $v_i v_k\in E(H)$,  then $\rho(v_i,v_k)\le U$. Since $\rho|_V$ is Robinson in the ordering $\sigma_V$, we have:
    \[
    \rho(v_i,v_j)\le \rho(v_i,v_k)\le U,
     \text{and }
    \rho(v_j,v_k)\le \rho(v_i,v_k)\le U,
    \]
   giving $v_i v_j\in E(H)$ and $v_j v_k\in E(H)$, respectively. Thus, $\sigma_V$ is an  umbrella ordering and hence, $H$ is a unit-interval graph by Lemma~\ref{lem:umbrella}. 
    
    Moreover, if \(uv\in E^+\), by construction \(d(u,v)=U/D\), so $\rho(u,v)\le Dd(u,v)=U.$ Hence, $E^+\subseteq E(H)$. For \(uv\in E^-\), then \(d(u,v)=L>U\), so $\rho(u,v)\ge d(u,v)=L>U$. Hence, $E(H)\cap E^-=\emptyset$. This implies \(H\) is a unit-interval graph certifying that $(V,E^+,E^-)$ is a yes instance of \Cref{def:graph_sandwich}. 
\end{proof}

It remains to prove \Cref{lm:robinson}. First, we prove several simple claims.

\begin{claim}
    Place the elements of $X$ on the boundary of a disk in the plane according to the ordering $\sigma_X$. Then, all vertices in $V \cup \{s\}$ lie on the same \(p\)-\(q\) arc, while \(r\) lies on the other.
\end{claim}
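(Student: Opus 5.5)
The plan is to use only the (undirected) Monge property of $\rho$ with respect to $\sigma_X$, together with the sandwich $d\le\rho\le D\,d$ and the two strict inequalities from the system listed before the hard direction:
\[
d(p,q)+d(r,x) > D\,d(p,x)+D\,d(q,r)
\quad\text{and}\quad
d(p,q)+d(r,x) > D\,d(q,x)+D\,d(p,r)
\qquad \forall x\in V\cup\{s\}.
\]

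For a metric, \Cref{eq:monge} holds for both orientations of the cyclic order. So for any four points $a,b,c,e$ appearing in this cyclic order on the disk, the sum over the crossing chords dominates both non-crossing matchings:
\[
\rho(a,c)+\rho(b,e)\ \ge\ \max\{\rho(a,b)+\rho(c,e),\ \rho(a,e)+\rho(b,c)\}.
\]
The points $p$ and $q$ split the boundary into two open $p$-to-$q$ arcs. Since every point other than $p,q$ lies on exactly one of them, it suffices to show that no $x\in V\cup\{s\}$ lies on the same arc as $r$.

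Suppose, for contradiction, that some $x\in V\cup\{s\}$ and $r$ lie on the same arc. Then, reading from $p$ to $q$ along that arc, the cyclic order restricted to these four points is either $\langle p,x,r,q\rangle$ or $\langle p,r,x,q\rangle$.

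\emph{Case $\langle p,x,r,q\rangle$.} The crossing chords are $(p,r)$ and $(x,q)$, and $(p,q),(x,r)$ is a non-crossing matching. Monge gives $\rho(p,r)+\rho(x,q)\ge \rho(p,q)+\rho(x,r)$. On the other hand, the lower and upper bounds on $\rho$ and the second listed inequality give
\[
\rho(p,q)+\rho(x,r)\ \ge\ d(p,q)+d(r,x)\ >\ D\,d(q,x)+D\,d(p,r)\ \ge\ \rho(q,x)+\rho(p,r),
\]
which is a contradiction.

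\emph{Case $\langle p,r,x,q\rangle$.} The crossing chords are now $(p,x)$ and $(r,q)$. The identical argument with the first listed inequality gives a contradiction.

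The only remaining work is to verify the two strict inequalities numerically for $D=7/5$. For $x\in V$, the left side is $d(p,q)+d(r,x)=7/5+11/10=5/2$, while the right sides are $D(1+7/10)=119/50<5/2$. For $x=s$, the left side is again $7/5+11/10=5/2$, and the right sides are $D(1+7/10)=119/50$ and $D(2/5+7/10)=77/50$, both strictly less than $5/2$.

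I expect no real obstacle. The one point needing care is stating the undirected Monge inequality in the crossing-dominates-both form, so that it covers both possible positions of $x$ relative to $r$ on the arc. The claim then follows.
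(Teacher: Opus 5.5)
Your proposal is correct and is essentially the paper's argument. The paper shows that $\{pq,rx\}$ is the strictly heaviest of the three perfect matchings on $\{p,q,r,x\}$, using the same two strict inequalities and the sandwich $d\le\rho\le D\,d$, and then concludes via the crossing-dominates form of Monge that $pq$ and $rx$ must cross; your two-case contradiction is the same argument. Your separate check for $x=s$ (giving $77/50$ and $119/50$) is slightly more careful than the paper, which writes $D+D^2/2$ for both bounds; that formula only overestimates in this case, so it does not affect the conclusion.
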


\begin{proof}

For any $x\in V\cup\{s\}$, we have 
\[
\rho(p,q)+\rho(r,x)\ge d(p,q) + d(r,x) = D+ \frac{11}{10}=\frac75+\frac{11}{10}=\frac52.
\]
\[
\rho(p,x)+\rho(q,r)\le D(d(p,x) + d(q,r) ) = D+\frac{D^2}{2}=\frac{119}{50}<\frac52.
\] 
\[
\rho(q,x)+\rho(p,r)\le D(d(q,x) + d(p,r) ) = D+\frac{D^2}{2}=\frac{119}{50}<\frac52.
\]
So we have:
\[
\rho(p,x)+\rho(q,r) < \rho(p,q)+\rho(r,x),
\]
\[
\rho(q,x)+\rho(p,r) < \rho(p,q)+\rho(r,x).
\] 
Hence, among the three perfect matchings on the four points \(p,q,r,x\), the matching $\{pq,rx\}$ has strictly largest total length. By the Monge property, path \(pq\) and path \(rx\) must cross in $\sigma_X$. Equivalently, \(r\) and \(x\) lie on opposite \(p\)-\(q\) arcs. Since this holds for all \(x\in V\cup\{s\}\), all vertices in \(V\cup\{s\}\) lie on the same \(p\)-\(q\) arc, while \(r\) lies on the other.
\end{proof}

Without loss of generality, let the clockwise $p$-to-$q$ arc be the one containing $V\cup\{s\}$.

\begin{claim}\label{clm:pqorder}
    On the clockwise $p$-to-$q$ arc the vertices $V \cup \{p, s, q\}$ appear in the order $p, s, v_1, \dots, v_n, q$.
\end{claim}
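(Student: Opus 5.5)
The plan is to pin down the position of $s$ relative to each single vertex $v\in V$, using the same Monge crossing argument as in the previous claim. By that claim, all of $V\cup\{s\}$ lies on the clockwise $p$-to-$q$ arc. So for each fixed $v\in V$, the four points $p,s,v,q$ appear along this arc in one of only two cyclic orders: $\langle p,s,v,q\rangle$ or $\langle p,v,s,q\rangle$. I will rule out the second order for every $v$. This shows that $s$ precedes every vertex of $V$ on the arc. Once that is done, $v_1,\dots,v_n$ is simply the name we give to the order in which the vertices of $V$ are met after $s$ and before $q$, and the claim follows.

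The tool is the undirected Monge property in the form already used in the previous claim. For four points in cyclic order $\langle a,b,c,d\rangle$, the crossing matching $\{ac,bd\}$ has total $\rho$-length at least that of each of the two non-crossing matchings. If the order were $\langle p,v,s,q\rangle$, the crossing matching would be $\{ps,vq\}$, so we would need
\[
\rho(p,s)+\rho(v,q)\ \ge\ \rho(p,v)+\rho(s,q).
\]

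To contradict this, I bound both sides using $d\le\rho\le D\,d$ with $D=7/5$ and the distances from the reduction. For the left side,
\[
\rho(p,s)+\rho(v,q)\ \le\ D\bigl(d(p,s)+d(v,q)\bigr)=\tfrac75\bigl(\tfrac25+1\bigr)=\tfrac{49}{25}.
\]
For the right side,
\[
\rho(p,v)+\rho(s,q)\ \ge\ d(p,v)+d(s,q)=1+1=2.
\]
Since $\tfrac{49}{25}<2$, the order $\langle p,v,s,q\rangle$ violates the Monge property. This is exactly the inequality $D\,d(p,s)+D\,d(v,q)<d(p,v)+d(s,q)$ recorded in the system of constraints. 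Hence the order must be $\langle p,s,v,q\rangle$ for every $v\in V$.

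There is no real obstacle here. The only care needed is to invoke the correct direction of the Monge inequality (the crossing pair is the largest), consistent with the previous claim. One should also note that the argument needs only the relative position of $s$ and each single $v$. The internal order of $V$ is left free, and it is what Lemma~\ref{lm:robinson} will later analyze. Thus the clockwise $p$-to-$q$ arc reads $p,s,v_1,\dots,v_n,q$.
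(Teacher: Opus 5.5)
Your proposal is correct and matches the paper's proof: assuming some $v\in V$ precedes $s$ gives the cyclic order $\langle p,v,s,q\rangle$. The bounds $\rho(p,s)+\rho(v,q)\le \frac{49}{25}<2\le \rho(p,v)+\rho(s,q)$ then contradict the Monge property, exactly as in the paper. Your remark that only the position of $s$ relative to each single $v$ matters, so the internal order of $V$ is left for Lemma~\ref{lm:robinson}, just makes explicit what the paper leaves implicit.
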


\begin{proof}
We prove by contradiction. Suppose otherwise that there exists \(v\in V\) lying before \(s\). Then $\sigma_X$ contains $p,\ v,\ s,\ q$ in order. 
We have:
\[
\rho(p,s)+\rho(v,q) \le D(d(p,s) + d(v,q)) = D(D-1) + D = \frac{49}{25}
\]
Also,
\[
\rho(p,v)+\rho(s,q) \ge d(p,v) + d(s,q) = 1 + 1 = 2. 
\]
This contradicts the Monge property:
\[
\rho(p,s)+\rho(v,q)
\ge
\rho(p,v)+\rho(s,q).
\]
Hence, $\sigma_X$ must have the form $p,\ s,\ v_1,\ldots,v_n,\ q,\ r$ for some ordering \(v_1,\ldots,v_n\) of \(V\).  
\end{proof}

Thus the cyclic order of $X$ on the boundary of the disk is $\sigma_X = \langle p, s, v_1, \dots, v_n, q, r \rangle$. It remains to prove that $\rho|_V$ is Robinson in the induced order $\sigma_V = v_1, \dots, v_n$. 
We need the following guarantee:

\begin{claim}\label{clm:exactD}
    For any $v \in V$, we always have $\rho(p,v) = D$.
\end{claim}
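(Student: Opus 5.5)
We aim to show $\rho(p,v)\ge D$, since the upper bound is immediate: $\rho(p,v)\le D\,d(p,v)=D$. The lower bound will come from the Monge property applied to the quadruple $p,s,v,q$, which by \Cref{clm:pqorder} appears in this cyclic order in $\sigma_X=\langle p,s,v_1,\dots,v_n,q,r\rangle$.

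Monge for $\langle p,s,v,q\rangle$ concerns the crossing pairs $(p,v)$ and $(s,q)$ against the non-crossing pairs $(p,q)$ and $(s,v)$ (the pairs $(p,s),(v,q)$ are the other non-crossing matching). It gives
\[
\rho(p,v)+\rho(s,q)\ \ge\ \rho(p,q)+\rho(s,v).
\]
This is the direction in which crossing matchings are at least the sum of either non-crossing matching, i.e.\ the undirected version of \eqref{eq:monge} written for a metric.

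Now bound each term using the distortion sandwich:
\begin{itemize}
\item $\rho(s,q)\le D\,d(s,q)=D$;
\item $\rho(p,q)\ge d(p,q)=D$;
\item $\rho(s,v)\ge d(s,v)=D$.
\end{itemize}
Substituting, $\rho(p,v)\ge D+D-D=D$. Combined with the upper bound, $\rho(p,v)=D$.

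The step needing care is getting the Monge inequality in the right direction: which matching is forced to be largest in the cyclic order $p,s,v,q$. I would double-check it against the earlier claims. In \Cref{clm:pqorder}, the order $p,v,s,q$ led to a contradiction with $\rho(p,s)+\rho(v,q)\ge\rho(p,v)+\rho(s,q)$, where $(p,s)$ and $(v,q)$ are the crossing pairs there. This confirms the convention that the crossing pairs dominate.

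In the order $p,s,v,q$, the crossing pairs are $(p,v)$ and $(s,q)$, which is exactly the inequality used above. This also matches the reduction's design equation $D\,d(p,v)+D\,d(s,q)=d(p,q)+d(s,v)$, which shows the inequality is tight.
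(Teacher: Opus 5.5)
Your proof is correct and follows the paper's argument: you apply Monge to the subsequence $\langle p,s,v,q\rangle$ guaranteed by \Cref{clm:pqorder} and combine it with the distortion bounds $\rho(s,q)\le D$, $\rho(p,q)\ge D$, $\rho(s,v)\ge D$, and $\rho(p,v)\le D$. The paper first pins both sides of the Monge inequality to $2D$ and then invokes $\rho(p,v),\rho(s,q)\le D$, whereas you rearrange directly to $\rho(p,v)\ge D$; this is the same argument.
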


\begin{proof}
For any \(v\in V\), $\sigma_X$ contains $p,\ s,\ v,\ q$ in order. By the Monge property, we have:
\[
\rho(p,v)+\rho(s,q)
\ge
\rho(p,q)+\rho(s,v). 
\]
Since we have:
\[
\begin{aligned}
\rho(p,v)+\rho(s,q) &\le Dd(p, v) + Dd(s, q) = D(1+1) = 2D, \\
\rho(p,q)+\rho(s,v) &\ge d(p, q) + d(s, v)  = D + D = 2D, 
\end{aligned}
\]
Thus, we must have: 
\[
\rho(p,v)+\rho(s,q) = \rho(p,q)+\rho(s,v) = 2D. 
\]
Moreover, for any $v \in V$, $\rho(p, v), \rho(s,q) \leq D$, implying $\rho(p, v) = D$ as promised.
\end{proof}

\begin{proof}[Proof of \Cref{lm:robinson}]
    For any $v_i, v_j, v_k \in V$, such that $i<_{\sigma_V}j<_{\sigma_V}k$, apply the Monge property to the vertices $p,\ v_i,\ v_j,\ v_k$ in that order. It gives
\[
\begin{aligned}
\rho(p,v_j)+\rho(v_i,v_k) &\ge \rho(p,v_i)+\rho(v_j,v_k),\\
\rho(p,v_j)+\rho(v_i,v_k) &\ge \rho(p,v_k)+\rho(v_i,v_j).
\end{aligned}
\]
Since $\rho(p,v_i)=\rho(p,v_j)=\rho(p,v_k)=D$ by~\Cref{clm:exactD}, we obtain
\[
\rho(v_i,v_k)\ge \rho(v_j,v_k),
\qquad
\rho(v_i,v_k)\ge \rho(v_i,v_j).
\]
Thus, $\rho(v_i,v_k)\ge \max\{\rho(v_i,v_j),\rho(v_j,v_k)\}$ as desired.
\end{proof}

\subsection{From Graph Sandwich to Okamura-Seymour Metric Embedding}

This is the easy direction of the reduction, since $\rho|_V$ is induced by a line metric obtained from a valid unit-interval representation, and line metrics naturally satisfy the Monge property with respect to the order of the centers. Also, most of the proof involves straightforward calculation. Given  a yes-instance of the input unit-interval graph sandwich problem, we first construct an embedding metric $\rho$ and then show that: 1) $\rho$ is a valid metric within distortion $D$ of $d$, and 2) $\rho$ is Monge w.r.t some ordering.

\paragraph{The embedding construction.} Let \(H\) be a unit-interval graph on \(V\) such that $E^+\subseteq E(H)$ and $E(H)\cap E^-=\emptyset$. By \Cref{lem:bounded_gap}, there exists a unit-interval representation of $H$ satisfying \Crefrange{eq:bg-eq-1}{eq:bg-eq-4} with $R=n+1$. Order the vertices by the centers of any such bounded-gap unit-interval representation as $v_1,v_2,\dots, v_n$. 

Define:
\[
\rho(u,v) = \textbf{1}[u \neq v] + \epsilon|c_u - c_v| \qquad (u,v \in V),
\]
where $\textbf{1}[u \neq v]$ is an indicator variable.

For the anchor vertices $\{p, s, q, r\}$, set:
\[
\rho(p,v)=\rho(q,v)=\rho(s,v)= \rho(p,q) = \rho(s,q) = D,
\;
\rho(p,s) = D(D-1),\;
\rho(r,v)= \frac{11}{10}.
\]
\[
\rho(s,r)= \rho(r,v) = \frac{11}{10}, \qquad
\rho(p,r)=\rho(q,r)= \frac{D^2}{2}.
\] 

\paragraph{The analysis.} By straightforward calculation (given in \Cref{app:rho-is-metric}), we can show that:

\begin{restatable}{claim}{rhostmetric}\label{rho-is-metric}
    The function $\rho$ is a metric on
    \(X = V \cup \{p,q,s,r\}\).
\end{restatable}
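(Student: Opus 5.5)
The plan is to verify the three metric axioms directly: nonnegativity with $\rho(x,x)=0$, symmetry, and the triangle inequality. Symmetry is immediate because every value is defined symmetrically. Positivity on distinct pairs holds because $\rho(u,v)\ge 1$ on $V$ and the anchor values $D$, $D(D-1)=14/25$, $11/10$ and $D^2/2=49/50$ are all positive.

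The work is the triangle inequality $\rho(x,z)\le\rho(x,y)+\rho(y,z)$. I will first bound the range of values. On $V$, using \Cref{lem:bounded_gap} (so $|c_u-c_v|\le R$) and $\epsilon=1/(5R)$, every distance lies in $[1,\,1+\epsilon R]=[1,\,6/5]$. Overall, every value of $\rho$ lies in $[14/25,\,7/5]$.

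\textbf{Case 1: all three points lie in $V$.} Then $\rho=\mathbf{1}[u\ne v]+\epsilon|c_u-c_v|$. This is the sum of the discrete metric and a scaled line pseudometric, and a sum of (pseudo)metrics is a metric.

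\textbf{Case 2: at least one anchor is involved.} Here I use a generic bound. Any triangle whose two short sides are each at least $1$ has sum at least $2\ge 7/5$, which exceeds every value of $\rho$, so the inequality holds. It therefore suffices to check triples in which one of the two summands is $<1$. The only such values are $\rho(p,s)=14/25$ and $\rho(p,r)=\rho(q,r)=49/50$. I will enumerate these by hand.

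\begin{itemize}
\item Triangles using $p s$: the other summand is at least $14/25$, since the smallest other value is $14/25$ itself or $49/50$. So when the other summand is $\ge 1$, the sum is $\ge 39/25>7/5$.
\item The remaining case with $p s$ pairs it with $\rho(p,r)$. This gives $\rho(s,r)=11/10\le 14/25+49/50$, which holds.
\item Triangles using $p r$ and $q r$: $\rho(p,q)=7/5\le 49/50+49/50$ holds. The same argument covers $v$ via $r$, as in $\rho(p,v)=7/5\le 49/50+11/10$.
\item The few remaining anchor-only triples over $\{p,q,s,r\}$ I will check directly, for instance $\rho(p,r)\le\rho(p,s)+\rho(s,r)$.
\end{itemize}

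The main obstacle is just making this finite case enumeration exhaustive without error. Organizing it by the minimum edge length, as above, reduces it to a handful of numeric checks.
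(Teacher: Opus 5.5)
Your proof is correct, but it is organized differently from the paper's. The paper splits triangles by how many anchor points they contain and uses the structural fact that each anchor $x$ is equidistant from all of $V$ (write $\rho(x,v)=\alpha_x$). Triangles with one anchor then reduce to $\rho(u,v)\le 1+\epsilon R\le D\le 2\alpha_x$, and triangles with two anchors $a,b$ reduce to $|\alpha_a-\alpha_b|\le\rho(a,b)\le\alpha_a+\alpha_b$. Anchor-only triangles are handled by a table.

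You instead use only the range of values. Every nonzero distance lies in $[14/25,7/5]$, and only $\rho(p,s)$, $\rho(p,r)$, $\rho(q,r)$ are below $1$. So any summand $\ge 1$, paired with any positive summand, already gives at least $39/25>7/5$. The only triangles needing a genuine check are those whose two summands are both below $1$ (they share a vertex automatically). There are exactly two: $\rho(s,r)\le\rho(s,p)+\rho(p,r)$ and $\rho(p,q)\le\rho(p,r)+\rho(r,q)$.

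If you state your generic bound this way, for every sub-$1$ summand rather than only $\rho(p,s)$, the enumeration you worried about is visibly exhaustive. Your fourth bullet then becomes redundant: for example, $\rho(p,r)\le\rho(p,s)+\rho(s,r)$ has the summand $11/10\ge 1$ and is already covered. Your Case 1 can be absorbed the same way, since all $V$--$V$ distances lie in $[1,6/5]$.

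Your route is shorter and needs no structural information about how the anchors relate to $V$, but it rests entirely on the specific constants (e.g.\ $1+14/25>D$). The paper's decomposition makes explicit which relations between the anchor distances, $D$, and the spread $1+\epsilon R$ of $\rho|_V$ are actually being used.
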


Next, we show that the distortion of $\rho$ is at most $D$,

\begin{claim}\label{clm:good-distortion}
    The metric $\rho$ is within distortion $D$ of $d$.
\end{claim}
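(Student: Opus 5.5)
The plan is a direct pair-by-pair check of the sandwich inequality $d(x,y)\le \rho(x,y)\le D\,d(x,y)$ for every pair $x\neq y$ in $X$. Diagonal pairs are trivial, since both functions vanish there. I would split the remaining pairs into three groups: pairs inside $V$, pairs consisting of one anchor and one vertex of $V$, and pairs of anchors.

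\textbf{Pairs inside $V$.} This is the only group that needs the structure of the representation from \Cref{lem:bounded_gap}. First I would record a uniform bound. By \Cref{eq:bg-eq-3,eq:bg-eq-4}, every $u\ne v\in V$ satisfies $|c_u-c_v|\le c_n-c_1\le R$. Since $\epsilon=1/(5R)$, this gives $1\le \rho(u,v)\le 1+\epsilon R = 6/5 < D$. Then I would treat three subcases.
\begin{itemize}
\item If $uv\in E^+$, then $uv\in E(H)$, so \Cref{eq:bg-eq-1} gives $|c_u-c_v|\le 1$. Hence $\rho(u,v)\le 1+\epsilon=U=D\cdot d(u,v)$. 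For the lower bound, $\rho(u,v)\ge 1\ge U/D=d(u,v)$, which holds because $U=1+1/(5R)\le 7/5$.
\item If $uv\in E^-$, then $uv\notin E(H)$, so \Cref{eq:bg-eq-2} gives $|c_u-c_v|\ge 1+1/R$. Hence $\rho(u,v)\ge 1+\epsilon(1+1/R)=L=d(u,v)$. For the upper bound, $\rho(u,v)\le 6/5< D\le DL$.
\item If $uv\notin E^+\cup E^-$, then $d(u,v)=1$, and $1\le \rho(u,v)\le 6/5\le D$.
\end{itemize}

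\textbf{Anchor--vertex pairs.} For $v\in V$ the values can be compared directly.
\begin{itemize}
\item $\rho(p,v)=\rho(q,v)=D$ and $d(p,v)=d(q,v)=1$, so the ratio is exactly $D$.
\item $\rho(s,v)=D=d(s,v)$.
\item $\rho(r,v)=11/10=d(r,v)$.
\end{itemize}

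\textbf{Anchor--anchor pairs.} Again this is a direct comparison.
\begin{itemize}
\item $\rho(p,q)=D=d(p,q)$ and $\rho(r,s)=11/10=d(r,s)$.
\item $\rho(s,q)=D=D\cdot d(s,q)$.
\item $\rho(p,s)=D(D-1)=D\cdot d(p,s)$.
\item $\rho(p,r)=\rho(q,r)=D^2/2=D\cdot(D/2)$.
\end{itemize}
In each case $\rho$ equals either $d$ or $D\cdot d$, so both inequalities hold.

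\textbf{Main obstacle.} The only step with real content is the $E^-$ case inside $V$. Its lower bound needs the gap $1/R$ from \Cref{eq:bg-eq-2} scaled by $\epsilon$ to reach exactly $L$, and this is precisely why $L$ was defined as $1+\epsilon(1+1/R)$. Similarly, the $E^+$ case hinges on $U=1+\epsilon$ matching \Cref{eq:bg-eq-1}. Every other case is an arithmetic identity with $D=7/5$.
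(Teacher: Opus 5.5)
Your proposal is correct and follows essentially the same route as the paper: a case-by-case check of $d\le\rho\le D\,d$. For pairs inside $V$ it uses \Cref{lem:bounded_gap}, namely the edge bound, the gap bound $1+1/R$, and the span bound $c_n-c_1\le R$; for pairs involving anchors it is direct arithmetic. You are slightly more explicit than the paper on the lower bound $U/D\le 1$ in the $E^+$ case, and on deriving $|c_u-c_v|\le R$ from monotonicity together with the span bound.
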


\begin{proof}
    We exhaustively verify that $d(x, y)\le \rho(x, y)\le Dd(x, y) \; \forall x, y\in X$. 

    First, we consider pairs involving $V$ only. If $u, v \in V$, and $uv \in E^+$, we have \(d(u,v)=U/D = \frac{1+\epsilon}{D}\) and $1 \leq \rho(u,v) = 1 + \epsilon|c_u - c_v| \leq 1 + \epsilon = Dd(u,v)$ since $|c_u - c_v| \leq 1$ by Lemma~\ref{lem:bounded_gap}. 
    
    Similarly, if $uv \in E^-$, we have \(d(u,v)=L = 1 + \epsilon(1+\frac{1}{R})\) and $1 + \epsilon(1+\frac{1}{R}) \leq 1 + \epsilon|c_u - c_v| \leq 1 + \epsilon R = M < DL = Dd(u,v)$ since we have $1 + \frac{1}{R} \leq |c_u - c_v| \leq R$ by Lemma~\ref{lem:bounded_gap}. 
    
    When \(uv\notin E^+\cup E^-\), we have \(d(u,v)=1\) and $1\le \rho(u,v) =  1 + \epsilon|c_u - c_v|\leq 1 + \epsilon R= M < D$ by Lemma~\ref{lem:bounded_gap}. 

    For the pairs involving anchor points,  let $v \in V$ be any vertex. We have:
    \[
    \rho(p,r)=\rho(q,r) = D\cdot \frac D2=D\,d(p,r)=D\,d(q,r),
    \]
    \[
    \rho(r,s) = \rho(r,v) =d(r,v) = d(r,s),
    \]
    \[
    \rho(p,v)=\rho(q,v) =\rho(s,v) = \rho(p,q) = \rho(s,q) =D=Dd(p,v)=Dd(q,v)= Dd(s,q) = d(p,q) = d(s,v)
    \]
    \[
    \rho(p,s) = D(D-1) = Dd(p,s),
    \]
giving the claim.
\end{proof}

\begin{claim}
    $(\rho, X)$ is an Okamura-Seymour metric with respect to the cyclic order $\sigma_X = \langle p, s, v_1,\dots, v_n,  q, r \rangle$.
\end{claim}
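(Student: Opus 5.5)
The plan is to prove the claim by checking the Monge condition directly and then invoking \Cref{thm:Chen-Tan}. Every metric is a quasimetric, and for symmetric $\rho$ the directed Monge inequality \Cref{eq:monge} on all cyclically ordered quadruples is equivalent to the following: for any four points $a,b,c,e$ appearing in this cyclic order in $\sigma_X=\langle p,s,v_1,\dots,v_n,q,r\rangle$, the crossing sum $\rho(a,c)+\rho(b,e)$ is at least both non-crossing sums $\rho(a,b)+\rho(c,e)$ and $\rho(a,e)+\rho(b,c)$. Once this holds, \Cref{thm:Chen-Tan} yields an Okamura--Seymour instance realizing $\rho$. Because $\rho$ is symmetric, the realizing instance can be taken undirected, for instance by placing each edge in both directions. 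So $(\rho,X)$ is an Okamura--Seymour metric with terminal order $\sigma_X$.

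I will verify the crossing inequality by splitting on $k$, the number of anchor points among the four.

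\textbf{Case $k=0$.} On $V$ we have $\rho(u,v)=1+\epsilon|c_u-c_v|$ for $u\neq v$, and by \eqref{eq:bg-eq-3} the centres are non-decreasing along $\sigma_V$. For $u<v<w<z$ in this order, the constant $2$ appears on both sides, so the inequality reduces to the Monge property of the line metric. This follows from $|c_u-c_w|+|c_v-c_z| = |c_u-c_z|+|c_v-c_w| \ge |c_u-c_v|+|c_w-c_z|$.

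\textbf{Case $k=1$.} Let $x$ be the anchor and $u<v<w$ the three vertices of $V$. Each anchor is at a constant distance from all of $V$: $D$ for $p,s,q$ and $\tfrac{11}{10}$ for $r$. The anchor terms therefore cancel, and what remains is the Robinson inequality for the line metric, $|c_u-c_w|\ge\max\{|c_u-c_v|,|c_v-c_w|\}$, which holds by monotonicity of the centres.

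\textbf{Cases $k=2,3,4$.} Only finitely many anchor patterns occur, since in $\sigma_X$ the anchors sit at fixed positions relative to the block $V$. When $k=2$, the two $V$-points contribute either a cancelling constant or a single term $\rho(u,v)\in[1,M]$, where $M=1+\epsilon R=1.2$. Each inequality then becomes a numerical check with $D=\tfrac75$, comparing against the crude bound $\rho(u,v)\le M$. For example, in the quadruple $p,s,u,v$ the crossing sum is $2D=2.8$, while the non-crossing sums are $D(D-1)+\rho(u,v)\le 0.56+1.2$ and $2D$. For $k=4$ we check $p,s,q,r$: the crossing sum is $\rho(p,q)+\rho(s,r)=2.5$, against $\rho(p,s)+\rho(q,r)=1.54$ and $\rho(p,r)+\rho(s,q)=2.38$. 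For $k=3$ there are four quadruples ($\{p,s,q\}$, $\{p,s,r\}$, $\{p,q,r\}$, $\{s,q,r\}$ together with one $v$), each checked from the tabulated values.

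I expect the main obstacle to be the $k=2$ cases in which one side contains $\rho(u,v)$ and the other contains only anchor distances, such as $r$ together with $p$ or $q$. There the slack is smallest, because $\rho(u,v)$ can be as large as $M$ and the anchor distances involve $D^2/2=0.98$. I would tabulate these few cases explicitly and confirm that the margins are consistent with the system of inequalities listed before the remark, in particular $M\le D$ and $M\le DL$. If some case is tight, I would fall back on the exact values of the anchor distances rather than on crude bounds.
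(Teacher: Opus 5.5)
Your case split by the number $k$ of anchors is the paper's own proof, and your numbers are correct. The paper uses the same reductions: the line metric for $k=0$, monotone centres for $k=1$, and for $k=2$ one non-crossing sum equal to the crossing sum with the other bounded via $\rho(u,v)\le M=\tfrac65$. For $k\ge 3$ it uses the same five-point table on $p,s,v,q,r$. All six $k=2$ anchor pairs pass with the crude bound $M$. The smallest slack is $\tfrac15$, for $\{p,q\}$, $\{s,q\}$ and $\{s,r\}$. The cases you feared, $r$ with $p$ or $q$, give $2.5$ against $2.18$, so no fallback to exact anchor values is needed.

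The one step that fails as written is turning the directed Chen--Tan realization into an undirected one by putting each edge in both directions. In the bidirected graph a path may use arcs against their orientation and shortcut terminal distances, even when $\rho$ is symmetric. For example, take terminals $t_1,t_2,t_3$ on the outer face and an inner vertex $m$. Use arcs $t_1\to m$ and $t_2\to m$ of weight $0$, and arcs $m\to t_3$, $t_3\to t_1$, $t_3\to t_2$ of weight $1$. This planar digraph realizes the symmetric metric with $\rho(t_1,t_3)=\rho(t_2,t_3)=1$ and $\rho(t_1,t_2)=2$. After bidirecting, the path $t_1,m,t_2$ has length $0$.

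There are two ways to close the step:
\begin{itemize}
\item If ``Okamura--Seymour metric'' is read as a symmetric Okamura--Seymour quasimetric, then \Cref{thm:Chen-Tan} with your Monge check already finishes the proof, and no symmetrization is needed.
\item If an undirected realization is wanted, invoke the undirected characterization of Okamura--Seymour metrics: on the circle, each crossing sum dominates both non-crossing sums. This is what the paper tacitly relies on when it says it suffices to verify the Monge property.
\end{itemize}
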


\begin{proof}
    It is sufficient to prove that $\rho$ satisfies the Monge property for any quadruple $(x_1, x_2, x_3, x_4) \subseteq X$ appearing in that order in $\sigma_X$.
    We proceed by cases:

    \paragraph{Case 1: All four vertices are from $V$:} Monge property is satisfied trivially by all quadruples of this type as $\rho|_V$ is a line metric by construction. 

    \paragraph{Case 2: Three vertices are from \(V\) and one vertex is from \(\{p,s,q,r\}\):} Suppose that $v_i,v_j,v_k$ ($\forall i < j< k$) are arbitrary three vertices from $V$ and $x$ is any vertex from $\{p,s,q,r\}$. Observe that the crossing pairs are always going to be $xv_j$ and $v_iv_k$ no matter where $x$ is located, and the non-crossing pairs are always going to be $xv_i$ and $v_jv_k$, or $xv_k$ and $v_iv_j$. Moreover, $\rho(x,v_j) = \rho(x,v_i) = \rho(x,v_k)$ for all possible $x$. Thus, we only need to show that $\rho(v_i,v_k) \ge \rho(v_i,v_j)$ and $\rho(v_i,v_k) \ge \rho(v_j,v_k)$. Since we have $c_i \leq c_j \leq c_k$, we must have $\rho(v_i,v_k) =  1 + \epsilon|c_i - c_k| \geq 1 + \epsilon|c_i - c_j| = \rho(v_i,v_j)$ and $\rho(v_i,v_k) =  1 + \epsilon|c_i - c_k| \geq 1 + \epsilon|c_j - c_k| = \rho(v_j,v_k)$. Hence, the Monge property also holds in this case.

    \paragraph{Case 3: Two vertices are from \(V\) and two vertices are from \(\{p,s,q,r\}\):} Let $u,v$ be any two vertices from $V$ and $x,y$ be any two vertices from $\{p,s,q,r\}$. Without loss of generality, the crossing pair can always be written as $xu$ and $yv$. One of the non-crossing pairs can always be written as $xv$ and $yu$. Observe that any vertex from $\{p,s,q,r\}$ has the same distance to any vertex in $V$ by the construction of $\rho$. Thus, the sum of $xu$ and $yv$ always equals to $xv$ and $yu$, and the Monge property is satisfied trivially. 
    
    It suffices to consider the other non-crossing pair $xy$ and $uv$. We always have $\rho(u,v) = 1 + \epsilon|c_u - c_v| \leq 1 + \epsilon R = M = \frac{6}{5}$ by Lemma~\ref{lem:bounded_gap}. Then, we consider the following two cases. First, if $r \notin \{x,y\}$, observe that we always have $\rho(x,u) = \rho(y,v) = D$, i.e., the sum of the crossing pair equals to $2D$. Since $\rho(p,q) = D, \; \rho(p,s) = D(D-1), \rho(q,s) = D$, the maximum sum of the non-crossing pair equals to $D + M < 2D$, and the Monge property is satisfied correspondingly. Second, if $r \in \{x,y\}$, the sum of the crossing pair $xu$ and $yv$ always equals to $\frac{11}{10} + D$. Since the maximum $\rho(x,y) = \rho(r,s) = \frac{11}{10}$, the maximum sum of the non-crossing pair equals to $\frac{11}{10} + M < \frac{11}{10} + D$, and the Monge property is satisfied in this case.

    \paragraph{Case 4: At most one vertex is from \(V\):} Finally, it remains to check quadruples with at most one vertex in \(V\). It suffices to check
    the five-point order $p,\ s,\ v,\ q,\ r$, where \(v\in V\). The crossing sum is at least both non-crossing sums in each row below:
    
    \[
    \begin{array}{c|c|c|c}
    \text{quadruple}
    &
    \text{crossing}
    &
    \text{non-crossing 1}
    &
    \text{non-crossing 2}
    \\
    \hline
    p,s,v,q
    &
    D+D=\frac{14}{5}
    &
    D(D-1)+D=\frac{49}{25}
    &
    D+D=\frac{14}{5}
    \\[1ex]
    p,s,v,r
    &
    D+\frac{11}{10}=\frac{5}{2}
    &
    D(D-1)+\frac{11}{10}=\frac{83}{50}
    &
    \frac{D^2}{2}+D=\frac{119}{50}
    \\[1ex]
    p,s,q,r
    &
    D+\frac{11}{10}=\frac{5}{2}
    &
    D(D-1)+\frac{D^2}{2}=\frac{77}{50}
    &
    \frac{D^2}{2}+D=\frac{119}{50}
    \\[1ex]
    p,v,q,r
    &
    D+\frac{11}{10}=\frac{5}{2}
    &
    D+\frac{D^2}{2}=\frac{119}{50}
    &
    \frac{D^2}{2}+D=\frac{119}{50}
    \\[1ex]
    s,v,q,r
    &
    D+\frac{11}{10}=\frac{5}{2}
    &
    D+\frac{D^2}{2}=\frac{119}{50}
    &
    \frac{11}{10}+D=\frac{5}{2}
    \end{array}
    \] 
    
    \bigskip
    
    If none of the vertices are from $V$, then the quadruple satisfies Monge by construction. This completes the proof that $(\rho, X)$ is an Okamura-Seymour metric.

\end{proof}

\subsection{Proof of Lemma~\ref{lem:bounded_gap}}\label{app:proof-bounded_gap}

\boundedgap*

\begin{proof}

Let \(H\) be a unit-interval graph. Suppose that $\gamma>0$. We construct the following system. 
\begin{align}
c_j-c_i &\le 1
&&\text{if } i<j \text{ and } v_i v_j\in E(H), \label{eq:bg-edge}\\
c_j-c_i &\ge 1+\gamma
&&\text{if } i<j \text{ and } v_i v_j\notin E(H), \label{eq:bg-nonedge}\\
c_i &\le c_{i+1}
&&\text{for } i=1,\ldots,n-1. \label{eq:bg-order}
\end{align}

The proof is contained in the following series of claims.

\begin{claim}\label{clm:feasibility_gamma_0}
The system \eqref{eq:bg-edge}--\eqref{eq:bg-order} is feasible for some $\gamma_0 >0$.
\end{claim}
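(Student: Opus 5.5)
The plan is to build feasibility directly from a unit-interval representation of $H$, perturbing it slightly so that every non-edge gap becomes strictly larger than $1$. The starting point is the definition recalled just before \Cref{lem:bounded_gap}. Since $H$ is a unit-interval graph, it has centers $c'_1,\dots,c'_n$ with $|c'_u-c'_v|\le 1$ exactly when $uv\in E(H)$. Non-adjacent pairs then satisfy $|c'_u-c'_v|>1$ strictly.

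The argument proceeds in three steps.

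\begin{enumerate}
\item Index the vertices so that $c'_1\le c'_2\le\cdots\le c'_n$. Ties can be broken arbitrarily, since tied vertices are adjacent. This is exactly the ordering $\langle v_1,\dots,v_n\rangle$ in the statement, so \eqref{eq:bg-order} holds immediately.
\item For $i<j$ we have $c'_j-c'_i=|c'_i-c'_j|$. Hence edges satisfy \eqref{eq:bg-edge}, and every non-edge satisfies $c'_j-c'_i>1$.
\item Define $\gamma_0:=\min\{c'_j-c'_i-1 : i<j,\ v_iv_j\notin E(H)\}$. This is a minimum over finitely many strictly positive numbers, so $\gamma_0>0$. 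If $H$ is complete, there are no non-edges and we may take $\gamma_0=1$. With this choice, \eqref{eq:bg-nonedge} holds with $\gamma=\gamma_0$.
\end{enumerate}

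This shows the system is feasible for some $\gamma_0>0$.

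The one subtle point is making sure the ordering used in the system is the sorted-center ordering of some representation. The statement fixes ``the ordering according to the unit-interval representation''. If a different but consistent ordering were intended, I would instead invoke \Cref{lem:umbrella} with the umbrella ordering to produce centers in that order. For this claim, however, simply choosing the representation whose sorted order defines the indexing suffices.
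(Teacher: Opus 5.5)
Your proposal is correct and follows essentially the same route as the paper: you take the centers of a unit-interval representation sorted by position, observe that edges give gaps at most $1$ and non-edges give gaps strictly greater than $1$, and set $\gamma_0$ to the minimum excess over the finitely many non-edges. Your explicit handling of ties and of the complete-graph case, where that minimum would be over an empty set, are small details the paper leaves implicit.
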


\begin{proof}
Since \(H\) is a unit-interval graph, there exists a representation by unit intervals.
Using the centers of this representation, every edge \(v_i v_j\in E(H)\) with \(i<j\)
satisfies $c_j-c_i\le 1$. Every nonedge \(v_i v_j\notin E(H)\) satisfies $c_j-c_i>1$.
There are only finitely many nonedges, so the minimum positive gap over all nonedges is
strictly positive. Hence there exists \(\gamma_0>0\) such that $c_j-c_i\ge 1+\gamma_0$
for every nonedge \(v_i v_j\notin E(H)\) with \(i<j\). Finally, the ordering by centers gives $c_i\le c_{i+1}$. Hence, the system is feasible for this \(\gamma_0\).
\end{proof}

We rewrite the constraints \eqref{eq:bg-edge}--\eqref{eq:bg-order} in the following form:
\[
c_j-c_i\le 1,
\qquad
c_i-c_j\le -1-\gamma,
\qquad
c_i-c_{i+1}\le 0.
\]

We build the associated difference-constraint graph \(G_\gamma\) as follows. The graph has one vertex for each variable \(c_i\). For every constraint $c_b-c_a\le w$, 
we add a directed edge from $a$ to $b$ of weight $w$ in $G_\gamma$. 

\begin{claim}\label{clm:non_negative_cycle}
The system \eqref{eq:bg-edge}--\eqref{eq:bg-order} is feasible for a given \(\gamma\) if and only if the associated difference-constraint graph \(G_\gamma\) has no negative directed cycle.
\end{claim}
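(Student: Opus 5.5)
This is the standard fact about systems of difference constraints (Bellman–Ford potentials), and I would prove both directions directly. Throughout, for fixed $\gamma$, every constraint has the form $c_b-c_a\le w_{ab}$, and it contributes the edge $a\to b$ of weight $w_{ab}$ to $G_\gamma$.

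\emph{Feasible implies no negative cycle.} Suppose $(c_1,\dots,c_n)$ satisfies the system. Take any directed cycle $a_0\to a_1\to\cdots\to a_m=a_0$ in $G_\gamma$. Each edge $a_{t}\to a_{t+1}$ comes from a constraint, so $c_{a_{t+1}}-c_{a_t}\le w(a_t,a_{t+1})$. Summing over the cycle, the left-hand side telescopes to $0$. Hence $0\le \sum_t w(a_t,a_{t+1})$, so the cycle has nonnegative weight.

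\emph{No negative cycle implies feasible.} Add an auxiliary source $z$ with a zero-weight edge $z\to i$ for every $i$. Since $z$ has no incoming edges, no new cycles appear, so the augmented graph still has no negative cycle. The shortest-path distance $\mathrm{dist}(z,i)$ is therefore well defined and finite: it is at most $0$ via the direct edge, and it is bounded below because only finitely many simple paths matter once negative cycles are excluded. Set $c_i:=\mathrm{dist}(z,i)$. For each edge $a\to b$ of weight $w$, the triangle inequality for shortest-path distances gives $\mathrm{dist}(z,b)\le \mathrm{dist}(z,a)+w$, which is exactly the constraint $c_b-c_a\le w$. So every constraint of the system holds.

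Neither direction has a real obstacle. The only point needing care is that shortest-path distances are well defined (finite) when there is no negative cycle, and the auxiliary source $z$ guarantees reachability of every vertex. I expect the authors will then use this claim to argue that the feasibility threshold in $\gamma$ is governed by cycle weights whose $\gamma$-coefficients are bounded by $n$, which yields $\gamma=1/R$ and the span bound (\ref{eq:bg-eq-4}). That step belongs to later claims, not to this one.
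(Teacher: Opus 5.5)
Your proof is correct and matches the paper's argument: telescoping the constraints around a directed cycle for one direction, and setting $c_i$ to the shortest-path distance from an auxiliary source with zero-weight edges for the other. Your remark that the source has no incoming edges, so it creates no new cycles, justifies that step slightly more precisely than the paper's wording does.
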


\begin{proof}
We first prove the left-to-right direction by contradiction. Suppose the system is feasible for $\gamma$, and there is a directed cycle $i_0\to i_1\to \cdots \to i_k  \to i_0$ of negative total weight in \(G_\gamma\). Since we have an edge from $i_\ell$ to $i_{\ell+1}$ of weight $w(i_\ell,i_{\ell+1})$ in \(G_\gamma\), we have a corresponding constraint $c_{i_{\ell+1}}-c_{i_\ell}\le w(i_\ell,i_{\ell+1})$. Summing all these inequalities over the cycle, the left-hand side telescopes to $0$. Thus, we have $0 \le \sum_{\ell=0}^{k-1} w(i_\ell,i_{\ell+1})$, which contradicts that this is a negative cycle.

We now prove the right-to-left direction. Suppose \(G_\gamma\) has no negative directed cycle. Add a new source vertex \(s_0\) and add a zero-weight edge from $s_0$ to every vertex \(i\) of \(G_\gamma\). Since \(G_\gamma\) has no negative directed cycle and we have only added some non-negative edges, the new graph also has no negative directed cycle, and hence the shortest-path distance $\operatorname{dist}(s_0,i)$ is well-defined for every \(i\), and the triangle inequality holds for this new graph. Thus, we have $\operatorname{dist}(s_0,b)\le\operatorname{dist}(s_0,a)+w(a,b)$, i.e., $\operatorname{dist}(s_0,b) - \operatorname{dist}(s_0,a) \le w(a,b)$. We let $c_i=\operatorname{dist}(s_0,i)$ and this gives $c_b-c_a\le w(a,b)$, which satisfies all the constraints in \eqref{eq:bg-edge}--\eqref{eq:bg-order}. Hence, the system is feasible for $\gamma$.
\end{proof}

\begin{claim}\label{clm:feasibility_gamma_1}
    The system \eqref{eq:bg-edge}--\eqref{eq:bg-order} is feasible for $\gamma_1=\frac{1}{n+1}$.     
\end{claim}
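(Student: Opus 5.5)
The plan is to use \Cref{clm:non_negative_cycle} in both directions together with \Cref{clm:feasibility_gamma_0}. Since the system is feasible for some $\gamma_0>0$, the graph $G_{\gamma_0}$ has no negative directed cycle. I will transfer this to $G_{\gamma_1}$ with $\gamma_1=\frac{1}{n+1}$ by an integrality argument on cycle weights. By \Cref{clm:non_negative_cycle} again, that yields feasibility for $\gamma_1$.

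The graphs $G_\gamma$ all have the same vertex and edge set; only the weights of the non-edge arcs depend on $\gamma$. Each arc has one of three types:
\begin{itemize}
\item an edge-constraint arc, of weight $1$;
\item a non-edge arc, of weight $-1-\gamma$;
\item an order arc, of weight $0$.
\end{itemize}
Fix a directed cycle $C$, and let $A$ and $B$ be the numbers of arcs of the first and second type on $C$. Its weight in $G_\gamma$ is
\[
w_\gamma(C) = (A-B) - B\gamma .
\]
Note that $A-B$ is an integer independent of $\gamma$.

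It suffices to rule out negative \emph{simple} cycles in $G_{\gamma_1}$. Any closed walk of negative weight decomposes into simple cycles, at least one of which is negative. A simple cycle uses at most $n$ arcs, because $G_\gamma$ has $n$ vertices, so $B\le n$. I then split into two cases.
\begin{itemize}
\item If $B=0$, then $w_{\gamma_1}(C)=A\ge 0$.
\item If $B\ge 1$, then $w_{\gamma_0}(C)\ge 0$ gives $A-B\ge B\gamma_0>0$. Since $A-B$ is an integer, this forces $A-B\ge 1$. Hence
\[
w_{\gamma_1}(C)\ \ge\ 1 - \frac{B}{n+1}\ \ge\ 1-\frac{n}{n+1}\ >\ 0 .
\]
\end{itemize}
So $G_{\gamma_1}$ has no negative cycle, and \Cref{clm:non_negative_cycle} gives feasibility for $\gamma_1=\frac1{n+1}$.

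The main subtlety is the step ``positive integer implies at least $1$''. It relies on the fact that the only non-integer weights are the $-\gamma$ contributions, and that the existence of some strictly positive $\gamma_0$ is what makes $A-B$ strictly positive. The choice $R=n+1$ rather than $n$ is exactly what gives the strict inequality at the end.
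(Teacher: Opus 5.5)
Your proposal is correct and follows essentially the same argument as the paper. You write each simple cycle's weight as an integer minus $k\gamma$ with $k\le n$, use nonnegativity at $\gamma_0>0$ to force the integer part to be at least $1$ when $k\ge 1$, and invoke \Cref{clm:non_negative_cycle} in both directions. One quibble with your closing remark: this claim only needs nonnegativity at the end, which $\gamma=1/n$ already gives, so the $+1$ in $n+1$ is not what this step relies on; it is needed for the span bound in \Cref{clm:feasibility_lemma}.
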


\begin{proof}
Since the system is feasible for some \(\gamma_0\) by claim~\ref{clm:feasibility_gamma_0}, the graph \(G_{\gamma_0}\) has no negative directed cycle by claim~\ref{clm:non_negative_cycle}. Consider any simple directed cycle in \(G_{\gamma_0}\). Its total weight has the form $A-k\gamma_0$, where \(A\in\mathbb{Z}\), and \(k\) is the number of edges induced by constraint \eqref{eq:bg-nonedge} used by the cycle. It suffices to consider every simple directed cycle in \(G_{\gamma_0}\), and thus $0 \le k \le n$. Every possible $A-k\gamma_0$ should be nonnegative, and we distinguish two cases. If \(k=0\), then the cycle weight is simply \(A\), we can safely substitute $\gamma_0$ with $\gamma_1$, and the cycle weight remains non-negative. If \(k>0\), then $A\ge k\gamma_0>0$ since $\gamma_0$ is positive, which implies $A\ge 1$. Since $k \le n$, $A-k\gamma_1 \ge 0$ always holds. 

Hence, $G_{\gamma_1}$ has no negative cycles. This implies system \eqref{eq:bg-edge}--\eqref{eq:bg-order} to be feasible when $\gamma_1 = 1 / (n+1)$ by claim~\ref{clm:non_negative_cycle}. 
\end{proof}

At this point, we slightly modify the original system by adding a new constraint for bounding the span as follows. 

\begin{align}
c_j-c_i &\le 1
&&\text{if } i<j \text{ and } v_i v_j\in E(H), \label{eq:bg-edge_2}\\
c_j-c_i &\ge 1+\gamma
&&\text{if } i<j \text{ and } v_i v_j\notin E(H), \label{eq:bg-nonedge_2}\\
c_i &\le c_{i+1}
&&\text{for } i=1,\ldots,n-1, \label{eq:bg-order_2}\\
c_n - c_1 &\le \frac{1}{\gamma}. \label{eq:bg-span_2}
\end{align}

\begin{claim}\label{clm:feasibility_lemma}
    The system \eqref{eq:bg-edge_2}--\eqref{eq:bg-span_2} is also feasible for $\gamma_1 = \frac{1}{n+1}$. 
\end{claim}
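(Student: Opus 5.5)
Starting from a feasible solution $(c_1,\dots,c_n)$ of \eqref{eq:bg-edge}--\eqref{eq:bg-order} with $\gamma_1=\frac{1}{n+1}$ (from Claim~\ref{clm:feasibility_gamma_1}), I will \emph{compress} it into one that additionally satisfies $c_n-c_1\le n+1$. The idea is that if two consecutive centers are far apart, the constraints between the vertices on opposite sides of the gap are all of non-edge type, so the gap can be shrunk. Since $n+1 = 1/\gamma_1$, this yields \eqref{eq:bg-span_2}.

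\textbf{Step 1: exploit large gaps.} Suppose $c_{i+1}-c_i > 1+\gamma_1$ for some $i$. For any $a\le i<b$ we have $c_b-c_a\ge c_{i+1}-c_i>1+\gamma_1>1$. By \eqref{eq:bg-edge}, $v_av_b$ is therefore not an edge, so the only constraints linking $\{1,\dots,i\}$ to $\{i+1,\dots,n\}$ are the non-edge constraints $c_b-c_a\ge 1+\gamma_1$ and the order constraint $c_i\le c_{i+1}$. Now shift every $c_b$ with $b\ge i+1$ down by $\Delta=(c_{i+1}-c_i)-(1+\gamma_1)>0$. Differences within each side are unchanged. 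Every cross difference decreases by $\Delta$ and becomes at least $c_{i+1}-c_i-\Delta=1+\gamma_1$, so \eqref{eq:bg-nonedge} still holds and order is preserved. Applying this to every such $i$ (the shifts commute, since each only reduces one gap), I obtain a feasible solution in which every consecutive gap satisfies $c_{i+1}-c_i\le 1+\gamma_1$.

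\textbf{Step 2: bound the span.} Summing the consecutive gaps gives only $c_n-c_1\le (n-1)(1+\gamma_1)$, which can exceed $n+1$. To do better, I will use a pairing: consecutive gaps $g_i=c_{i+1}-c_i$ and $g_{i+1}$ satisfy $g_i+g_{i+1}=c_{i+2}-c_i$.

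If $v_iv_{i+2}\in E(H)$, this sum is at most $1$. Otherwise it is at least $1+\gamma_1$, but it is not bounded above, and this is the weak point of the argument. To handle it, I will strengthen Step 1 to a \emph{tight} normalization. Take the solution minimizing $c_n-c_1$, which exists since the feasible region is closed. By LP duality, equivalently the shortest-path representation of Claim~\ref{clm:non_negative_cycle}, $c_n-c_1$ then equals the negated weight of a path in $G_{\gamma_1}$ from $n$ to $1$. Such a path consists of edges of weight $-1-\gamma_1$ (non-edge constraints, which move forward in the ordering), edges of weight $+1$ (edge constraints, which move backward), and edges of weight $0$. In the optimal solution it can be taken to be a simple path, so it uses at most $n-1$ edges. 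The span is then at most $k(1+\gamma_1)-m$, where $k$ is the number of non-edge steps and $m$ the number of edge steps, with $k+m\le n-1$. This gives $c_n-c_1\le (n-1)(1+\gamma_1)=(n-1)\frac{n+2}{n+1}\le n+1$.

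\textbf{Main obstacle.} The delicate part is the duality step: justifying that the minimal span is achieved by the tight shortest-path solution (taking $c_i=-\operatorname{dist}(i,1)$-type potentials in $G_{\gamma_1}$ with the span constraint added as an edge) and bounding the simple-path weight. I expect the final inequality $(n-1)(n+2)\le (n+1)^2$ to be routine.
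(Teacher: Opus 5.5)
Your proof is correct, but your own Step 1 already finishes it. Since $\gamma_1=\frac{1}{n+1}$,
\[
(n-1)(1+\gamma_1)=\frac{(n-1)(n+2)}{n+1}=\frac{n^2+n-2}{n+1}<\frac{(n+1)^2}{n+1}=n+1
\]
for every $n$. So closing the large gaps and then summing the $n-1$ consecutive gaps already gives \eqref{eq:bg-span_2}. Your remark that this bound ``can exceed $n+1$'' is false. It is the same inequality $(n-1)(n+2)\le (n+1)^2$ that you later call routine. The pairing idea and the minimum-span/duality detour are therefore unnecessary.

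If you do keep Step 2, two details need fixing, though neither changes the bound:
\begin{itemize}
\item With the paper's convention that $c_b-c_a\le w$ gives the arc $a\to b$, non-edge arcs go backward and edge arcs go forward. This is the reverse of what you wrote.
\item The minimizer exists because the LP is feasible and bounded below (by $c_n-c_1\ge 0$), not merely because the feasible region is closed.
\end{itemize}

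Compared with the paper: the paper does not modify a given solution. It adds a super-source $s_1$ with zero-weight arcs to $G_{\gamma_1}$ and takes the shortest-path potentials $c_v=\operatorname{dist}(s_1,v)$. These satisfy every difference constraint automatically and lie in $[-n(1+\gamma_1),0]$, so the span is at most $n(1+\gamma_1)<n+1$. Your Step 2 is essentially this shortest-path argument, with the slightly tighter count of $n-1$ arcs.

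Your Step 1 is a genuinely different and more elementary route. It starts from any feasible point and uses one observation: a consecutive gap larger than $1$ forces every pair straddling it to be a non-edge. That gap can then be shrunk to $1+\gamma_1$ without affecting any other constraint. It uses the negative-cycle machinery only to obtain the starting point (Claim~\ref{clm:feasibility_gamma_1}). The paper's route instead produces the bounded-span solution directly from a single shortest-path computation.
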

\begin{proof}
Recall that the associated graph $G_{\gamma_1}$ has no negative cycles by Claim~\ref{clm:feasibility_gamma_1}. Add a new source vertex \(s_1\) and add a directed edge from $s_1$ to every variable vertex $i$ of weight $0$. This new graph has no negative directed cycle either, and these shortest-path distances are well-defined.

Suppose that we have a directed edge from $a$ to $b$ induced by \eqref{eq:bg-edge_2}--\eqref{eq:bg-order_2}, we must have $\operatorname{dist}(s_1,b) \le \operatorname{dist}(s_1,a)+w$ by triangle inequality. Thus, by letting $c_v = \operatorname{dist}(s_1,v)$ for every vertex $v$, all the constraints in \eqref{eq:bg-edge_2}--\eqref{eq:bg-order_2} are satisfied.

It remains to verify the new span constraint. For every vertex \(v\), the zero-weight edge from $s_1$ to $v$ gives $c_v=\operatorname{dist}(s_1,i)\le 0$. Since we have no negative cycles, the shortest path from $s_1$ to $v$ is simple and goes through at most $n$ vertices, where each edge has weight at least $-\left(1+\frac{1}{n+1}\right)$. Thus, $c_v\ge -n\left(1+\frac{1}{n+1}\right)$. Hence, all the center vertices lie in the interval $\left[-n\left(1+\frac{1}{n+1}\right),0\right]$. This implies that constraint~\eqref{eq:bg-span_2} is also satisfied since $c_n - c_1 \leq n\left(1+\frac{1}{n+1}\right) < n+1 = \frac{1}{\gamma_1}$. 
    
\end{proof}

Claim~\ref{clm:feasibility_lemma} directly proves Lemma~\ref{lem:bounded_gap}.
\end{proof}

\subsection{Proof of \Cref{rho-is-metric}}\label{app:rho-is-metric}

\rhostmetric*

\begin{proof}
Let \(A=\{p,q,s,r\}\). We set \(\rho(x,x)=0\) for every \(x\in V\cup A\). It is immediate that \(\rho\) is nonnegative, symmetric, and that \(\rho(x,y)=0\) iff \(x=y\). It remains to prove the triangle inequality.

\paragraph{Case 1: Triangles in $V$:} $\rho|_V$ is the sum of a line metric and a discrete metric and thus $\rho$ satisfies the triangle inequality for triangles in $V$.

\paragraph{Case 2: Triangles with one vertex in $\{p,q,s,r\}$:} Let $x \in \{p,q,s,r\}$. Note that for $v \in V$ the value of $\rho(x, v)$ is independent of $v$. Specifically:
\[
    \rho(p,v)=\rho(q,v)=\rho(s,v)=D=\frac75,
    \qquad
    \rho(r,v)=\frac{11}{10}.
\]
For any $u,v\in V$, we have
\[
    \rho(u,v)\le 1+\epsilon R\le D=\frac75
    \le \frac{11}{5}
    \le \rho(u,x)+\rho(x,v).
\]
The other two triangle inequalities are immediate because $\rho(x,u)=\rho(x,v)\le \rho(x,v)+\rho(v,u)$, and symmetrically for \(\rho(x,v)\).

\paragraph{Case 3: Triangles with two vertices in $\{p,q,s,r\}$:} Let $x, y \in \{p,q,s,r\}$. We write $\alpha_x := \rho(x,v)$, and:
\[
    \alpha_p=\alpha_q=\alpha_s=D=\frac75,
    \qquad
    \alpha_r=\frac{11}{10}.
\]
It suffices to check that $|\alpha_a-\alpha_b|\le \rho(a,b)\le \alpha_a+\alpha_b$, which is made clear in the following table:
\[
\begin{array}{c|c|c|c}
\{a,b\} & \rho(a,b) & |\alpha_a-\alpha_b| & \alpha_a+\alpha_b \\ \hline
\{p,q\} & \frac75 & 0 & \frac{14}{5} \\
\{q,s\} & \frac75 & 0 & \frac{14}{5} \\
\{p,s\} & \frac{14}{25} & 0 & \frac{14}{5} \\
\{s,r\} & \frac{11}{10} & \frac{3}{10} & \frac52 \\
\{p,r\} & \frac{49}{50} & \frac{3}{10} & \frac52 \\
\{q,r\} & \frac{49}{50} & \frac{3}{10} & \frac52 .
\end{array}
\]

\paragraph{Case 4: Triangles in $\{p,q,s,r\}$:} This is by construction; we give the table directly:
\[
\begin{array}{c|c|c}
\text{triple} & \text{largest side} & \text{sum of the other two sides} \\ \hline
\{p,q,s\} & \frac75 & \frac75+\frac{14}{25}\\
\{p,q,r\} & \frac75 & \frac{49}{50}+\frac{49}{50}\\
\{p,s,r\} & \frac{11}{10} & \frac{14}{25}+\frac{49}{50}\\
\{q,s,r\} & \frac75 & \frac{11}{10}+\frac{49}{50}.
\end{array}
\]
Thus $\rho$ satisfies the triangle inequality, which completes the proof.
\end{proof}

\end{document}